\newtheorem{observation}[definition]{Observation}
\newtheorem{question}[definition]{Question}
\newtheorem{algorithmicproblem}{Algorithmic Problem}[section]
\newcommand{\exampleqed}{\ensuremath{\ocircle}\par}
\newcommand{\remarkqed}{\ensuremath{\Diamond}\par}
\newcommand{\ZZ}{\mathbb{Z}}			
\newcommand{\NN}{\mathbb{N}}			
\newcommand{\RR}{\mathbb{R}}			
\newcommand{\PP}{\mathbb{P}}	
\newcommand{\symb}[1]{\mathtt{#1}}		
\newcommand{\isdef}{\triangleq}			
\newcommand{\ee}{\mathrm{e}}			
\newcommand{\bigo}{O}					
\DeclarePairedDelimiter\abs{\lvert}{\rvert}		
\newcommand{\oo}{\circ}					
\newcommand{\diam}{\operatorname{\mathrm{diam}}}			
\newcommand{\xPr}{\operatorname{\mathbb{P}}}		
\newcommand{\xExp}{\operatorname{\mathbb{E}}}		
\newcommand{\Bern}{\mathcal{B}}				
\newcommand{\RV}[1]{\mathbf{#1}}			
\newcommand{\updensity}{\overline{d}}		
\newcommand{\collection}[1]{\mathcal{#1}}	
\newcommand{\family}[1]{\mathscr{#1}}	
\newcommand{\xclass}[1]{\mathbf{#1}}	
\newcommand{\classP}{\xclass{P}}		
\newcommand{\classNP}{\xclass{NP}}		
\let\@@pmod\pmod
\DeclareRobustCommand{\pmod}{\@ifstar\@pmods\@@pmod}
\def\@pmods#1{\mkern4mu({\operator@font mod}\mkern 6mu#1)}
\newcommand{\gkl}{\texttt{GKL}}
\newcommand{\toom}{\texttt{NEC-}\allowbreak\texttt{Maj}}
\newcommand{\MRIE}{%
	\texttt{Maj-}\allowbreak\texttt{Random-}\allowbreak\texttt{If-}\allowbreak\texttt{Equal}%
}
\newcommand{\BinA}{\{\symb{0},\symb{1}\}}
\newcommand{\Neighb}{\mathcal{N}}					
\newcommand{\Moore}{\mathcal{M}}					
\newcommand{\vonNeumann}{\mathcal{N}}				
\newcommand{\unif}[1]{\underline{#1}}	
\newcommand{\unifO}{\unif{\symb{0}}}	
\newcommand{\unifI}{\unif{\symb{1}}}	
\DeclareMathOperator*{\maj}{maj}		
\newcommand{\blank}{\diamond}			
\newcommand{\xstop}{\mathmakebox[0.75em][c]{\otimes}}		
\newcommand{\xtrac}{\mathmakebox[0.75em][c]{\bullet}}		
\newcommand{\xnone}{\mathmakebox[0.75em][c]{\circ}}		
\newcommand{\FPert}[2][]{
	\tilde#2\ifthenelse{\equal{#1}{}}{}{\langle#1\rangle}%
}
\newcommand{\Def}{\mathcal{D}}		
\newcommand{\DefNE}{\Def_{\mathrm{NE}}}
\newcommand{\CorNE}{\mathcal{C}_{\mathrm{NE}}}	
\newcommand{\Hom}{\mathcal{H}}
\newcommand{\Col}{\mathcal{C}}
\newcommand{\xspace}[1]{\mathcal{#1}}	
\newcommand{\PS}{\xspace{P}}				
\newcommand{\pa}{$\sfrac 1 2$}			
\newcommand{\incoming}{\mathrm{in}}			
\newcommand{\outgoing}{\mathrm{out}}		
\definecolor{lightgreen}{rgb}{0.7,1,0.7}
\definecolor{lightred}{rgb}{1,0.7,0.7}
\definecolor{lightblue}{rgb}{.7,0.7,1}
\newcommand{\sqaa}[2] 
{\draw[black, fill=lightred] (#1,#2) -- (#1+1,#2) -- (#1+1, #2+1) -- (#1, #2+1) -- cycle; \draw (#1+0.5, #2+0.5) node {$0$} ; }
\newcommand{\sqbb}[2] 
{\draw[black] (#1,#2) -- (#1+1,#2) -- (#1+1, #2+1) -- (#1, #2+1) -- cycle; \draw (#1+0.5, #2+0.5) node {$1$} ; }
\newcommand{\sqcc}[2] 
{\draw[black, fill=lightgreen] (#1,#2) -- (#1+1,#2) -- (#1+1, #2+1) -- (#1, #2+1) -- cycle; \draw (#1+0.5, #2+0.5) node {$2$} ; }
\newcommand{\sqdd}[2] 
{\draw[black, fill=lightblue] (#1,#2) -- (#1+1,#2) -- (#1+1, #2+1) -- (#1, #2+1) -- cycle; \draw (#1+0.5, #2+0.5) node {$3$} ; }
\newcommand{\sqee}[2] 
{\draw[black, fill=lightred] (#1,#2) -- (#1+1,#2) -- (#1+1, #2+1) -- (#1, #2+1) -- cycle; \draw (#1+0.5, #2+0.5) node {$\mathbf{0}$} ; }
\newcommand{\sqE}[4]{
\draw [color=black,line width=1mm,fill=#4] (#1,#2) rectangle ++(1,1);
\draw (#1+.5,#2+.5) node {#3} ;  }
\newcommand{\sqF}[3]{
\draw [color=gray,line width=1.5pt] (#1,#2) rectangle ++(1,1);
\draw (#1+.5,#2+.5) node {#3} ;  }
\newcommand{\sqZ}[2]{{\sqF{#1}{#2}0}}
\newcommand{\sqO}[2]{
\draw [color=black,line width=3pt] (#1+0.1,#2+0.1) rectangle ++(0.8,0.8);
\draw (#1+.5,#2+.5) node {1} ;  
}
\newcommand{\basisFour}[4]{
\sqE 0 0 {$\alpha$}{#1}
\sqE 0 1 {$\beta$}{#2}
\sqE 1 0 {$\delta$}{#4}
\sqE 1 1 {$\gamma$}{#3}
}
\newcommand{\crownFour}{
\sqF {-1} 0 {$a$}
\sqF {-1} 1 {$b$}
\sqF 0 2 {$c$}
\sqF 1 2 {$d$}
\sqF 2 0 {$f$}
\sqF 2 1 {$e$}
\sqF 0 {-1} {$h$}
\sqF 1 {-1} {$g$}
}
\newcommand{\basisLarge}{
\sqE 0 0 {}{white}
\sqE 0 1 {}{white}
\sqE 0 2 {}{white}
\sqE 0 3 {$b_1$}{white}
\sqE 1 0 {}{white}
\sqE 1 1 {}{white}
\sqE 1 2 {}{white}
\sqE 1 3 {$b_2$}{white}
\sqE 2 0 {$\hdots$}{white}
\sqE 2 1 {}{white}
\sqE 2 2 {}{white}
\sqE 2 3 {$\hdots$}{white}
\sqE 3 0 {$b_{\ell^2}$}{white}
\sqE 3 1 {}{white}
\sqE 3 2 {}{white}
\sqE 3 3 {}{white}
}
\newcommand{\crownLarge}{
\sqF {-1} 0 {$a_1$}
\sqF {-1} 1 {$a_2$}
\sqF {-1} 2 {$\vdots$}
\sqF {-1} 3 {$a_{\ell}$}
\sqF 0 4 {$a_{\ell+1}$}
\sqF 1 4 {$a_{\ell+2}$}
\sqF 2 4 {$\hdots$}
\sqF 3 4 {$a_{2\ell}$}
\sqF 4 3 {$a_{2\ell+1}$}
\sqF 4 2 {$a_{2\ell+2}$}
\sqF 4 1 {$\vdots$}
\sqF 4 0 {$a_{3\ell}$}
\sqF 3 {-1} {$a_{3\ell+1}$}
\sqF 2 {-1} {$a_{3\ell+2}$}
\sqF 1 {-1} {$\hdots$}
\sqF 0 {-1} {$a_{4\ell}$}
}
\newcommand{\twobytwo}[2]{
\fill[pattern=north east lines] (#1+1,#2+1)--(#1+1,#2+2)--(#1+2,#2+2)--(#1+2,#2+1)--cycle ;
\draw[color=green, line width=0.5mm] (#1,#2+3)--(#1+3,#2+3)--(#1+3,#2+1) ;
\draw[color=green, line width=0.5mm] (#1+2,#2+2)--(#1+3,#2+2) ;
\draw[color=green, line width=0.5mm] (#1+1,#2+2)--(#1+1,#2+3) ;
\draw[color=green, line width=0.5mm] (#1+2,#2+2)--(#1+2,#2+3) ;
\draw[color=green, line width=0.5mm] (#1+3,#2+1)--(#1+3,#2+3) ;
\draw[color=green, line width=0.5mm] (#1-1,#2+4)--(#1+3,#2+4)--(#1+3,#2+3)--(#1+4,#2+3)--(#1+4,#2+0) ;
\draw[color=green, line width=0.5mm] (#1+0,#2+3)--(#1+0,#2+4) ;
\draw[color=green, line width=0.5mm] (#1+1,#2+3)--(#1+1,#2+4) ;
\draw[color=green, line width=0.5mm] (#1+2,#2+3)--(#1+2,#2+4) ;
\draw[color=green, line width=0.5mm] (#1+3,#2+1)--(#1+4,#2+1) ;
\draw[color=green, line width=0.5mm] (#1+3,#2+2)--(#1+4,#2+2) ;
\draw[color=red, line width=1mm] (#1+1,#2+2)--(#1+2,#2+2)--(#1+2,#2+1) ; 
\draw[ultra thick] (#1,#2)--(#1+2,#2)--(#1+2,#2+2)--(#1,#2+2)--(#1,#2) ; 
}
\newcommand{\twobytwobis}[2]{
\fill[pattern=north east lines] (#1+1,#2+1)--(#1+1,#2+2)--(#1+2,#2+2)--(#1+2,#2+1)--cycle ;
\draw[color=green, line width=0.5mm] (#1,#2+3)--(#1+3,#2+3)--(#1+3,#2+1) ;
\draw[color=green, line width=0.5mm] (#1+2,#2+2)--(#1+3,#2+2) ;
\draw[color=green, line width=0.5mm] (#1+1,#2+2)--(#1+1,#2+3) ;
\draw[color=green, line width=0.5mm] (#1+2,#2+2)--(#1+2,#2+3) ;
\draw[color=green, line width=0.5mm] (#1+3,#2+1)--(#1+3,#2+3) ;
\draw[color=green, line width=0.5mm] (#1-1,#2+4)--(#1+3,#2+4)--(#1+3,#2+3)--(#1+4,#2+3)--(#1+4,#2+0) ;
\draw[color=green, line width=0.5mm] (#1+0,#2+3)--(#1+0,#2+4) ;
\draw[color=green, line width=0.5mm] (#1+1,#2+3)--(#1+1,#2+4) ;
\draw[color=green, line width=0.5mm] (#1+2,#2+3)--(#1+2,#2+4) ;
\draw[color=green, line width=0.5mm] (#1+3,#2+1)--(#1+4,#2+1) ;
\draw[color=green, line width=0.5mm] (#1+3,#2+2)--(#1+4,#2+2) ;
\draw[color=red, line width=1mm] (#1+1,#2+2)--(#1+2,#2+2) ; 
\draw[ultra thick] (#1,#2)--(#1+2,#2)--(#1+2,#2+2)--(#1,#2+2)--(#1,#2) ; 
}
\newcommand{\twobytwoter}[2]{
\fill[pattern=north east lines] (#1+1,#2+1)--(#1+1,#2+2)--(#1+2,#2+2)--(#1+2,#2+1)--cycle ;
\draw[color=green, line width=0.5mm] (#1,#2+3)--(#1+3,#2+3)--(#1+3,#2+1) ;
\draw[color=green, line width=0.5mm] (#1+2,#2+2)--(#1+3,#2+2) ;
\draw[color=green, line width=0.5mm] (#1+1,#2+2)--(#1+1,#2+3) ;
\draw[color=green, line width=0.5mm] (#1+2,#2+2)--(#1+2,#2+3) ;
\draw[color=green, line width=0.5mm] (#1+3,#2+1)--(#1+3,#2+3) ;
\draw[color=green, line width=0.5mm] (#1-1,#2+4)--(#1+3,#2+4)--(#1+3,#2+3)--(#1+4,#2+3)--(#1+4,#2+0) ;
\draw[color=green, line width=0.5mm] (#1+0,#2+3)--(#1+0,#2+4) ;
\draw[color=green, line width=0.5mm] (#1+1,#2+3)--(#1+1,#2+4) ;
\draw[color=green, line width=0.5mm] (#1+2,#2+3)--(#1+2,#2+4) ;
\draw[color=green, line width=0.5mm] (#1+3,#2+1)--(#1+4,#2+1) ;
\draw[color=green, line width=0.5mm] (#1+3,#2+2)--(#1+4,#2+2) ;
\draw[color=red, line width=1mm] (#1+2,#2+2)--(#1+2,#2+1) ; 
\draw[ultra thick] (#1,#2)--(#1+2,#2)--(#1+2,#2+2)--(#1,#2+2)--(#1,#2) ; 
}
\newcommand{\figOneFill}{
\sqE 0 0 {$\alpha$}{white}
\sqF {-1} 0 {$a$}
\sqF {0} 1 {$b$}
\sqF 1 0 {$c$}
\sqF  0 {-1} {$d$}
}
\newcommand{\figFourCase}{
\basisFour{white}{white}{white}{white}
\crownFour
}
\newcommand{\figEllCase}{
\basisLarge{white}
\crownLarge
}
\newcommand{\figNeighbFourCol}{
\foreach \x in {-1,...,4} \draw (\x,-1)--(\x,4) ;
\foreach \y in {-1,...,4} \draw (-1,\y)--(4,\y) ;

\fill[pattern=north east lines] (1,1)--(1,2)--(2,2)--(2,1)--cycle ;

\draw[color=green, line width=1mm] (-1,4)--(3,4)--(3,1) ;
\draw[color=green, line width=1mm] (0,3)--(4,3)--(4,0) ;
\draw[color=green, line width=1mm] (0,3)--(4,3)--(4,0) ;
\draw[color=green, line width=1mm] (3,1)--(4,1) ;
\draw[color=green, line width=1mm] (2,2)--(4,2) ;
\draw[color=green, line width=1mm] (0,3)--(0,4) ;
\draw[color=green, line width=1mm] (1,2)--(1,4) ;
\draw[color=green, line width=1mm] (2,2)--(2,4) ;
\draw[color=red, line width=1.5mm] (1,2)--(2,2)--(2,1) ; 
\draw[ultra thick] (0,0)--(2,0)--(2,2)--(0,2)--(0,0) ; 
}
\newcommand{\figBlockTwo}{
\draw[color=red, line width=1mm] (1,2)--(2,2) ; 
\draw[color=red, line width=1mm] (7,6)--(8,6) ; 
\draw[color=red, line width=1mm] (3,6)--(4,6) ; 
\draw[color=red, line width=1mm] (2,7)--(2,5) ; 
\foreach \x in {0,...,10} \draw (\x,0)--(\x,10) ;
\foreach \y in {0,...,10} \draw (0,\y)--(10,\y) ;
\twobytwo{2}{0} 
\twobytwobis{5}{1} 
\twobytwoter{1}{6} 
\twobytwo{5}{5} 
\draw[line width=0.5mm, style=dotted] (3,10)--(10,3) ;
}
\newcommand{\figNeighbFourColGeneral}{
\foreach \x in {-1,...,6} \draw (\x,-1)--(\x,6) ;
\foreach \y in {-1,...,6} \draw (-1,\y)--(6,\y) ;

\fill[pattern=north east lines] (2,2)--(2,3)--(3,3)--(3,2)--cycle ;

\draw[color=green, line width=1mm] (-1,6)--(5,6)--(5,1) ;
\draw[color=green, line width=1mm] (0,5)--(6,5)--(6,0) ;
\draw[color=green, line width=1mm] (1,4)--(6,4) ;
\draw[color=green, line width=1mm] (4,2)--(4,6) ;
\draw[color=green, line width=1mm] (0,5)--(0,6) ;
\draw[color=green, line width=1mm] (1,4)--(1,6) ;
\draw[color=green, line width=1mm] (2,3)--(2,6) ;
\draw[color=green, line width=1mm] (3,3)--(3,6) ;
\draw[color=green, line width=1mm] (5,1)--(6,1) ;
\draw[color=green, line width=1mm] (4,2)--(6,2) ;
\draw[color=green, line width=1mm] (3,3)--(6,3) ;

\draw[color=red, line width=1.5mm] (2,3)--(3,3)--(3,2) ; 
\draw[ultra thick] (0,0)--(3,0)--(3,3)--(0,3)--(0,0) ; 
}
\newcommand{\sqG}[3]{\sqF #1 #2 {$#3$}}
\newcommand{\crownMaj}{
\sqZ {-1} 0 
\sqF {-1} {1} {$\vdots$} 
\sqZ {-1} 2 
\sqZ {-1} 3 
\sqZ 0 4 
\sqZ 1 4 
\sqG 2 4 {\dots}
\sqZ 3 4 
\sqZ 4 3 
\sqZ 4 2 
\sqF {4} {1} {$\vdots$} 
\sqZ 4 0 
\sqZ 3 {-1}
\sqF 2 {-1} {$\dots$} 
\sqZ 1 {-1} 
\sqZ 0 {-1} 
\sqO 0 2 
\sqG 0 2 {}
\sqO 1 2 
\sqG 1 2 {}
\sqG 2 2 {\dots}
\sqO 3 2 
\sqG 3 2 {}
\sqG 0 3 {x_1}
\sqG 1 3 {x_2}
\sqG 2 3 {\ldots}
\sqG 3 3 {x_k}
\sqG 0 1 {}
\sqG 1 1 {}
\sqG 2 1 {}
\sqG 3 1 {}
\sqG 0 0 {}
\sqG 1 0 {}
\sqG 2 0 {}
\sqG 3 0 {}
}
\newcommand{\Earrb}[2]{
\draw[->, ultra thick, color=blue] (#1+0.1,#2)--(#1+0.9,#2) ;}
\newcommand{\Earrc}[2]{
\draw[->, ultra thick, color=cyan] (#1+0.1,#2)--(#1+0.9,#2) ;}
\newcommand{\Sarrb}[2]{
\draw[->, ultra thick, color=blue] (#1,#2-0.1)--(#1,#2-0.9) ;}
\newcommand{\Sarrc}[2]{
\draw[->, ultra thick, color=cyan] (#1,#2-0.1)--(#1,#2-0.9) ;}
\newcommand{\dwn}[2]{
{\draw[thick,->] (#1,#2+0.9)--(#1,0.1+#2) ;}
}
\newcommand{\upw}[2]{
{\draw[thick,->] (#1,#2+0.1)--(#1,0.9+#2) ;}
}
\newcommand{\rgt}[2]{
{\draw[thick,->] (#1+0.1,#2)--(#1+.9,#2) ;}
}
\newcommand{\lft}[2]{
{\draw[thick,->] (#1+0.9,#2)--(#1+.1,#2) ;}
}
\newcommand{\pc}[3]{
\ifnum#3=1
        \sqaa{#1}{#2};
\fi
\ifnum#3=2
        \sqbb{#1}{#2};
\fi
\ifnum#3=3
        \sqcc{#1}{#2};
\fi
}
\newcommand{\sq}[2] 
{\draw[black] (#1,#2) -- (#1+1,#2) -- (#1+1, #2+1) -- (#1, #2+1) -- cycle; }
\newcommand{\pcQ}[5]{
\pc 0 #5 #1 
\pc 1 #5 #2
\pc 2 #5 #3
\pc 3 #5 #4
}
\newcommand{\upd}[3]{
\ifnum#3=0
        \upw{#1}{#2};
\fi
\ifnum#3=1
        \dwn{#1}{#2};
\fi
}
\newcommand{\lgd}[3]{
\ifnum#3=0
        \lft{#1}{#2};
\fi
\ifnum#3=1
        \rgt{#1}{#2};
\fi
}
\newcommand{\updT}[4]{
\upd 1 #4 #1
\upd 2 #4 #2
\upd 3 #4 #3
}
\newcommand{\lrT}[4]{
\lgd #4 1 #1
\lgd #4 2 #2
\lgd #4 3 #3
}
\newcommand{\SVcaseError}{
\foreach \i in {0,3,5,8,11} \foreach \j in {0,3,6,9} {\sqaa{\i}{\j}} ;
\foreach \i in {1,4,9} \foreach \j in {1,4,7,10} {\sqaa{\i}{\j}} ;
\foreach \i in {2,5,7,10} \foreach \j in {2,5,8,11} {\sqaa{\i}{\j}} ;

\foreach \i in {0,3,8,11} \foreach \j in {2,5,8,11} {\sqbb{\i}{\j}} ;
\foreach \i in {1,4,9} \foreach \j in {0,3,6,9} {\sqbb{\i}{\j}} ;
\foreach \i in {2,5,7,10} \foreach \j in {1,4,7,10} {\sqbb{\i}{\j}} ;

\foreach \i in {0,3,8,11} \foreach \j in {1,4,7,10} {\sqcc{\i}{\j}} ;
\foreach \i in {1,4,9} \foreach \j in {2,5,8,11} {\sqcc{\i}{\j}} ;
\foreach \i in {2,5,7,10} \foreach \j in {0,3,6,9} {\sqcc{\i}{\j}} ;

\foreach \j in {0,3,7,10} {\sqaa{6}{\j}} ;
\foreach \j in {2,5,6,9,11} {\sqbb{6}{\j}} ;
\foreach \j in {1,4,8} {\sqcc{6}{\j}} ;

\foreach \x in {1,2,3,4,5,8,9,10,11} \foreach \y in {0,...,11}{\draw[thick, ->] (\x,\y+0.1)--(\x,0.9+\y) ;} ;
\foreach \x in {0,1,2,3,4,5,7,8,9,10,11} \foreach \y in {1,...,11}{\draw[thick, ->] (0.1+\x,\y)--(0.9+\x,\y) ;} ;

\foreach \x in {6} \foreach \y in {0,1,2,3,4,5,11} {\draw[->, thick] (\x,\y+0.1)--(\x,0.9+\y) ;} ; 
\foreach \x in {6} \foreach \y in {6,...,10} {\draw[->, line width=2pt] (\x,\y+0.9)--(\x,0.1+\y) ;} ; 
\foreach \x in {7} \foreach \y in {0,1,2,3,4,5,11} {\draw[->, line width=2pt] (\x,\y+0.9)--(\x,0.1+\y) ;} ; 
\foreach \x in {7} \foreach \y in {6,...,10} {\draw[->, thick] (\x,\y+0.1)--(\x,0.9+\y) ;} ; 

\foreach \x in {6} \foreach \y in {1,2,3,4,5,7,8,9,10} {\draw[thick, ->] (0.1+\x,\y)--(0.9+\x,\y) ;} ; 
\foreach \x in {6} \foreach \y in {11} {\draw[line width=2pt, ->] (0.9+\x,\y)--(0.1+\x,\y) ;} ; 

\fill[pattern=north east lines, style=semitransparent, even odd rule] (0,0) (1,1)--(1,11)--(11,11)--(11,1)--cycle (0,0)--(0,12)--(12,12)--(12,0)--cycle;
}
\newcommand{\figExampleSV}{
\begin{tikzpicture}[scale=0.7,>=stealth']
\pcQ 3 2 3 1 3
\pcQ 1 3 1 2 2
\pcQ 3 1 2 1 1
\pcQ 1 2 3 2 0
\updT 1 0 0 3
\updT 1 0 0 2
\updT 0 0 1 1
\updT 0 0 1 0
\lrT 1 0 1 3
\lrT 1 1 1 2
\lrT 1 1 1 1
\lrT 1 0 1 0
\end{tikzpicture}
}
\newcommand{\verC}{
\begin{tikzpicture}[>=stealth']
\sq{0}{0} ; \sq{1}{0} ; {\draw[ultra thick, ->] (1,0) -- (1,1) ;}
\draw (0.5,0.5) node {$q$} ;
\draw (1.5,0.5) node {$q+1$} ;
\begin{scope}[shift={(0,1.2)}]
\sq{0}{0} ; \sq{1}{0} ; {\draw[ultra thick, <-] (1,0) -- (1,1) ;}
\draw (0.5,0.5) node {$q$} ;
\draw (1.5,0.5) node {$q-1$} ;
\end{scope}
\end{tikzpicture}
}
\newcommand{\horC}{
\begin{tikzpicture}[>=stealth']
\sq{0}{0} ; \sq{0}{1} ; {\draw[ultra thick, ->] (0,1) -- (1,1) ;}
\draw (0.5,1.5) node {$q$} ;
\draw (0.5,0.5) node {$q+1$} ;
\end{tikzpicture}
\begin{tikzpicture}[>=stealth']
\sq{0}{0} ; \sq{0}{1} ; {\draw[ultra thick, <-] (0,1) -- (1,1) ;}
\draw (0.5,1.5) node {$q$} ;
\draw (0.5,0.5) node {$q-1$} ;
\end{tikzpicture}
}
\newcommand{\figSVcontour}{

\foreach \x in {0,...,12} \draw (\x,0)--(\x,12) ;
\foreach \y in {0,...,12} \draw (0,\y)--(12,\y) ;

\foreach \x in {5,8,9} 
\foreach \y in {11}
{\draw[->, line width=1.5pt] (\x,\y+0.9)--(\x,0.1+\y) ;}
\draw (9,12.5) node {$1$} ;
\draw (8,12.5) node {$2$} ;
\draw (5,12.5) node {$3$} ;

\foreach \x in {0} 
\foreach \y in {3,7}
{\draw[->, line width=1.5pt] (\x+0.1,\y)--(\x+0.9,\y) ;}
\draw (-0.5,7) node {$4$} ;
\draw (-0.5,3) node {$5$} ;

\foreach \x in {11} 
\foreach \y in {2,3,6,9}
{\draw[->, line width=1.5pt] (\x+0.1,\y)--(\x+0.9,\y) ;}
\draw (12.5,9) node {$1$} ;
\draw (12.5,6) node {$2$} ;
\draw (12.5,3) node {$3$} ;
\draw (12.5,2) node {$4$} ;

\foreach \x in {6} 
\foreach \y in {0}
{\draw[->, line width=1.5pt] (\x,\y+0.9)--(\x,0.1+\y) ;}
\draw (6,-0.5) node {$5$} ;

\Earrb{9}{11} \Earrb{10}{11} \Sarrb{11}{11} \Sarrb{11}{10} 
\Earrc{8}{11} \Sarrc{9}{11} \Earrc{9}{10} \Sarrc{10}{10} \Earrc{10}{9} \Sarrc{11}{9} \Sarrc{11}{8} \Sarrc{11}{7} 
\Earrb{5}{11} \Earrb{6}{11} \Earrb{7}{11} \Sarrb{8}{11} \Earrb{8}{10} \Sarrb{9}{10} \Earrb{9}{9} \Sarrb{10}{9} \Sarrb{10}{8} \Sarrb{10}{7} \Earrb{10}{6} \Sarrb{11}{6} \Sarrb{11}{5} \Sarrb{11}{4} 
\Earrc{1}{7} \Earrc{2}{7} \Earrc{3}{7} \Earrc{4}{7} \Earrc{5}{7} \Earrc{6}{7} \Earrc{7}{7} \Earrc{8}{7} \Sarrc{9}{7} \Earrc{9}{6} \Sarrc{10}{6} \Sarrc{10}{5} \Sarrc{10}{4}  \Earrc{10}{3} \Sarrc{11}{3}
\Earrb{1}{3} \Earrb{2}{3} \Earrb{3}{3} \Earrb{4}{3} \Earrb{5}{3} \Sarrb{6}{3} \Sarrb{6}{2}

\fill[pattern=north east lines, style=semitransparent, even odd rule] (0,0) (1,1)--(1,11)--(11,11)--(11,1)--cycle (0,0)--(0,12)--(12,12)--(12,0)--cycle;

}
\newcommand{\CorrectionOneDVersionTwo}{
\coordinate (A) at (0,0);
\newcommand{\xA}{0} ;
\newcommand{\yA}{0} ;
\coordinate (B) at (1,0);
\newcommand{\xB}{1} ;
\newcommand{\yB}{0} ;
\coordinate (C) at (2,-1);
\newcommand{\xC}{2} ;
\newcommand{\yC}{-1} ;
\coordinate (D) at (6,-2.5);
\newcommand{\xD}{6} ;
\newcommand{\yD}{-2.5} ;
\coordinate (E) at (6,-6);
\newcommand{\xE}{6} ;
\newcommand{\yE}{-6} ;
\coordinate (F) at (6,-3);
\newcommand{\xF}{6} ;
\newcommand{\yF}{-3} ;
\draw[overlay,help lines] (-1,0) -- (7,0);
\draw[ultra thick,decorate,decoration={snake,segment length=5pt,amplitude=1pt}] (A) -- (B);
\fill[gray!30] (A) -- (E) -- (F) -- cycle;
\fill[gray!30] (B) -- (C) -- (D) -- cycle;
\fill[pattern=crosshatch] (C) -- (F) -- (D) -- cycle;
\draw[very thick, ->-] (A) -- (E);
\draw[very thick, ->-] (A) -- (C);
\draw[very thick, ->-] (C) -- (F);
\draw[very thick, ->-] (B) -- (C);
\draw[very thick, ->-] (B) -- (D);
\draw[very thick, ->-] (F) -- (E);
\draw[very thick, dashed, ->-] (C) -- (D);
\draw[very thick, dashed, ->-] (D) -- (F);

\node[cross out, red, draw, very thick, minimum size=5pt, inner sep=0pt, outer sep=0pt] at (C) {};

\node at (0, 0.5) {};		
\tikzstyle{quadri}=[rectangle,draw,color=blue,fill=yellow!50,text=blue]
\tikzstyle{estun}=[->,thin, color=blue]
\begin{scope}[overlay]
\node[quadri] (FAD) at (0,-2) {Fading of the traces (speed 1)};
\draw[estun, shorten >=2pt] (FAD)to[bend left](0.8*\xA+0.2*\xE,0.8*\yA+0.2*\yE) ;
\draw[estun, shorten >=2pt] (FAD)to[bend left](0.4*\xB+0.6*\xC,0.4*\yB+0.6*\yC) ;
\node[quadri] (FRONT) at (2.5,0.5) {Front of correction trails (speed 2)};
\draw[estun,shorten >=1pt] (FRONT)to[bend left](0.6*\xB+0.4*\xD,0.6*\yB+0.4*\yD) ;
\draw[estun,shorten >=1pt] (FRONT)to[bend left](0.4*\xA+0.6*\xF,0.4*\yA+0.6*\yF) ;
\node[quadri] (STOP) at (5.5,-0.5) {Propagation of a stop signal (speed 4)};
\draw[estun, shorten >=2pt] (STOP)to[bend left](0.4*\xC+0.6*\xD,0.4*\yC+0.6*\yD) ;
\node[quadri] (xSTOP) at (0.6, -3) {Creation of a stop signal};
\draw[estun, shorten >=2pt] (xSTOP) to[bend right] (C);

\fill[gray!30, overlay] (-0.3,-4) rectangle (0.5,-4.4);
\fill[pattern=crosshatch, overlay] (-0.3,-4.5) rectangle (0.5,-4.9);
\node[anchor=west,overlay] at (0.55,-4.2) {trace};
\node[anchor=west,overlay] at (0.55,-4.7) {stop};
\end{scope}
}
\tikzset{
	->-/.style={%
		decoration={markings,mark=at position #1 with {\arrow{>}}},%
		postaction={decorate}%
	},->-/.default=0.5%
}
\newdimen\cellsize
\newcommand{\cell}[2]{
	\IfEqCase{#1}{%
		{0}{\draw[very thin] #2 circle (\cellsize);}%
		{1}{\fill[black] #2 circle (\cellsize);}%
	}[%
		{%
		\fill[#1] #2 circle (\cellsize);
		\draw[very thin, gray] #2 circle (\cellsize);
		}%
	]
}
\newcommand{\xcell}[2]{
	\IfEqCase{#1}{%
		{0}{\fill[red] #2 circle (\cellsize);}%
		{1}{\fill[blue] #2 circle (\cellsize);}%
	}[%
		{%
		\fill[#1] #2 circle (\cellsize);
		\draw[very thin, gray] #2 circle (\cellsize);
		}%
	]
%
}
\newdimen\smallcellsize
\newcommand{\smallcell}[2]{
	\IfEqCase{#1}{%
		{0}{\draw[very thin] #2 circle (\smallcellsize);}%
		{1}{\fill[black] #2 circle (\smallcellsize);}%
	}[%
		{%
		\fill[#1] #2 circle (\smallcellsize);
		\draw[very thin, gray] #2 circle (\smallcellsize);
		}%
	]
}
\newdimen\mathcellsize
\newdimen\mathcellbase
\newcommand{\mathcell}[1]{
	\tikz[baseline=\mathcellbase]{%
		\IfEqCase{#1}{%
			{0}{\draw[very thin] (0,0) circle (\mathcellsize);}%
			{1}{\fill[black] (0,0) circle (\mathcellsize);}%
		}[%
			{%
			\fill[#1] (0,0) circle (\mathcellsize);
			\draw[very thin, gray] (0,0) circle (\mathcellsize);
			}%
		]
	}
}
\newdimen\tilesize
\newdimen\tilecorner
\newcommand{\ctile}[2]{
	\begin{scope}
		\clip #2 +(-\tilesize,-\tilesize) rectangle +(\tilesize,\tilesize);%
		\IfEqCase{#1}{%
			{0}{%
				\fill[white] #2 +(-\tilesize,-\tilesize) rectangle +(\tilesize,\tilesize);%
				\draw[very thin] #2 +(-\tilesize,-\tilesize) rectangle +(\tilesize,\tilesize);}%
			{1}{%
				\fill[black] #2 +(-\tilesize,-\tilesize) rectangle +(\tilesize,\tilesize);%
				\draw[very thin] #2 +(-\tilesize,-\tilesize) rectangle +(\tilesize,\tilesize);
			}%
		}[%
			{%
			\fill[#1] #2 +(-\tilesize,-\tilesize) rectangle +(\tilesize,\tilesize);%
			\draw[very thin] #2 +(-\tilesize,-\tilesize) rectangle +(\tilesize,\tilesize);%
			}%
		]
	\end{scope}
}
\newcommand{\colorededge}[3]{
	\begin{scope}[rounded corners=\tilecorner]
		\IfEqCase{#1}{%
			{l}{%
				\clip #3 -- +(-\tilesize,-\tilesize) -- +(-\tilesize,\tilesize) -- cycle;
				\fill[#2] #3 -- +(-\tilesize,-\tilesize) -- +(-\tilesize,\tilesize) -- cycle;
				\draw[very thin] #3 -- +(-\tilesize,-\tilesize) -- +(-\tilesize,\tilesize) -- cycle;
			}
			{r}{%
				\clip #3 -- +(\tilesize,-\tilesize) -- +(\tilesize,\tilesize) -- cycle;
				\fill[#2] #3 -- +(\tilesize,-\tilesize) -- +(\tilesize,\tilesize) -- cycle;
				\draw[very thin] #3 -- +(\tilesize,-\tilesize) -- +(\tilesize,\tilesize) -- cycle;
			}
			{d}{%
				\clip #3 -- +(-\tilesize,-\tilesize) -- +(\tilesize,-\tilesize) -- cycle;
				\fill[#2] #3 -- +(-\tilesize,-\tilesize) -- +(\tilesize,-\tilesize) -- cycle;
				\draw[very thin] #3 -- +(-\tilesize,-\tilesize) -- +(\tilesize,-\tilesize) -- cycle;
			}
			{u}{%
				\clip #3 -- +(-\tilesize,\tilesize) -- +(\tilesize,\tilesize) -- cycle;
				\fill[#2] #3 -- +(-\tilesize,\tilesize) -- +(\tilesize,\tilesize) -- cycle;
				\draw[very thin] #3 -- +(-\tilesize,\tilesize) -- +(\tilesize,\tilesize) -- cycle;
			}
		}[%
			\PackageError{colorededge}{Undefined tile side}{See the definition!}
		]
	\end{scope}
}
\newcommand{\wtile}[5]{
	\begin{scope}
		\colorededge{l}{#1}{#5}%
		\colorededge{r}{#2}{#5}%
		\colorededge{d}{#3}{#5}%
		\colorededge{u}{#4}{#5}%
		\draw #5 +(-\tilesize,-\tilesize) rectangle +(\tilesize,\tilesize);%
	\end{scope}
}
\newcommand{\collarrow}{%
	\relax\ifmmode%
		\tikz[>=stealth',baseline=\mathcellbase] \draw[->] (0.5\tilesize,0) -- (-0.5\tilesize,0);%
	\else%
		\tikz[>=stealth'] \draw[->] (0.5\tilesize,0) -- (-0.5\tilesize,0);%
	\fi
}
\newcommand{\colrarrow}{%
	\relax\ifmmode%
		\tikz[>=stealth',baseline=\mathcellbase] \draw[->] (-0.5\tilesize,0) -- (0.5\tilesize,0);%
	\else%
		\tikz[>=stealth'] \draw[->] (-0.5\tilesize,0) -- (0.5\tilesize,0);%
	\fi
}
\newcommand{\coldarrow}{%
	\relax\ifmmode%
		\,\tikz[>=stealth',baseline=\mathcellbase] \draw[->] (0,0.5\tilesize) -- (0,-0.5\tilesize);\,%
	\else%
		\tikz[>=stealth'] \draw[->] (0,0.5\tilesize) -- (0,-0.5\tilesize);%
	\fi
}
\newcommand{\coluarrow}{%
	\relax\ifmmode%
		\,\tikz[>=stealth',baseline=\mathcellbase] \draw[->] (0,-0.5\tilesize) -- (0,0.5\tilesize);\,%
	\else%
		\tikz[>=stealth'] \draw[->] (0,-0.5\tilesize) -- (0,0.5\tilesize);%
	\fi
}
\newcommand{\colltick}{%
	\relax\ifmmode%
		\tikz[>=stealth',baseline=\mathcellbase] {%
			\draw (0.5\tilesize,0) -- (-0.5\tilesize,0);%
			\draw (-0.3\tilesize,-0.3\tilesize) -- (-0.3\tilesize,0.3\tilesize);%
		}	
	\else%
		\tikz[>=stealth'] {%
			\draw (0.5\tilesize,0) -- (-0.5\tilesize,0);%
			\draw (-0.3\tilesize,-0.3\tilesize) -- (-0.3\tilesize,0.3\tilesize);%
		}
	\fi
}
\newcommand{\colrtick}{%
	\relax\ifmmode%
		\tikz[>=stealth',baseline=\mathcellbase] {%
			\draw (0.5\tilesize,0) -- (-0.5\tilesize,0);%
			\draw (0.3\tilesize,-0.3\tilesize) -- (0.3\tilesize,0.3\tilesize);%
		}	
	\else%
		\tikz[>=stealth'] {%
			\draw (0.5\tilesize,0) -- (-0.5\tilesize,0);%
			\draw (0.3\tilesize,-0.3\tilesize) -- (0.3\tilesize,0.3\tilesize);%
		}
	\fi
}
\newcommand{\coldtick}{%
	\relax\ifmmode%
		\tikz[>=stealth',baseline=\mathcellbase] {%
			\draw (0,0.5\tilesize) -- (0,-0.5\tilesize);%
			\draw (-0.3\tilesize,-0.3\tilesize) -- (0.3\tilesize,-0.3\tilesize);%
		}
	\else%
		\tikz[>=stealth'] {%
			\draw (0,0.5\tilesize) -- (0,-0.5\tilesize);%
			\draw (-0.3\tilesize,-0.3\tilesize) -- (0.3\tilesize,-0.3\tilesize);%
		}
	\fi
}
\newcommand{\colutick}{%
	\relax\ifmmode%
		\tikz[>=stealth',baseline=\mathcellbase] {%
			\draw (0,0.5\tilesize) -- (0,-0.5\tilesize);%
			\draw (-0.3\tilesize,0.3\tilesize) -- (0.3\tilesize,0.3\tilesize);%
		}
	\else%
		\tikz[>=stealth'] {%
			\draw (0,0.5\tilesize) -- (0,-0.5\tilesize);%
			\draw (-0.3\tilesize,0.3\tilesize) -- (0.3\tilesize,0.3\tilesize);%
		}
	\fi
}
\newcommand{\wtilelab}[5]{
	\begin{scope}[shift={#5}]
		\draw[very thin] (-\tilesize,-\tilesize) rectangle (\tilesize,\tilesize);%
		\node at (-\tilesize,0) {#1};
		\node at (\tilesize,0) {#2};
		\node at (0,-\tilesize) {#3};
		\node at (0,\tilesize) {#4};
	\end{scope}
}
\newcommand{\wpatchblank}[1]{%
	\wtilelab{}{}{}{}{#1};
}
\newcommand{\wpatchLlu}[2][]{
	\ifthenelse{\equal{#1}{}}{\wtilelab{}{\colrarrow}{\colrarrow}{}{#2};}{%
		\IfEqCase{#1}{%
			{0}{\wtilelab{}{\colrarrow}{\colrarrow}{}{#2};}
			{1}{\wtilelab{\coldarrow}{}{\coldarrow}{}{#2};}
			{2}{\wtilelab{\collarrow}{}{}{\collarrow}{#2};}
			{3}{\wtilelab{}{\coluarrow}{}{\coluarrow}{#2};}
		}
	}
}
\newcommand{\wpatchLcu}[2][]{
	\ifthenelse{\equal{#1}{}}{\wtilelab{\colrarrow}{\colrarrow}{\colrarrow}{}{#2};}{%
		\IfEqCase{#1}{%
			{0}{\wtilelab{\colrarrow}{\colrarrow}{\colrarrow}{}{#2};}
			{1}{\wtilelab{\coldarrow}{}{\coldarrow}{\coldarrow}{#2};}
			{2}{\wtilelab{\collarrow}{\collarrow}{}{\collarrow}{#2};}
			{3}{\wtilelab{}{\coluarrow}{\coluarrow}{\coluarrow}{#2};}
		}
	}
}
\newcommand{\wpatchLru}[2][]{
	\ifthenelse{\equal{#1}{}}{
		\wtilelab{\colrarrow}{}{}{\coldarrow}{#2};
		\begin{scope}[shift={#2}]
			\draw (-0.5\tilesize,-1.3\tilesize)
				-- (-0.5\tilesize,-0.7\tilesize) -- (0.7\tilesize,0.5\tilesize)
				-- (1.3\tilesize,0.5\tilesize);
		\end{scope}
	}{%
		\IfEqCase{#1}{%
			{0}{%
				\wtilelab{\colrarrow}{}{}{\coldarrow}{#2};
				\begin{scope}[shift={#2}]
					\draw (-0.5\tilesize,-1.3\tilesize)
						-- (-0.5\tilesize,-0.7\tilesize) -- (0.7\tilesize,0.5\tilesize)
						-- (1.3\tilesize,0.5\tilesize);
				\end{scope}
			}
			{1}{%
				\wtilelab{}{\collarrow}{}{\coldarrow}{#2};
				\begin{scope}[shift={#2}]
					\draw (-1.3\tilesize,0.5\tilesize)
						-- (-0.7\tilesize,0.5\tilesize) -- (0.5\tilesize,-0.7\tilesize)
						-- (0.5\tilesize,-1.3\tilesize);
				\end{scope}
			}
			{2}{%
				\wtilelab{}{\collarrow}{\coluarrow}{}{#2};
				\begin{scope}[shift={#2}]
					\draw (0.5\tilesize,1.3\tilesize)
						-- (0.5\tilesize,0.7\tilesize) -- (-0.7\tilesize,-0.5\tilesize)
						-- (-1.3\tilesize,-0.5\tilesize);
				\end{scope}
			}
			{3}{%
				\wtilelab{\colrarrow}{}{\coluarrow}{}{#2};
				\begin{scope}[shift={#2}]
					\draw (1.3\tilesize,-0.5\tilesize)
						-- (0.7\tilesize,-0.5\tilesize) -- (-0.5\tilesize,0.7\tilesize)
						-- (-0.5\tilesize,1.3\tilesize);
				\end{scope}
			}
		}
	}	
}
\newcommand{\wpatchLlc}[3][]{
	\wtilelab{}{}{\colrarrow}{\colrarrow}{#2};
	\begin{scope}[shift={#2}]
		\ifthenelse{\equal{#3}{}}{%
		}{%
			\path (\tilesize,-\tilesize) -- node[right=-2pt] {$(#3,\bullet)$} (\tilesize,\tilesize);
		}
	\end{scope}
}
\newcommand{\wpatchLcc}[3][]{
	\wtilelab{}{}{\colrarrow}{\colrarrow}{#2};
	\begin{scope}[shift={#2}]
		\ifthenelse{\equal{#3}{}}{%
			\path (-\tilesize,-\tilesize) -- node {$\bullet$} (-\tilesize,\tilesize);
		}{%
			\path (-\tilesize,-\tilesize) -- node[left=-2pt] {$(#3,\bullet)$} (-\tilesize,\tilesize);
			\path (\tilesize,-\tilesize) -- node[right=-2pt] {$(#3,\bullet)$} (\tilesize,\tilesize);
		}
	\end{scope}
}
\newcommand{\wpatchLrc}[3][]{
	\wtilelab{}{}{}{}{#2};
	\begin{scope}[shift={#2}]
		\ifthenelse{\equal{#3}{}}{%
			\path (-\tilesize,-\tilesize) -- node {$\bullet$} (-\tilesize,\tilesize);
			\path (\tilesize,-\tilesize) -- node[right=-2pt] {$#3$} (\tilesize,\tilesize);
		}{%
			\path (-\tilesize,-\tilesize) -- node[left=-2pt] {$\bullet$} (-\tilesize,\tilesize);
			\path (\tilesize,-\tilesize) -- node[right=-2pt] {$#3$} (\tilesize,\tilesize);
		}
		\draw (-0.5\tilesize,-1.3\tilesize) -- (-0.5\tilesize,1.3\tilesize);
	\end{scope}
}
\newcommand{\wpatchLld}[2][]{
	\wtilelab{}{\colrarrow}{}{\colrarrow}{#2};
}
\newcommand{\wpatchLcd}[2][]{
	\wtilelab{\colrarrow}{\colrarrow}{}{\colrarrow}{#2};
}
\newcommand{\wpatchLrd}[2][]{
	\wtilelab{\colrarrow}{}{\coluarrow}{}{#2};
	\begin{scope}[shift={#2}]
		\draw (-0.5\tilesize,1.3\tilesize)
			-- (-0.5\tilesize,0.7\tilesize) -- (0.7\tilesize,-0.5\tilesize)
			-- (1.3\tilesize,-0.5\tilesize);
	\end{scope}
}
\newcommand{\globalpatchinginput}{%
	\fill[pattern=north east lines, pattern color=blue] (-6,-6) rectangle (10,10);
	\fill[white] (0,0) rectangle (4,4);
	
	\draw[very thick] (-6,-6) rectangle (10,10);
	\draw[very thick] (0,0) rectangle (4,4);
	\draw[dashed] (-3,-3) rectangle (7,7);
	\draw[dashed] (-4,-4) rectangle (8,8);
	
	\node at (2,2) {$S_n$};
	
	\begin{scope}[overlay,>=stealth',shorten <=1pt]
		\draw[<->,thin] (-6.5,-6) -- node[left=1.5ex,anchor=center,rotate=90]
			{\scriptsize $\beta(n)$} (-6.5,-4);
		\draw[<->,thin] (-6.5,-4) -- node[left=1.5ex,anchor=center,rotate=90]
			{\scriptsize $1$} (-6.5,-3);
		\draw[<->,thin] (-6.5,-3) -- node[left=1.5ex,anchor=center,rotate=90]
			{\scriptsize $\alpha(n)$} (-6.5,0);
		\draw[<->,thin] (-6.5,0) -- node[left=1.5ex,anchor=center,rotate=90]
			{\scriptsize $n$} (-6.5,4);
		
		\draw[<->,thin] (-6,-6.5) -- node[below] {\scriptsize $\beta(n)$} (-4,-6.5);
		\draw[<->,thin] (-4,-6.5) -- node[below] {\scriptsize $1$} (-3,-6.5);
		\draw[<->,thin] (-3,-6.5) -- node[below] {\scriptsize $\alpha(n)$} (0,-6.5);
		\draw[<->,thin] (0,-6.5) -- node[below] {\scriptsize $n$} (4,-6.5);
	\end{scope}
}
\newcommand{\globalpatchingoutput}{%
	\fill[pattern=north east lines, pattern color=blue] (-4,-4) rectangle (8,8);
	\fill[white] (-3,-3) rectangle (7,7);
	\fill[pattern=north west lines, pattern color=red] (-3,-3) rectangle (7,7);
	
	\draw[dashed] (0,0) rectangle (4,4);
	\draw[dashed] (-3,-3) rectangle (7,7);
	\draw[very thick] (-4,-4) rectangle (8,8);
	
	\node at (2,2) {$S_n$};
	
	\begin{scope}[overlay,>=stealth',shorten <=1pt]
		\draw[<->,thin] (-4.5,-4) -- node[left=1.5ex,anchor=center,rotate=90]
			{\scriptsize $1$} (-4.5,-3);
		\draw[<->,thin] (-4.5,-3) -- node[left=1.5ex,anchor=center,rotate=90]
			{\scriptsize $\alpha(n)$} (-4.5,0);
		\draw[<->,thin] (-4.5,0) -- node[left=1.5ex,anchor=center,rotate=90]
			{\scriptsize $n$} (-4.5,4);
		
		\draw[<->,thin] (-4,-4.5) -- node[below] {\scriptsize $1$} (-3,-4.5);
		\draw[<->,thin] (-3,-4.5) -- node[below] {\scriptsize $\alpha(n)$} (0,-4.5);
		\draw[<->,thin] (0,-4.5) -- node[below] {\scriptsize $n$} (4,-4.5);
	\end{scope}
}
\newcommand\nn{5}
\newcommand\nnl{4}
\newcommand{\squaretilinginstance}{%
	\draw (-\tilesize,-\tilesize) rectangle +(2*\nn*\tilesize,2*\nn*\tilesize);
	\foreach \j in {1,...,\nnl} {%
		\draw ($(-\tilesize,-\tilesize)+(-0.3\tilesize,2*\j*\tilesize)$) -- +(0.6\tilesize,0);
		\draw ($(2*\nn*\tilesize,0)+(-\tilesize,-\tilesize)+(-0.3\tilesize,2*\j*\tilesize)$) -- +(0.6\tilesize,0);
	}
	\foreach \i in {1,...,\nnl} {%
		\draw ($(-\tilesize,-\tilesize)+(2*\i*\tilesize,-0.3\tilesize)$) -- +(0,0.6\tilesize);
		\draw ($(0,2*\nn*\tilesize)+(-\tilesize,-\tilesize)+(2*\i*\tilesize,-0.3\tilesize)$) -- +(0,0.6\tilesize);
	}
	\foreach \j/\jlab in {0/1,1/2,\nnl/n} {%
		\path ($(-\tilesize,-\tilesize)+(0,2*\j*\tilesize)$) -- node[left=-2pt] {$a_{\jlab}$} +(0,2\tilesize);
	}
	\foreach \j/\jlab in {0/1,1/2,\nnl/n} {%
		\path ($(2*\nnl*\tilesize,2*\nnl*\tilesize)+(\tilesize,\tilesize)+(0,-2*\j*\tilesize)$) -- node[right=-2pt] {$c_{\jlab}$} +(0,-2\tilesize);
	}
	\foreach \i/\ilab in {0/1,1/2,\nnl/n} {%
		\path ($(0,2*\nnl*\tilesize)+(-\tilesize,\tilesize)+(2*\i*\tilesize,0)$) -- node[above=-2pt] {$b_{\ilab}$} +(2\tilesize,0);
	}
	\foreach \i/\ilab in {0/1,1/2,\nnl/n} {%
		\path ($(2*\nnl*\tilesize,0)+(\tilesize,-\tilesize)+(-2*\i*\tilesize,0)$) -- node[below=-2pt] {$d_{\ilab}$} +(-2\tilesize,0);
	}
}
\newcommand{\squaretilinginstancetransformed}{%
	
	\fill[blue!10] ($(-\tilesize,-\tilesize)+(-4,-4)$) rectangle ($(\tilesize,\tilesize)+(8,8)$);
	\fill[white] ($(-\tilesize,-\tilesize)+(-3,-3)$) rectangle ($(\tilesize,\tilesize)+(7,7)$);
	
	\fill[red!10] ($(-\tilesize,-\tilesize)+(-1,-1)$) rectangle ($(\tilesize,\tilesize)+(5,5)$);
	\fill[white] ($(-\tilesize,-\tilesize)+(0,0)$) rectangle ($(\tilesize,\tilesize)+(4,4)$);

	\foreach \j/\jlab in {0/1,1/2,\nnl/n} {%
		\path ($(-\tilesize,-\tilesize)+(0,2*\j*\tilesize)$) -- node[left=-2pt] {$a_{\jlab}$} +(0,2\tilesize);
	}
	\foreach \j/\jlab in {0/1,1/2,\nnl/n} {%
		\path ($(2*\nnl*\tilesize,2*\nnl*\tilesize)+(\tilesize,\tilesize)+(0,-2*\j*\tilesize)$) -- node[right=-2pt] {$c_{\jlab}$} +(0,-2\tilesize);
	}
	\foreach \i/\ilab in {0/1,1/2,\nnl/n} {%
		\path ($(0,2*\nnl*\tilesize)+(-\tilesize,\tilesize)+(2*\i*\tilesize,0)$) -- node[above=-2pt] {$b_{\ilab}$} +(2\tilesize,0);
	}
	\foreach \i/\ilab in {0/1,1/2,\nnl/n} {%
		\path ($(2*\nnl*\tilesize,0)+(\tilesize,-\tilesize)+(-2*\i*\tilesize,0)$) -- node[below=-2pt] {$d_{\ilab}$} +(-2\tilesize,0);
	}	
	
	\foreach \i in {-7,...,0,5,6,...,12} {%
		\draw[help lines] ($(-\tilesize,-\tilesize)+(2*\i*\tilesize,-14\tilesize)$)
			-- ($(-\tilesize,\tilesize)+(2*\i*\tilesize,22\tilesize)$);
	}
	\foreach \i in {1,...,4} {%
		\draw[help lines] ($(-\tilesize,-\tilesize)+(2*\i*\tilesize,-14\tilesize)$)
			-- ($(-\tilesize,\tilesize)+(2*\i*\tilesize,-2\tilesize)$);
		\draw[help lines] ($(-\tilesize,-\tilesize)+(2*\i*\tilesize,10\tilesize)$)
			-- ($(-\tilesize,\tilesize)+(2*\i*\tilesize,22\tilesize)$);
	}
	\foreach \j in {-7,...,0,5,6,...,12} {%
		\draw[help lines] ($(-\tilesize,-\tilesize)+(-14\tilesize,2*\j*\tilesize)$)
			-- ($(\tilesize,-\tilesize)+(22\tilesize,2*\j*\tilesize)$);
	}
	\foreach \j in {1,...,4} {%
		\draw[help lines] ($(-\tilesize,-\tilesize)+(-14\tilesize,2*\j*\tilesize)$)
			-- ($(\tilesize,-\tilesize)+(-2\tilesize,2*\j*\tilesize)$);
		\draw[help lines] ($(-\tilesize,-\tilesize)+(10\tilesize,2*\j*\tilesize)$)
			-- ($(\tilesize,-\tilesize)+(22\tilesize,2*\j*\tilesize)$);
	}

	\begin{scope}[overlay,>=stealth']
		\draw[<->,thin] ($(-\tilesize,-\tilesize)+(-7,-7.5)$) -- node[below] {$\beta(n)$} ($(\tilesize,-\tilesize)+(-5,-7.5)$);
		\draw[<->,thin] ($(-\tilesize,-\tilesize)+(-3,-7.5)$) -- node[below] {$\alpha(n)$} ($(\tilesize,-\tilesize)+(-1,-7.5)$);
		\draw[<->,thin] ($(-\tilesize,-\tilesize)+(-7.5,-7)$) -- node[left=1.5ex,anchor=center,rotate=90] {$\beta(n)$} ($(-\tilesize,\tilesize)+(-7.5,-5)$);
		\draw[<->,thin] ($(-\tilesize,-\tilesize)+(-7.5,-3)$) -- node[left=1.5ex,anchor=center,rotate=90] {$\alpha(n)$} ($(-\tilesize,\tilesize)+(-7.5,-1)$);
	\end{scope}
	
	\foreach \i in {-7,...,-2} {
		\path ($(2*\i*\tilesize,-2\tilesize)$) -- node {\colrarrow} +(2\tilesize,0);
		\path ($(2*\i*\tilesize,10\tilesize)$) -- node {\colrarrow} +(2\tilesize,0);
		\foreach \j in {0,...,5} {
			\path ($(-\tilesize,-\tilesize)+(2*\i*\tilesize,2*\j*\tilesize)$) -- node {\colrarrow} +(2\tilesize,0);
		}
		\foreach \j in {0,...,4} {
			\fill ($(\tilesize,0)+(2*\i*\tilesize,2*\j*\tilesize)$) circle (2pt);
		}
	}
	
	\foreach \i in {6,...,11} {
		\path ($(2*\i*\tilesize,-2\tilesize)$) -- node {\collarrow} +(-2\tilesize,0);
		\path ($(2*\i*\tilesize,10\tilesize)$) -- node {\collarrow} +(-2\tilesize,0);
		\foreach \j in {0,...,5} {
			\path ($(\tilesize,-\tilesize)+(2*\i*\tilesize,2*\j*\tilesize)$) -- node {\collarrow} +(-2\tilesize,0);
		}
		\foreach \j in {0,...,4} {
			\fill ($(-\tilesize,0)+(2*\i*\tilesize,2*\j*\tilesize)$) circle (2pt);
		}
	}

	\foreach \j in {-7,...,-2} {
		\path ($(-2\tilesize,2*\j*\tilesize)$) -- node {\coluarrow} +(0,2\tilesize);
		\path ($(10\tilesize,2*\j*\tilesize)$) -- node {\coluarrow} +(0,2\tilesize);
		\foreach \i in {0,...,5} {
			\path ($(-\tilesize,-\tilesize)+(2*\i*\tilesize,2*\j*\tilesize)$) -- node {\coluarrow} +(0,2\tilesize);
		}
		\foreach \i in {0,...,4} {
			\fill ($(0,\tilesize)+(2*\i*\tilesize,2*\j*\tilesize)$) circle (2pt);
		}
	}
	
	\foreach \j in {6,...,11} {
		\path ($(-2\tilesize,2*\j*\tilesize)$) -- node {\coldarrow} +(0,-2\tilesize);
		\path ($(10\tilesize,2*\j*\tilesize)$) -- node {\coldarrow} +(0,-2\tilesize);
		\foreach \i in {0,...,5} {
			\path ($(-\tilesize,\tilesize)+(2*\i*\tilesize,2*\j*\tilesize)$) -- node {\coldarrow} +(0,-2\tilesize);
		}
		\foreach \i in {0,...,4} {
			\fill ($(0,-\tilesize)+(2*\i*\tilesize,2*\j*\tilesize)$) circle (2pt);
		}
	}

	\draw (-2.5\tilesize,-1.3\tilesize)
		-- (-2.5\tilesize,9.3\tilesize) -- (-1.3\tilesize,10.5\tilesize)
		-- (9.3\tilesize,10.5\tilesize) -- (10.5\tilesize,9.3\tilesize)
		-- (10.5\tilesize,-1.3\tilesize) -- (9.3\tilesize,-2.5\tilesize)
		-- (-1.3\tilesize,-2.5\tilesize) -- cycle;
}
\newdimen\endpointsize
\newcommand{\wtpathsR}[1]{%
	\begin{scope}[shift={#1}]
		\draw[very thin] (-\tilesize,-\tilesize) rectangle (\tilesize,\tilesize);%
		\draw[very thick] (0,0) -- (\tilesize,0);
		\fill (0,0) circle (\endpointsize);
	\end{scope}
}
\newcommand{\wtpathsU}[1]{%
	\begin{scope}[shift={#1}]
		\draw[very thin] (-\tilesize,-\tilesize) rectangle (\tilesize,\tilesize);%
		\draw[very thick] (0,0) -- (0,\tilesize);
		\fill (0,0) circle (\endpointsize);
	\end{scope}
}
\newcommand{\wtpathsL}[1]{%
	\begin{scope}[shift={#1}]
		\draw[very thin] (-\tilesize,-\tilesize) rectangle (\tilesize,\tilesize);%
		\draw[very thick] (0,0) -- (-\tilesize,0);
		\fill (0,0) circle (\endpointsize);
	\end{scope}
}
\newcommand{\wtpathsD}[1]{%
	\begin{scope}[shift={#1}]
		\draw[very thin] (-\tilesize,-\tilesize) rectangle (\tilesize,\tilesize);%
		\draw[very thick] (0,0) -- (0,-\tilesize);
		\fill (0,0) circle (\endpointsize);
	\end{scope}
}
\newcommand{\wtpathsRU}[1]{%
	\begin{scope}[shift={#1}]
		\draw[very thin] (-\tilesize,-\tilesize) rectangle (\tilesize,\tilesize);%
		\draw[very thick,rounded corners=1pt] (\tilesize,0) -- (0,0) -- (0,\tilesize);
	\end{scope}
}
\newcommand{\wtpathsUL}[1]{%
	\begin{scope}[shift={#1}]
		\draw[very thin] (-\tilesize,-\tilesize) rectangle (\tilesize,\tilesize);%
		\draw[very thick,rounded corners=1pt] (0,\tilesize) -- (0,0) -- (-\tilesize,0);
	\end{scope}
}
\newcommand{\wtpathsLD}[1]{%
	\begin{scope}[shift={#1}]
		\draw[very thin] (-\tilesize,-\tilesize) rectangle (\tilesize,\tilesize);%
		\draw[very thick,rounded corners=1pt] (-\tilesize,0) -- (0,0) -- (0,-\tilesize);
	\end{scope}
}
\newcommand{\wtpathsDR}[1]{%
	\begin{scope}[shift={#1}]
		\draw[very thin] (-\tilesize,-\tilesize) rectangle (\tilesize,\tilesize);%
		\draw[very thick,rounded corners=1pt] (0,-\tilesize) -- (0,0) -- (\tilesize,0);
	\end{scope}
}
\newcommand{\wtpathsRL}[1]{%
	\begin{scope}[shift={#1}]
		\draw[very thin] (-\tilesize,-\tilesize) rectangle (\tilesize,\tilesize);%
		\draw[very thick,rounded corners=1pt] (\tilesize,0) -- (-\tilesize,0);
	\end{scope}
}
\newcommand{\wtpathsUD}[1]{%
	\begin{scope}[shift={#1}]
		\draw[very thin] (-\tilesize,-\tilesize) rectangle (\tilesize,\tilesize);%
		\draw[very thick,rounded corners=1pt] (0,\tilesize) -- (0,-\tilesize);
	\end{scope}
}
\newcommand{\wtpathsX}[1]{%
	\begin{scope}[shift={#1}]
		\draw[very thin] (-\tilesize,-\tilesize) rectangle (\tilesize,\tilesize);%
		\draw[very thick,rounded corners=1pt] (0,\tilesize) -- (0,-\tilesize);
		\draw[very thick,rounded corners=1pt] (\tilesize,0) -- (-\tilesize,0);
	\end{scope}
}
\newcommand{\wtpaths}[2]{%
	\IfEqCase{#1}{%
		{0}{%
			\wtpathsR{#2}
		}
		{1}{%
			\wtpathsU{#2}
		}
		{2}{%
			\wtpathsL{#2}
		}
		{3}{%
			\wtpathsD{#2}
		}
		{4}{%
			\wtpathsRU{#2}
		}
		{5}{%
			\wtpathsUL{#2}
		}
		{6}{%
			\wtpathsLD{#2}
		}
		{7}{%
			\wtpathsDR{#2}
		}
		{8}{%
			\wtpathsRL{#2}
		}
		{9}{%
			\wtpathsUD{#2}
		}
		{10}{%
			\wtpathsX{#2}
		}
	}
}
\def\pathsexampleconfig{{%
	{ 9,  7,  8,  6,  0,  2,  9,  7,  2,  9},
	{ 4, 10,  6,  9,  7,  2,  9,  9,  7,  5},
	{ 6,  1,  9,  9,  9,  0, 10,  5,  9,  7},
	{ 4,  2,  9,  9,  9,  7, 10,  8,  5,  4},
	{ 7,  8,  5,  4,  5,  1,  9,  7,  8,  6},
	{ 9,  7,  8,  8,  8,  8,  5,  9,  7,  5},
	{ 5,  9,  7,  8,  8,  2,  3,  4,  5,  3}
}}
\newcommand{\pathsexample}{%
	\foreach \j in {0,...,6} {
		\foreach \i in {0,...,9} {
			\pgfmathparse{\pathsexampleconfig[\j][\i]}
			\wtpaths{\pgfmathresult}{(\i,-\j)}
		}
	}
}
\newcommand{\kca}{red}
\newcommand{\kcb}{green}
\newcommand{\kcc}{blue}
\newcommand{\kcd}{yellow}
\newcommand{\kce}{gray}
\newcommand{\kctile}[2]{
	\IfEqCase{#1}{%
		{1}{%
			\wtile{\kca}{\kcb}{\kcc}{\kcb}{#2}
		}
		{2}{%
			\wtile{\kca}{\kcc}{\kcb}{\kcb}{#2}
		}
		{3}{%
			\wtile{\kcb}{\kcc}{\kcc}{\kcb}{#2}
		}
		{4}{%
			\wtile{\kcb}{\kca}{\kcb}{\kca}{#2}
		}
		{5}{%
			\wtile{\kcc}{\kca}{\kcc}{\kca}{#2}
		}
		{6}{%
			\wtile{\kcc}{\kcb}{\kcb}{\kca}{#2}
		}
		{7}{%
			\wtile{\kcd}{\kcd}{\kca}{\kcd}{#2}
		}
		{8}{%
			\wtile{\kcd}{\kcd}{\kcb}{\kcc}{#2}
		}
		{9}{%
			\wtile{\kcd}{\kce}{\kca}{\kcb}{#2}
		}
		{10}{%
			\wtile{\kcd}{\kce}{\kcd}{\kcb}{#2}
		}
		{11}{%
			\wtile{\kce}{\kce}{\kca}{\kcd}{#2}
		}
		{12}{%
			\wtile{\kce}{\kce}{\kcb}{\kcc}{#2}
		}
		{13}{%
			\wtile{\kce}{\kcd}{\kcb}{\kcb}{#2}
		}
	}
}
\newcommand{\ama}{red}
\newcommand{\amb}{green}
\newcommand{\amc}{blue!30}
\newcommand{\amd}{yellow}
\newcommand{\ame}{gray}
\newcommand{\amf}{blue!70!red}
\newcommand{\amtile}[2]{
	\IfEqCase{#1}{%
		{1}{
			\wtile{\ama}{\amb}{\amb}{\ama}{#2}
		}
		{2}{%
			\wtile{\amc}{\amd}{\amd}{\amc}{#2}
		}
		{3}{%
			\wtile{\amd}{\ame}{\ame}{\amd}{#2}
		}
		{4}{%
			\wtile{\amf}{\amc}{\amc}{\amf}{#2}
		}
		{5}{%
			\wtile{\amc}{\ame}{\amd}{\amd}{#2}
		}
		{6}{%
			\wtile{\amc}{\amc}{\amd}{\amf}{#2}
		}
		{7}{%
			\wtile{\amd}{\amd}{\ame}{\amc}{#2}
		}
		{8}{%
			\wtile{\amf}{\amd}{\amc}{\amc}{#2}
		}
		{9}{%
			\wtile{\amb}{\ama}{\amc}{\ame}{#2}
		}
		{10}{%
			\wtile{\amb}{\ama}{\amf}{\amd}{#2}
		}
		{11}{%
			\wtile{\ama}{\ama}{\amd}{\ame}{#2}
		}
		{12}{%
			\wtile{\amb}{\amb}{\amf}{\amc}{#2}
		}
		{13}{%
			\wtile{\amd}{\amf}{\ama}{\amb}{#2}
		}
		{14}{%
			\wtile{\ame}{\amc}{\ama}{\amb}{#2}
		}
		{15}{%
			\wtile{\amc}{\amf}{\amb}{\amb}{#2}
		}
		{16}{%
			\wtile{\ame}{\amd}{\ama}{\ama}{#2}
		}
	}
}
\newcommand{\xxamtile}[2]{
	\begin{scope}[shift={#2},rotate=-90]		
		\amtile{#1}{(0,0)}
	\end{scope}
}
\newdimen\blanksize
\newdimen\ltcellsize
\def\ledrappiertoomconfig{{%
	{0, 0, 0, 0, 0, 0, 0, 0, 0},
	{0, 0, 0, 0, 0, 0, 0, 0, 0},
	{0, 0, 0, 3, 1, 1, 2, 0, 0},
	{0, 0, 0, 0, 1, 0, 1, 0, 0},
	{0, 0, 0, 0, 0, 1, 1, 0, 0},
	{0, 0, 0, 0, 0, 0, 3, 0, 0},
	{0, 0, 0, 0, 0, 0, 0, 0, 0},
	{0, 0, 0, 0, 0, 0, 0, 0, 0},
	{0, 0, 0, 0, 0, 0, 0, 0, 0}
}}
\newcommand{\ledrappiertoom}{%
	\foreach \j in {0,...,8} {%
		\foreach \i in {0,...,8} {
			\pgfmathparse{int(\ledrappiertoomconfig[\j][\i])}
			\edef\x{\pgfmathresult}
			\IfEqCase{\x}{%
				{0}{%
					\node at (\i,\j) {$\symb{0}$};
				}
				{1}{%
					\fill[gray!50] (\i,\j) +(-\ltcellsize,-\ltcellsize) rectangle +(\ltcellsize,\ltcellsize);%
					\node at (\i,\j) {$\symb{1}$};
				}%
				{2}{%
					\fill[red!70] (\i,\j) +(-\ltcellsize,-\ltcellsize) rectangle +(\ltcellsize,\ltcellsize);%
					\node at (\i,\j) {$\symb{1}$};
				}%
				{3}{%
					\fill[yellow] (\i,\j) +(-\ltcellsize,-\ltcellsize) rectangle +(\ltcellsize,\ltcellsize);%
					\node at (\i,\j) {$\symb{1}$};
				}%
			}
		}
	}
}
\newcommand{\tikzmark}[1]{\tikz[remember picture] \node[coordinate] (#1) {#1};}
\begin{document}


\setcounter{page}{27}
\publyear{22}
\papernumber{2103}
\volume{185}
\issue{1}

    \finalVersionForARXIV


\title{Self-stabilisation of Cellular Automata on Tilings}

\author{Nazim Fat\`es\\
	Universit\'e de Lorraine\\
    CNRS, Inria, LORIA\\
	F-54000 Nancy, France\\
	nazim.fates@loria.fr
	\and
	Ir\`ene Marcovici\\
	Universit\'e de Lorraine\\
    CNRS, Inria, IECL\\
	F-54000 Nancy, France\\
	irene.marcovici@univ-lorraine.fr
	\and
	Siamak Taati\thanks{The work of ST was partially supported by NWO grant 612.001.409.}\thanks{Address for
                 correspondence: Department of  Mathematics, 	American University of Beirut,
	                                Beirut, Lebanon. \newline \newline
          \vspace*{-6mm}{\scriptsize{Received February 2021; \ accepted January 2022.}}}
       \\
	Department of Mathematics\\
 	American University of Beirut\\
	Beirut, Lebanon\\
	siamak.taati@gmail.com
}

\maketitle

\runninghead{N.~Fat\`es et al.}{Self-stabilisation of CA on Tilings}

\vspace*{-6mm}
\begin{abstract}
	Given a finite set of local constraints, we seek a cellular automaton (i.e., a local and uniform algorithm) that self-stabilises on the configurations that satisfy these constraints.
	More precisely, starting from a finite perturbation of a valid configuration, the cellular automaton must eventually fall back into the space of valid configurations where it remains still.
	We allow the cellular automaton to use extra symbols, but in that case, the extra symbols can also appear in the initial finite perturbation.
	For several classes of local constraints (e.g., $k$-colourings with $k\neq 3$, and North-East deterministic constraints), we provide efficient self-stabilising cellular automata with or without additional symbols that wash out finite perturbations in linear or quadratic time, but also show that there are examples of local constraints for which the self-stabilisation problem is inherently hard.
	We note that the optimal self-stabilisation speed is the same for all local constraints that are isomorphic to one another.
	We also consider probabilistic cellular automata rules and show that in some cases, the use of randomness simplifies the problem.
	In the deterministic case, we show that if finite perturbations are corrected in linear time, then the cellular automaton self-stabilises even starting from a random perturbation of a valid configuration, that is, when errors in the initial configuration occur independently with a sufficiently low density.

\medskip\noindent
\textbf{Keywords:}
	tilings, shifts of finite type, cellular automata, self-stabilisation,
	noise, fault-tolerance, reliable computing, symbolic dynamics.
\end{abstract}


\section*{Introduction}
\label{sec:intro}
\addcontentsline{toc}{section}{Introduction}

While all living organisms possess some ability to stabilise or repair themselves when subjected to perturbations or attacks, artificial systems rarely have such an ability.
In particular, in systems designed in engineering and computer science, a small local perturbation (e.g., due to noise or tampering by an adversary) can often propagate throughout the system
leading to a total or partial devastation of the behaviour of the system.
The inevitability of such perturbations has lead to the study of systems which, in addition to their normal functionality, have the \emph{self-stabilisation} property.
A self-stabilising system has the capacity to re-enter a set of ``legal'' or ``desirable'' states once the system has been taken out of its normal behaviour by an external perturbation.

The concept of self-stabilisation in computational processes was first introduced in 1970s by Dijkstra, who presented examples of networks of finite-state automata with a non-trivial self-stabilisation property~\cite{Dij74}.  Since then, self-stabilisation has been widely studied in the context of distributed computing (see e.g.~\cite{Dol00,AlDeDuPe19}).  In the current paper, we explore the question of self-stabilisation in the context of cellular automata.

In a cellular automaton (CA), the components of the system, the \emph{cells}, are arranged regularly on an infinite $d$-dimensional lattice.  The cells are identical finite-state automata that interact locally and change their states synchronously.  The overall state of the cells is referred to as a \emph{configuration} of the CA.
We will assume that the set of legal configurations of the system is specified with a finite number of local constraints, which must be satisfied at every position.  As a prototypical example, one may consider the \emph{colouring} constraints: each cell can have any of a finite number of colours, and the legal configurations are those in which every two adjacent cells have different colours.
More generally, we think of the legal states as \emph{tilings} of the lattice with a finite number of tile types (identified with the states of the cells) satisfying local matching constraints.
In the language of symbolic dynamics, the set of legal configurations is simply a shift space of finite type (SFT).
We will clarify the terminology further in the following section.

\medskip
We require our CA to have the following form of self-stabilisation:
\begin{enumerate}[label={(\arabic*)}]
\item Starting from a configuration that deviates from a legal configuration only on a finite region, the CA must evolve back to a legal configuration in a finite number of steps.
\item Starting from a legal configuration, the CA must remain unchanged.
\end{enumerate}
We note that in our definition, the CA has no functionality other than to keep the constraints satisfied.  Depending on the local constraints, even this simplified notion of self-stabilisation can be quite challenging to achieve.
The difficulty is that the cells are indistinguishable and the information available to each cell is limited to the state of its close neighbours.  Since the cells are not aware of their own absolute position or the position and extent of the error region, it is thus for example not possible to correct the error region by starting from the upper-left corner and then proceeding sequentially.

Self-stabilisation can be understood as a weak form of fault-tolerance, in which the perturbations occur only at the beginning.  Stronger forms of fault-tolerance have been studied in the setting of cellular automata, although aside from a few strong proof-of-concept constructions, the field remains wide open.
Around the same time as Dijkstra, Toom found a class of CA which self-stabilise even in presence of sufficiently weak \emph{temporal} noise~\cite{Too74,Too80} (see Examples~\ref{exp:toom} and~\ref{exp:toom:noise} below).  G\'acs and Reif exploited Toom's simplest example (i.e., the NEC-majority rule) to construct a three-dimensional CA which, in presence of noise, can perform universal computation reliably~\cite{GaRe88}.  Subsequently, G\'acs was able to construct a sophisticated one-dimensional CA capable of reliable universal computation~\cite{Gac86,Gac01}.
In our setting, Toom's NEC-majority CA solves the self-stabilisation problem for the constraint that adjacent cells must have the same colour.

Various other problems studied in the setting of cellular automata can be related to self-stabilisation.
For instance, the \emph{density classification} problem~\cite{BuFaMaMa13,Oli14} can be formulated as a problem of self-stabilisation where the system needs to return to a homogeneous configuration (all-zero or all-one) with the additional requirement that the colour which appears less frequently in the initial configuration is the one which has to be wiped off.  It is known that Toom's NEC-majority CA solves this problem, at least when the system starts from a biased Bernoulli random configuration~\cite{BuFaMaMa13}.
Another example is the \emph{global synchronisation} problem, which can again be understood as a self-stabilisation problem with the homogeneous configurations as the legal states, with the requirement that, in its legal state, the system must oscillate rather than remain unchanged~\cite{Ric17,Fat19}.

\medskip
This article has grown out of a conference paper in which some of our results were presented~\cite{FaMaTa19}.
The scope of the current paper is however more general and contains various new results. The structure of the paper is as follows:
\begin{itemize}
\itemsep=0.95pt
\item In Section~\ref{sec:setting}, we introduce the terminology and notation.
\item In Section~\ref{sec:1d}, we present a general construction for self-stabilising one-dimensional SFTs.
\item As is the case for many other problems regarding cellular automata and tilings, the self-stabilisation problem in two and higher dimensions is significantly more complex than in one dimension.
Section~\ref{sec:2d} is dedicated to the two-dimensional case, where we present several constructions of self-stabilising CA depending on the type of the constraints.
Here, the example of $k$-colourings serves as a running example, as it has different levels of difficulty depending on the parameter~$k$.  While the cases of $k=2$ and $k\geq 5$ admit relatively simple solutions that self-stabilise in linear time, our solution for $k=4$ self-stabilises in quadratic time, and we could not find any solution whatsoever for the case $k=3$.
We also provide a linear-time solution for the case of deterministic SFTs.  Deterministic SFTs encompass a relatively rich family of constraints, including some which admit only non-periodic legal configurations.
\item In Section~\ref{sec:PCA}, we investigate the self-stabilisation of probabilistic CA, and show that, in some cases, access to randomness simplifies the self-stabilisation problem.  For instance, our probabilistic solution for $k$-colourings with $k\geq 5$ works in logarithmic time rather than linear time, and furthermore, has the same symmetries as the colouring constraints.
\item After having explored the ``algorithmic'' aspects of self-stabilisation, we turn to the question of ``complexity'' in Section~\ref{sec:complexity}.  We show that for some choices of the legal constraints, the self-stabilisation problem is inherently hard (i.e., requires super-polynomial stabilisation time, unless $\classP=\classNP$).  We also show that ``isomorphic'' constraints (i.e., isomorphic SFTs) admit solutions with roughly the same stabilisation times.
\item Section~\ref{sec:random-noise} is about a different notion of self-stabilisation in which the initial perturbations are random rather than finite.  We show that if a (deterministic) CA self-stabilises from finite perturbations in linear time, then it also self-stabilises from sufficiently weak Bernoulli random perturbations.  The more interesting question of self-stabilisation in presence of temporal noise (as in the case of Toom's CA) is left open.
\item The article ends with some remarks and open questions in Section~\ref{sec:discussion}.
\end{itemize}

\section{Terminology and notations}
\label{sec:setting}

\subsection{Tilings}

\paragraph{Configurations and patterns.}
Let $\Sigma$ be an \emph{alphabet}, that is, a finite set of \emph{symbols}, and let $d\geq 1$.
An assignment $x\colon\ZZ^d\to\Sigma$ is called a \emph{configuration} of the \emph{lattice}~$\ZZ^d$.
The symbol $x_i$ is the \emph{state} (or \emph{colour}) of \emph{cell}~$i$.
A configuration is said to be \emph{homogeneous} if all the cells are in the same state. Given $c\in\Sigma$, we denote by $\unif{c}$ the homogeneous configuration in which all cells have symbol~$c$.

The restriction of the configuration $x\in\Sigma^{\ZZ^d}$ to a set $A\subseteq\ZZ^d$ is denoted by $x_A$.  A \emph{pattern} is an assignment $p\colon A\to\Sigma$ with finite \emph{shape} $A\subseteq\ZZ^d$, i.e., a partial configuration with a finite domain. We denote by $\Sigma^\#$ the set of all patterns.

A sequence $x^{(1)}, x^{(2)}, \ldots$ of configurations is said to \emph{converge} to another configuration $x$ if the state of each cell in~$x^{(n)}$ eventually fixates at the value of the same cell in~$x$, that is, if for every $i\in\ZZ^d$ there exists an~$n_i$ such that $x^{(n)}_i=x_i$ for all $n\geq n_i$.
This is the notion of convergence in the product topology on~$\Sigma^{\ZZ^d}$.  The space~$\Sigma^{\ZZ^d}$ with the product topology is compact and metrizable.

The \emph{shift} by $k\in\ZZ^d$ is the map $\sigma^k\colon\Sigma^{\ZZ^d}\to\Sigma^{\ZZ^d}$ defined by $\forall i\in \ZZ^d, \sigma^k(x)_i\isdef x_{k+i}$.  Every shift is continuous in the product topology.\vspace*{-2mm}

\paragraph{Shift spaces of finite type.} A \emph{shift space of finite type (SFT)} is a set $X\subseteq\Sigma^{\ZZ^d}$ of configurations identified by a finite number of local constraints.  More specifically, let $\collection{F}\subseteq\Sigma^\#$ be a finite set of finite patterns, which we refer to as the \emph{forbidden patterns}. The set of configurations $x\in\Sigma^{\ZZ^d}$ that avoid the patterns in~$\collection{F}$ (i.e. $\sigma^k(x)_A\notin\collection{F}$ for every pattern $p\colon A\to\Sigma$ of $\collection{F}$ and every $k\in\ZZ^d$) is called an SFT and is denoted by $X_{\collection{F}}$. Every SFT is closed (hence compact) in the product topology, and is invariant under every shift.

Observe that the choice of the defining forbidden sets $\collection{F}$ is not unique, and in our discussion, we occasionally need to consider distinct collections defining the same SFT.
The smallest integer $m$ for which $X$ can be identified by a collection of forbidden patterns with shape $S_m\isdef\{0,1,\ldots,m-1\}^d$ is referred to as the \emph{interaction range} of $X$.

A pattern (or partial configuration) $p:A\to\Sigma$ is said to be \emph{globally admissible} in $X$ if $p=x_A$ for some $x\in X$, and is said to be \emph{locally admissible} with respect to~$\collection{F}$ if it has no occurrence of the patterns from~$\collection{F}$, that is, if $\sigma^k(p)_B\notin\collection{F}$ for all $k\in\ZZ^d$ and finite $k+B\subseteq A$.  Note that in general, a locally admissible pattern does not need to be globally admissible.\vspace*{-2mm}

\paragraph{Tiling spaces (or nearest-neighbour SFTs).}  A \emph{tiling space} (or \emph{nearest-neighbour SFT}) is an SFT defined by a collection of \emph{nearest-neighbour} forbidden patterns, that is to say, patterns whose shapes consist of exactly two adjacent cells. Formally, let $e_1,e_2,\ldots,e_d$ denote the standard basis vectors  in~$\RR^d$.  A nonempty set $X\subseteq\Sigma^{\ZZ^d}$ is a ($d$-dimensional) tiling space if there exist functions $v_1,v_2,\ldots,v_d\colon\Sigma^2\to\{0,1\}$ such that
\begin{align}
	X &= \left\{x\in \Sigma^{\ZZ^d} : \forall c\in \ZZ^d, \forall i\in\{1,2,\ldots,d\}, v_i(x_{c},x_{c+e_i})=1\right\} \;.
\end{align}

\begin{example}[Homogeneous space]
\label{ex:homog}
We denote by $\Hom_2=\{\unifO,\unifI\}$ the $d$-dimensional SFT containing only the two homogeneous configurations
$\unifO,\unifI\in\{\symb{0}, \symb{1}\}^{\ZZ^d} $.
This can be seen as the tiling space defined by the functions $v_1\isdef\cdots\isdef v_d\isdef v$ where $v(a,b)\isdef 1 $ if $a=b$, and $0$ if $a\not=b$.
\hfill\exampleqed
\end{example}

\begin{example}[$k$-colourings]
A \emph{$k$-colouring} of the lattice~$\ZZ^d$ is an assignment of colours from $\Sigma\isdef\{0,1,\ldots,k-1\}$ to each position in such a way that the adjacent positions have different colours.
The set of all $k$-colourings $\Col_k$ is a tiling space identified by the functions $v_1\isdef\cdots\isdef v_d\isdef v$ where $v(a,b)\isdef 1$ if $a\neq b$, and $0$ if $a=b$.\\
\phantom{.}\hfill\exampleqed
\end{example}

\begin{example}[Hard-core]
\label{ex:hardcore}
The $d$-dimensional \emph{hardcore} tiling space on the set of symbols $\Sigma=\{\symb{0},\symb{1}\}$ is defined by the function $v_1\isdef\cdots\isdef v_d\isdef v$ where $v(a,b)\isdef 1$ if and only if $(a,b)\neq(\symb{1},\symb{1})$.
\hfill\exampleqed
\end{example}

\begin{example}[Wang tiles]
A general family of tiling spaces are those defined by Wang tiles.  A \emph{Wang tile} is a unit square with coloured edges (see Figure~\ref{fig:ammann}).  The colours indicate the matching rules for tiling: two tiles placed next to each other must have the same colour on their touching edges.  A finite collection~$\Theta$ of Wang tiles identifies a two-dimensional tiling space $X\subseteq\Theta^{\ZZ^2}$, consisting of all \emph{valid tilings} (i.e., configurations that respect the matching rule).  In other words, $X$ is defined by the functions $v_1,v_2:\Theta^2\to\{0,1\}$ where $v_1(a,b)\isdef 1$ if and only if the right edge of~$a$ has the same colour as the left edge of~$b$, and $v_2(a,b)\isdef 1$ if and only if the top edge of~$a$ has the same colour as the bottom edge of~$b$.
\hfill\exampleqed
\end{example}

\subsection{Perturbations of configurations}

\paragraph{Finite perturbations.} For two configurations $x,y\in \Sigma^{\ZZ^d}$, we denote by $\Delta(x,y)\isdef\{i\in\ZZ^d: x_i\neq y_i\}$ the set of cells at which $x$ and $y$ disagree.
A \emph{finite perturbation} of a configuration $x\in\Sigma^{\ZZ^d}$ in $\Sigma^{\ZZ^d}$ is a configuration $\tilde{x}\in\Sigma^{\ZZ^d}$ such that $\Delta(x,\tilde{x})$ is finite.

The \emph{diameter} of a finite set $A\subseteq\ZZ^d$, denoted by $\diam(A)$, is the smallest $m\in\NN$ such that $A$ fits in a hypercube of size $m$, that is, $A\subseteq i+[0,m)^d$ for some $i\in\ZZ^d$.
For two configurations $x,y\in\Sigma^{\ZZ^d}$, the diameter of $\Delta(x,y)$ is denoted by $\delta(x,y)$.

Given an SFT $X\subseteq\Sigma^{\ZZ^d}$, we denote by $\FPert[\Sigma]{X}$ the set of finite perturbations of the elements of $X$ in~$\Sigma^{\ZZ^d}$, that is $\FPert[\Sigma]{X}\isdef\big\{y\in \Sigma^{\ZZ^d} : \exists x\in X, \delta(x,y)<\infty\big\}$.

Let us stress that the set $\FPert[\Sigma]{X}$ depends on the choice of the alphabet $\Sigma$.  A larger alphabet $\Sigma'\supseteq\Sigma$ would lead to a larger set $\FPert[\Sigma']{X}$ of finite perturbations.  When the choice of the alphabet is clear from the context, we will simply use the notation $\FPert{X}$ as a shortcut.\vspace*{-2mm}

\paragraph{Case of tiling spaces.} When considering an element $x\in\FPert{X}$, we will often examine the set of cells where the constraints of the SFT are not respected. We will say that such cells are \emph{defective}, or \emph{have defects}. In the specific case of tiling spaces (or nearest-neighbour SFT), we introduce different notions of defects.

\medskip
For a configuration $x\in\Sigma^{\ZZ^d}$, a cell $c\in\ZZ^d$ is said to have a \emph{defect in direction $e_i$} (with respect to $v_i$) if $v_i(x_{c},x_{c+e_i})=0$. It has a  \emph{defect in direction $-e_i$} if $v_i(x_{c-e_i},x_{c})=0$. In the two-dimensional case, we will also use the terminology \emph{E-defect, W-defect, N-defect, S-defect} instead of respectively defect in direction $e_1, -e_1, e_2 -e_2$. The set of cells having a defect is then defined by
\begin{align}
	\Def(x) &\isdef
		\{c\in\ZZ^d : \text{$\exists e\in\{\pm e_1,\ldots, \pm e_d\}$,
			$c$ has a defect in direction $e$}\} \;.
\end{align}
A cell $c\in\ZZ^d$ is said to be \emph{defect-free} if it does not belong to $\Def(x)$, meaning that it obeys the local constraints in the $2d$ directions.

\medskip
Note that in somes cases, even if a configuration contains only very few defects, it is necessary to modify a much larger set of cells in order to reach a valid configuration. More precisely, for some tiling spaces $X$, neither the cardinality nor the diameter of $\Def(\tilde x)$ gives much information about $\delta(\tilde x,X)$.

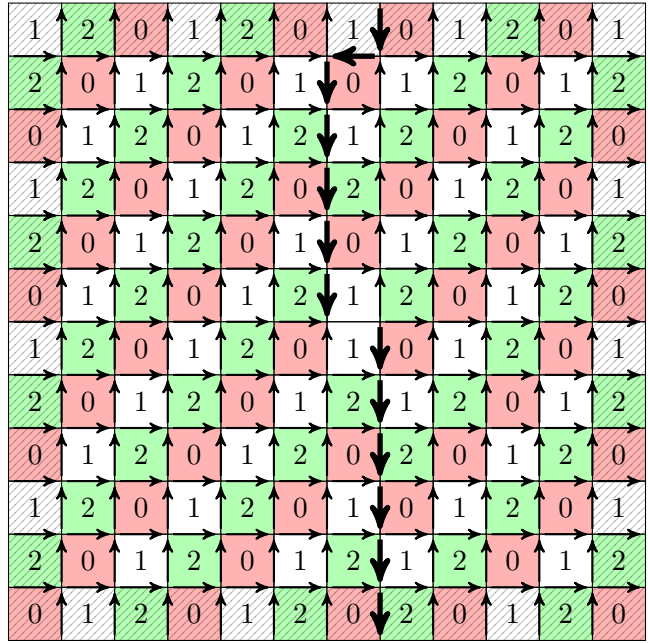
\begin{figure}[!b]
\vspace*{1mm}
	\begin{center}
		\begin{tikzpicture}[scale=0.7,>=stealth']
			\SVcaseError
		\end{tikzpicture}
	\end{center}\vspace*{-4mm}
\caption{A configuration in $\FPert{\Col_3}$ that has only two defects but is nonetheless far from every element of~$\Col_3$.
The arrows pointing South and East are emphasized.}
\label{fig:SWerror}
\end{figure}

\begin{example}[$3$-colourings]
\label{ex:3col} Let $\Col_3$ be the set of two-dimensional $3$-colourings. For any integer $n\geq 1$, there exists a configuration $\tilde x\in\FPert{\Col_3}$ such that $\Def(x)$ contains only two adjacent cells, and $\delta(\tilde x,\Col_3)\geq n$.
In other words, in order to correct a single defect, one may have to modify the state of cells that are arbitrarily far.
The construction of such a configuration is illustrated in Figure~\ref{fig:SWerror}, using the connection between the set of $3$-colourings and the six-vertex model.
This connection and a further discussion of $3$-colourings will be presented in Section~\ref{sec:discussion:3col}.
\hfill\exampleqed
\end{example}

A symbol $\alpha\in\Sigma$ is called a \emph{safe} symbol for an SFT $X\subseteq\Sigma^{\ZZ^d}$ if for every $x\in X$ and each $k\in\ZZ^d$, the configuration $\tilde{x}$ obtained from~$x$ by replacing $x_k$ with $\alpha$ is again in~$X$.  In case of a tiling space, this means that $v_i(\alpha,\sigma)=v_i(\sigma,\alpha)=1$ for all $\sigma\in\Sigma$ and $i\in\{1,2,\ldots,d\}$.
For example, for the hardcore tiling space of Example~\ref{ex:hardcore}, the symbol $\symb{0}$ is a safe symbol.
If a tiling space has a safe symbol, then the phenomenon discussed above for $3$-colourings cannot occur, since one can always update the cells having a defect with a safe symbol in order to recover a valid configuration.

\subsection{Self-stabilising cellular automata}
\label{sec:setting:self-stab}

\paragraph{Cellular automata.} A \emph{cellular automaton (CA)} is a dynamical system on $\Sigma^{\ZZ^d}$ obtained by repeated parallel updating of the symbols on the lattice using a local rule. More specifically, given a finite set $\Neighb\subseteq\ZZ^d$ and a map $f\colon\Sigma^\Neighb\to\Sigma$, we can define a mapping $F\colon\Sigma^{\ZZ^d}\to\Sigma^{\ZZ^d}$ by $F(x)_i \isdef f\big(\sigma^i(x)_{\Neighb}\big)$. This is the \emph{global map} of the CA defined with \emph{local rule}~$f$ and \emph{neighbourhood}~$\Neighb$.	

The neighbourhood can always be chosen to have the form $\Neighb=\{-r,\ldots,r\}^d$, in which case we say that the CA has \emph{neighbourhood radius}~$r$.  The neighbourhood $\Moore\isdef\{-1,0,1\}^d$ is called the \emph{Moore} neighbourhood, and the neighbourhood $\vonNeumann=\{0,\pm e_1,\ldots, \pm e_d\}$ is referred to as the \emph{von~Neumann} neighbourhood.

For a neighbourhood~$\Neighb$ and a set $A\subseteq\ZZ^d$, we write $\Neighb(A)$ and $A+\Neighb$ interchangeably.
For an integer $t\geq 1$, we also introduce the set $\Neighb^t=\smash{\underbrace{\Neighb+\ldots+\Neighb}_{\text{ $t$ times}}}$. If the CA $F$ has neighbourhood~$\Neighb$, then $F^t$ is a CA of neighbourhood~$\Neighb^t$.\vspace{-2mm}

\paragraph{Self-stabilisation.} We say that a CA $F\colon\Sigma^{\ZZ^d}\to\Sigma^{\ZZ^d}$ \emph{stabilises} an SFT $X\subseteq\Sigma^{\ZZ^d}$ from finite perturbations if
\begin{enumerate}[label={\roman*)}]
	\item (\emph{consistency}) the configurations of $X$ are fixed points, that is, $F(x)=x$ for every $x\in X$,
	\item (\emph{attraction}) finite perturbations of the elements of $X$ evolve to $X$ in finitely many steps, that is, for every $\tilde{x}\in\FPert[\Sigma]{X}$, there exists a time $t\in\NN$ such that $F^t(\tilde{x})\in X$.
\end{enumerate}
The first such $t$ is called the \emph{stabilisation time} (or the \emph{recovery time}) starting from $\tilde{x}$.  We say that $F$ stabilises $X$ from finite perturbations \emph{in time $\tau(n)$} if for each $n\in\NN$, the largest stabilisation time among all the configurations $\tilde{x}$ with $\delta(\tilde{x},X)=n$ is $\tau(n)$.
We remark that the above notion of stabilisation makes sense even if $X$ is an arbitrary set of configurations, and not only when it is an SFT.  We restrict ourselves to the scenario in which $X$ is an SFT, because local and uniform constraints appear more natural in the present context.

Note that the above definition does not rule out the possibility that $\Sigma$ has extra symbols in addition to those appearing in the elements of~$X$.  In other words, it is possible that $X\subseteq\Gamma^{\ZZ^d}$ for some $\Gamma\subsetneq\Sigma$.
Let us emphasize that, according to the above definition, in order for $F$ to stabilise $X$, it is necessary that all finite perturbations of $X$ in $\Sigma^{\ZZ^d}$ (not just those in~$\Gamma^{\ZZ^d}$) evolve to~$X$.
Even without this requirement, the problem of finding a cellular automaton that corrects finite perturbation on a given SFT remains non-trivial.  We will come back to this in Section~\ref{sec:discussion:3col} and in Problem~\ref{q:open:general-solution} of Section~\ref{sec:discussion:other-open}.

\medskip
Following an observation made earlier, note that if a tiling space $X\subseteq\Sigma^{\ZZ^d}$ has a safe symbol, then one can easily construct a CA that stabilises $X$ from finite perturbations, without extra symbols. Indeed, suppose that $\alpha$ is a safe symbol of $X$. Then, the map $F\colon\Sigma^{\ZZ^d}\to\Sigma^{\ZZ^d}$ defined by
\begin{align}\label{eq:safe-symbol:stabiliser}
	\forall c\in\ZZ^d, \quad
		F(x)_c \isdef
		\begin{cases}
			\alpha	& \text{if $c\in\Def(x)$,} \\
			x_c 	& \text{otherwise,}
		\end{cases}
\end{align}
is a CA with neighbourhood $\Neighb=\{0,\pm e_i\}$, and it stabilises $X$ in one step.

\medskip
The aim of the current article is to present self-stabilising CAs for other families of SFTs, for which finding a CA achieving the stabilisation from finite perturbations is a non-trivial problem.

\section{One-dimensional case}
\label{sec:1d}

In this section, we focus on the problem of self-stabilisation for a one-dimensional SFT $X\subseteq \Sigma^{\ZZ}$. As a specific case, let us present a well-known example of a CA stabilising the one-dimensional tiling space $\Hom_2=\{\unif{\symb{0}},\unif{\symb{1}}\}$ in linear time, without extra symbols.

\begin{example}[GKL]
\label{exp:gkl-mtraffic}
The G\'acs-Kurdyumov-Levin (GKL) cellular automaton is the CA $\gkl\colon\BinA^{\ZZ}\to\BinA^{\ZZ}$ with neighbourhood $\Neighb=\{-3,-1,0,1,3\}$ defined for any $x\in\BinA^{\ZZ}$ and $k\in\ZZ$ by
\begin{align}
	\gkl(x)_k &\isdef
		\begin{cases}
		\maj(x_k,x_{k+1},x_{k+3}) & \text{if $x_k=1$,} \\
		\maj(x_k,x_{k-1},x_{k-3}) & \text{if $x_k=0$.}\\
		\end{cases}
\end{align}
It is known that this CA is both a \emph{$\symb{0}$-eroder} and a \emph{$\symb{1}$-eroder}, which precisely means that it stabilises~$\Hom_2$ from finite perturbations~\cite{GaKuLe78}. Furthermore, the stabilisation occurs in linear time~\cite{GoMa92}. Another slightly simpler CA having the same property was proposed by Kari and Le~Gloannec, under the name of \emph{modified traffic}~\cite{KaLG12}.
\hfill\exampleqed
\end{example}

More generally, we will prove that (essentially) every one-dimensional SFT has a CA that stabilises it from finite perturbations in linear time.

\medskip
Let $X\subseteq \Sigma^{\ZZ}$ be a one-dimensional SFT. Then, there exists an integer $k\geq 1$ such that $X$ can be described by a set $\collection{F}$ of forbidden words of length $k+1$.  In this case, we say that $X$ is a \emph{$k$-step} SFT. Indeed, the constraints can be represented by a transition matrix~$A$ indexed by $\Sigma^k\times\Sigma^k$ such that for two words $v=v_1\cdots v_k$ and $w=w_2\cdots w_{k+1}$ of length $k$,
\begin{align}
	A(v_1\cdots v_k,w_2 \cdots w_{k+1}) &\isdef
	\begin{cases}
		1 & \text{if $v_2=w_2$, \ldots, $v_k=w_k$ and $v_1v_2\cdots v_kw_{k+1}\not\in\collection{F}$,} \\
		0 & \text{otherwise.}
	\end{cases}
\end{align}
Note that if $k=1$, then $X$ is a tiling space, that is, a nearest-neighbour SFT.
 \eject
For an integer $n\geq 1$, we denote by $L_n(X)$ the set of words of length $n$ occuring in $X$, that is,
\begin{align}
	L_n(X)\isdef\big\{w\in\Sigma^n : \exists x\in X,\; x_{1}\cdots x_n=w_1\cdots w_n\big\} \;.
\end{align}
The set $L(X)\isdef\bigcup_{n\in\NN} L_n(X)$ is called the \emph{language} of $X$.

\begin{remark} In the specific case when $X$ is a \emph{mixing} one-dimensional SFT (meaning that there exists $n_0\geq 0$ such that for every two words $u,v\in L(X)$ and every $n\geq n_0$, there exists a word $w\in L_n(X)$ such that $uwv\in L(X)$) and has at least one homogeneous configuration, then a result of Maass implies that the self-stabilisation problem is trivial for $X$~\cite[Theorem 3.2]{Ma95}. Indeed, this result states that there exists a CA $F\colon\Sigma^\ZZ\to\Sigma^\ZZ$ (i.e., without extra symbols) such that
\begin{enumerate}
\itemsep=0.95pt
\item $F$ is the identity map on $X$,
\item $F(\Sigma^\ZZ)=X$.
\end{enumerate}
In particular, we achieve self-stabilisation in one step.
\hfill\remarkqed
\end{remark}

For $u,v\in L_k(X)$, we write $u\sim v$ if there is a word $w\in \Sigma^*$ such that $uwv\in L(X)$.
We say that $X$ is \emph{non-wandering} if $\sim$ is an equivalence relation, meaning that the transition graph defined by the adjacency matrix $A$ consists of strongly connected components, with no connections between them.

\begin{example}[Wandering vs.\ non-wandering]
The SFT $\{\symb{0}^\ZZ,\symb{1}^\ZZ\}\subseteq\{\symb{0},\symb{1}\}^\ZZ$
is non-wandering, but the SFT consisting in $\symb{0}^\ZZ$, $\symb{1}^\ZZ$
and the translations of $\cdots\symb{0}\symb{0}\symb{0}\symb{1}\symb{1}\symb{1}\cdots$
is not.
\hfill\exampleqed
\end{example}

\begin{example}[A non-wandering SFT]\label{ex:red} The $1$-step SFT on the alphabet $\Sigma=\{\symb{0},\symb{1},\symb{2},\symb{3},\symb{4}\}$ with the transition matrix and graph illustrated in Figure~\ref{fig:1dSFT:red} is non-wandering.
We will use this as a running example to illustrate the constructions of this section.
\hfill\exampleqed
\end{example}

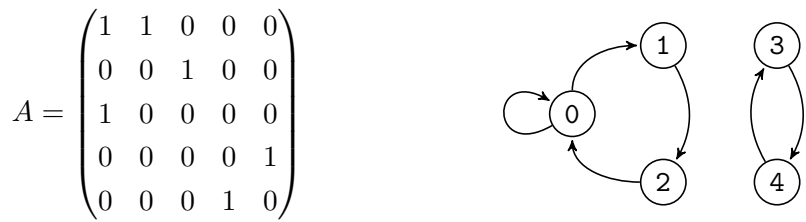
\begin{figure}[!h]
\vspace*{-4mm}
	\begin{center}
		\begin{tabular}{cc}
		\begin{minipage}{0.4\textwidth}
			\centering
			\begin{tikzpicture}[->,>=stealth',baseline=(current bounding box.center)]
				\node {$A=\begin{pmatrix}
					1 & 1 & 0 & 0 & 0 \\
					0 & 0 & 1 & 0 & 0 \\
					1 & 0 & 0 & 0 & 0 \\
					0 & 0 & 0 & 0 & 1 \\
					0 & 0 & 0 & 1 & 0
					\end{pmatrix}$};
			\end{tikzpicture}
		\end{minipage}
		&
		\begin{minipage}{0.4\textwidth}
			\centering
			\begin{tikzpicture}[->,>=stealth',shorten >=1pt,semithick]
				\tikzstyle{every state}=[minimum size=17pt,inner sep=0pt]
				
				\node[state] (S0) at (-1.2,0) {$\symb{0}$};
				\node[state] (S1) at (0,0.9) {$\symb{1}$};
				\node[state] (S2) at (0,-0.9) {$\symb{2}$};
				\node[state] (S3) at (1.5,0.9) {$\symb{3}$};
				\node[state] (S4) at (1.5,-0.9) {$\symb{4}$};
				
				\path (S0) edge [out=-150,in=150,loop,overlay] (S0);
				\path (S0) edge [out=90,in=180] (S1);
				\path (S1) edge [out=-60,in=60] (S2);
				\path (S2) edge [out=180,in=-90] (S0);
				\path (S3) edge [out=-60,in=60] (S4);
				\path (S4) edge [out=120,in=-120] (S3);
			\end{tikzpicture}
		\end{minipage}
		\end{tabular}
	\end{center}\vspace*{-6mm}
\caption{An example of a non-wandering SFT. (left) transition matrix (right) transition graph.}
\label{fig:1dSFT:red}\vspace*{-2mm}
\end{figure}

\begin{theorem}[Self-stabilisation in one dimension]
\label{prop:robust:1d}
For every non-wandering one-dimensional SFT $X\subseteq\Sigma^\ZZ$, there exists a CA $F\colon{\Sigma'}^\ZZ\to{\Sigma'}^\ZZ$ with ${\Sigma'}\supseteq\Sigma$ that stabilises $X$ from finite perturbations in linear time.
\end{theorem}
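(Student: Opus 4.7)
The plan is to first reduce to a $1$-step SFT by the standard higher-block recoding: a CA solving the $1$-step case on an alphabet $\Sigma$ lifts to one solving the original $k$-step case at the cost of enlarging the neighbourhood radius by $k-1$. So from now on I may assume $X$ is a nearest-neighbour SFT with adjacency relation $A\subseteq\Sigma\times\Sigma$. The non-wandering hypothesis means the transition graph $(\Sigma,A)$ decomposes into pairwise disjoint strongly connected components $C_1,\ldots,C_m$ with no edges going between them; consequently every $x\in X$ uses symbols from a single SCC, and hence for any $\tilde x\in\FPert{X}$ witnessed by $x\in X$ the left and right $\Sigma$-tails of $\tilde x$ lie in a common SCC~$C$.

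The CA~$F$ is built on an enriched alphabet $\Sigma'=\Sigma\cup\{\blank\}\cup S$, where $\blank$ is a blank symbol and $S$ is a finite set of signal symbols carrying bounded boundary data (a $\Sigma$-state, a direction, a small counter). It runs three overlapping procedures. \emph{Detection}: any cell whose $\Sigma$-state participates in a forbidden pair with one of its $\Sigma$-neighbours rewrites itself as~$\blank$; a valid $\Sigma$-cell adjacent to $\blank$ does \emph{not} erode, so blank regions never spread into valid territory. \emph{Signalling}: each $\Sigma$-cell on the border of a blank region emits a unit-speed signal into that region, carrying its own state; two signals colliding inside a blank region merge into a boundary-pair signal carrying $(u,v)$ together with the distance $\ell$ they have already traversed. \emph{Repair}: upon merging, the joint signal triggers the deterministic writing of a canonical valid word of length~$\ell$ from $u$ to $v$ inside~$C$, with the writing propagating outwards from the collision point and overwriting each blank with the next letter of the word.

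For the repair to succeed, a valid word of the required length with the prescribed endpoints must exist. This is automatic: since $\tilde x$ and $x$ agree outside a finite set, the sub-word of~$x$ lying under each blank region is itself such a valid filling. To make the selection local and deterministic, I would pre-compute, for every SCC $C_i$ of \emph{period} $p_i$ (the gcd of its cycle lengths), every pair $(u,v)\in C_i\times C_i$, and every residue $r\in\{0,\ldots,p_i-1\}$, a canonical valid word of length $\equiv r\pmod{p_i}$ from $u$ to $v$ (for all sufficiently large lengths); this finite table is hard-coded in the local rule.

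Linear-time stabilisation follows because each of the three phases propagates at unit speed across a window of diameter $O(n)$: detection rewrites the initial defect in $O(1)$ steps, signalling meets in the middle in $O(n)$ steps, and repair sweeps outwards in another $O(n)$ steps. Consistency is immediate since on a configuration in~$X$ there are no defects, no blanks, and no signals, so $F$ acts as the identity on~$X$. The main obstacle I expect is the repair phase: producing a valid length-$\ell$ word with prescribed endpoints using only local information. The non-wandering hypothesis is precisely what makes this feasible, as it guarantees both that the two boundaries lie in a common SCC and that the length at hand is compatible with the SCC's period; without it, the local choice of a filling would not be possible in general.
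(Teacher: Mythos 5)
Your architecture (blank out exactly the defective cells, then fill each blank interval in place between its surviving boundary symbols) has a genuine gap at the repair step: a valid filling of a blank interval with its \emph{exact} length and its \emph{exact} boundary symbols need not exist, and your justification that existence is ``automatic'' is incorrect. The boundary symbols of a blank interval are values of $\tilde x$, not of the reference configuration $x$, and $\tilde x$ may differ from $x$ on arbitrarily many cells that are \emph{not} defective. Two failure modes result. First, an internally valid stretch of $\tilde x$ sitting between two defect islands may live in a different strongly connected component from the tails; then one boundary of a blank interval lies in the wrong SCC and no connecting word exists at any length. This is exactly the running example of Section~\ref{sec:1d} for the SFT of Example~\ref{ex:red}: in $\cdots\symb{0}\symb{1}\symb{2}\symb{0}\symb{1}\symb{3}\symb{4}\symb{3}\symb{4}\symb{3}\symb{4}\symb{3}\symb{2}\symb{0}\symb{1}\symb{2}\cdots$ only the interface cells are defective, yet the entire $\symb{3}\symb{4}$-block must be overwritten. (Your assertion that non-wandering ``guarantees the two boundaries lie in a common SCC'' is only true for the two infinite tails, not for interior stretches.) Second, even within a single SCC the residue of the gap length modulo the period can be wrong for each island individually, the discrepancies cancelling only globally: for $X=\{\cdots\symb{0}\symb{1}\symb{0}\symb{1}\cdots\}$ and the finite perturbation $\cdots\symb{0}\symb{1}\symb{0}\symb{0}\symb{1}\symb{0}\symb{1}\symb{1}\symb{0}\symb{1}\cdots$, your detection blanks the two cells of the $\symb{0}\symb{0}$ pair and the two cells of the $\symb{1}\symb{1}$ pair, leaving intervals with boundary data $\symb{1}\,\blank\blank\,\symb{1}$ and $\symb{0}\,\blank\blank\,\symb{0}$, neither of which admits any valid filling. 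In both cases the correction must overwrite non-defective cells arbitrarily far from the defects, and your construction has no mechanism for deciding which cells to sacrifice nor for proving that such extended erosion terminates in linear time. This is precisely the difficulty the paper's proof is organised around: its patching rule sweeps left to right and is permitted to plough through locally valid material until it can reconnect within a bounded look-ahead~$m$, and the trace/stop-signal machinery exists solely to arbitrate among the many sweeps launched in parallel.

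Two further points would need repair even if the above were fixed. A signal in a CA is a symbol from a finite alphabet, so it cannot carry ``the distance $\ell$ it has already traversed''; at best it can carry $\ell$ modulo the period plus $O(1)$ data, and you would then still have to explain how the two outward-moving repair fronts write a word of the correct total length with a correctly placed splice without knowing~$\ell$. More importantly, the theorem requires stabilisation from \emph{all} finite perturbations over the extended alphabet $\Sigma'$, so you must specify and analyse the dynamics on initial configurations that already contain blanks and arbitrarily inconsistent signal symbols: a spurious ``merged'' signal could trigger a repair that writes an invalid word and re-seeds defects, and one must rule out that this process cycles forever. The paper devotes the entire second half of its construction and proof to this adversarial-extra-symbols scenario; your proposal does not address it.
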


\begin{remark}
Ilkka T\"orm\"a has communicated with us an alternative construction showing that \emph{every} one-dimensional SFT is stabilised by a CA with \emph{no extra symbol}~\cite{Tor15}.
His construction uses long markers made of forbidden patterns, allowing to avoid the introduction of extra symbols.
\hfill\remarkqed
\end{remark}

The rest of this section is devoted to the proof of Theorem~\ref{prop:robust:1d}.  Thus, for the rest of this section, we will assume that $X$ is a $k$-step non-wandering SFT.

We denote by $m$ the smallest non-negative integer such that for every $u,v\in L_k(X)$ with $u\sim v$, we have $uwv\in L(X)$ for some word $w$ of length at most $m$.

In the case of the tiling space of Example~\ref{ex:red}, recall that $k=1$, and one can check that $m=2$. Indeed, for $u=v=1$ or $u=v=2$, one needs a word $w$ of length at least $2$ in order to have $uwv\in L(X)$, while in all the other cases, a word of length $0$ or $1$ is sufficient.

\medskip
For a configuration $y\in \Sigma^\ZZ$, we denote by
\begin{align}
	\Def(y) &\isdef \{i\in\ZZ: y_{[i-k,i]}\notin L_{k+1}(X)\}
\end{align}
the set of cells at which a \emph{defect} occurs.
Note that if $y$ is a (finite) perturbation of a configuration $x\in X$,
then $\Def(y)\subseteq\Delta(x,y)+\{0,1,\ldots,k\}$,
(i.e., every defect on $y$ is on the right within distance $k$ from an element of $\Delta(x,y)$),
but $\Def(y)$ could be much smaller than $\Delta(x,y)$.

\medskip
Let us consider again Example~\ref{ex:red}.
Let $y\in\Sigma^{\ZZ}$ be the configuration below.
The set $\Def(y)$ contains only two elements, which correspond to the positions marked by a cross. But in order to recover a valid configuration, we need to change the values of at least seven cells (the ones taking values $3$ and~$4$).

\begin{align}
	\begin{array}{cccccccccccccccccccc}
	\hline
	\cdots&\symb{0}&\symb{1}&\symb{2}&\symb{0}&\symb{1}&\symb{3}&\symb{4}&\symb{3}&\symb{4}&\symb{3}&\symb{4}&\symb{3}&\symb{2}&\symb{0}&\symb{1}&\symb{2}&\symb{0}&\symb{0}&\symb{0}\cdots\\
	\hline
	& & & & & &\times & & & & & & & \times & & & & & &
	\end{array}
\end{align}

\paragraph{A sequential correction process.}
We first describe a particular \emph{sequential} procedure for
correcting finite ``islands'' of defects on $X$.
This procedure will not be a cellular automaton itself, but will be used in the construction of the cellular automaton.
Applied on a configuration~$y$, the procedure involves updating
the symbols on $y$ one by one, from left to right, starting from
a cell on the left of the leftmost element of $\Def(y)$.
Each update is performed according to the same local rule, which we call the \emph{patching rule}.
If $y$ is in $X$, then the procedure does not modify any symbol in $y$.
If $y$ is a finite perturbation of a configuration $x\in X$,
then the procedure eventually turns $y$ into a configuration $z\in X$,
before reaching few cells to the right of the rightmost element of $\Def(y)$.
The existence of an appropriate patching rule relies on
the fact that $X$ is non-wandering.

More specifically, the patching rule will be a function $g\colon\Sigma^{2k+m}\to \Sigma$.
In order to update the symbol at position $i$ on $y$,
we replace it with $g(y_{[i-k,i+m+k)})$.
The sequential updating of
$y$ from a cell~$i$ to a cell $j\geq i$
proceeds by first updating the symbol at cell $i$,
then updating the symbol at cell~$i+1$, and so forth until
we update the symbol at cell~$j$.

\medskip
The patching rule $g$ is constructed as follows.
\begin{itemize}
	\item For $u\in L_k(X)$ and $q\in \Sigma^{m+k}$,
		we let $r\in [0,m]$ be the smallest index (if exists) such that
		$u w q_{[r+1,m+k]}\in L(X)$ for some $w\in \Sigma^r$.
		If no such $r$ exists, we choose an arbitrary $a\in \Sigma$
		such that $ua\in L(X)$ and set $g(uq)\isdef a$.
		If $r$ exists, we choose a corresponding $w\in \Sigma^r$
		and set $g(uq)\isdef w_1$ if $r>0$ and $g(uq)\isdef q_1$ if $r=0$.
	\item For $u\in\Sigma^k\setminus L_k(X)$ and $q\in \Sigma^{m+k}$,
		we simply set $g(uq)\isdef q_1$.
		(This case is not used during a sequential correction.)
\end{itemize}

Let us see on some examples how we can construct a patching rule in the context of the tiling space of Example~\ref{ex:red}, where $k=1$ and $m=2$. If $u\in\{\symb{0},\symb{1},\symb{2}\}$ and $q_3\in\{\symb{3},\symb{4}\}$, then, there is no index $r\in [0,2]$ as above. So, we can simply set $g(uq_1q_2q_3)\isdef (u+1) \bmod 3$. Consider now for example $g(\symb{0432})$. Then $r=2$, and for $w=01$, we have $\symb{0012}\in L(X)$. So, we can set $g(\symb{0432})=\symb{0}$, which is in fact the only possible choice here. For $g(\symb{0320})$, we have $r=1$, and again there is no other choice than to set $g(\symb{0320})=\symb{1}$.

Intuitively, when applied at the leftmost defect $i\in \Def(y)$,
the patching rule updates $y_i$ based on the (roughly) shortest
patch that would remove the defects on $y$.
The following simple lemma formulates the main property of the patching rule.

\begin{lemma}[Sequential process]
	Let $y$ be a finite perturbation of a configuration $x\in X$.
	Let $[a,b]$ be an interval containing~$\Def(y)$.
	Then, the sequential updating of $y$ from $a$ to $b+m$ using the above-constructed
	patching rule~$g$ ends with an element of~$X$.
\end{lemma}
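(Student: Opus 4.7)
My approach is induction on the step counter $t$, with the invariant that after the $t$-th update (at cell $i=a+t-1$) the half-configuration $y^{(t)}_{(-\infty,i]}$ is locally admissible, and its rightmost $k$-window $u = y^{(t)}_{[i-k+1,i]}$ is $\sim$-equivalent to every $k$-window of $y$ far to the right. The base case ($t=0$, notionally ``cell'' $i=a-1$) is immediate: $(-\infty,a-1]\cap\Def(y)=\emptyset$ gives local admissibility, and since $y$ coincides with some $x\in X$ outside a finite set and every two $k$-windows of a single configuration of $X$ are graph-connected, the far-left and far-right $k$-windows of $y$ lie in the same equivalence class.

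For the inductive step at cell $i+1$, invariant~(I1) provides $u\in L_k(X)$, so the rule enters its main branch and writes the first letter $w_1$ of some witness word $w$ satisfying $uwq_{[r+1,r+k]}\in L(X)$ (or, via the fallback, a letter $a$ with $ua\in L_{k+1}(X)$, which exists because $u$ has successors in the non-wandering graph). In either case the new $(k+1)$-window ending at cell $i+1$ lies in $L_{k+1}(X)$, preserving~(I1); the new rightmost $k$-window $u' = u_2\cdots u_k w_1$ is the next $k$-factor of the same admissible word and is thus graph-adjacent to~$u$, so~(I2) also persists.

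The heart of the argument is the termination analysis on the clean zone. Once $i+1>b$, the entire right half of the window is defect-free, so by~(I2) and the definition of $m$ there is some $r^\star\in[0,m]$ and word $w$ of length $r^\star$ with $uwq_{[r^\star+1,r^\star+k]}\in L(X)$. After $w_1$ is written, the tail $u_2\cdots u_k\,w_1 w_2\cdots w_{r^\star}\,q_{[r^\star+1,r^\star+k]}$ of the same admissible word is itself in $L(X)$; upon the index shift $q'_{[r'+1,r'+k]}=q_{[r'+2,r'+k+1]}$ from cell $i+1$ to cell $i+2$, it provides a witness of length $r^\star-1$ at the next step. Hence the minimal $r$ strictly decreases, reaches $0$ within at most $m$ steps, and from then on the rule outputs $q_1 = y_{i+1}$, leaving $y$ untouched. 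So by cell $b+m$ the procedure has rejoined $y$'s trajectory, the final configuration is locally admissible everywhere, and non-wanderingness promotes this into membership in~$X$.

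The main obstacle I foresee is the clean-zone analysis: one must justify both the existence of a main-branch witness $r^\star\in[0,m]$ in that zone (the possibility of an endless fallback cascade has to be ruled out using the finite-perturbation hypothesis, which forces the asymptotic phase of $y$ to be compatible with~$u$) and the strict decrease of~$r$, which requires careful bookkeeping of the index shift between consecutive update windows. Everything else reduces to elementary graph reasoning, the definition of $m$, and the standard equivalence between local and global admissibility for non-wandering one-dimensional SFTs.
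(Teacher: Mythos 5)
Your overall strategy is the same as the paper's: maintain local admissibility of the updated prefix together with $\sim$-equivalence of its terminal $k$-window to the untouched tail, obtain a main-branch witness $r^\star\in[0,m]$ once the update window has entered the clean zone, and let the minimal $r$ decrease by one per cell until the process splices back onto $y$. The invariant-preservation and strict-decrease parts of your argument are correct and are spelled out more carefully than in the paper. However, the obstacle you flag at the end is a genuine gap rather than a routine verification. The definition of $m$ provides, for each equivalent pair $(u,v)$, a connecting word of length \emph{at most} $m$, whereas the main branch of $g$ needs an $r\in[0,m]$ together with a connecting word of length \emph{exactly} $r$ that lands on the tail's $k$-window at offset exactly $r$; these two requirements are coupled, and neither $\sim$-equivalence nor the finite-perturbation hypothesis forces them to be simultaneously satisfiable. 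The phase-compatibility argument you sketch does dispose of purely periodic transitive components, but not of primitive ones.

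Concretely, consider the $1$-step SFT on $\{A,B_1,B_2,B_3\}$ with allowed transitions $A\to A$ and $A\to B_1\to B_2\to B_3\to A$ (a single strongly connected component, $k=1$, $m=3$), let $x=\unif{A}$, and let $y$ agree with $x$ except that $y_{b-1}=B_1$ and $y_{[b+2,b+4]}=B_1B_2B_3$. Then $\Def(y)=\{b-1,b\}$, so one may take $[a,b]=[b-1,b]$; yet no $r\in[0,3]$ works at cell $b$ or at cell $b+1$ (even though all $k$-windows are $\sim$-equivalent), the rule falls back twice, first finds a witness ($r^\star=3$) at cell $b+2$, and only rejoins $y$ at cell $b+4=b+m+1$, so the configuration obtained after updating cell $b+m$ is not in $X$. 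A single fallback is harmless (it shifts the merge point right by one while the witness budget is still $m$), but a cascade of two or more overshoots the stated bound, so a complete proof must bound the length of the fallback cascade, or else enlarge the update window or the constant $m$. You should be aware that the paper's own proof handles this very step by a one-sentence appeal to non-wandering and the definition of $m$, which runs into the same difficulty; the conclusion survives with an adjusted constant (and hence so does the linear-time claim of Theorem~\ref{prop:robust:1d}), but the missing argument is not to be found there.
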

\begin{proof}
	For $i\in[a,b+m]$, let $y^{(i)}$ denote the configuration obtained
	during the sequential updating procedure right after updating cell $i$,
	and set $y^{(a-1)}\isdef y$ for consistency.
	
	Observe that $y'\isdef y^{(b-1)}$ has the following property:
	the finite words in $y'_{(-\infty,b-1]}$ and $y'_{[b,\infty)}$
	are all in $L(X)$.
	Since $y'_{[b-k,b-1]}$ and $y'_{[i,i+k)}$ for $i\geq b$
	are in the same transitive component of~$X$
	(here we are using the non-wandering property),
	there must be an integer $r\in[0,m]$
	such that $y'_{[b-k,b-1]}w y'_{[b+r,b+r+k)}\in L(X)$
	for some $w\in \Sigma^r$.  Now, the construction of $g$ ensures that
	$y^{(b+r-1)}$ is in $X$.
\end{proof}

Let us observe on Figure~\ref{fig:Seqfirststeps} how the patching rule described for Example~\ref{ex:red} corrects the configuration $y$ represented above, when applied from left to right. The smallest interval $[a,b]$ containing the defects is marked on the first configuration. In order to update the symbol at position $i$, we replace it by $g(y_{i-1}y_iy_{i+1}y_{i+2})$. The successive cells that are updated are underlined. The last configuration that is shown belongs to the SFT, so that afterwards, the patching rule does not introduce any new change.

\begin{figure}[!h]\small
$$
\begin{array}{ccccccccccccccccccccc}
\hline
\cdots&\symb{0}&\symb{1}&\symb{2}&\symb{0}&\symb{1}&\tikzmark{ExRegA}\uline{\symb{3}}&\symb{4}&\symb{3}&\symb{4}&\symb{3}&\symb{4}&\symb{3}&\symb{2}\tikzmark{ExRegB}&\symb{0}&\symb{1}&\symb{2}&\symb{0}&\symb{0}&\symb{0}&\cdots\\
\hline
\\
\hline
\cdots&\symb{0}&\symb{1}&\symb{2}&\symb{0}&\symb{1}&\symb{2}&\uline{\symb{4}}&\symb{3}&\symb{4}&\symb{3}&\symb{4}&\symb{3}&\symb{2}&\symb{0}&\symb{1}&\symb{2}&\symb{0}&\symb{0}&\symb{0}&\cdots\\
\hline
\\
\hline
\cdots&\symb{0}&\symb{1}&\symb{2}&\symb{0}&\symb{1}&\symb{2}&\symb{0}&\uline{\symb{3}}&\symb{4}&\symb{3}&\symb{4}&\symb{3}&\symb{2}&\symb{0}&\symb{1}&\symb{2}&\symb{0}&\symb{0}&\symb{0}&\cdots\\
\hline
\\
\hline
\cdots&\symb{0}&\symb{1}&\symb{2}&\symb{0}&\symb{1}&\symb{2}&\symb{0}&\symb{1}&\uline{\symb{4}}&\symb{3}&\symb{4}&\symb{3}&\symb{2}&\symb{0}&\symb{1}&\symb{2}&\symb{0}&\symb{0}&\symb{0}&\cdots\\
\hline
\\
\hline
\cdots&\symb{0}&\symb{1}&\symb{2}&\symb{0}&\symb{1}&\symb{2}&\symb{0}&\symb{1}&\symb{2}&\uline{\symb{3}}&\symb{4}&\symb{3}&\symb{2}&\symb{0}&\symb{1}&\symb{2}&\symb{0}&\symb{0}&\symb{0}&\cdots\\
\hline
\\
\hline
\cdots&\symb{0}&\symb{1}&\symb{2}&\symb{0}&\symb{1}&\symb{2}&\symb{0}&\symb{1}&\symb{2}&\symb{0}&\uline{\symb{4}}&\symb{3}&\symb{2}&\symb{0}&\symb{1}&\symb{2}&\symb{0}&\symb{0}&\symb{0}&\cdots\\
\hline
\\
\hline
\cdots&\symb{0}&\symb{1}&\symb{2}&\symb{0}&\symb{1}&\symb{2}&\symb{0}&\symb{1}&\symb{2}&\symb{0}&\symb{0}&\uline{\symb{3}}&\symb{2}&\symb{0}&\symb{1}&\symb{2}&\symb{0}&\symb{0}&\symb{0}&\cdots\\
\hline
\\
\hline
\cdots&\symb{0}&\symb{1}&\symb{2}&\symb{0}&\symb{1}&\symb{2}&\symb{0}&\symb{1}&\symb{2}&\symb{0}&\symb{0}&\symb{1}&\symb{2}&\symb{0}&\symb{1}&\symb{2}&\symb{0}&\symb{0}&\symb{0}&\cdots\\
\hline
\end{array}
$$
\begin{tikzpicture}[remember picture,overlay]
	\draw[decorate, decoration={brace,amplitude=3pt},thick]  ([yshift=1.2em]ExRegA) -- ([yshift=1.2em]ExRegB);
\end{tikzpicture}\vspace*{-3mm}
\caption{An example of evolution of the sequential correcting process in the context of the SFT of Example~\ref{ex:red}.  Time goes downwards.}\label{fig:Seqfirststeps}\vspace*{-5mm}
\end{figure}
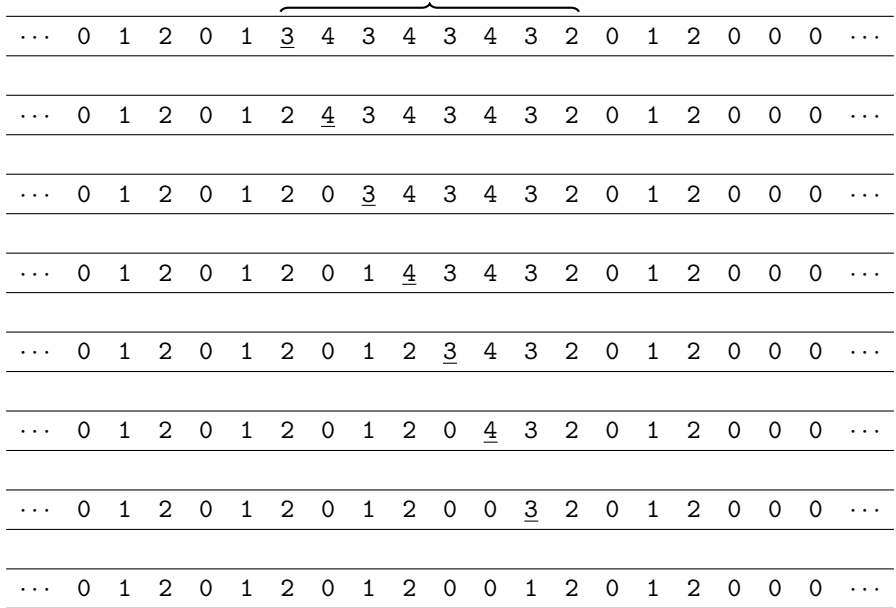

\smallskip
Let us observe that the above procedure may fail if the SFT is wandering.
For instance, if we modify the SFT of Example~\ref{ex:red} so as to allow transition from $\symb{1}$ to $\symb{3}$, then the configuration
\begin{align}
	\begin{array}{cccccccccccccccccccc}
	\hline
\cdots&\symb{0}&\symb{1}&\symb{2}&\symb{0}&\symb{1}&\symb{3}&\symb{4}&\symb{3}&\symb{4}&\symb{3}&\symb{4}&\symb{3}&\symb{2}&\symb{0}&\symb{1}&\symb{2}&\symb{0}&\symb{0}&\symb{0}\cdots\\
	\hline
	& & & & & & & & & & & & & \times & & & & & &
	\end{array}
\end{align}
will contain only one defect.  As before, in order to correct this configuration, we have no choice but to modify every $\symb{3}$ and $\symb{4}$.  However, in this case, starting from any position, the sequential process described above can only modify the cells to the right of the rightmost $\symb{3}$, hence it will not succeed.

\paragraph{From the sequential rule to the stabilising CA.}
We now construct a CA $T\colon\Sigma'^\ZZ\to \Sigma'^\ZZ$ with an extended alphabet $\Sigma'\supseteq \Sigma$
that corrects finite islands of defects on $X$ in linear time.
The CA uses the patching rule $g$ to correct defects sequentially
from left to right.  Since the CA cannot a priori
identify the leftmost defect, it instead applies the patching rule
simultaneously everywhere that locally looks like the leftmost defect.
Therefore, if there are several far apart defect regions,
there will be a correction trail initiated from the left of each of them.
These correction trails need not be consistent with one another.
We use suitable signals to make sure that the leftmost trail of correction is ``dominant'',
and the other ones do not continue forever to the right, corrupting the original configuration.
To this end, each correction trail leaves a trace (using extra symbols)
that is slowly faded away on its own.
If a correction trail coming from the left meets a trace in front of it,
it sends a fast signal ahead (again using extra symbols) to stop the correction trail
that has left that trace.  So, the leftmost trail eventually stops all the trails
in front of it and goes on to correct the entire island of defects.
We must of course make sure that this scenario works
even if the defects in the initial configuration involve symbols from the extended alphabet.

\medskip
The extended alphabet will be $\Sigma'\isdef \Sigma\times\{\xnone,\xtrac,\xstop\}$,
in which $\{\xnone\}\times \Sigma$ is identified with $\Sigma$.
The symbol~$\xtrac$ represents the \emph{trace} and $\xstop$ signifies the \emph{stop} signal.
For convenience, we identify the configurations with alphabet $\Sigma'$
with pairs $(y,\alpha)$ where $y\in \Sigma^\ZZ$ and $\alpha\in\{\xnone,\xtrac,\xstop\}^\ZZ$.
The configuration $(y,\alpha)$ has a \emph{defect} at cell $i$ if
either $y$ has a defect at $i$, or $i$ contains a signal symbol $\xtrac$ or $\xstop$.
Extending the notation $\Def(y)$, we denote the set of defects on $(y,\alpha)$ by
\begin{align}
	\Def(y,\alpha) &\isdef \{i\in\ZZ: \text{$i\in \Def(y)$ or $\alpha_i\neq\xnone$}\} \;.
\end{align}
We also define the set
\begin{align}
	\Def_0(y,\alpha) &\isdef
		\big\{i\in \Def(y,\alpha):y_{[i-k,i-1]}\in L_k(X)\big\}
\end{align}
whose elements we interpret as the cells that are the leftmost elements of a defect island.
Observe that the patching rule $g$ can only affect the cells in $\Def_0(y,\alpha)$.

The CA $T$ will be constructed as a composition
$T_2(T_gT_1^2T_0)^2$
of four CA maps $T_g,T_0,T_1,T_2\colon\Sigma'^\ZZ\allowbreak\to \Sigma'^\ZZ$.
This composition will ensure that the fading of the traces
is half as slow as the correction speed, while the stop signals propagate
twice as fast as the correction speed.

\medskip
\noindent\emph{Patching}.
The map $T_g$ is defined by $T_g(y,\alpha)\isdef(y',\alpha')$, where
\begin{align}
	(y'_i, \alpha'_i) &\isdef
	\begin{cases}
		\left(g(y_{[i-k,i+m+k)}),\xtrac\right)\qquad
			&\text{if $\alpha_{i-1}\neq\xstop$ and $i\in \Def_0(y,\alpha)$,} \\
		\left(y_i,\alpha_i\right)
			&\text{otherwise.}
	\end{cases}
\end{align}
It simply applies the patching rule on the elements of $\Def_0(y,\alpha)$
(if no stop signal on the left) and leaves a trace behind.
It also erases any stop symbol which sees no stop symbol on its left neighbour,
replacing it with a trace symbol.

\medskip

\noindent\emph{Generation of stop signals}.
The map $T_0$ is responsible for generating stop signals,
and is defined by $T_0(y,\alpha)\isdef(y,\alpha')$ (i.e., no change on $y$), where
\begin{align}
	\alpha'_i &\isdef \begin{cases}
		\xstop\qquad
			&\text{if $\alpha_{i}=\xtrac$ and $i\in \Def(y)$,} \\
		\alpha_i
			&\text{otherwise.}
	\end{cases}
\end{align}

\medskip

\noindent\emph{Propagation of stop signals}.
The propagation of the stop signals is governed by the map $T_1$,
which is defined by $T_1(y,\alpha)\isdef(y,\alpha')$ (i.e., no change on $y$), where
\begin{align}
	\alpha'_i &\isdef \begin{cases}
		\xstop\qquad
			&\text{if $\alpha_{i-1}=\xstop$ and $\alpha_i=\xtrac$,} \\
		\alpha_i
			&\text{otherwise.}
	\end{cases}
\end{align}

\medskip

\noindent\emph{Fading of the traces}.
Finally, the map $T_2$ handles the fading of the traces
and is defined by $T_2(y,\alpha)\isdef(y,\alpha')$ (i.e., no change on $y$), where
\begin{align}
	\alpha'_i &\isdef \begin{cases}
		\xnone\qquad
			&\text{if $\alpha_{i-1}=\xnone$ and $\alpha_i=\xtrac$,} \\
		\alpha_i
			&\text{otherwise.}
	\end{cases}
\end{align}

In Figure~\ref{fig:CAfirststeps},
we illustrate the operation of the cellular automaton in the context of Example~\ref{ex:red}.
The defective cells are underlined.
Note that in this example $\Def_0=\Def$.
Namely, since $k=1$, we always have $y_{[i-k,i-1]}=y_{i-1}\in\Sigma=L_1(X)$.
Two consecutive configurations correspond to one application of $T_gT_1^2T_0$.
Every two steps, the map $T_2$ is also applied, erasing one trace symbol on the left of each correction region. Figure~\ref{fig:illCAOneD} illustrates the evolution of the CA in a longer time span. When the front of a correction trail meets the fading trace of another correction trail in front of it, a stop signal is created. This stop signal travels faster than the correction trails, and hence quickly catches up with all the correction trails in front of it.  As a result, every correction trail which is not initiated by the leftmost defect is eventually stopped.

\begin{figure}[!h]
\vspace{2mm}
\centering
\hspace*{-2mm}\includegraphics[height=11cm]{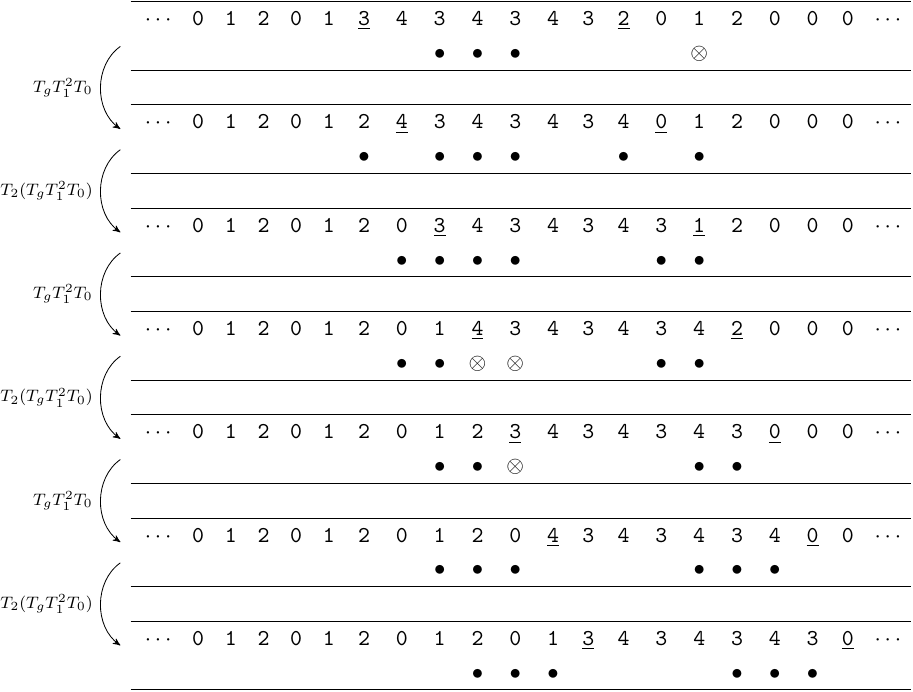}\vspace*{-2mm}
\caption{An example of evolution of the celllular automaton in the context of the SFT of Example~\ref{ex:red}.  Time goes downwards. Note that in this example, the initial perturbation already contains symbols from the extended alphabet~$\Sigma'$.}\label{fig:CAfirststeps}\vspace*{-1mm}
\end{figure}

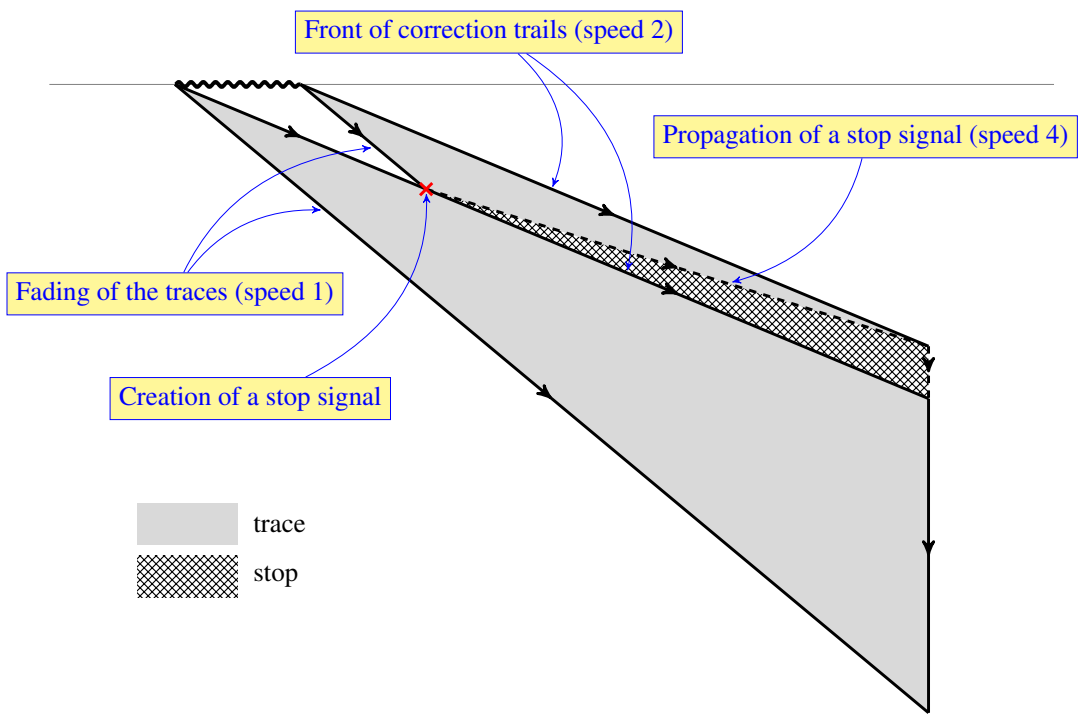
\begin{figure}[!h]
	\begin{center}
	\scalebox{0.92}{
   \begin{tikzpicture}[xscale=1.8,yscale=1.5,>=stealth']
		\CorrectionOneDVersionTwo
	\end{tikzpicture} }
	\end{center}\vspace*{-4mm}
\caption{Illustration showing the behaviour of the different signals composing the cellular automaton.
Time goes downwards.}\label{fig:illCAOneD}\vspace*{-2mm}
\end{figure}

\subsubsection*{Proof of Theorem~\ref{prop:robust:1d}:}
	We verify that the CA $T$ constructed above
	corrects finite islands of defects on~$X$ in linear time.

\medskip	
	Let $(y,\alpha)\colon\ZZ\to \Sigma'$ be a finite perturbation of
	a configuration $x\in X$
	(or more explicitly, a finite perturbation of $(x,\xnone^\ZZ)$).
	Let us call a cell $i\in \Def(y,\alpha)$
	\begin{itemize}
\itemsep=0.95pt
		\item \emph{active} on $(y,\alpha)$ if $i\in \Def_0(y,\alpha)$ and $\alpha_{i-1}\neq\xstop$,
		\item \emph{frozen} if $\alpha_{i-1}=\xstop$, and
		\item \emph{fading} if $\alpha_i=\xtrac$.
	\end{itemize}
	The \emph{tail} of an active cell $i$ is the longest (possibly empty) interval $[i-l,i-1]$
	of fading cells.  A tail whose leftmost element is frozen is said to \emph{be freezing}.
	An active cell together with its tail is a correction \emph{trail}.
	Observe that the leftmost element of $\Def(y,\alpha)$ is either active, or fading and \linebreak non-frozen.
	
\medskip
	The CA works intuitively as follows.
	Let $[a,b]\subseteq\ZZ$ be the smallest interval containing $\Delta((x,\xnone^\ZZ),\allowbreak(y,\alpha))$.
	The leftmost trail moves with speed \emph{at least} $2$ to the right,
	while its tail is erased with speed $1$ from the left.
	The rightmost trail moves with speed \emph{at most} $2$ to the right,
	occasionally becoming frozen when the following trail reaches its tail.
	In summary, at time $t>0$,
	\begin{itemize}
\itemsep=0.95pt
		\item the leftmost active cell is on the right of $a+2t$,
		\item the leftmost fading cell is on the right of $a+t$,
		\item the rightmost active cell is on the left of $b+2t$,
		\item the rightmost non-freezing tail is on the left of $b+t$.
	\end{itemize}
	It follows that before time $t_0\isdef b-a$, the leftmost trail overpasses
	the rightmost non-freezing tail, and causes it to freeze.
	From that moment on, it takes at most $(b-a)/4$ steps before
	every cell in front of the leftmost trail is inactive.
	At time $t_1\isdef t_0 + \lfloor(b-a)/4\rfloor < \frac{5}{4}(b-a)$,
	the CA is then in a configuration $(y',\alpha')$
	with the following properties.
	\begin{itemize}
		\item The configuration $(y',\alpha')$ is a finite perturbation of
			$(x,\xnone^\ZZ)$ with
			$\Delta(x,y')\subseteq [a+2t_1,b+2t_1]$ and
			$\Delta((x,\xnone^\ZZ),(y',\alpha'))\subseteq [a+t_1,b+2t_1]$.
		\item The only active cell on $(y',\alpha')$ is on the right of
			$a+2t_1$ and every element of $\Def(y',\alpha')$
			to the left of it is fading and non-frozen.
	\end{itemize}
	From time $t_1$ onward, the CA essentially applies
	the patching rule, removing the defects on $y$
	in no more than $b-a+m$ steps.
	At time $t_2\isdef t_1 + b-a+m$, the CA
	is in a configuration $(y'',\alpha'')$ in which
	\begin{itemize}
\itemsep=0.95pt
		\item $y''$ has no defect,
		\item $\alpha''$ contains no stop symbol $\xstop$
			and its trace symbols $\xtrac$ are contained in region
			$[a+t_2,a+2t_2]$.
	\end{itemize}
	Eventually, in at most $(a+2t_2)-(a+t_2)=t_2$ more steps, every trace symbol fades away,
	and we arrive at a configuration in $X$
	before time $t_3\isdef t_2+t_2\leq \frac{9}{4}(b-a)+m$. \QED

\section{Two-dimensional case}
\label{sec:2d}

We now consider the problem of stabilising a two-dimensional SFT. In this section, we will focus on some specific families of tiling spaces, for which different strategies can be impletemented in order to achieve self-stabilisation efficiently. The first cases we consider are inspired from the study of self-stabilisation for $k$-colourings~\cite{FaMaTa19}. Namely, Sections~\ref{sub:finite},~\ref{sub:single_cell}, and~\ref{sub:ell-fillable} treat and extend respectively the cases of $2$-colourings, $k$-colourings with $k\geq 5$, and $4$-colourings. Finally, Section~\ref{sec:2d:deterministic} is based on the same construction as in Section~\ref{sec:1d}, allowing us to handle the case of \emph{deterministic} SFTs.
Most of the constructions in this section can be easily generalized to higher dimensions, but for concreteness, we focus on the two-dimensional case.

As a first example, let us present a well-known CA stabilising the two-dimensional tiling space $\Hom_2=\{\unif{\symb{0}},\unif{\symb{1}}\}$ in linear time and without extra symbols.

\begin{figure}[!b]
\begin{center}
\begin{tabular}{c c c c}
\includegraphics[width=0.21\linewidth]{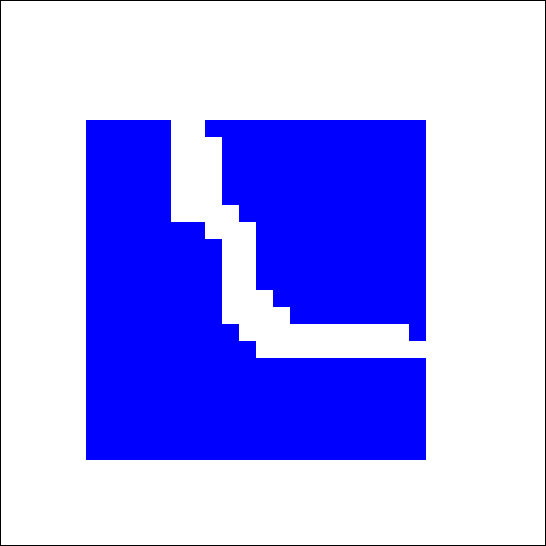} &
\includegraphics[width=0.21\linewidth]{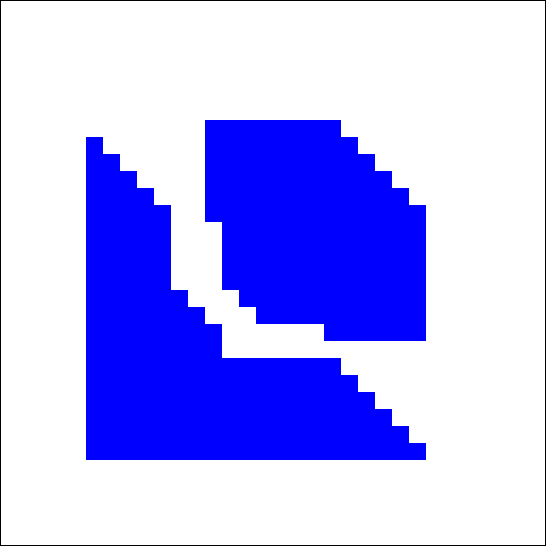} &
\includegraphics[width=0.21\linewidth]{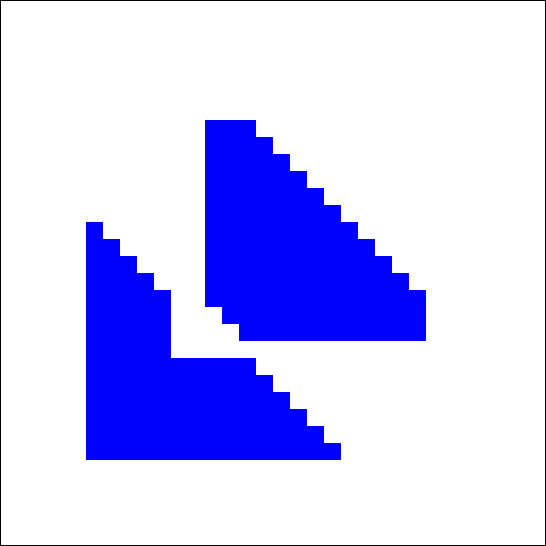} &
\includegraphics[width=0.21\linewidth]{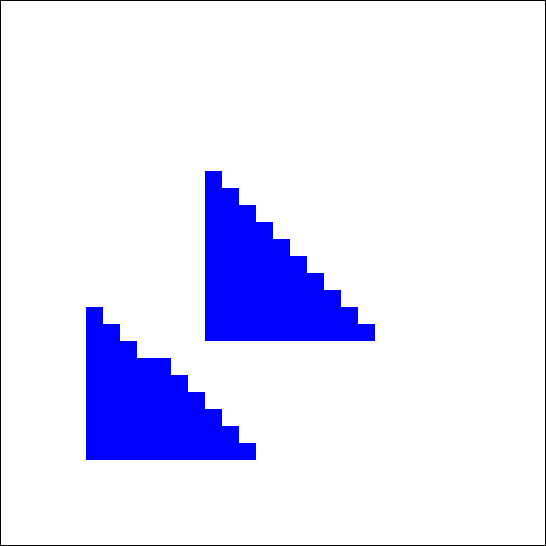} \\
$ t = 0 $ & $ t= 5$ & $ t= 10 $ & $ t =15  $
\end{tabular}
\end{center}\vspace*{-3mm}
\caption{Illustration of the evolution of Toom's CA $\toom$.}
\label{diag:toom}
\end{figure}

\begin{example}[Toom's North-East-Center majority rule]\label{exp:toom}
Toom's (deterministic) majority cellular automaton is the CA
$\toom\colon\BinA^{\ZZ^2}\to\BinA^{\ZZ^2}$ with neighbourhood $\Neighb=\{0,e_1,e_2\}$
defined, for any $x\in\BinA^{\ZZ^2}$ and $k\in\ZZ^2 $, by
\begin{align}
	\toom(x)_k &\isdef \maj(x_k,x_{k+e_1},x_{k+e_2}) \;,
\end{align}
where $\maj$ denotes the majority function, outputting the symbol which is in majority among the input symbols.

It is known that this two-dimensional CA is both a \emph{$\symb{0}$-eroder} and a \emph{$\symb{1}$-eroder}, which is equivalent to the CA stabilising $\Hom_2=\{\unifO,\unifI\}$ from finite perturbations. Furthermore, the stabilisation occurs in linear time~\cite{Too80,ToVaStMiKuPi90}.
\hfill\exampleqed
\end{example}


As we shall discuss in Section~\ref{sec:complexity}, it seems hopeless to be able to construct, for any two-dimensional SFT, a CA that stabilises it from finite perturbations in linear time, or even in polynomial time. However, in the following of this section, we will present several classes of SFTs for which we are able provide CA that stabilise in linear or quadratic time. We first focus on colourings and similar tiling spaces, then we treat the case of two-dimensional deterministic tiling spaces.

\subsection{Finite SFTs}
\label{sub:finite}

In this section, we treat the case where the SFT contains only a finite number of configurations.
Observe that the configurations of such a finite SFT are necessarily spatially periodic.

\begin{example}[Finite SFTs]
The set $\Hom_2=\{\unif{\symb{0}},\unif{\symb{1}}\}\subseteq\{\symb{0},\symb{1}\}^{\ZZ^2}$ is a finite tiling space. The set $\Col_2$ of all $2$-colourings of $\ZZ^2$ is also a finite tiling space, since it contains only two configurations, namely the odd and even chequerboard configurations.
\hfill\exampleqed
\end{example}

Let us consider an arbitrary finite SFT $X\subseteq\Sigma^{\ZZ^2}$. Then, for each configuration $x\in X$, there exist integers $n_1,n_2\geq 1$ (the \emph{horizontal} and \emph{vertical} periods of~$x$) such that $\sigma^{n_1e_1}(x)=\sigma^{n_2e_2}(x)=x$.
Let $N_1$ (respectively, $N_2$) denote the least common multiple of the horizontal (resp., vertical) periods of all the configurations in~$X$.  Since $X$ is finite, $N_1$ and $N_2$ are finite.
Then, for all $x\in X$, we have $\sigma^{N_1e_1}(x)=\sigma^{N_2e_2}(x)=x$.
We define a CA $F$ on $\Sigma^{\ZZ^2}$ by
\begin{align}
	F(x)_{k} &\isdef \maj (x_{k},x_{k+N_1e_1},x_{k+N_2e_2}) \;,
\end{align}
where the majority function $\maj(a,b,c)$ assigns to three symbols $a,b,c$ the symbol which is most common among~$a,b,c$, with the convention that when $a,b,c$ are distinct, one can choose arbitrarily the value of the function.
Observe that $F$ simply consists in applying Toom's majority rule on each sub-lattice generated by $N_1e_1$ and $N_2e_2$.

\begin{proposition}[Self-stabilisation of finite SFTs]
Let $X\subseteq{\Sigma^{\ZZ^2}}$ be a finite SFT. Then, the CA $F\colon\Sigma^{\ZZ^2}\to\Sigma^{\ZZ^2}$ defined above stabilises $X$ from finite perturbations in linear time.
\end{proposition}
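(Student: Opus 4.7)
The plan is to exploit the observation that $F$ decomposes into $N_1 N_2$ independent copies of Toom's NEC-majority rule, each acting on a sublattice on which the target configuration is homogeneous. Consistency is immediate: for $x \in X$ we have $x_{k+N_1 e_1} = x_k = x_{k+N_2 e_2}$ by the choice of $N_1, N_2$, so $F(x)_k = \maj(x_k, x_k, x_k) = x_k$.

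For attraction, I would first partition $\ZZ^2$ into the $N_1 N_2$ cosets $L_{a,b} \isdef (a,b) + (N_1 \ZZ) \times (N_2 \ZZ)$ for $(a,b) \in \{0,\ldots,N_1-1\} \times \{0,\ldots,N_2-1\}$. Since $F(x)_k$ depends only on cells of the same coset as $k$, the dynamics of $F$ splits into independent dynamics on each sublattice. Identifying $L_{a,b}$ with $\ZZ^2$ via the natural affine bijection, the induced dynamics is exactly the NEC-majority rule. For any $x \in X$, the restriction $x|_{L_{a,b}}$ is constant equal to $c_{a,b} \isdef x_{(a,b)}$, since $\sigma^{N_1 e_1}(x) = \sigma^{N_2 e_2}(x) = x$. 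Given a finite perturbation $\tilde{x}$ of $x$, the restriction $\tilde{x}|_{L_{a,b}}$ is thus a finite perturbation of the homogeneous configuration $\unif{c_{a,b}}$ on this sublattice.

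The main (and really the only) nontrivial step is to reduce the stabilisation of the multi-valued majority rule on each sublattice to the binary eroder theorem of Toom recalled in Example~\ref{exp:toom}. To this end, for each sublattice I would define the binary indicator $y^{(t)}_k \isdef \indicator{F^t(\tilde x)_k \neq c_{a,b}}$. A short case analysis on the $2^3$ possible patterns of inputs (binarised by equality to $c_{a,b}$) shows that, regardless of the tie-breaking convention, the pointwise domination
\begin{align}
y^{(t+1)}_k \leq \maj\bigl(y^{(t)}_k, y^{(t)}_{k+e_1}, y^{(t)}_{k+e_2}\bigr)
\end{align}
holds, because producing a value different from $c_{a,b}$ at a cell requires at least two of its three inputs to differ from $c_{a,b}$. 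By induction, $y^{(t)}$ is pointwise bounded above by the $t$-th iterate of Toom's binary NEC-majority rule applied to $y^{(0)}$, which by Toom's eroder theorem vanishes in time linear in the diameter of its support.

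Combining the $N_1 N_2$ sublattices and noting that $N_1$ and $N_2$ are constants depending only on~$X$, the overall stabilisation time is $O(\delta(\tilde x, X))$. Moreover, the limit configuration coincides with $c_{a,b}$ on every $L_{a,b}$ and is therefore equal to $x$, so it lies in $X$. The only subtlety, namely the tie-breaking rule when three pairwise distinct symbols meet at a cell, is bypassed cleanly by the domination argument, which makes no assumption on how ties are resolved.
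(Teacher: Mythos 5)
Your proof is correct, but it reaches the conclusion by a different route than the paper. Both arguments start from the same observation that $F$ splits into independent copies of the NEC-majority rule on the cosets of $(N_1\ZZ)\times(N_2\ZZ)$, on each of which the target configuration is homogeneous. From there, the paper argues directly and self-containedly: assuming $\Delta(x,y)$ lies in the triangle $T_n=\{k : k_1+k_2\leq n,\ k_1,k_2\geq 0\}$, it checks that every cell on the outer diagonal $k_1+k_2=n$ has both of its neighbours $k+N_1e_1$, $k+N_2e_2$ outside $T_n$, hence agreeing with $x$, so the majority restores $x_k$ there and $\Delta(x,F(y))\subseteq T_{n-1}$; iterating gives stabilisation in $n+1$ steps. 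You instead binarise via the indicator $y^{(t)}_k=\indicator{F^t(\tilde x)_k\neq c_{a,b}}$, establish the one-step domination $y^{(t+1)}_k\leq\maj(y^{(t)}_k,y^{(t)}_{k+e_1},y^{(t)}_{k+e_2})$ (valid because two agreeing inputs always win the three-way majority, whatever the tie-break), propagate it through iterates by monotonicity of the binary majority, and then invoke the linear-time eroder property of $\toom$ as a black box. Your reduction buys modularity and disposes of the tie-breaking convention once and for all; the paper's shrinking-triangle argument is essentially the standard proof of that eroder property inlined for the multi-symbol rule, so it avoids the external citation and yields the explicit bound $n+1$ on the stabilisation time. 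One point worth making explicit in your write-up is the monotonicity of the binary $\maj$, which is what lets the pointwise domination pass from one step to $t$ steps; with that stated, the argument is complete.
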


\begin{proof}  It is clear from the definition that $F(x)=x$ for every $x\in X$.
For each $n\in\NN$, define the triangular set $T_n=\{k\in\ZZ^2 : k_1+k_2\leq n,\; k_1,k_2\geq 0\}$.
Let $x\in X$ and take $y$ such that $\Delta(x,y)$ is finite.
By translating $x$ and $y$ if needed, we can assume without loss of generality that the difference set $\Delta(x,y)$ is included in the triangle $T_n$ for some $n$. It is then easy to verify that $\Delta\big(x,F(y)\big)\subseteq T_{n-1}$.
Indeed, for every cell outside $T_n$, the local rule does not modify the state, whereas for the cells $k\in \ZZ^2$ which are inside $T_n$ and satisfy $k_1+k_2 =n$, we have $F(y)_{k}=x_{k}$, since $x_{k}=x_{k+Ne_1}=x_{k+Ne_2}=y_{k+Ne_1}=y_{k+Ne_2}$.
Iterating $F$ we obtain $\Delta\big(x,F^t(y)\big)\subseteq T_{n-t}$ for each $t\geq 0$.
Consequently, as time goes by, the set of disagreements becomes smaller.
In particular, for $t=n+1$, we obtain $\Delta\big(x,F^{n+1}(y)\big)\subseteq T_{-1}=\varnothing$, hence $F^{n+1}(y)=x\in X$.  This means that the configuration $y$ has been corrected in at most $n+1$ steps.
\end{proof}

\begin{remark}
The above result clearly extends to dimensions $d\geq 2$. One can simply apply Toom's majority rule on each (two-dimensional) sub-lattice generated by $N_1e_1$ and $N_2e_2$, and in the proof, replace $T_n$ by the triangular prism $T_n=\{k\in\ZZ^d : k_1+k_2\leq n,\; k_1,k_2\geq 0\}$.
\hfill\remarkqed
\end{remark}

\subsection{Single-cell fillable tiling spaces}
\label{sub:single_cell}

We say that a two-dimensional tiling space is \emph{single-cell fillable} if there exists a map $\psi\colon\Sigma^4\to \Sigma$ such that, for any possible choice $(a,b,c,d)\in \Sigma^4$ of symbols surrounding a cell (see Figure~\ref{fig:single-fillability}), assigning the value $\psi(a,b,c,d)$ to the central cell ensures that it is defect-free. More specifically, the value $\alpha=\psi(a,b,c,d)$ satisfies $v_1(a,\alpha)=v_1(\alpha,c)=1$ and $v_2(\alpha,b)=v_2(d,\alpha)=1$.
One can observe that a tiling space is single-cell fillable if and only if the functions $v_1$ and $v_2$ can be chosen in such a way that every locally admissible pattern is globally admissible~\cite{MaPa15}.

\begin{figure}[!h]
\vspace*{2mm}
\begin{center}
\scalebox{0.96}{\begin{tikzpicture}
	\figOneFill
\end{tikzpicture} }
\end{center}\vspace*{-4mm}
\caption{Illustration of the notion of single-cell fillability.}
\label{fig:single-fillability}
\end{figure}
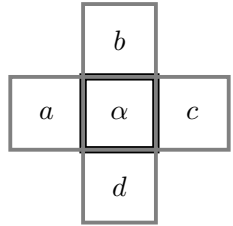

Note that any tiling space having a safe symbol is trivially single-cell fillable. The following is a more interesting family of single-cell fillable tiling spaces.

\begin{example}[Single-cell fillable]
For $k\geq 5$, the space $\Col_k$ of two-dimensional $k$-colourings is single-cell fillable.

\medskip
More generally, let $G=(V,E)$ be a finite undirected graph (possibly with self-loops).  Let $X_G\subseteq V^{\ZZ^d}$ denote the set of all graph homomorphisms from $\ZZ^d$ (with nearest-neigbour edges) to~$G$.  In other words, $x\in X_G$ if and only if $x_i x_j\in E$ for every two adjacent cells $i,j\in\ZZ^d$.  The set $X_G$ is clearly a tiling space, defined by the functions $v_1\isdef\cdots\isdef v_d\isdef v$ where $v(a,b)\isdef 1$ if and only if $ab\in E$.  Note that if $G$ is the complete graph on $k$ vertices (without self-loops), then $X_G$ coincides (up to renaming of the vertices) with the space of $k$-colourings~$\Col_k$.  It is easy to see that $X_G$ is single-cell fillable if and only if $G$ has the following property:  for any subset $A\subseteq V$ with $\abs{A}=2d$, there exists a vertex $b\in V$ such that $ab\in E$ for every $a\in A$.  In particular, $X_G$ is single-cell fillable if the minimum degree of the vertices in~$G$ is larger than $\big(1-(2d)^{-1}\big)\abs{V}$.
See Refs.~\cite{BrPa17,ChMa18} for more on the homomorphism spaces.
\hfill\exampleqed
\end{example}

Let $X$ be a single-cell fillable tiling space. We introduce the following terminology for defects. We say that a cell $(i,j)$ has a \emph{NE-defect} if it has a N-defect or an E-defect (or both).
For $x\in \Sigma^{\ZZ^2}$, we denote by $\DefNE(x)$ the set of cells having a NE-defect, that is:
\begin{align}
	\DefNE(x) &\isdef \big\{(i,j)\in\ZZ^2 : \text{$v_1(x_{c},x_{c+e_1})=0$ or $v_2(x_{c},x_{c+e_2})=0$}\big\} \;.
\end{align}

Let $ \psi\colon\Sigma^4 \to \Sigma $ be a function which assigns, to a choice $(a,b,c,d)\in \Sigma^4$ of symbols surrounding a cell, a symbol $\psi(a,b,c,d)$ for the central cell ensuring that it is defect-free.
We define a CA $F$ on $\Sigma^{\ZZ^2}$ by
\useshortskip
\begin{align}
	\forall c \in \ZZ^2, \quad
		F(x)_{c} \isdef
		\begin{cases}
			\psi(x_{c-e_1},x_{c-e_2}, x_{c+e_1},x_{c+e_2}) & \text{if $c\in\DefNE(x)$,} \\
			x_{c} & \text{otherwise.}
		\end{cases}
\end{align}

\begin{proposition}[Self-stabilisation of single-cell fillable tiling spaces]
Let $X\subseteq{\Sigma^{\ZZ^2}}$ be a single-cell fillable tiling space. Then, the CA $F\colon\Sigma^{\ZZ^2}\to\Sigma^{\ZZ^2}$ defined above stabilises $X$ from finite perturbations in linear time.
\end{proposition}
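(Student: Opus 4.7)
The plan is to verify consistency, then show that the set of NE-defects is non-increasing under one step of $F$ and strictly shrinks on its NE corner, and finally conclude with a linear bound on the stabilisation time.

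Consistency is immediate: for $x\in X$ every cell is defect-free, so $\DefNE(x)=\varnothing$ and $F(x)=x$.

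For the attraction, fix a finite perturbation $\tilde{x}$ of some $x\in X$ and set $D_t\isdef\DefNE(F^t(\tilde{x}))$. The defining property of $\psi$ used throughout is that, whenever a cell is filled from its four surrounding neighbour symbols, the resulting value is compatible via $v_1$ and $v_2$ with each of those four neighbours (as they were supplied to $\psi$). I will establish (a) $D_{t+1}\subseteq D_t$ and (b) every $c\in D_t$ with $c_1+c_2$ maximal satisfies $c\notin D_{t+1}$. For (a), take $c\notin D_t$ and consider each direction $e_i$ ($i\in\{1,2\}$) separately. If $c+e_i\notin D_t$, then neither cell is updated and $v_i$ was already $1$. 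If $c+e_i\in D_t$, then $c+e_i$ is updated and $\psi$ was fed $F^t(\tilde{x})_c$ as the neighbour of $c+e_i$ lying in direction $-e_i$; the defining property of $\psi$ then gives $v_i(F^t(\tilde{x})_c,F^{t+1}(\tilde{x})_{c+e_i})=1$. Either way $c\notin D_{t+1}$. For (b), when $c\in D_t$ with $c+e_1,c+e_2\notin D_t$, the east and north neighbours retain their values, and $F^{t+1}(\tilde{x})_c$ is constructed by $\psi$ to be compatible with all four (unchanged) neighbour inputs; in particular $v_i(F^{t+1}(\tilde{x})_c,F^{t+1}(\tilde{x})_{c+e_i})=1$ for $i=1,2$, so $c\notin D_{t+1}$.

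The main obstacle that this NE-corner formulation circumvents occurs when $c$ and some $c+e_i$ both belong to $D_t$ and are updated simultaneously: $\psi$ guarantees compatibility of the new value at $c$ with the pre-update value at $c+e_i$, but the latter has changed, so a fresh defect can appear. Cells maximising $c_1+c_2$ in $D_t$ have both $c+e_1$ and $c+e_2$ outside $D_t$, thereby falling into the clean case~(b) and avoiding this pathology.

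Combining (a) and (b), the sequence $(D_t)$ is non-increasing and $\phi(D_t)\isdef\max\{c_1+c_2:c\in D_t\}$ strictly decreases at every step with $D_t\neq\varnothing$. Since $D_0$ is finite, setting $\phi'(D_0)\isdef\min\{c_1+c_2:c\in D_0\}$ yields $D_t=\varnothing$ within $\phi(D_0)-\phi'(D_0)+1$ steps, whence $F^t(\tilde{x})\in X$. Finally, $D_0\subseteq\Delta(x,\tilde{x})\cup(\Delta(x,\tilde{x})-e_1)\cup(\Delta(x,\tilde{x})-e_2)$, so $\phi(D_0)-\phi'(D_0)=O(\delta(\tilde{x},X))$, yielding the claimed linear stabilisation time.
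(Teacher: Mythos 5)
Your proof is correct and follows essentially the same route as the paper's: consistency, monotonicity of the NE-defect set under $F$, erosion of the cells maximising $c_1+c_2$ (the paper phrases this via the triangles $T_n$ after a translation), and a linear bound via the range of the diagonal coordinate. The only cosmetic difference is that you track $\max\{c_1+c_2\}-\min\{c_1+c_2\}$ explicitly rather than normalising $\DefNE(\tilde{x})$ into a triangle $T_n$.
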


\begin{proof} It is clear from the definition that $F(x)=x$ for every $x\in X$. Let us now take $x\in \FPert{X}$.

First, observe that if $c\notin \DefNE(x),$ then $c\notin\DefNE(F(x))$, so that the set of NE-defects can only decrease under the action of $F$. Let us indeed take $c\notin \DefNE(x)$. By definition of $F$, the value of cell~$c$ is not modified when applying $F$, and if the value of cell $c+e_1$ (resp. $c+e_2$) is modified, that is, if $F(x)_{c+e_1}\not=x_{c+e_1}$ (resp. $F(x)_{c+e_2}\not=x_{c+e_2}$), then the new value of $c+e_1$ (resp. $c+e_2$) is chosen in such a way that $v_1\big(F(x)_c,F(x)_{c+e_1}\big)=1$ (resp. $v_2\big(F(x)_c,F(x)_{c+e_2}\big)=1$), so that $c\notin\DefNE(F(x))$.

Second, if $c\in\DefNE(x)$ is such that $c+e_1,c+e_2\notin\DefNE(x)$, then $c\notin \DefNE(F(x))$, so that the set of NE-defects is progressively eroded, from the NE to the SW. More formally, we can assume without loss of generality that there exists an integer $n\geq 0$ such that $\DefNE(x)\subseteq T_n$, where $T_n=\{(i,j)\in\ZZ^2 : i+j\leq n, \; i,j\geq 0\}$, and one can check that after $t$ steps, we have $\DefNE(F^t(x))\subseteq T_{n-t}$. Thus, after $n+1$ steps, we have $\DefNE(F^{n+1}(x))=\varnothing$, meaning that the configuration is fully corrected: $F^{n+1}(x)\in X$.
\end{proof}

\begin{remark}
The result extends naturally to $d$-dimensional single-cell fillable tiling spaces, with ${d\geq 2}$.
\hfill\remarkqed
\end{remark}

\subsection{Strongly $\ell$-fillable tiling spaces (with $\ell\geq 2$)}\label{sub:ell-fillable}

We say that a tiling space is \emph{strongly $\ell$-fillable} if there exists a map $\psi\colon\Sigma^{4\ell}\to \Sigma^{\ell^2}$ such that, for any possible choice $(a_1,\ldots,a_{4\ell})\in \Sigma^{4\ell}$ of symbols surrounding an $\ell$-square (that is, an $\ell\times\ell$ block of cells), assigning the values $(b_1,\ldots,b_{\ell^2})=\psi(a_1,\ldots,a_{4\ell})$ to the inner cells of the $\ell$-square ensures that each cell of the $\ell$-square is defect-free (see Figure~\ref{fig:EllCase}). Note that here, we do not assume any further condition on $(a_1,\ldots,a_{4\ell})\in \Sigma^{4\ell}$. We refer to the article by Alon et al.~\cite{AlBrChMaSp19} for a similar but weaker condition of $\ell$-fillability.

\begin{figure}[!h]\small
\begin{center}
\scalebox{0.95}{\begin{tikzpicture}
	\figEllCase
\end{tikzpicture} }
\end{center}\vspace*{-3mm}
\caption{Illustration of the notion of strong $\ell$-fillability.}
\label{fig:EllCase}\vspace*{-1mm}
\end{figure}
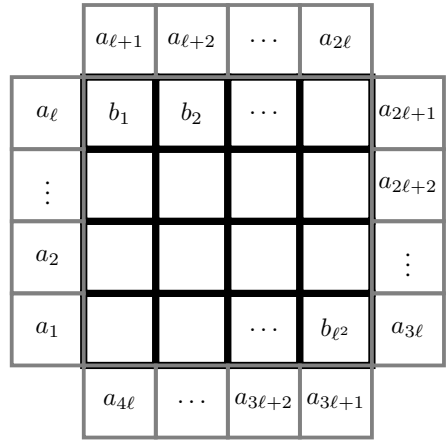

\begin{example}[Strongly $2$-fillable]
The set $\Col_4$ of two-dimensional $4$-colourings is not single-cell fillable.  We claim that $\Col_4$ is strongly $2$-fillable. To prove this, we need to show that for any possible choice $(a,b,c,d,e,f,g,h)\in \Sigma^8$ of symbols surrounding a $2$-square (see Figure~\ref{fig:lem:fourcol}), there exists a choice $(\alpha,\beta,\gamma,\delta)\in \Sigma^4$ for the cells of the $2$-square such that the four cells of the $2$-square are defect-free.

\begin{figure}[!h]
\vspace*{2mm}
\begin{center}
\scalebox{0.95}{\begin{tikzpicture}
	\figFourCase
\end{tikzpicture} }
\end{center}\vspace*{-4mm}
\caption{Illustration of the notion of strong $2$-fillability.}
\label{fig:lem:fourcol}\vspace*{-1mm}
\end{figure}
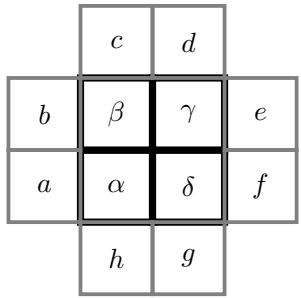

If $\{a,d,e,h\}\subsetneq \Sigma$, then we can choose a colour from $\Sigma\setminus\{a,d,e,h\}$ and assign it to both $\alpha$ and $\gamma$. We are then sure that we can find suitable colours for the two remaining cells, since each of these two cells is surrounded by at most three different colours.
In the same way, if $\{b,c,f,g\}\subsetneq \Sigma$, we can find a valid pattern.

Let us now assume that $\{a,d,e,h\}=\{b,c,f,g\}=\Sigma$. Without loss of generality, we can assume that $a=0, h=1, d=2, e=3$. The set of allowed colours for $\alpha$ is then $\{2,3\}$, and the set of allowed colours for $\gamma$ is $\{0,1\}$. If the allowed colours for $\beta$ and $\delta$ are $\{0,1\}$ and $\{2,3\}$ respectively, then a valid pattern is given by $(\alpha,\beta,\gamma,\delta)=(2,0,1,3)$. If the allowed colours for $\beta$ and $\delta$ are $\{0,2\}, \{1,3\}$ respectively, then a valid pattern is given by $(\alpha,\beta,\gamma,\delta)=(2,0,1,3)$. The other cases are analogous.~\hfill\exampleqed
\end{example}

\begin{example}[Strongly $2$-fillable]
Consider the set $\Theta$ of all Wang tiles whose edges are coloured either black or white, except the one with four white edges.
It is easy to see that the space of valid tilings with tiles from~$\Theta$ is strongly $2$-fillable but not single-cell fillable.
\hfill\exampleqed
\end{example}

\begin{example}[Strongly $2$-fillable]
\label{exp:paths}
Consider the set of Wang tiles depicted in Figure~\ref{fig:exp:paths}.  The decorations symbolise the edge colours, hence there are three possible colours: ``horizontal line'', ``vertical line'' and ``none''.
Let $X$ denote the space of all valid tilings.  It can be verified that $X$ is strongly $2$-fillable but not single-cell fillable.
\hfill\exampleqed
\end{example}

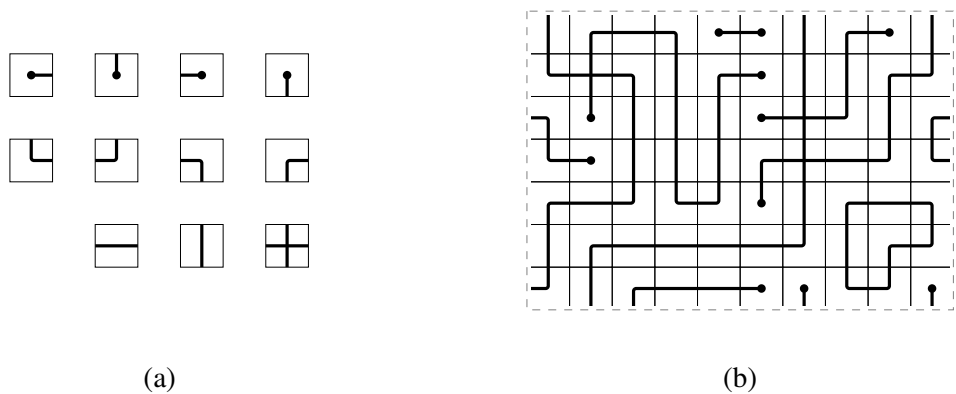
\begin{figure}[!h]
	\begin{center}
		\begin{tabular}{cc}
			\begin{minipage}{0.45\textwidth}
				\centering
				\begin{tikzpicture}[x=40pt,y=40pt,scale=0.8,every node/.style={scale=0.82}]
					\wtpathsR{(-1.5,1)}  \wtpathsU{(-0.5,1)}  \wtpathsL{(0.5,1)}  \wtpathsD{(1.5,1)}
					\wtpathsRU{(-1.5,0)} \wtpathsUL{(-0.5,0)} \wtpathsLD{(0.5,0)} \wtpathsDR{(1.5,0)}
					                     \wtpathsRL{(-0.5,-1)}\wtpathsUD{(0.5,-1)}\wtpathsX{(1.5,-1)}
				\end{tikzpicture}
			\end{minipage}
			&
			\begin{minipage}{0.45\textwidth}
				\centering
				\begin{tikzpicture}[x=20pt,y=20pt,scale=0.8,every node/.style={scale=0.82}]
					\begin{scope}
						\draw[help lines,dashed,gray] (-\tilesize+0\tilesize,\tilesize-0\tilesize) rectangle
							(19\tilesize-0\tilesize,-13\tilesize+0\tilesize);
						\clip (-\tilesize+0.2\tilesize,\tilesize-0.2\tilesize) rectangle
							(19\tilesize-0.2\tilesize,-13\tilesize+0.2\tilesize);
						\pathsexample
					\end{scope}
				\end{tikzpicture}
			\end{minipage}
			\\ [6em]
			(a) & (b)
		\end{tabular}
	\end{center}\vspace*{-5mm}
\caption{Illustration of Example~\ref{exp:paths}.  (a) The tile set.  (b) Example of a valid tiling.}
\label{fig:exp:paths}
\end{figure}

For $\ell=1$, strong $\ell$-fillability corresponds to single-cell fillability.  Therefore, in what follows, we will assume that $\ell\geq 2$.
The construction that we present below is slightly different from the one of the earlier article~\cite{FaMaTa19}, allowing for a simpler proof.  The proof from the latter article had some inaccuracies which are now avoided.

\medskip
Let us design a CA that corrects the finite perturbations of a strongly $\ell$-fillable tiling space~$X$.
Let $\psi\colon\Sigma^{4\ell}\to \Sigma^{\ell^2}$ be a function that maps some $(a_1,\ldots,a_{4\ell})\in \Sigma^{4\ell}$ to an element of $\Sigma^{\ell^2}$, such that the pattern formed by these values
is a defect-free pattern.
The aim is to apply $\psi$ on a collection of non-overlapping, non-adjacent $\ell$-squares containing defects so as to reduce the number of defects.
More specifically, the CA must (locally) select some $\ell$-squares containing defects in such a way that the following two conditions are satisfied:
\begin{enumerate}
\itemsep=0.95pt
\item Every two selected $\ell$-squares are at distance at least~$1$ (they do not overlap or touch each other).
\item If the configuration contains a defect, then at least one $\ell$-square containing a defect is selected.
\end{enumerate}
To this end, let us first identify a set of cells that will play the role of the top-right corners of the selected $\ell$-squares.
Given a configuration $x\in \Sigma^{\ZZ^2}$, let us denote again the set of cells having a NE-defect by
$\DefNE(x)=\big\{c\in\ZZ^2 : \text{$v_1(x_{c},x_{c+e_1})=0$ or $v_2(x_{c},x_{c+e_2})=0$}\big\}$.
For the sake of clarity, we first define a stabilising CA in the case $\ell=2$, and then treat the general case.

\paragraph{Case $\ell=2$.} We say that a cell $c\in \ZZ^2$ is a \emph{NE-corner} if $c\in\DefNE(x)$ and
\begin{align}
&c-e_1+e_2,c-2e_1+2e_2 \not\in\DefNE(x)\;,\\
&c+2e_1-e_2,c+e_1, c+e_2, c-e_1+2e_2 \not\in\DefNE(x)\;,\\
&c+2e_1,c+e_1+e_2, c+2e_2 \not\in\DefNE(x)\;, \\
&c+2e_1+e_2, c+e_1+2e_2 \not\in\DefNE(x)\;.
\end{align}
See Figure~\ref{fig:NeighbFourCol} for an illustration of the definition. We denote by $\CorNE(x)$ the set of NE-corners in a configuration $x\in \Sigma^{\ZZ^2}$.

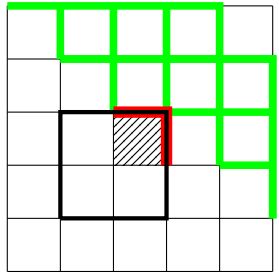
\begin{figure}[!h]\small
\begin{center}
\begin{tikzpicture}[scale=0.7]
	\figNeighbFourCol
\end{tikzpicture}
\end{center}\vspace*{-3.5mm}
\caption{Illustration of the notion of NE-corner in the case $\ell=2$.  The central cell is a NE-corner if at least one of the red lines (North or East or both) contains a defect and all the green lines are defect-free.}
\label{fig:NeighbFourCol}\vspace{1mm}
\end{figure}

We define a CA $F$ by the following rule: if a cell $c=(i,j)\in \ZZ^2$ is a NE-corner, then apply $\psi$ to the $2$-square whose NE-corner is $c$, that is, replace
the symbols of the cells $(i-1,j-1),(i-1,j),(i,j),(i,j-1)$ by $\psi(a,b,\ldots,h)$, where $a=x_{i-2,j-1}, b=x_{i-2,j},\ldots,h=x_{i-1,j-2}$ (see Figure~\ref{fig:lem:fourcol}). Let us first observe that the CA $F$ given by this rule is well-defined. Indeed, by definition of a NE-corner, one can check that there are no two adjacent NE-corners, vertically or horizontally, or in diagonal.
Consequently, at each time step, the $2$-squares that are updated do not overlap. In addition, note that the definition of NE-corners ensures that the $2$-squares that are updated cannot share a common edge, so that the property 1 above is satisfied.\vspace*{-1mm}

\paragraph{General case ($\ell\geq 2$).} In the general case, we modify the notion of NE-corner as follows. We say that a cell $c\in \ZZ^2$ is a \emph{NE-corner} if $c\in\DefNE(x)$ and for any $-\ell\leq i,j\leq \ell$,
\begin{align}
	\text{if $1\leq i+j\leq \ell-1$ or $[\text{$i+j=0$ and $j>i$}]$, then $c+ie_1+je_2\not\in\DefNE(x)$.}
\end{align}
See Figure~\ref{fig:NeighbFourColGeneral} for an illustration in the case $\ell=3$. We denote again by $\CorNE(x)$ the set of NE-corners in a configuration $x\in \Sigma^{\ZZ^2}$.

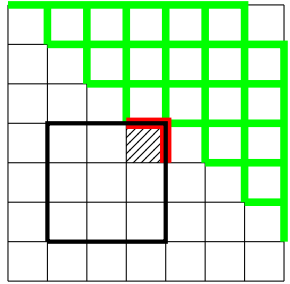
\begin{figure}[!h]
\begin{center}
\begin{tikzpicture}[scale=0.52]
	\figNeighbFourColGeneral
\end{tikzpicture}
\end{center}\vspace*{-2mm}
\caption{Illustration of the notion of NE-corner in the case $\ell=3$.  The central cell is a NE-corner if at least one of the red lines (North or East or both) contains a defect and all the green lines are defect-free.}
\label{fig:NeighbFourColGeneral}
\end{figure}

Again, we define a CA $F$ by the following rule: if a cell $c=(i,j)\in \ZZ^2$ is a NE-corner, then apply $\psi$ to the $\ell$-square whose NE-corner is $c$. By a similar argument as for $\ell=2$, $F$ is well-defined, and in addition, the $\ell$-squares that are updated cannot share a common edge, so that the property 1 above is satisfied.

We are now able to state the result, whose proof now consists in proving that the property 2 above is also satisfied.

\begin{proposition}[Self-stabilisation of strongly fillable tiling spaces]
\label{prop:ell-fillable}
Let $X\subseteq{\Sigma^{\ZZ^2}}$ be a strongly $\ell$-fillable tiling space, with $\ell\geq 2$.
Then, the CA $F\colon\Sigma^{\ZZ^2}\to\Sigma^{\ZZ^2}$ defined above stabilises $X$ from finite perturbations in quadratic time.
\end{proposition}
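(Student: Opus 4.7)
The plan is to show that each iteration of $F$ strictly decreases $|\DefNE|$, and to bound $|\DefNE(\tilde{x})|$ quadratically in $n \isdef \delta(\tilde{x},X)$, which together yield the claimed quadratic stabilisation time.

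\emph{Existence of a NE-corner.} When $\DefNE(x)$ is finite and nonempty, take $c^\ast \in \DefNE(x)$ maximising $(i(c)+j(c),\,j(c))$ in the lexicographic order. For any forbidden offset $(i,j)$ with $1 \leq i+j \leq \ell-1$, the cell $c^\ast + ie_1 + je_2$ would have strictly greater coordinate sum, contradicting maximality; for an offset with $i+j=0$ and $j>i$ (so $j>0$), this cell would have equal coordinate sum but strictly larger $j$-coordinate, again a contradiction. Hence $c^\ast \in \CorNE(x)$, and at least one $\ell$-square is updated whenever $\DefNE(x) \neq \varnothing$.

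\emph{Monotone decrease of defects.} Consistency $F(x)=x$ on $X$ is immediate since $\DefNE(x)=\varnothing$ forces $\CorNE(x)=\varnothing$. When $\DefNE(x)\neq\varnothing$, $F$ applies $\psi$ simultaneously on the $\ell$-squares anchored at each NE-corner. The NE-corner condition forces the $\ell$-squares at distinct NE-corners, together with their $4\ell$-cell boundaries, to be pairwise disjoint, which makes the update unambiguous. By strong $\ell$-fillability, every cell inside an updated $\ell$-square becomes defect-free and the edges to the unmodified boundary cells are non-defective. For any cell $c$ outside the updated squares, its $\DefNE$-status cannot worsen: if neither $c+e_1$ nor $c+e_2$ lies inside an updated $\ell$-square, the incident edges are untouched; otherwise, whichever of $c+e_1, c+e_2$ lies inside an $\ell$-square, the corresponding edge is forced non-defective by the $\psi$ applied to that $\ell$-square, since $c$ itself is one of the boundary cells feeding into $\psi$. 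Thus $\DefNE(F(x)) \subseteq \DefNE(x) \setminus \{c^\ast\}$, and iterating gives $|\DefNE(F^t(x))| \leq |\DefNE(x)| - t$ so long as the right-hand side is positive.

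\emph{Quadratic time bound.} If $\tilde{x}$ is a finite perturbation of some $x\in X$ with $\delta(\tilde{x},x) \leq n$, then $\Delta(\tilde{x},x)$ fits in an $n \times n$ box; since a NE-defect at $c$ requires one of $c, c+e_1, c+e_2$ to lie in $\Delta(\tilde{x},x)$, we have $|\DefNE(\tilde{x})| \leq 3n^2$. Combined with the monotone decrease, $\DefNE(F^t(\tilde{x}))=\varnothing$ within $O(n^2)$ steps; and a configuration with no NE-defect lies in $X$, since every edge of $\ZZ^2$ is the east or north edge of some cell, so all constraints $v_1,v_2$ are satisfied.

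The main obstacle is verifying that the NE-corner condition is designed strongly enough to guarantee both the lex-max existence argument (clean, once stated) and the pairwise disjointness of the $\ell$-squares together with their $4\ell$-cell boundaries — the latter being what makes the simultaneous application of $\psi$ well-defined and the subsequent case analysis on cells adjacent to an updated $\ell$-square go through.
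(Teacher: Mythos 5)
Your proof takes essentially the same route as the paper's: choosing the lex-maximum of $(i+j,\,j)$ over $\DefNE(x)$ is exactly the paper's ``sweep by NW--SE diagonals from the NE and take the leftmost/uppermost defect on the first non-empty diagonal'' argument for producing a NE-corner, and the strict decrease of $\abs{\DefNE}$ combined with the $O(n^2)$ bound on the initial defect count gives the quadratic time (the paper leaves this last counting step implicit). One small overstatement: the NE-corner condition does not force the $\ell$-squares \emph{together with their $4\ell$-cell crowns} to be pairwise disjoint --- two NE-corners may sit at offset $(\ell,\ell)$, in which case the squares are diagonally adjacent and their crowns overlap --- but the facts your case analysis actually uses (the squares themselves are pairwise disjoint and share no edge, and each crown is disjoint from every updated square, so the data fed to $\psi$ is never overwritten) do hold, so the argument goes through.
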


\begin{proof}
Let $x\in\FPert{X}$. We prove that if $\Def(x)\not=\varnothing$, then $\CorNE(x)\not=\varnothing$. Let us indeed sweep the configuration~$x$ by NW-SE diagonals, from the NE to the SW. Since $\Def(x)$ is finite, we can consider the first diagonal which contains a NE-defect, and on this diagonal, we consider the leftmost NE-defect (which is also the uppermost). By definition of a NE-corner, this NE-defect is a NE-corner.

In order to end the proof, it is then sufficient to observe that while $\Def(x)\not=\varnothing$, the number of defects decreases strictly when applying $F$. Indeed, at least all the NE-corners become defect-free, and since the updated $\ell$-squares do not share any common edge, no new defect is created.
\end{proof}

\begin{figure}[!h]
\begin{center}
\begin{tikzpicture}[scale=0.5]
	\figBlockTwo
\end{tikzpicture}
\end{center}\vspace*{-2mm}
\caption{Illustration of the proof of Proposition~\ref{prop:ell-fillable} in the case $\ell=2$.  Defects are represented in red, and cells that are NE-corners are shaded.  For the other NE-defects, the constraints represented in green on Figure~\ref{fig:NeighbFourCol} are not all satisfied.  As represented, we know that by construction, there exists at least one NE-corner on the first NW-SE diagonal that contains a NE-defect.}
\label{fig:BlockTwo}
\end{figure}
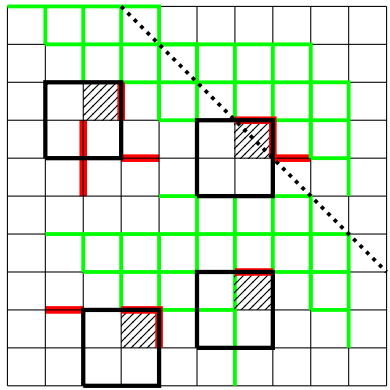

\subsection{Deterministic SFTs}
\label{sec:2d:deterministic}

We now extend the construction proposed in Section~\ref{sec:1d} for one-dimensional SFTs to a specific class of higher-dimensional SFTs: the \emph{deterministic} SFTs. To begin with, let us define this notion. Again, we focus on the two-dimensional case, but the ideas can also be applied in higher dimensions.

We say that a two-dimensional SFT $X\subseteq\Sigma^{\ZZ^2}$ is \emph{NE-deterministic} if it possible to describe it by a set~$\collection{F}$ of forbidden patterns in such a way that all the forbidden patterns of $\collection{F}$ have shape  $\{0,e_1,e_2\}$, and $\collection{F}$ has the additional property that for every $a,b\in \Sigma$, there exists at most one element $c\in \Sigma$ such that the pattern $p\colon\{0,e_1,e_2\}\to\Sigma$ defined by $p(e_1)=a$, $p(e_2)=b$ and $p(0)=c$ does not belong to~$\collection{F}$.

The constraints are therefore completely specified by the partial function $f\colon\Sigma\times \Sigma\to \Sigma$, that maps $(a,b)\in \Sigma^2$ to the unique symbol $c$ that is allowed, if any.

\begin{example}[Ledrappier SFT]
The SFT\vspace*{-2mm}
\begin{align}
	X &\isdef
		\{x\in \{\symb{0},\symb{1}\}^{\ZZ^2} : \text{$x_{k}=x_{k+e_1}+x_{k+e_2} \pmod*{2}$ for all $k\in\ZZ^2$}\}
\end{align}
known as the \emph{Ledrappier} SFT, or as the \emph{three-dot system}, is NE-deterministic.
\hfill\exampleqed
\end{example}

In case of SFTs identified by Wang tiles, NE-determinism means that for every two tiles $a,b\in \Sigma$, there is at most one tile $c=f(a,b)\in \Sigma$ such that the right edge of $c$ is compatible with the left edge of $a$ and the upper edge of $c$ matches the lower edge of $b$.

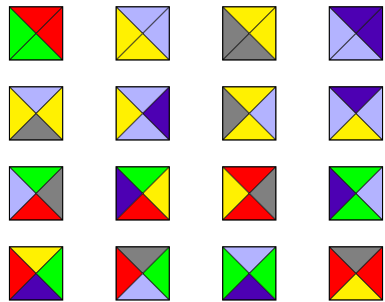
\begin{figure}[!b]
	\begin{center}
		{%
		\begin{tikzpicture}[x=20pt,y=15pt]
			\useasboundingbox (-5,-4) rectangle (5,4);
			
			\xxamtile{1}{(-3,3)}
			\xxamtile{2}{(-1,3)}
			\xxamtile{3}{(1,3)}
			\xxamtile{4}{(3,3)}
			
			\xxamtile{5}{(-3,1)}
			\xxamtile{6}{(-1,1)}
			\xxamtile{7}{(1,1)}
			\xxamtile{8}{(3,1)}
			
			\xxamtile{9}{(-3,-1)}
			\xxamtile{10}{(-1,-1)}
			\xxamtile{11}{(1,-1)}
			\xxamtile{12}{(3,-1)}
			
			\xxamtile{13}{(-3,-3)}
			\xxamtile{14}{(-1,-3)}
			\xxamtile{15}{(1,-3)}
			\xxamtile{16}{(3,-3)}
		\end{tikzpicture}
		}
	\end{center}\vspace*{-3mm}
\caption{%
	Ammann's set of Wang tiles is NE-deterministic and SW-deterministic.
}\label{fig:ammann}
\end{figure}

\begin{example}[Aperiodic deterministic tile sets]
The SFTs in two and higher dimensions can be significantly more complex than the one-dimensional SFTs.  For instance, in one dimension, every non-empty SFT contains periodic configurations, whereas in two dimensions, there are various examples of non-empty SFTs that do not contain periodic configurations.  Moreover, it is algorithmically undecidable whether the SFT identified by a given finite set of forbidden patterns is non-empty~\cite{Ber66,Rob71}.

Figure~\ref{fig:ammann} shows a set of $16$ Wang tiles, discovered by R.~Ammann, which is \emph{aperiodic}, meaning that it admits valid tilings but no periodic valid tilings~\cite[Chapter~11]{GrSh87}.  Remarkably, Ammann's tile set is both NE-deterministic and SW-deterministic.
The question of whether a given finite set of Wang tiles admits a valid tiling remains undecidable when restricted to NE-deterministic (or even four-way deterministic) tile sets~\cite{Kar92,Luk09}.
\hfill\exampleqed
\end{example}

As illustrated by the two examples above, the class of NE-deterministic SFTs is thus rich and multi-faceted.

In order to stabilise a NE-deterministic SFT, a natural first idea is to exploit the partial rule $f$ that defines the SFT.  Namely, the CA can apply the partial rule $f$ when $f$ is applicable and leave the state of the cell unchanged when $f$ is not applicable.
Figure~\ref{fig:ledrappier:toom-fails} illustrates why this approach fails on the Ledrappier SFT.  Indeed, the CA $F$ defined by $F(x)_k\isdef x_{k+e_1} + x_{k+e_2} \pmod*{2}$ cannot stabilise any finite perturbation of the all-$\symb{0}$ configuration.

\begin{figure}[!h]
\vspace*{2mm}
	\begin{center}
		\begin{tikzpicture}[x=2*\ltcellsize,y=-2*\ltcellsize,>=stealth']
			\ledrappiertoom
			\foreach \i in {0,1,...,9} {
				\draw[help lines] (\i-0.5,-0.8) -- +(0,9.6);
			}
			\foreach \j in {0,1,...,9} {
				\draw[help lines] (-0.8,\j-0.5) -- +(9.6,0);
			}
		\end{tikzpicture}
	\end{center}\vspace*{-4mm}
\caption{%
	Illustration of how the naive approach may fail on a NE-deterministic SFT such as the Ledrappier SFT.  In the depicted example, the initial configuration contains a single $\symb{1}$ (the red cell) in a sea of $\symb{0}$s.  Although each $\symb{1}$ is eventually turned into~$\symb{0}$, the two foremost $\symb{1}$s (the yellow cells) always lead to more $\symb{1}$s.
}\label{fig:ledrappier:toom-fails}\vspace*{-2mm}
\end{figure}

\begin{theorem}[Self-stabilisation of NE-deterministic SFTs]
\label{prop:robust:2d:deterministic}
For every two-dimensional NE-deterministic SFT $X\subseteq\Sigma^{\ZZ^2}$, there exists a CA $F\colon{\Sigma'}^{\ZZ^2}\to{\Sigma'}^{\ZZ^2}$ with ${\Sigma'}\supseteq\Sigma$ that stabilises $X$ from finite perturbations in linear time.
\end{theorem}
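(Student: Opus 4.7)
The plan is to generalise the one-dimensional construction from the proof of Theorem~\ref{prop:robust:1d} by replacing the left-to-right sweep with an NE-to-SW sweep, and using NE-determinism in place of the non-wandering property. By NE-determinism there is a partial map $f\colon\Sigma\times\Sigma\to\Sigma$ such that $x_c = f(x_{c+e_1}, x_{c+e_2})$ for every $x\in X$ and $c\in\ZZ^2$, and I extend $f$ arbitrarily to a total function. The \emph{patching rule} is simply the update $y_c\leftarrow f(y_{c+e_1},y_{c+e_2})$. Given a finite perturbation $\tilde x$ of $x\in X$ and any rectangle $R$ containing $\Delta(x,\tilde x)$, if one sweeps the cells of $R$ in decreasing order of $c_1+c_2$ (successive anti-diagonals from NE to SW) and applies the patching rule at each cell, one recovers $x$ throughout $R$ in $\diam(R)$ anti-diagonals: outside $R$ we have $y=x$, and NE-determinism then propagates these correct values inwards.

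Next I would parallelise this procedure by importing the trace-and-stop machinery of the 1D proof. I use an extended alphabet $\Sigma'=\Sigma\times\{\xnone,\xtrac,\xstop\}$, where $\xtrac$ is a trace left behind a correction front and $\xstop$ is a fast stop signal. A cell $c$ is declared \emph{active} when it is a defect of $\tilde x$ whose entire NE neighborhood (of appropriate radius) is free of both defects and $\xstop$ markers; at an active cell the patching rule is applied and a $\xtrac$ symbol is placed. Traces fade from the NE at a slow speed, and when an active cell sees a $\xtrac$ marker immediately to its SW, a stop signal is emitted that then propagates SW at a speed strictly greater than that of the correction fronts. Exactly as in 1D, the global CA is realised as a composition $T=T_2(T_f T_1^2 T_0)^2$ of four local maps handling patching, stop generation, stop propagation, and trace fading, with speeds tuned to, say, $2$ for correction, $4$ for stop signals, and $1$ for fading.

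For the analysis, the key invariant to establish is that a correction trail initiated from a \emph{true} NE-most disagreement of $\tilde x$ (a cell $c$ with $\tilde x_c\neq x_c$ such that no other disagreement lies strictly to its NE) uses only values that already agree with $x$ on the N and E sides, and therefore produces exactly the $x$-values along its sweep. Spurious trails, initiated from cells that only look locally defect-free toward the NE, may temporarily write wrong values, but they necessarily leave traces that the true trail from further NE catches up with, emitting a $\xstop$ signal that overtakes and extinguishes them. By the same speed-comparison argument as in the proof of Theorem~\ref{prop:robust:1d}, the true NE-most trail annihilates every spurious trail in time linear in $\diam(\Delta(x,\tilde x))$ and then completes the full sweep in a further linear number of steps, yielding overall linear stabilisation.

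The main obstacle will be the two-dimensional geometry of the trace and stop signals. In 1D a correction trail is a single moving point, its trace is an interval, and a stop signal is a single cell. In 2D the front is a one-dimensional anti-diagonal-like curve moving SW, its trace is a two-dimensional region, and stop signals must propagate along such a curve. The local rule defining active cells and the propagation rules for $\xtrac$ and $\xstop$ must be tuned so that (i) an active cell never mistakenly cancels a trail coming from further NE, and (ii) every spurious trail is caught well before it can corrupt cells outside a linear neighbourhood of the original perturbation. Modulo this bookkeeping, the argument follows the 1D template verbatim.
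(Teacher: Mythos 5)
Your proposal is correct and follows essentially the same route as the paper: the paper likewise replaces the one-dimensional patching rule by the deterministic map $f(y_{i+1,j},y_{i,j+1})$ applied at defects, uses the extended alphabet $\Sigma\times\{\xnone,\xtrac,\xstop\}$ with the composition $T_2(T_gT_1^2T_0)^2$ propagating corrections from NE to SW, and then declares the analysis "similar to the proof of Theorem~\ref{prop:robust:1d}". The only cosmetic differences (you totalise $f$ arbitrarily where the paper keeps it partial and skips cells where $f$ is undefined, and you add a defect-free NE-neighbourhood condition for activity) do not change the argument.
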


We construct a CA $T\colon\Sigma'^{\ZZ^2}\to \Sigma'^{\ZZ^2}$
with an extended alphabet $\Sigma'\supseteq \Sigma$ that corrects finite
islands of defects on $X$ in linear time.
The construction is similar to the one-dimensional case (Section~\ref{sec:1d}),
except that the patching rule is replaced by the map $f$ stemming from the fact that $X$ is a deterministic tiling space.

\medskip
Let us thus consider a NE-deterministic SFT $X\subseteq \Sigma^{\ZZ^2}$, identified by a partial map $f\colon\Sigma\times \Sigma\to \Sigma$.
For a configuration $y\in \Sigma^{\ZZ^2}$, we denote by
\begin{align}
	\Def(y) &\isdef \left\{
		(i,j): y_{i,j} \neq f(y_{i+1,j},y_{i,j+1})
	\right\}
\end{align}
the set of cells at which a \emph{defect} occurs.

As before, the extended alphabet will be $\Sigma'\isdef \Sigma\times\{\xnone,\xtrac,\xstop\}$,
in which $\{\xnone\}\times \Sigma$ is identified with $\Sigma$,
and $\xtrac$ and $\xstop$ represent, respectively, the \emph{trace} and \emph{stop} signals.
Again, for convenience, we identify the configurations with alphabet $\Sigma'$
with pairs $(y,\alpha)$ where $y\in \Sigma^{\ZZ^2}$ and $\alpha\in\{\xnone,\xtrac,\xstop\}^{\ZZ^2}$.
The configuration $(y,\alpha)$ has a \emph{defect} at cell $(i,j)$ if
either $y$ has a defect at $(i,j)$, or $(i,j)$ contains a signal symbol $\xtrac$ or $\xstop$.
Extending the notation $\Def(y)$, we denote the set of defects on $(y,\alpha)$ by
\begin{align}
	\Def(y,\alpha) &\isdef \{(i,j)\in\ZZ: \text{$(i,j)\in \Def(y)$ or $\alpha_{i,j}\neq\xnone$}\} \;.
\end{align}

The CA $T$ will be, as before, constructed as a composition
$T_2(T_gT_1^2T_0)^2$
of four CA maps $T_g,T_0,T_1,T_2\colon\allowbreak\Sigma'^\ZZ\to \Sigma'^\ZZ$.

\medskip

\noindent\emph{Patching}.
The map $T_g$ is defined by $T_g(y,\alpha)\isdef(y',\alpha')$, where
\begin{align}
	(y'_{i,j}, \alpha'_{i,j}) &\isdef \begin{cases}
		\left(f(y_{i+1,j},y_{i,j+1}),\xtrac\right)\qquad
			&\text{if $\alpha_{i+1,j}\neq\xstop$, $\alpha_{i,j+1}\neq\xstop$,} \\
			&\text{$(i,j)\in \Def(y,\alpha)$ and $f(y_{i+1,j},y_{i,j+1})$ exists,} \\
		\left(y_{i,j},\alpha_{i,j}\right)
			&\text{otherwise.}
	\end{cases}
\end{align}
It simply replaces the symbol $y_{i,j}$ with the one prescribed
by its North and East neighbours (if none of them contains a stop mark)
and leaves a trace mark behind.

\medskip

\noindent\emph{Generation of stop signals}.
The map $T_0$ is responsible for generating stop signals,
and is defined by $T_0(y,\alpha)\isdef(y,\alpha')$ (i.e., no change on $y$), where
\begin{align}
	\alpha'_{i,j} &\isdef \begin{cases}
		\xstop\qquad
			&\text{if $\alpha_{i,j}=\xtrac$ and $(i,j)\in\Def(y)$,} \\
		\alpha_{i,j}
			&\text{otherwise.}
	\end{cases}
\end{align}

\medskip

\noindent\emph{Propagation of stop signals}.
The propagation of the stop signals is governed by the map $T_1$,
which is defined by $T_1(y,\alpha)\isdef(y,\alpha')$ (i.e., no change on $y$), where
\begin{align}
	\alpha'_{i,j} &\isdef \begin{cases}
		\xstop\qquad
			&\text{if ($\alpha_{i+1,j}=\xstop$ or $\alpha_{i,j+1}=\xstop$) and $\alpha_{i,j}=\xtrac$,} \\
		\alpha_{i,j}
			&\text{otherwise.}
	\end{cases}
\end{align}

\medskip

\noindent\emph{Fading of the traces}.
Finally, the map $T_2$ handles the fading of the traces
and is defined by $T_2(y,\alpha)\isdef(y,\alpha')$ (i.e., no change on $y$), where
\begin{align}
	\alpha'_{i,j} &\isdef \begin{cases}
		\xnone\qquad
			&\text{if $\alpha_{i+1,j}=\alpha_{i,j+1}=\xnone$ and $\alpha_{i,j}=\xtrac$,} \\
		\alpha_{i,j}
			&\text{otherwise.}
	\end{cases}
\end{align}
The composition $T_2(T_gT_1^2T_0)^2$ ensures that the fading of the traces
is half as slow as the correction speed, while the stop signals propagate
twice as fast as the correction speed.

\subsubsection*{Proof of Theorem~\ref{prop:robust:2d:deterministic}:}
The proof is similar to the proof of Theorem~\ref{prop:robust:1d}. \QED

\section{Self-stabilisation of isotropic probabilistic CA}
\label{sec:PCA}

All the cellular automata discussed above provide directional solutions: the cells need to distinguish between the four directions North, South, East, West. In general, finding self-stabilising CA that respect the symmetries of the tiling space appears to be a difficult problem, and not always possible.
For instance, in case of the homogeneous space~$\Hom_2$, Pippenger has shown that self-stabilisation cannot be achieved by a monotone, self-dual (i.e., respecting the $\symb{0}\leftrightarrow\symb{1}$ symmetry), centrosymmetric rule~\cite{Pip94}.

\medskip
In this section, we will examine to which extent the use of randomness in the evolution of the cellular automata may provide us with a means to design {\em simpler} solutions.
More precisely, we now study probabilistic CA that achieve self-stabilisation with \emph{nearest-neighbour} \emph{isotropic} rules, that is, rules with von~Neumann neighbourhood which treat the neighbours ``equally''.  Our aim is to show that the use of randomness can extend the range of possibilities for designing self-stabilising processes.

Since this is a broad topic, we will restrict our scope to two examples: finite SFTs and single-cell fillable tiling spaces. In Section~\ref{sub:disc_proba}, we will discuss some further questions related to other families of tiling spaces.

\subsection{Setting}

To begin with, let us recall the notion of \emph{probabilistic CA}, and formulate the concept of self-stabilisation in this context.

\paragraph{Probabilistic cellular automata.}
The specificity of \emph{probabilistic CA} is that the outcome of the local rule is now a probability distribution on $\Sigma$, and the cells of the lattice are updated simultaneously and independently at each time step, according to the distributions prescribed by the local rule.

Formally, the local rule in this case is thus given by a function $\varphi\colon\Sigma^{\Neighb}\to\PS(\Sigma)$, where $\PS(\Sigma)$ denotes the set of probability distributions on $\Sigma$, and where we still denote by $\Neighb\subseteq \ZZ^d$ the (finite) neighbourhood of the rule.

\medskip
We describe the evolution of the system as a time-homogeneous Markov chain $(\RV{x}^t)_{t\in\NN}$ with values in $\Sigma^{\ZZ^d}$, such that for any finite $C\subseteq\ZZ^d$, and for any  $x^0, \ldots, x^t\in\Sigma^{\ZZ^d}$, we have:
\begin{align}
\PP(\RV{x}^{t+1}_C = x^{t+1}_C \; | \; \RV{x}^0=x^0, \dots, \RV{x}^t=x^t)
& = \PP(\RV{x}^{t+1}_C=x^{t+1}_C \; | \; \RV{x}^t_{C+\Neighb}=x^t_{C+\Neighb} ) \\
& = \prod_{c\in C}\varphi((x^t_{c+i})_{i\in\Neighb}) (\{ x^{t+1}_c \}). \vspace*{-7mm}
\end{align}

\paragraph{Self-stabilisation.} We say that a  probabilistic CA \emph{stabilises} a tiling space $X\subseteq\Sigma^{\ZZ^d}$ from finite perturbations if
\begin{enumerate}[label={\roman*)}]
	\item (\emph{consistency}) the configurations of $X$ are absorbing states, that is, if $\RV{x}^{t}\in X$, then $\RV{x}^{t+1}=\RV{x}^{t}$,
	\item (\emph{attraction}) finite perturbations of the elements of $X$ evolve almost surely to $X$ in finitely many steps, that is, if $\RV{x}^{0}\in \FPert{X}$, then there exists almost surely a time $\RV{t}\in\NN$ such that $\RV{x}^{\RV{t}}\in X$.
\end{enumerate}

The first such $\RV{t}$ is called the \emph{stabilisation time} (or the \emph{recovery time}) starting from $\RV{x}^{0}$. Note that $\RV{t}$ is now a random variable.  We say that $F$ stabilises $X$ from finite perturbations \emph{in time $\tau(n)$} if for each $n\in\NN$,
the maximum of the expected stabilisation time $\xExp[\RV{t}]$ among all possible (deterministic) initial configurations~$\tilde{x}\in\Sigma^{\ZZ^d}$ with $\delta(\tilde{x},X)=n$ is~$\tau(n)$.

\subsection{Finite tiling spaces}
\label{sec:pca:finite}

Let us first examine the case where the tiling space is the set $\Hom_2=\{\unifO,\unifI\}\subseteq\{\symb{0},\symb{1}\}^{\ZZ^2}$ of Example~\ref{ex:homog}. Toom's North-East-Center majority rule (see Example~\ref{exp:toom}) is a deterministic CA that stabilises~$\Hom_2$. We now present an isotropic self-stabilising probabilistic CA for this tiling space.

\medskip
Let $\Sigma\isdef\{\symb{0},\symb{1}\}$, and consider the probabilistic CA ${\MRIE}$ on $\Sigma^{\ZZ^2}$, defined on the von~Neumann neighbourhood $\vonNeumann\isdef\{0,e_1,-e_1,e_2,-e_2\}$  by the local rule $\varphi\colon\Sigma^{\Neighb}\to\PS(\Sigma)$ where
\begin{align}
	\varphi((x_{i})_{i\in\vonNeumann}) &\isdef
		\begin{cases}
			\delta_{\symb{1}}	& \text{ if $x_{e_1}+x_{e_2}+x_{-e_1}+x_{-e_2} > 2$,} \\
			\delta_{\symb{0}}	& \text{ if $x_{e_1}+x_{e_2}+x_{-e_1}+x_{-e_2} < 2$,} \\
			\Bern(1/2) 			& \text{otherwise,}
		\end{cases}
\end{align}
where $\delta_{\symb{1}}$ and $\delta_{\symb{0}}$ are the Dirac distributions on~$\symb{1}$ and $\symb{0}$, respectively, and $\Bern(1/2) $ denotes the Bernoulli random variable with parameter $1/2$.
In words, at every step, the state of each cell is changed to the state which is in majority among its four adjacent cells, and in case of a tie, the tie is broken with a flip of a fair coin, independently of the other cells.
A few sample snapshots from the evolution of {$\MRIE$} are shown in Figure~\ref{diag:MRIE}.
The continuous-time version of~{$\MRIE$} was studied by Fontes, Schonmann and Sidoravicious~\cite{FoScSi02}
(see Examples~\ref{exp:MRIE:continuous-time:finite} and~\ref{exp:MRIE:continuous-times} below).

\begin{figure}[!h]
\vspace*{3mm}
\hspace*{-2mm}\begin{tabular}{c c c c}
\includegraphics[width=0.215\linewidth]{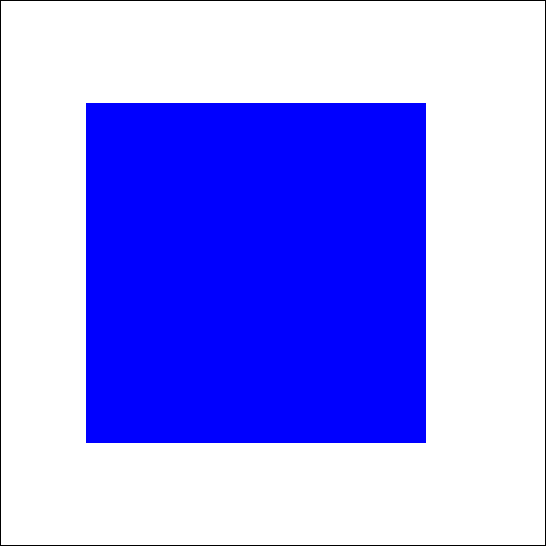} &
\includegraphics[width=0.215\linewidth]{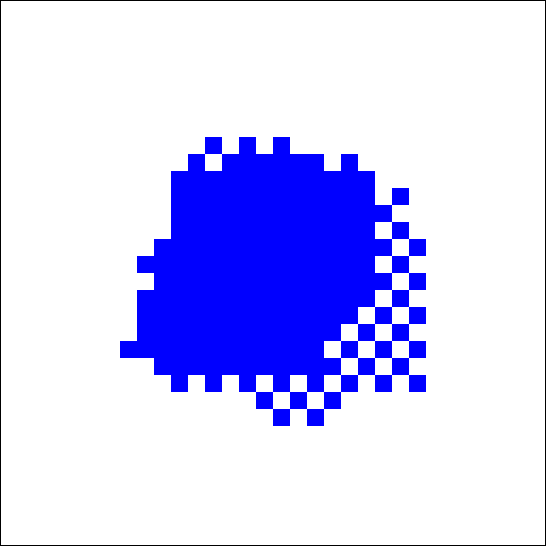} &
\includegraphics[width=0.215\linewidth]{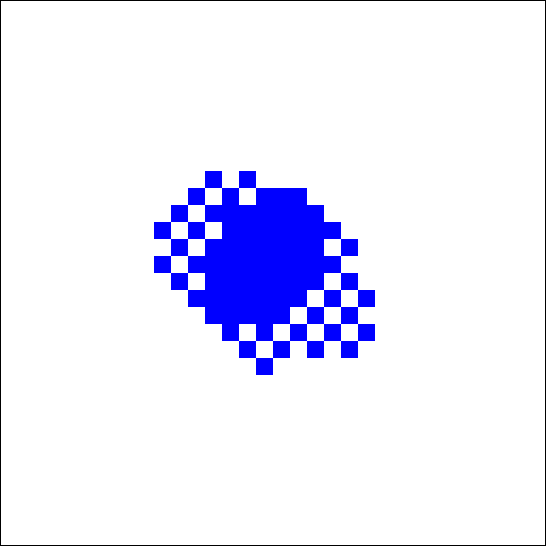} &
\includegraphics[width=0.215\linewidth]{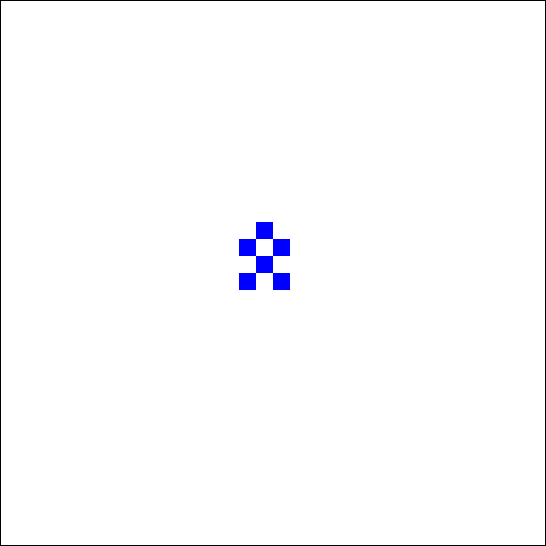} \\
$ t = 0 $ & $ t= 100 $ & $ t= 150 $ & $ t =200 $
\end{tabular}\vspace{-1mm}
\caption{Snapshots from the evolution of the $\MRIE$ rule.}
\label{diag:MRIE}\vspace*{-3mm}
\end{figure}

\begin{proposition}[Isotropic self-stabilisation of~$\Hom_2$]
\label{prop:hom_prob}
The probabilistic CA defined above stabilises $\Hom_2$ from finite perturbations in at most cubic time.
\end{proposition}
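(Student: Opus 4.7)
Consistency is immediate: in a homogeneous configuration $\unif{c}$ every cell has four neighbours in state $c$, so the count $x_{e_1}+x_{-e_1}+x_{e_2}+x_{-e_2}$ is $0$ or $4$, the tie-breaking branch is never triggered, and the local rule outputs $c$ deterministically. For attraction, by the $\symb{0}\leftrightarrow\symb{1}$ symmetry of the rule it suffices to consider an initial configuration $\tilde{x}$ that is a finite perturbation of $\unifO$. Let $\RV{R}^t$ denote the smallest axis-parallel rectangle containing the cells in state $\symb{1}$ at time~$t$ (empty if no such cells exist), and write $n = \diam(\RV{R}^0)$. A first monotonicity observation is that any cell strictly outside $\RV{R}^t$ has at least three of its four von~Neumann neighbours outside $\RV{R}^t$ and hence in state $\symb{0}$, so the local rule sends it deterministically to $\symb{0}$; consequently $\RV{R}^{t+1}\subseteq\RV{R}^t$ almost surely, and it remains to show that $\RV{R}^t$ reaches $\varnothing$ with expected hitting time $O(n^3)$.

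The central step is to reduce the erosion of $\RV{R}^t$ to a one-dimensional problem. Consider the cells on the topmost row of $\RV{R}^t$: each such cell has its North neighbour outside $\RV{R}^t$ (hence in state $\symb{0}$), and its East and West neighbours on the same row. Conditioning on the row immediately below consisting entirely of $\symb{1}$'s, the update restricted to the top row becomes a one-dimensional nearest-neighbour probabilistic CA in which a cell becomes $\symb{0}$ (resp.~$\symb{1}$) when both in-row neighbours are $\symb{0}$ (resp.~$\symb{1}$), and is otherwise assigned by an independent fair coin. This 1D PCA, whose evolutions are illustrated in Figure~\ref{fig:diploid}, is the central object of the proof. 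When the second-from-top row has partially eroded, a slightly different ``diploid'' rule takes over on the top row, which can be handled by a stochastic domination argument using the 1D rule above as a reference.

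The main obstacle is then to show that this 1D boundary rule drives every finite interval of $\symb{1}$'s embedded in a $\symb{0}$-sea to the empty interval in expected time $O(k^2)$, where $k$ is the initial interval length. The strictly interior cells of the interval are frozen at $\symb{1}$, while the two endpoints and the two cells just outside the interval each flip an independent fair coin, so the effective boundary of the $\symb{1}$-region performs a bounded-increment symmetric random walk; standard hitting-time estimates for such walks, together with an argument controlling the short-lived isolated $\symb{0}$-bubbles that may appear inside the interval (these fill in deterministically at the next step since they are surrounded by $\symb{1}$'s), yield the $O(k^2)$ bound. Aggregating over the four boundary rows and columns of $\RV{R}^t$, each of length at most $n$, and noting that $\RV{R}^t$ can shrink at most $O(n)$ times along each direction before vanishing, the strong Markov property produces the desired $O(n^3)$ bound on the expected stabilisation time.
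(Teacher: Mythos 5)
Your overall strategy is the same as the paper's: reduce by symmetry to a perturbation of $\unifO$, confine the $\symb{1}$s to a shrinking rectangle, dominate the top row by a one-dimensional PCA obtained by freezing the North neighbour at $\symb{0}$ and the South neighbour at $\symb{1}$, show that this 1D rule erodes an interval of length $k$ in expected time $O(k^2)$, and multiply by the $O(n)$ rows. The consistency check, the confinement argument, and the final aggregation are all fine. (One phrasing quibble: ``conditioning on the row below being all $\symb{1}$s'' is not the right operation --- what is needed, and what the paper uses, is a monotone coupling with the worst-case environment; you do gesture at this with your later remark on stochastic domination.)

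The genuine gap is in the central 1D estimate. You assert that ``the effective boundary of the $\symb{1}$-region performs a bounded-increment symmetric random walk,'' justified by the fact that the endpoints and the adjacent exterior cells flip independent fair coins. This is not correct as stated. Compute one step from an interval $\{L,\dots,R\}$ of $\symb{1}$s: the new leftmost $\symb{1}$ sits at $L-1$ with probability $\tfrac12$, at $L$ with probability $\tfrac14$, and at $L+1$ or beyond with probability $\tfrac14$, so its one-step expected displacement is $-\tfrac14$ --- the boundary drifts \emph{outward}, and a lower-bounding walk built this way never reaches the absorbing boundary. The missing idea is the one the paper exploits: since the 1D rule depends only on the left and right neighbours and not on the cell itself, the space-time lattice splits into two \emph{independent} processes on the even and odd parity classes, and it is only on each parity class that the leftmost-$\symb{1}$ position changes by exactly $-1$ with probability $\tfrac12$ and by at least $+1$ otherwise, hence dominates a symmetric random walk reflected at the left wall and absorbed at $k+1$, giving the $O(k^2)$ hitting time. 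Without this (or some substitute), the quadratic erosion of a row --- and hence the cubic bound --- is not established. Your side remark that isolated $\symb{0}$-bubbles ``fill in deterministically'' is also only a partial description of the post-interval configurations and does not rescue the drift computation.
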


\begin{proof} Let $x\in \FPert{\Hom_2}$, and let us assume that the defects of $x$ are initially included in a rectangle~$R$.
By symmetry, we can consider that the defects are $\symb{1}$s and that the system needs to return to the all-$\symb{0}$ configuration.
Observe that over time, the defects always stay within $R$.

We first determine an \emph{upper bound} for the average time it takes for the $\symb{1}$s of the upper row of $R$ to disappear. Let us number from left to right by $1, \ldots, k$ the cells of the upper row of $R$. We also consider the cell $0$ which is on the left of cell $1$ and the cell ${k+1}$ which is on the right of cell $k$.

We bound the evolution of the cells $1,\ldots,k$ by a new process, designed by imagining that these cells evolve in an environment where for each cell $i$ ($1\leq i\leq k$), its North neighbour is in state $\symb{0}$ and its South neighbour in state $\symb{1}$ (see Figure~\ref{fig:majflipequal}).  Because of the monotonicity of the local rule~$\varphi$, the new process can be coupled with the original process in such a way that the state of the cells $1,\ldots, k$ in the original process remain dominated by their states in the new process (i.e., wherever the former has a $\symb{1}$, so does the latter).

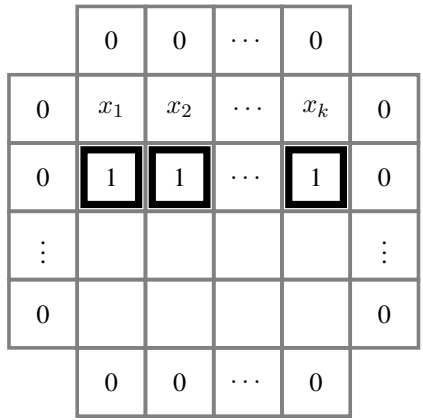
\begin{figure}[!h]
\vspace*{1mm}
\begin{center}
\scalebox{0.9}{\begin{tikzpicture}
\crownMaj
\end{tikzpicture} }
\end{center}\vspace*{-3mm}
\caption{Study of the evolution of the upper row of a rectange of defects under the $\MRIE$ probabilistic CA (see proof of Proposition~\ref{prop:hom_prob}).}
\label{fig:majflipequal}
\end{figure}

Since the two North and South neighbours have their state fixed, the evolution of the cells $1,\ldots,k$ can be modelled as a one-dimensional probabilistic CA $(\RV{y}^t)_{t\in\NN}$  with neighbourhood radius~$1$ and fixed boundary conditions $\RV{y}^t_{0}=\RV{y}^t_{k+1}=\symb{0}$.

Let us analyse the evolution of this one-dimensional probabilistic CA, whose behaviour can be observed in Figure~\ref{fig:diploid}.
The local rule of this one-dimensional CA is given in the following table:\medskip

\begin{center}
\scalebox{0.98}{\begin{tabular}{| r | c c c c c c c c |}
\hline
Value of the neighbourhood & $\symb{000}$ & $\symb{001}$ & $\symb{010}$ & $\symb{011}$ & $\symb{100}$ & $\symb{101}$ & $\symb{110}$ & $\symb{111}$\\
\hline
Probability of symbol $\symb{1}$ &  0  & \pa &  0  & \pa & \pa &  1  & \pa &  1 \\
\hline
\end{tabular} }
\end{center}

\medskip\noindent Observe that this local rule does not depend on the value of the cell itself, but only of its left and right neighbours. Indeed, if the neighbourhood is in state $(x,y,z)$, the new state of the central cell is equal to $x$ if $x=z$ and to a random value with distribution $\Bern(1/2)$ otherwise. This observation allows us to describe the evolution of this CA as the combination of two independent processes,
one on the even space-time positions and the other on the odd space-time positions.

Consider first the process on the \emph{odd} space-time positions.  In this process, the position of the leftmost cell in state~$\symb{1}$ is bounded from below by a symmetric random walk that is reflected on the left boundary (cell~$0$) and vanishes when it reaches the right boundary (cell~$k+1$).  One can show that the expected time this random walk needs to reach $k+1$ is of order $k^2$. Indeed, by a standard argument, if $T_i$ denotes the expected time needed to reach $k+1$ from cell $i$, then we have the recursion $T_i=1+{(T_{i-1}+T_{i+1})}/2$, with the boundary conditions $T_1=1+T_2$ and $T_{k+1}=0$.  It follows that $T_1$ is quadratic in $k$.

The same result holds for the process on the \emph{even} space-time positions.  Since the time needed for the one-dimensional CA to reach the all-$\symb{0}$ configuration is the maximum of the times of the two processes on the odd an even space-time positions, the former time is also quadratic.  In particular, the time it takes for the original two-dimensional CA to wipe out the first row of defects from rectangle~$R$ is quadratic in the diameter of the rectangle~$R$.
Finally, since we have a linear number of lines to wipe out, the stabilisation time is at most cubic.
\end{proof}

\begin{figure}[!t]
\begin{center}
\scalebox{1}[-1]{\includegraphics[width=0.25\linewidth]{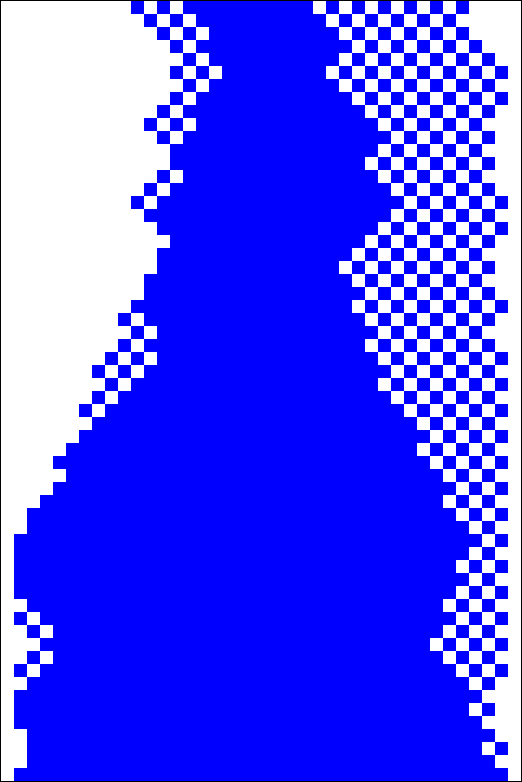}}
\scalebox{1}[-1]{\includegraphics[width=0.25\linewidth]{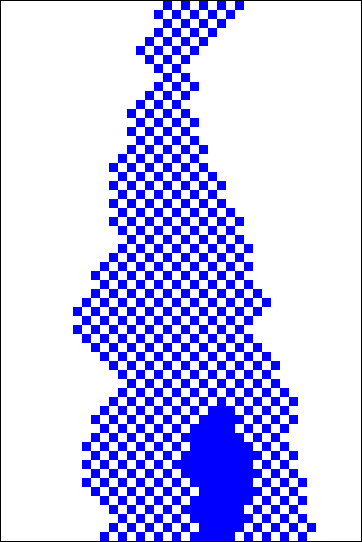}}
\scalebox{1}[-1]{\includegraphics[width=0.25\linewidth]{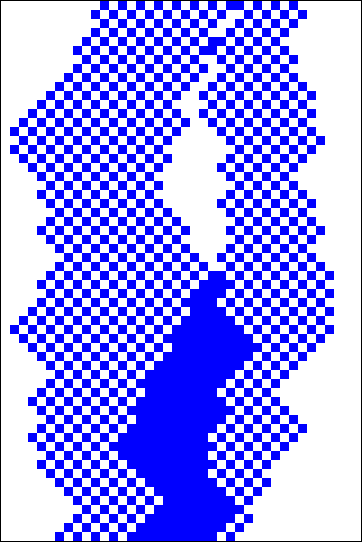}}
\end{center}
\caption{Space-time diagrams showing evolutions of the one-dimensional probabilistic CA appearing in the proof of Proposition~\ref{prop:hom_prob}, with fixed boundary conditions. Time goes from top to bottom. Blue and white squares respectively represent states $\symb{0}$ and $\symb{1}$. (left) evolution from the all-one configuration; (middle) evolution showing how the central homogeneous zone disappears; (right) evolution with various appearances and reappearances of the central homogeneous zone.}
\label{fig:diploid}
\end{figure}

Experimental evidence suggests that the stabilisation is faster than cubic.
Note that in the above argument, we did not use the fact that the different rows of the rectangle evolve simultaneously.  We conjecture that the true stabilisation time is in fact quadratic.  Our intuition is further supported by the following known result regarding the analogous model in continuous-time.

\begin{example}[Continuous-time~$\MRIE$]
\label{exp:MRIE:continuous-time:finite}
Fontes, Schonmann and Sidoravicious studied the continuous-time variant of ${\MRIE}$ in which the cells are updated asynchronously,$\,$triggered by independent$\,$Poisson clocks with rate$\,1$\cite{FoScSi02}.\footnote{%
In fact, Fontes et al.\ considered a more general family of local rules in which, in case of a neighbourhood tie, the current state of the cell is flipped with probability~$0<\alpha\leq 1$.  The ${\MRIE}$ rule corresponds to $\alpha=1/2$.
}
Their Theorem~1.3 states that, in dimension~$d\geq 2$, the system stabilises~$\Hom_2$ from finite perturbations,
and that the stabilisation occurs in time~$\bigo(n^d)$, in the sense that there exist constants $c,\gamma>0$ such that the stabilisation time~$\RV{t}$ starting from a configuration $\tilde{x}\in\{\symb{0},\symb{1}\}^{\ZZ^d}$ with $\delta(\tilde{x},\Hom_2)=n$ satisfies $\xPr(\RV{t}> c n^d)\leq\ee^{-\gamma n}$.~\hfill\exampleqed
\end{example}\vspace*{-4mm}

\paragraph{Extension to finite SFTs.} We can extend the above isotropic probabilistic rule to stabilise any finite SFT in at most cubic time. Indeed, if we define $N_1$ and $N_2$ as in Section~\ref{sub:finite}, then the following rule is suitable: if a state appears strictly more than twice among $x_{a+N_1,b},x_{a,b+N_2},x_{a-N_1,b},x_{a,b-N_2}$, then this state becomes the new value for $x_{a,b}$;
otherwise, choose one of the latter values uniformly at random (taking into account the multiplicities).
Again, all the defects stay within some enveloping rectangle and are eventually corrected, and a similar argument shows that the stabilisation time is at most cubic.

\subsection{Single-cell fillable tiling spaces}

The second case we examine is the one of single-cell fillable tiling spaces, such as $k$-colourings for $k\geq 5$ (see Section~\ref{sub:single_cell}). Again, a simple isotropic probabilistic rule can then be designed.  Interestingly, the new rule turns out to stabilise faster than the one in Section~\ref{sub:single_cell}.

\medskip
Let $X\subseteq\Sigma^{\ZZ^2}$ be a single-cell fillable tiling space, and let $\psi\colon \Sigma^4 \to \Sigma $ be a function as in Section~\ref{sub:single_cell}, assigning, for any possible choice $(a, b, c, d)\in \Sigma^4$ of symbols surrounding a cell, a consistent value $\psi(a,b,c,d)$.
For $x\in \Sigma^{\ZZ^2}$, recall that we denote by $\Def(x)$ the set of cells having a defect, that is,
\begin{align}
	\Def(x) &\isdef
		\{k\in\ZZ^2 : \text{$\exists e\in\{\pm e_1, \pm e_2\}$, $k$ has a defect in direction $e$}\} \;.
\end{align}
Given $\alpha\in(0,1)$, we define a probabilistic CA on $X$ which leaves the state of cell $k$ unchanged if $k\notin \Def(x)$ and changes it to $\psi(x_{k-e_1},x_{k-e_2}, x_{k+e_1},x_{k+e_2})$ with probability~$\alpha$ if $k\in \Def(x)$.
In other words, the CA has a von~Neumann neighbourhood $\vonNeumann\isdef\{0,e_1,-e_1,e_2,-e_2\}$ and local rule
\begin{align}
	\varphi((x_{i})_{i\in\vonNeumann}) &\isdef
		\begin{cases}
			\alpha \delta_{\psi(x_{-e_1},x_{-e_2}, x_{+e_1},x_{+e_2})}+(1-\alpha)\delta_{x_0}
								& \text{if $(x_{i})_{i\in\vonNeumann}$ contains a defect,} \\
			\delta_{x_0}		& \text{otherwise.}
		\end{cases}
\end{align}
Note that if $\psi$ is isotropic, then so is $\varphi$.

\medskip
To estimate the stabilisation time of this CA, we will use the following standard lemma.
Let us recall that a \emph{geometric} random variable with parameter $\varepsilon\in[0,1]$ is a discrete random variable $\RV{r}$ with possible values $k=1,2,3,\ldots$ such that $\xPr(\RV{r}=k)=(1-\varepsilon)^{k-1}\varepsilon$.
\begin{lemma}[Expectation of maximum of i.i.d.\ geometric random variables]
\label{lem:geom:max}
For every $\varepsilon\in (0,1)$,
there exist constants $a,b>0$ such that, whenever $\RV{r}_1,\RV{r}_2,\ldots,\RV{r}_n$ are independent geometric random variables with parameter~$\varepsilon$, we have $\xExp[\max\{\RV{r}_1,\RV{r}_2,\ldots,\RV{r}_n\}]\leq a\log n + b$.
\end{lemma}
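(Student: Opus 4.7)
The plan is to bound $\xExp[\max_i \RV{r}_i]$ by a standard tail-sum argument combined with a union bound, which is the usual way to show that the maximum of $n$ i.i.d.\ geometric variables grows only logarithmically in~$n$.

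First I would recall that, for a geometric random variable $\RV{r}$ with parameter $\varepsilon$ (taking values in $\{1, 2, \ldots\}$, say), one has $\xPr(\RV{r} > t) = (1-\varepsilon)^t$ for every integer $t\geq 0$. By the union bound over the $n$ variables, this gives
\begin{align}
    \xPr\big(\max\{\RV{r}_1,\ldots,\RV{r}_n\} > t\big) \leq \min\big\{1,\ n(1-\varepsilon)^t\big\}.
\end{align}
Then I would use the identity $\xExp[\RV{s}] = \sum_{t\geq 0} \xPr(\RV{s} > t)$ valid for any non-negative integer-valued random variable~$\RV{s}$.

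Next, I would split the sum at the threshold $t_0 \isdef \lceil \log n / \log(1/(1-\varepsilon)) \rceil$, chosen so that $n(1-\varepsilon)^{t_0} \leq 1$. For $t \leq t_0$, I bound each term trivially by~$1$, contributing at most $t_0 + 1 \leq \log n / \log(1/(1-\varepsilon)) + 2$. For $t > t_0$, I use the bound $n(1-\varepsilon)^t$, and the tail sum is a geometric series with total
\begin{align}
    \sum_{t > t_0} n(1-\varepsilon)^t = \frac{n(1-\varepsilon)^{t_0+1}}{\varepsilon} \leq \frac{1-\varepsilon}{\varepsilon}.
\end{align}
Adding the two contributions, we obtain $\xExp[\max\{\RV{r}_1,\ldots,\RV{r}_n\}] \leq a \log n + b$ with $a \isdef 1/\log(1/(1-\varepsilon))$ and $b \isdef 2 + (1-\varepsilon)/\varepsilon$.

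There is no real obstacle here; the only point to be careful about is which convention (starting at $0$ or at $1$) is used for the geometric distribution, and adjusting the constant $b$ accordingly. Since $\varepsilon \in (0,1)$ is fixed, both constants $a$ and $b$ are finite and depend only on $\varepsilon$, as required by the statement.
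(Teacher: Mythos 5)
Your proof is correct, and it takes a genuinely different (though equally standard) route from the paper. The paper's own argument is only a sketch: it compares the maximum of the $\RV{r}_i$ with the maximum of $n$ independent exponential random variables of rate $\lambda=-\log(1-\varepsilon)$ (e.g., via the coupling $\RV{r}_i=\lceil\RV{E}_i\rceil$), for which the expected maximum is known in closed form to be $H_n/\lambda=(\log n+\bigo(1))/\lambda$ by the memoryless property and order statistics. Your argument instead proceeds directly: the tail-sum identity $\xExp[\RV{s}]=\sum_{t\geq 0}\xPr(\RV{s}>t)$, a union bound giving $\xPr(\max_i\RV{r}_i>t)\leq\min\{1,n(1-\varepsilon)^t\}$, and a split of the sum at $t_0\isdef\lceil\log n/\log(1/(1-\varepsilon))\rceil$. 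All steps check out: the first block contributes at most $t_0+1$, the geometric tail contributes at most $(1-\varepsilon)/\varepsilon$, and the resulting constants $a=1/\log(1/(1-\varepsilon))$ and $b=2+(1-\varepsilon)/\varepsilon$ are finite for each fixed $\varepsilon\in(0,1)$ (and the bound also holds at $n=1$, since $1/\varepsilon\leq 1+1/\varepsilon\leq b$). What your approach buys is a fully self-contained, elementary proof with explicit constants and no appeal to the exponential comparison or an external reference; what the paper's approach buys is brevity and a sharper asymptotic constant, since the exponential computation yields the harmonic-number expression exactly. Your remark about the support convention ($\{0,1,\ldots\}$ versus $\{1,2,\ldots\}$) is apt; in the paper's application the relevant variables $\RV{t}_k$ take values in $\{1,2,\ldots\}$, which matches your convention, and in any case only shifts $b$ by $1$.
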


\subsubsection*{Proof sketch:}
This can be shown via a simple comparison with the maximum of $n$ independent exponential random variables, for which the expected value can be calculated explicitly (see e.g.~\cite{Eis08}).\QED
\eject 

\begin{proposition}[Isotropic self-stabilisation of single-cell fillable tiling spaces]
\label{prop:ell-fillable:probabilistic}
Let $X\subseteq\Sigma^{\ZZ^2}$ be a single-cell fillable tiling space.
Then, for every $\alpha\in(0,1)$, the probabilistic CA defined above stabilises~$X$ from finite perturbations in at most logarithmic time.
\end{proposition}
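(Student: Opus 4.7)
The plan rests on two observations: the set $\Def(x^t)$ of defect cells can only shrink over time (monotonicity), and each defect cell has a uniform lower bound on its probability of exiting $\Def$ at the next step. Consistency is immediate: a configuration $x\in X$ satisfies $\Def(x)=\varnothing$, and the local rule only modifies cells in $\Def(x)$.

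For monotonicity, I would show that if $c\notin\Def(x^t)$, then almost surely $c\notin\Def(x^{t+1})$. Indeed, $c$ itself is not updated, so its value is preserved. Consider any neighbor $c'=c+e$ with $e\in\{\pm e_1,\pm e_2\}$. Either $c'\notin\Def(x^t)$ and is not updated, or $c'\in\Def(x^t)$ and, if it updates, $c'$ takes the value $\psi(x^t_{c'-e_1},x^t_{c'-e_2},x^t_{c'+e_1},x^t_{c'+e_2})$. In the latter case, single-cell fillability ensures that this new value is compatible with $x^t_c=x^{t+1}_c$, which is one of the four arguments of $\psi$. Thus every edge around $c$ remains defect-free at time $t+1$.

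Next, I would lower-bound the probability that a given $c\in\Def(x^t)$ becomes defect-free in one step. Consider the event that $c$ updates while none of its four neighbors does. Since the random updates at distinct cells are mutually independent and each defect cell updates with probability $\alpha$ (while non-defect cells are inert), this event has probability at least $\alpha(1-\alpha)^4=:p>0$, uniformly in $x^t$. On this event, $c$'s new value equals $\psi$ applied to the four unchanged neighbors, so all four edges around $c$ are defect-free and $c\notin\Def(x^{t+1})$.

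Combining the two, $\xExp\bigl[\abs{\Def(x^{t+1})}\bigm|x^t\bigr]\leq (1-p)\abs{\Def(x^t)}$, and iterating yields $\xExp\bigl[\abs{\Def(x^t)}\bigr]\leq (1-p)^t\abs{\Def(x^0)}$. Since $\delta(\tilde x,X)=n$ implies $\abs{\Def(x^0)}=O(n^2)$, Markov's inequality combined with monotonicity gives $\xPr(\RV{t}>t)\leq (1-p)^t\cdot O(n^2)$, and summing this geometric tail (constant up to time $O(\log n)$, then a decaying geometric series of mass $O(1/p)$) yields $\xExp[\RV{t}]=O(\log n)$. I expect the main subtlety to lie in the monotonicity step: its validity hinges on the combination of single-cell fillability (so that $\psi$ always exists and produces a value compatible with all four neighbours) and the fact that a defect-free cell $c$ is itself not updated, so when a neighbour evaluates $\psi$ it reads the unchanged value at $c$. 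Once monotonicity is in place, the remaining exponential-decay estimate is entirely standard and exploits no spatial structure of $\Def$ beyond its initial cardinality.
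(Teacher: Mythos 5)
Your proof is correct. The first two steps --- monotonicity of $\Def$ under the dynamics (a non-defective cell is never updated, and by single-cell fillability a neighbour that updates produces a value compatible with the unchanged value it read) and the uniform lower bound $p=\alpha(1-\alpha)^4$ on the one-step correction probability of a defective cell (update while no neighbour updates) --- are exactly the paper's. Where you genuinely diverge is in converting this into a logarithmic expected time. The paper couples the process to explicit per-cell geometric random variables $\RV{t}_k$ (the first time cell $k$ fires while none of its neighbours does), handles their dependence by partitioning $\ZZ^2$ into five classes of cells sharing no neighbours, and invokes a lemma on the expected maximum of i.i.d.\ geometric variables. You instead run a first-moment argument: $\xExp\bigl[\abs{\Def(\RV{x}^{t+1})}\bigm|\RV{x}^t\bigr]\leq(1-p)\abs{\Def(\RV{x}^t)}$ by linearity of expectation, which requires no independence at all; Markov's inequality then gives $\xPr(\RV{t}>t)\leq(1-p)^t\abs{\Def(x^0)}$ (using monotonicity to identify $\{\RV{t}>t\}$ with $\{\Def(\RV{x}^t)\neq\varnothing\}$), and summing the tail yields $O(\log\abs{\Def(x^0)})=O(\log n)$ since $\abs{\Def(x^0)}=O(n^2)$. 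Your route is leaner --- it dispenses with the lattice partition and the max-of-geometrics lemma --- and the resulting tail bound $\abs{\Def(x^0)}(1-p)^t$ is essentially the same union-bound-type estimate the paper's coupling would give, so nothing is lost for the statement at hand. Both arguments are sound and establish the claimed logarithmic stabilisation time.
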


\begin{proof}
Let $x\in\FPert{X}$ be an initial configuration. Let $\RV{x}^0,\RV{x}^1,\ldots$ denote the Markov process described by the probabilistic CA starting from $\RV{x}^0=x$.
Note that if a cell $k$ is non-defective at time~$t$ (i.e., $c\notin\Def(\RV{x}^t)$), then it will remain non-defective at time~$t+1$, because according to the local rule, the state of $k$ will not change, and by the property of~$\psi$, changes in the state of the neighbours of~$k$ cannot create a defect at~$k$.
Therefore, for every~$t$ we have $\Def(\RV{x}^{t+1})\subseteq \Def(\RV{x}^t)$.
On the other hand, at every step, every defective cell has probability at least $\alpha(1-\alpha)^4$ of becoming non-defective.  Indeed, consider a cell $k\in\Def(\RV{x}^t)$.
If at time~$t+1$, cell $k$ is updated according to~$\psi$ and none of its four neighbours are updated according to~$\psi$, then $k$ becomes non-defective, and this event occurs with probability at least $\alpha(1-\alpha)^4$.
It follows that inside~$\Def(x)$, the probabilistic CA behaves like an absorbing finite-state Markov chain that eventually reaches a configuration with no defects.

\medskip
To analyse the stabilisation time, let us imagine that the process is constructed using a collection $(\RV{z}_k^t)_{c\in\ZZ^2,t\in\NN}$ of independent Bernoulli random variables with parameter~$\alpha$, representing the random choices taken at every time step and each cell.  Namely,
for $t=0,1,2,\ldots$ and $k\in\ZZ^2$, we have
\begin{align}
	\RV{x}^{t+1}_k &\isdef
		\begin{cases}
			\psi(\RV{x}^t_{k-e_1},\RV{x}^t_{k-e_2}, \RV{x}^t_{k+e_1},\RV{x}^t_{k+e_2})
				& \text{if $\RV{z}_k^{t+1}=\symb{1}$ and $k\in\Def(\RV{x}^t)$,} \\
			\RV{x}^t_k	& \text{otherwise.}
		\end{cases}
\end{align}
For each $k\in\ZZ^2$, let
\begin{align}
	\RV{t}_k &\isdef \inf\big\{t>0: \text{$\RV{z}_k^t=\symb{1}$ and
		$\RV{z}^t_{k+e_1}=\RV{z}^t_{k-e_1}=\RV{z}^t_{k+e_2}=\RV{z}^t_{k-e_2}=\symb{0}$}\big\}
\end{align}
denote the first time that $k$ is updated according to~$\psi$ and none of its neighbours are updated according to~$\psi$.  This is a geometric random variable with parameter~$\varepsilon\isdef\alpha(1-\alpha)^4$.
As we observed above, $k\notin\Def(\RV{x}^t)$ for all $t\geq \RV{t}_k$.  In particular, defining $\RV{t}_A\isdef\max\{\RV{t}_k: k\in A\}$ for $A\subseteq\ZZ^2$, we have $\RV{x}^t\in X$ for all $t\geq\RV{t}_{\Def(x)}$.  We show that $\xExp[\RV{t}_{\Def(x)}]$ is logarithmic in~$\abs{\Def(x)}$, and hence also in $\delta(\tilde{x},X)$.

\begin{figure}[!b]
\begin{center}
\begin{tikzpicture}[scale=0.52] 
	\begin{scope}
		\clip (-0.2,-0.2) rectangle (10.2,6.2);
		\draw[help lines] (-1,-1) grid (11,7);
		
		\foreach \i in {-2,...,10}
			\foreach \j in {-1,...,7} {
				\node at ($(0.5,0.5)+(\i+\j,-3*\i+2*\j)$) {$0$};
				\node at ($(1,0)+(0.5,0.5)+(\i+\j,-3*\i+2*\j)$) {$1$};
				\node at ($(0,1)+(0.5,0.5)+(\i+\j,-3*\i+2*\j)$) {$2$};
				\node at ($(-1,0)+(0.5,0.5)+(\i+\j,-3*\i+2*\j)$) {$4$};
				\node at ($(0,-1)+(0.5,0.5)+(\i+\j,-3*\i+2*\j)$) {$3$};
			}
	\end{scope}
\end{tikzpicture}
\end{center}\vspace*{-4mm}
\caption{%
A partitioning of~$\ZZ^2$ into $5$ parts in such a way that the cells in each part do not share neighbours with one another.
}
\label{fig:lattice:partition}
\end{figure}

\medskip
The random variables $\RV{t}_k$ (for $k\in\ZZ^2$) are not independent. However, if $k_1,k_2,\ldots$ are cells that do not share neighbours, then $\RV{t}_{k_1},\RV{t}_{k_2},\ldots$ are independent.
Note that $\ZZ^2$ can be partitioned into five parts $Q_0,\ldots,Q_4$ in such a way that the cells in each part do not share neighbours with one another (see Figure~\ref{fig:lattice:partition}).

\medskip
Therefore, setting $\Def_i(x)\isdef\Def(x)\cap Q_i$, we have \vspace*{-1mm}
\begin{align}
	\RV{t}_{\Def(x)} &= \max\{\RV{t}_{\Def_i(x)}: i=0,1,\ldots,4\} \leq
		\sum_{i=0}^4 \RV{t}_{\Def_i(x)} \vspace*{-1mm}
\end{align}
and hence $\xExp[\RV{t}_{\Def(x)}]\leq\sum_{i=0}^4\xExp[\RV{t}_{\Def_i(x)}]$.
By Lemma~\ref{lem:geom:max}, $\xExp[\RV{t}_{\Def_i(x)}]=\bigo(\log\abs{\Def_i(x)})$.
The claim follows.

\medskip
An alternative argument for the logarithmic stabilisation time can be given by considering an appropriate martingale as in Ref.~\cite[Lemma~6]{Fat20}.
\end{proof}

\section{Complexity of self-stabilisation}
\label{sec:complexity}

In this section, we consider the complexity of self-stabilisation as a computational task.
One can consider at least three different measures of complexity: the speed of stabilisation,
the number of extra symbols, the size of the neighbourhood (Section~\ref{sec:complexity:measures}).
Our focus in this article is on the speed of stabilisation, with a preference for having no extra symbols.
In Section~\ref{sec:complexity:hardness}, we show that there is an SFT whose self-stabilisation problem is inherently hard, in the sense that it requires super-polynomial time (unless $\classP=\classNP$).
Interestingly, the optimal speed of stabilisation turns out to be a topological invariant:
if two SFTs are topologically isomorphic, then their stabilisation requires roughly the same amount of time (Section~\ref{sec:complexity:conjugacy}).  Here, we will be focusing on the deterministic setting, and leave it open whether the same results hold in the setting of probabilistic cellular \linebreak automata.

\subsection{Measures of complexity}
\label{sec:complexity:measures}

The problem of designing a cellular automaton that stabilises an SFT $X$ is an algorithmic problem, albeit a parallel one with extra requirements.
The examples discussed so far suggest that the complexity of this algorithmic problem could drastically vary with $X$.
The efficiency of a CA $F$ in stabilising $X$ can be judged based on the resources it uses:
\begin{description}
	\item[Speed of stabilisation $\tau_F(n)$]\leavevmode\\
		How fast does $\tau_F(n)$ grow with $n$?
		Recall that $\tau_F(n)$ denotes the maximum time it takes for $F$ to correct a finite perturbation $\tilde{x}$ with $\delta(\tilde{x},X)=n$.
	\item[Number of extra symbols $\kappa_F$]\leavevmode\\
		How many extra states per cell does $F$ have compared to the alphabet of~$X$?
	\item[Neighbourhood radius $r_F$]\leavevmode\\
		How far does the local rule of $F$ need to look in order to update the state of one cell?
\end{description}
The complexity of self-stabilisation for $X$ can be measured by the optimal values of $\tau_F(n)$, $\kappa_F$ and~$r_F$.  We let $\,\kappa_*\isdef\min_F\kappa_F\,$ and $\,r_*\isdef\min_F r_F,\,$ where the minimums are over all CA $F$ that stabilise~$X$.
\eject

\noindent We also allegorically use $\tau_*(n)$ to indicate the optimal (in order of magnitude) speed of stabilisation among all CA $F$ that stabilise~$X$.\footnote{%
	Note that this is not a rigorous notation.  For instance, it might be that for each $\varepsilon>0$, there exists an $F$ with $\tau_F(n)=\bigo(n^{1+\varepsilon})$ but no $F$ with $\tau_F(n)=\bigo(n)$, in which case $\tau_*(n)$ is not well defined.
}

\medskip
Let us make a couple of remarks about these measures:
\begin{enumerate}
	\item The stabilisation can be linearly sped up at the cost of increasing the neighbourhood radius.  Namely, if $F$ has neighbourhood radius $r_F$ and stabilises $X$ in time $\tau_F(n)$, then $F^k$ has neighbourhood radius $k r_F$ and stabilises $X$ in time $\lceil\frac{1}{k}\tau_F(n)\rceil$.
		In particular, the value of $\tau_*(n)$ is meaningful only up to a multiplicative constant.
	\item As we will see in Proposition~\ref{prop:conjugacy-invariant}, the speed of stabilisation $\tau_*(n)$ is (almost) invariant under topological isomorphisms.  Namely, if two SFTs $X$ and $Y$ are topologically isomorphic, then their minimum stabilisation speeds are roughly the same.
		We suspect that no such invariance holds for the minimum number of extra symbols $\kappa_*$ or for the minimum neighbourhood radius~$r_*$: for every SFT $X$, it should be possible to find an isomorphic SFT $Y$ for which $\kappa_*=0$ and $r_*=1$.
\end{enumerate}

In this paper, we have focused on the speed of stabilisation~$\tau(n)$ with preference towards having no extra symbols.

\subsection{Topologically isomorphic SFTs}
\label{sec:complexity:conjugacy}

In this section, we show that the optimal stabilisation time $\tau_*(n)$ for an SFT is an isomorphism invariant, meaning that topologically isomorphic SFTs have roughly the same optimal stabilisation times.

To prove this claim, let us first recall some terminology from symbolic dynamics (see the monograph by Lind and Marcus~\cite{LiMa95} for more details).

The space $\Sigma^{\ZZ^d}$ of all $d$-dimensional configurations with symbols from a finite alphabet $\Sigma$ is a compact metrisable space with the product topology.  The shift maps $\sigma^k\colon\Sigma^{\ZZ^d}\to\Sigma^{\ZZ^d}$ (for $k\in\ZZ^d$) are all continuous.
A closed subset $X\subseteq\Sigma^{\ZZ^d}$ is called a \emph{shift space} if it is invariant under all shifts, that is, $\sigma^k x\in X$ for all $x\in X$ and $k\in\ZZ^d$.  Clearly, every SFT is a shift space.
Given a shift space $X$ and a finite set $M\subseteq\ZZ^d$, we define $L_M(X)\isdef\{x_M: x\in X\}$ as the set of all patterns with shape $M$ that appear in $X$.

\medskip
A \emph{homomorphism} between two shift spaces $X\subseteq\Sigma^{\ZZ^d}$ and $Y\subseteq\Gamma^{\ZZ^d}$ is a continuous map $\Phi\colon X\to Y$ that commutes with the shifts, that is, $\Phi\oo\sigma^k = \sigma^k\oo\Phi$ for every $k\in\ZZ^d$.  It can be verified that a map $\Phi\colon X\to Y$ is a homomorphism if and only if it is realized by a local rule, that is, if and only if there exists a finite set $\Neighb\subseteq\ZZ^d$ and a map $\varphi\colon L_\Neighb(X)\to\Gamma$ such that $\Phi(x)_i=\varphi\big(\sigma^i(x)_\Neighb\big)$ for each $x\in X$ and $i\in\ZZ^d$.
A map $\Phi\colon X\to Y$ is an \emph{isomorphism} if it is bijective and both $\Phi$ and $\Phi^{-1}$ are homomorphisms.  Since every shift space is compact and Hausdorff, every bijective homomorphism is in fact an isomorphism.  Two shift spaces $X$ and $Y$ are \emph{topologically isomorphic} (or \emph{topologically conjugate}) if there is an isomorphism between them.

\begin{example}[Two isomorphic SFTs]
	Let $X\subseteq\{\symb{0},\symb{1}\}^{\ZZ}$ and $Y\subseteq\{\symb{0}_{\symb{0}},\symb{0}_{\symb{1}},\symb{1}\}^{\ZZ}$ be the $1$-step SFTs whose transition graphs are depicted in Figure~\ref{fig:vertex-shifts:isomorphic}.  It is easy to verify that the map $\Phi:Y\to X$ given by\vspace{2mm}
	\useshortskip
	\begin{align}
		\Phi(y)_i &\isdef
			\begin{cases}
				\symb{0}	& \text{if $y_i\in\{\symb{0}_{\symb{0}},\symb{0}_{\symb{1}}\}$,} \\
				\symb{1}	& \text{if $y_i=\symb{1}$,}
			\end{cases}
	\end{align}
	is an isomorphism.
	Observe that $\symb{0}$ is a safe symbol for $X$ (see Example~\ref{ex:hardcore}), whereas $Y$ does not have a safe symbol.  While the CA $F$ defined in~\eqref{eq:safe-symbol:stabiliser} (Section~\ref{sec:setting:self-stab}) stabilises~$X$ in one step, it is not immediately clear if $F$ can be ``translated'' into a CA that stabilises~$Y$.  Indeed, $Y$ has more symbols than~$X$, and as a result, a configuration in $Y$ can have many more finite perturbations than the corresponding configuration in~$X$.  Nevertheless, having the correspondence between $X$ and $Y$ in mind, one can re-implement the mechanism of stabilisation by~$F$ (i.e., replacing each defect with the safe symbol) to obtain a CA that stabilises~$Y$.  Namely, the CA $G:\{\symb{0}_{\symb{0}},\symb{0}_{\symb{1}},\symb{1}\}^{\ZZ}\to\{\symb{0}_{\symb{0}},\symb{0}_{\symb{1}},\symb{1}\}^{\ZZ}$ defined by
	\begin{align}
		G(y)_i &\isdef
			\begin{cases}
				\symb{0}_{\symb{0}}		&
					\text{if $y_{i-1}y_i=\symb{1}\symb{1}$ or $y_iy_{i+1}\in\{\symb{1}\symb{1},\symb{0}_{\symb{1}}\symb{0}_{\symb{0}}, \symb{0}_{\symb{1}}\symb{0}_{\symb{1}}\}$ or $y_iy_{i+1}y_{i+2}=\symb{0}_{\symb{1}}\symb{1}\symb{1}$,} \\
				\symb{0}_{\symb{1}}		&
					\text{if $y_iy_{i+1}y_{i+2}\in\{\symb{0}_{\symb{0}}1\symb{0}_{\symb{0}},\symb{0}_{\symb{0}}1\symb{0}_{\symb{1}}\}$,} \\
				y_i	& \text{otherwise,}
			\end{cases}
	\end{align}
	stabilises $Y$ (in one step), and is in some sense the ``translation'' of $F$.
	\hfill\exampleqed
\end{example}

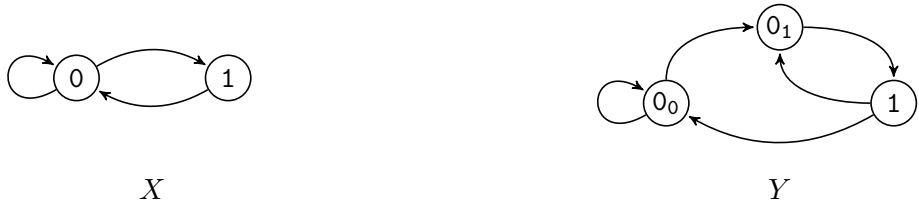
\begin{figure}[h]
\vspace*{-3mm}
\begin{center}
	\begin{tabular}{cc}
		\begin{minipage}{0.49\textwidth}
			\centering
			\begin{tikzpicture}[->,>=stealth',shorten >=1pt,semithick]
				\tikzstyle{every state}=[minimum size=17pt,inner sep=0pt]
				
				\node[state] (O) at (-1,0) {$\symb{0}$};
				\node[state] (I) at (1,0) {$\symb{1}$};
				
				\path (O) edge [out=-150,in=150,loop,overlay] (O);
				\path (O) edge [out=30,in=150] (I);
				\path (I) edge [out=-150,in=-30] (O);
			\end{tikzpicture}
		\end{minipage}
		&
		\begin{minipage}{0.49\textwidth}
			\centering
			\begin{tikzpicture}[->,>=stealth',shorten >=1pt,semithick]
				\tikzstyle{every state}=[minimum size=17pt,inner sep=0pt]
				
				\node[state] (OO) at (-1.5,0) {$\symb{0}_{\symb{0}}$};
				\node[state] (OI) at (0,1) {$\symb{0}_{\symb{1}}$};
				\node[state] (I) at (1.5,0) {$\symb{1}$};
				
				\path (OO) edge [out=-150,in=150,loop,overlay] (OO);
				\path (OO) edge [out=90,in=180] (OI);
				\path (OI) edge [out=0,in=90] (I);
				\path (I) edge [out=180,in=-90] (OI);
				\path (I) edge [out=-150,in=-30] (OO);
			\end{tikzpicture}
		\end{minipage}%
		\medskip \\
		$X$ & $Y$
	\end{tabular}
\end{center}\vspace*{-5mm}
\caption{%
	Graphs of allowed transitions for two isomorphic one-dimensional $1$-step SFTs.
}
\label{fig:vertex-shifts:isomorphic}\vspace*{-2mm}
\end{figure}

\begin{example}[Higher block presentation]
	Let $X\subseteq\Sigma^{\ZZ^d}$ be an SFT and $M\subseteq\ZZ^d$ be a finite set with at least two elements.
	Then, the map $\Phi:X\to(\Sigma^M)^{\ZZ^d}$ defined by $\Phi(x)_i\isdef x_{i+M}$ is an isomorphism between $X$ and $Y\isdef\Phi(X)$.  As in the previous example, the isomorphism between $X$ and $Y$ does not extend to a correspondence between the finite perturbations of $X$ and the finite perturbations of $Y$, and it is not immediately clear how a CA stabilising $X$ from finite perturbations can be translated into a CA stabilising~$Y$ from finite perturbations.
	\hfill\exampleqed
\end{example}

The following proposition shows that a CA stabilising an SFT $X$ from finite perturbations can always be translated into a CA stabilising an isomorphic SFT $Y$ from finite perturbations (if we appropriately extend the alphabet of $Y$) while keeping the stabilisation time roughly unchanged.

\begin{proposition}[Stabilisation time for isomorphic SFTs]
\label{prop:conjugacy-invariant}
	Let $X$ and $Y$ be two topologically isomorphic SFTs.
	If there exists a CA that stabilises $X$ from finite perturbations in time $\tau(n)$,
	then there also exists a CA that stabilises $Y$ from finite perturbations in time $\tau\big(n+\bigo(1)\big)$.
\end{proposition}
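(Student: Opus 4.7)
The plan is to simulate $F$ on an auxiliary $X$-coordinate stored ``under the hood'' alongside the displayed $Y$-symbol. By the Curtis–Hedlund–Lyndon theorem, the isomorphism $\Phi\colon X\to Y$ and its inverse are sliding block codes; I would fix local rules $\phi\colon L_{M_\phi}(X)\to \Gamma$ and $\psi\colon L_{M_\psi}(Y)\to \Sigma$ realising them, and extend them arbitrarily to functions $\hat\phi\colon {\Sigma'}^{M_\phi}\to \Gamma$ and $\hat\psi\colon \Gamma^{M_\psi}\to \Sigma'$, where $\Sigma'\supseteq\Sigma$ is the alphabet of~$F$. The new CA uses the alphabet $\Gamma'\isdef \Gamma\sqcup(\Gamma\times\Sigma')$, with $\Gamma$ embedded directly; a cell is \emph{clean} when its state lies in $\Gamma$ and \emph{dirty} when it lies in $\Gamma\times\Sigma'$. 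In both cases the first coordinate represents the displayed $Y$-symbol; dirty cells additionally carry an explicit $\Sigma'$-value.

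I would define $G\colon{\Gamma'}^{\ZZ^d}\to{\Gamma'}^{\ZZ^d}$ in three local steps. From $\tilde z\in{\Gamma'}^{\ZZ^d}$, first extract $v\in{\Sigma'}^{\ZZ^d}$ by letting $v_c$ be the stored $\Sigma'$-value when $c$ is dirty, and $\hat\psi$ applied to the $\Gamma$-projection of $\tilde z$ on $c+M_\psi$ when $c$ is clean. Then set $v'\isdef F(v)$ and $u'\isdef \hat\phi(v')$. Finally, pick a window $M\supseteq M_\phi+M_\psi$ and declare $G(\tilde z)_c\isdef u'_c\in\Gamma$ (clean) if $v'$ contains no forbidden $X$-pattern inside $c+M$, and $G(\tilde z)_c\isdef(u'_c,v'_c)\in\Gamma\times\Sigma'$ (dirty) otherwise. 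Consistency is then immediate: for $y\in Y$ every cell is clean, the extracted $v$ equals $\Phi^{-1}(y)=x\in X$, so $v'=F(x)=x$ is globally $X$-admissible, $u'=\Phi(x)=y$, and $G(y)=y$.

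For attraction, let $\tilde z$ be a finite perturbation of $y\in Y$ with $\delta(\tilde z,Y)=n$. Outside $\Delta(y,\tilde z)+M_\psi$ the extraction yields $v_0=x$, so $v_0$ is a finite perturbation of $x=\Phi^{-1}(y)$ with $\delta(v_0,X)\le n+r$ for a constant $r$ depending only on $M_\psi$. The key invariant to verify is that the auxiliary configuration extracted from $G^t(\tilde z)$ equals $F^t(v_0)$ for every~$t$: at dirty cells this is trivial, and at clean cells it follows from $\psi\circ\phi=\mathrm{id}_X$ together with the choice of $M$, which ensures that $\hat\psi\circ\hat\phi$ faithfully reproduces $F^t(v_0)_c$ wherever the cleanness condition at $c$ is satisfied. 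Since $F$ brings $v_0$ into $X$ in at most $\tau(n+r)$ steps, after that many applications of $G$ the cleanness condition holds everywhere, and $G^{\tau(n+r)}(\tilde z)=\Phi(F^{\tau(n+r)}(v_0))\in Y$, which gives the bound $\tau_G(n)\le \tau(n+O(1))$.

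The main obstacle will be establishing the invariant cleanly for mixed clean/dirty neighbourhoods during the transient regime: one has to tune the window $M$ and the extensions $\hat\phi,\hat\psi$ so that the first coordinate written at any cell, whether clean or dirty, is correctly interpreted by nearby clean cells through $\hat\psi$ at the following step, and so that the cleanness predicate is triggered exactly when $\hat\psi\circ\hat\phi$ acts as the identity on the relevant region. This is purely a bookkeeping issue, but it requires careful synchronisation between the cleanness predicate and the compositional identity $\psi\circ\phi=\mathrm{id}_X$.
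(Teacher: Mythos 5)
Your overall architecture is essentially the paper's: the paper also obtains $G$ by conjugating $F$ with an injective shift-invariant extension $\hat{\Phi}$ of $\Phi$ to the full shift, over an enlarged alphabet $\Gamma\cup\Sigma\cup(\Gamma\times\Sigma)$ whose pair symbols carry the underlying $\Sigma$-symbol exactly where the invertibility of $\Phi$ cannot be relied upon. Your clean/dirty flag is that same case distinction, and your $G$ is the paper's $\hat{\Phi}F\tilde{\Psi}$ in different clothing.

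There is, however, a concrete error in the step on which the whole argument rests: the cleanness predicate. You declare $c$ clean when $v'$ contains no forbidden $X$-pattern inside $c+M$, i.e.\ when $v'$ is \emph{locally} admissible near $c$, and you then assert that ``the choice of $M$'' guarantees $\hat\psi\bigl(\hat\phi(v')_{c+M_\psi}\bigr)=v'_c$ at clean cells. This is false, and no enlargement of $M$ repairs it. The local rules $\phi,\psi$ and the identity $\psi\circ\phi=\mathrm{id}$ are only defined and guaranteed on \emph{globally} admissible patterns ($\phi$ is a function on $L_{M_\phi}(X)=\{x_{M_\phi}:x\in X\}$, extended arbitrarily elsewhere), and, as the paper stresses in Section~1, a locally admissible pattern need not be globally admissible; for $d\geq 2$ there is in general no finite window of local admissibility that certifies global admissibility of a fixed central sub-pattern (otherwise global admissibility would be decidable). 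So during the transient a cell can be marked clean while $\hat\psi\circ\hat\phi$ fails to return $v'_c$ there; your key invariant ``extracted configuration $=F^t(v_0)$'' then breaks, the simulation of $F$ is corrupted, and nothing further is guaranteed. The repair is one line, but it is not a matter of tuning $M$: take the cleanness predicate to be the directly locally checkable condition $\hat\psi\bigl(\hat\phi(v')_{c+M_\psi}\bigr)=v'_c$ itself (or the non-effective but perfectly legitimate condition $v'_{c+M_\phi+M_\psi}\in L_{M_\phi+M_\psi}(X)$, which is what the paper's definition of $\hat{\varphi}$ uses). With that change the invariant holds by construction at clean cells and trivially at dirty ones, and consistency, attraction and the bound $\tau\bigl(n+\bigo(1)\bigr)$ go through as you describe.
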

\begin{proof}
	Let $\Sigma$ and $\Gamma$ be two finite alphabets such that $X\subseteq\Sigma^{\ZZ^d}$ and $Y\subseteq\Gamma^{\ZZ^d}$.
	Let $\Phi\colon X\to Y$ be the isomorphism between $X$ and $Y$, and denote its inverse by~$\Psi$.
	Let $F\colon\Sigma^{\ZZ^d}\to\Sigma^{\ZZ^d}$ be a CA that stabilises $X$ from finite perturbations in time~$\tau(n)$.

	Let us first assume that $\Phi$ can be extended to a one-to-one shift-invariant continuous map $\hat{\Phi}\colon\Sigma^{\ZZ^d}\to\hat{\Gamma}^{\ZZ^d}$, where $\hat{\Gamma}$ is a finite alphabet including $\Gamma$.
	Let $\hat{\Psi}\colon\hat{\Phi}(\Sigma^{\ZZ^d})\to\Sigma^{\ZZ^d}$ denote the inverse of~$\hat{\Phi}$, and observe that $\hat{\Psi}$ is an extension of $\Psi$, that is, $\hat{\Psi}|_Y=\Psi$.  Let $\tilde{\Psi}\colon\hat{\Gamma}^{\ZZ^d}\to\Sigma^{\ZZ^d}$ be a shift-invariant continuous extension of $\hat{\Psi}$.  Such an extension can be constructed by (arbitrarily) completing the local rule of $\hat{\Psi}$.  Let us now define $G\colon\hat{\Gamma}^{\ZZ^d}\to\hat{\Gamma}^{\ZZ^d}$ by $Gy\isdef\hat{\Phi}F\tilde{\Psi}y$.  To see that $G$ stabilises $Y$ from finite perturbations, first note that for every $y\in\hat{\Gamma}^{\ZZ^d}$, the diagram
	\begin{equation}
		\begin{tikzpicture}[xscale=2,yscale=1.8,>=stealth',->,baseline=(current bounding box.center)]
			\node (x) at (0,1) {$x$};\node (Fx) at (1,1) {$Fx$};\node (F2x) at (2,1) {$F^2x$};
				\node (F3x) at (3,1) {$\ldots$};
			\node (y) at (0,0) {$y$};\node (Gy) at (1,0) {$Gy$};\node (G2y) at (2,0) {$G^2y$};
				\node (G3y) at (3,0) {$\ldots$};
			
			\draw (x) to node[above]{$F$} (Fx);
			\draw (Fx) to node[above]{$F$} (F2x);
			\draw (F2x) to node[above]{$F$} (F3x);
			\draw (y) to node[below]{$G$} (Gy);
			\draw (Gy) to node[below]{$G$} (G2y);
			\draw (G2y) to node[below]{$G$} (G3y);
			
			\draw (y) to node[right]{$\tilde{\Psi}$} (x);
			\draw (Gy) to[bend right=15] node[right]{$\tilde{\Psi}$} (Fx);
			\draw (G2y) to[bend right=15] node[right]{$\tilde{\Psi}$} (F2x);
			
			\draw (Fx) to[bend right=15] node[left]{$\hat{\Phi}$} (Gy);
			\draw (F2x) to[bend right=15] node[left]{$\hat{\Phi}$} (G2y);
		\end{tikzpicture}\vspace{-2mm}
	\end{equation}
	commutes.
	If $y\in Y$, then clearly $Gy=y$.
	Suppose that $y\in\FPert[\hat{\Gamma}]{Y}$, that is, a finite perturbation of an element of $Y$ in $\hat{\Gamma}^{\ZZ^d}$.
	Then, $x\isdef\hat{\Phi}y\in\FPert[\Sigma]{X}$.
	More specifically, let $\bar{y}\in Y$ be such that $\Delta(\bar{y},y)$ is finite with diameter $n$.
	Then, $\bar{x}\isdef\tilde{\Psi}\bar{y}=\Psi\bar{y}$ is in $X$ and the diameter of $\Delta(\bar{x},x)$ is at most $n+C$, where $C$ is the diameter of the neighbourhood of the local rule of $\tilde{\Psi}$.
	Following the above diagram, the iterations of $G$ on $y$ correspond to the iterations of $F$ on~$x$.
	Since $F$ stabilises $X$, we have $F^t x\in X$ for some $t\leq\tau(n+C)$.
	But as soon as $F^tx\in X$, we also obtain $G^t y = \hat{\Phi}F^t x = \Phi F^t x \in Y$.  It follows that $G$ stabilises $Y$ from finite perturbations in time $\tau(n+C)$.
	
\medskip
	It remains to construct a one-to-one shift-invariant continuous extension $\hat{\Phi}\colon\Sigma^{\ZZ^d}\to\hat{\Gamma}^{\ZZ^d}$ of $\Phi$.
	Let $\varphi\colon L_N(X)\to\Gamma$ and $\psi\colon L_M(Y)\to\Sigma$ be the local rules of $\Phi$ and $\Psi$ respectively.  Without loss of generality, we assume that $\Gamma$ and $\Sigma$ are disjoint.  Let $\hat{\Gamma}\isdef\Gamma\cup\Sigma\cup(\Gamma\times\Sigma)$.
	Define $\hat{\Phi}$ with the local rule $\hat{\varphi}\colon\Sigma^{M+N}\to\hat{\Gamma}$, given by
	\begin{align}
		\hat{\varphi}(p) &\isdef
			\begin{cases}
				\varphi(p_N)
				& \text{if $p\in L_{M+N}(X)$,} \\
				(\varphi(p_N),p_0)
				& \text{if $p\notin L_{M+N}(X)$ but $p_N\in L_N(X)$,} \\
				p_0	& \text{otherwise.}
			\end{cases}
	\end{align}
	The first case in the definition ensures that $\hat{\Phi}|_X=\Phi$.
	Furthermore, $p_0$ can always be recovered from~$\hat{\varphi}(p)$, either directly, or by applying $\psi$ on $\big(\varphi(\sigma^i(p)_N)\big)_{i\in M}$, which can be extracted from $\hat{\varphi}(p)$.
	This means that $\hat{\Phi}$ is indeed a one-to-one extension of $\Phi$ as required.
\end{proof}

\subsection{Super-polynomial hardness}
\label{sec:complexity:hardness}

In this section, we prove the following theorem:

\begin{theorem}[SFT with slow stabilisation]
\label{thm:hard-stabilisation}
	Let $d\geq 2$.
	Unless $\classP=\classNP$, there exists a $d$-dimensional SFT $X$ which is not stabilised from finite perturbations by any CA in polynomial time.
\end{theorem}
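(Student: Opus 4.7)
The plan is to reduce an NP-complete problem to the polynomial-time self-stabilisation task. I would fix a standard NP-complete problem, such as the Bounded Domino Problem (BDP): given a Wang tile set $T$ (chosen so that BDP for $T$ is NP-hard) and a boundary specification $\partial$ on the perimeter of an $n \times n$ square, decide whether $\partial$ can be completed to a valid $T$-tiling of the square.

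First I would construct a two-dimensional SFT $X$ over an extended alphabet that includes the tiles of $T$, a safe symbol $\star$, and auxiliary frame symbols. The local rules would be designed so that: (i) the all-$\star$ configuration is valid; (ii) a valid configuration may contain a rectangular ``framed room'', whose frame encodes a boundary $\partial$ and whose interior must be a valid $T$-tiling matching~$\partial$; (iii) a framed room appears in a valid configuration of $X$ if and only if its boundary is BDP-completable. Given a BDP instance with boundary $\partial$ of size $n$, I would construct a finite perturbation $\tilde{x}_\partial$ of the all-$\star$ configuration by placing the (empty) framed room with boundary~$\partial$ in the background, so that $\delta(\tilde{x}_\partial, X) = O(n)$. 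Assuming a CA $F$ stabilises $X$ in polynomial time $p(n)$, the configuration $F^{p(O(n))}(\tilde{x}_\partial)$ can be computed in polynomial time on a polynomial-size window around the room, since $F$ has a fixed local rule and the region where the dynamics are nontrivial is bounded by the light cone. The BDP answer would then be read off by checking whether the output contains a completed room with boundary~$\partial$, giving a polynomial-time algorithm for BDP and contradicting $\classP \neq \classNP$.

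The main obstacle is ensuring that the CA's output faithfully reveals BDP-completability. A polynomial-time CA might choose to ``erase'' the framed room (reverting to the all-$\star$ configuration) rather than complete it, even when the boundary is completable; in that case the output would carry no useful information about BDP. The crux of the proof is therefore to design the SFT so that erasing a framed room is computationally expensive for any polynomial-time CA, while completing it (when possible) can be accomplished quickly. A natural approach is to build the frame using a self-referential hierarchical structure, in the style of aperiodic tile sets simulating computation, so that any attempt to locally dismantle the frame triggers a cascade that takes super-polynomial time to complete, whereas supplying a correct interior tiling resolves the defect in linear time. Rigorously establishing this asymmetry, namely fast completion but inherently slow erasure that forces the CA to reveal the solution, is the principal technical difficulty of the argument.
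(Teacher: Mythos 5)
Your overall strategy (reduce an $\classNP$-hard bounded tiling problem to the stabilisation task by simulating the CA on a polynomial-size window) is the same as the paper's, and you have correctly identified the crux: nothing in the definition of self-stabilisation forces the CA to \emph{complete} the room rather than converge to some other nearby valid configuration, e.g.\ by erasing it. But your proposed remedy --- making erasure ``computationally expensive'' via a hierarchical, self-referential frame --- is a genuine gap, and it is unlikely to be fixable in your setup. First, your alphabet contains a safe symbol $\star$; any tiling space with a safe symbol is stabilised in one step by the rule that overwrites every defective cell with $\star$ (see \eqref{eq:safe-symbol:stabiliser}), so a CA can flatten the entire room in time linear in its diameter no matter how intricate the frame is. More generally, a parallel local rule can erase any finite structure of diameter $n$ in $O(n)$ steps; the ``computational cost'' of dismantling a structure is not something the CA has to pay, because it is free to aim at the all-$\star$ configuration from the outset rather than at the completed room.

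The paper's resolution is the idea your proposal is missing: make the frame \emph{unreachable} rather than \emph{expensive to destroy}. Concretely, the perturbed region is only the interior square $S_n$, and the frame is an annulus of width roughly $2r\tau(n)$ around it, built from rigid ``arrow'' tiles that deterministically propagate the prescribed boundary colouring from the outer edge of the annulus inward to the edge of $S_n$. By consistency ($F$ is the identity on $X$) together with the speed-of-light bound, no cell at distance more than $r\tau(n)$ from $S_n$ can ever change in the first $\tau(n)$ steps; since those frozen cells already determine, through the rigid arrow structure, the colouring on the boundary of $S_n$, the only valid configurations the CA can reach are those in which $S_n$ carries a correct tiling with the prescribed boundary. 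Erasure is thus ruled out by locality plus consistency, not by complexity. (The paper packages this as an $\classNP$-hard ``global tiling patching'' problem solved by a Turing-machine simulation of the CA, but that is bookkeeping; the load-bearing step is the frozen frame that pins the boundary of the square outside the CA's light cone.) Without this, or some equivalent mechanism, your reduction does not go through.
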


Let us emphasize that we do not know whether every SFT can be stabilised by some CA or not (see below, Problem~\ref{q:open:general-solution} of Section~\ref{sec:discussion:other-open}).  According to the above theorem, assuming $\classP\neq\classNP$, there exists an SFT for which either there is no solution whatsoever or every solution requires super-polynomial stabilisation time.  The proof of Proposition~\ref{prop:global-tiling-patching:NP-hard} below contains an explicit construction of such an SFT, but we do not prove that the constructed SFT is stabilised by some CA.

For concreteness, we present the proof for the two-dimensional case.  The proof of the higher-dimensional case carries through similarly.

\medskip
The \emph{square tiling problem} of a finite set of Wang tiles $\Theta$ is the decision problem of whether a square of size $n$ can be tiled admissibly using~$\Theta$ in such a way as to achieve a prescribed colouring of its boundary.
The square tiling problem of every tile set is clearly in class $\classNP$.  The existence of a set of Wang tiles for which the square tiling problem is $\classNP$-complete is folklore.

The idea of the proof of Theorem~\ref{thm:hard-stabilisation} is that a CA that stabilises the SFT $X$ associated to a set of Wang tiles $\Theta$ can be used to solve a variant of the square tiling problem for~$\Theta$.  Namely, suppose that $F$ is a CA that stabilises $X$ in time $\tau(n)$, and let $r$ be the neighbourhood radius of~$F$.  Then, given a configuration $x\in\FPert{X}$ and a finite set $A\subseteq\ZZ^2$ such that $x_{\ZZ^2\setminus A}$ is globally admissible, the cellular automaton is able to ``patch'' the defects of $x$ in $\tau(\diam(A))$ steps and turn it into a valid tiling by only changing the states of the cells that are no farther than $r\tau(n)$ from $A$.  Note that only the states of the cells within distance $r\tau(n)$ from $A$ can possibly change during this computation, and only the states of the cells within distance $2r\tau(\diam(A))$ from $A$ are relevant for such changes.  Thus, the relevant parts of the computation can be simulated by a Turing machine.
As we will see, such a Turing machine can then be used to solve the standard square tiling problem.

To be specific, let us state the latter problem more explicitly.
Let $\Moore\isdef\{-1,0,1\}^2$ denote the Moore neighbourhood.  The \emph{(Moore) boundary} of a set $A\subseteq\ZZ^2$ is the set $\partial\Moore(A)\isdef\Moore(A)\setminus A$.
For each~$n\in\NN$, let $S_n\isdef\{0,1,\ldots,n-1\}^2$.  Thus, for $k\geq 0$, $\Moore^k(S_n)$ represents the square $\{-k,-k+1\ldots,n+k-1\}^2$.
\begin{algorithmicproblem}[Global tiling patching]\leavevmode
\label{aprob:tiling-patching}
	\begin{description}[style=multiline, topsep=0pt, itemsep=0pt, parsep=0pt, font=\normalfont\sffamily, leftmargin=\widthof{Parameters:~~~~~~~~~}]
		\item[Parameters:] A finite set of Wang tiles $\Theta$ and two functions $\alpha,\beta\colon\NN\to\NN$.
		\item[Input:] A globally admissible tiling $q$ of $\Moore^{\alpha(n)+\beta(n)+1}(S_n)\setminus S_n$.
		\item[Task:] Find a globally admissible pattern $\tilde{q}$ on $\Moore^{\alpha(n)+1}(S_n)$ that agrees with $q$ on the band $\partial\Moore(\Moore^{\alpha(n)}(S_n))$.
	\end{description}
\end{algorithmicproblem}
\noindent See Figure~\ref{fig:global-tiling-patching:problem} for an illustration.

\begin{figure}[h]
\vspace*{-1mm}
	\begin{center}
		\begin{tabular}{ccc}
			\begin{minipage}{0.5\textwidth}
				\centering
				\begin{tikzpicture}[x=10pt,y=10pt]
					\globalpatchinginput
				\end{tikzpicture}
			\end{minipage}%
			& $\mapsto$ &
			\begin{minipage}{0.4\textwidth}
				\centering
				\begin{tikzpicture}[x=10pt,y=10pt]
					\globalpatchingoutput
				\end{tikzpicture}
			\end{minipage}%
			\\ [8em]
			(a) & & (b)
		\end{tabular}
	\end{center}\vspace*{-5mm}
\caption{%
	Illustration of the global tiling patching problem.
	(a) The input is a globally admissible pattern on the shaded region.
	(b) The output is a globally admissible pattern on the shaded region which agrees with the input on the region which is shaded in blue.
}\label{fig:global-tiling-patching:problem}\vspace*{-2mm}
\end{figure}
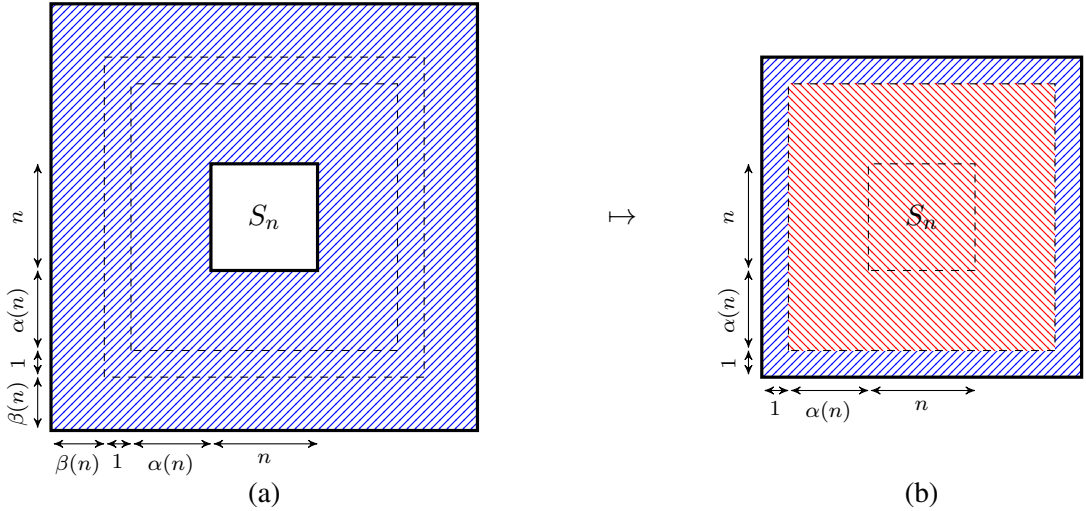

\begin{proposition}[Serial simulation]
\label{prop:serial-simulation}
	Let $X$ be the SFT of the valid tilings of a finite set of Wang tiles~$\Theta$.  Suppose $X\neq\varnothing$ and that there exists a CA with neighbourhood radius $r$ that stabilises $X$ in time~$\tau(n)$.  Then, there exists a Turing machine with a two-dimensional tape that solves the global tiling patching problem for~$\Theta$ and $\alpha(n)\isdef\beta(n)\isdef r\tau(n)$ in time $\bigo\big([n+4r\tau(n)]^2\tau(n)\big)$.
\end{proposition}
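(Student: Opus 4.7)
My approach is to have the Turing machine literally simulate the stabilising CA $F$ on a finite window. Two features of~$F$ make this possible: (a) the stabilisation hypothesis guarantees that starting from a finite perturbation of~$X$ with diameter~$n$, the CA reaches~$X$ in $\tau(n)$ steps; (b) the neighbourhood radius~$r$ bounds both the forward ``reach'' of the CA over time and the backward ``dependency'' of each cell on the initial configuration. The choice $\alpha(n)=\beta(n)=r\tau(n)$ is designed precisely so that these two bounds line up with the input region and the required output region.

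First, I will construct a virtual initial configuration $\tilde x$ that the machine will \emph{pretend} to be running $F$ on. Since the input pattern~$q$ is globally admissible, it extends to some $\bar x\in X$ (whose mere existence is all we will use; the machine need not know~$\bar x$). Fixing any tile $\theta_0\in\Theta$, I define $\tilde x$ to agree with $\bar x$ outside~$S_n$ and to equal $\theta_0$ on~$S_n$. Then $\Delta(\bar x,\tilde x)\subseteq S_n$, which has diameter~$n$, so $\tilde x\in\FPert{X}$ with $\delta(\tilde x,X)\le n$. By the stabilisation hypothesis, $z\isdef F^{\tau(n)}(\tilde x)$ lies in~$X$. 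Because $F$ has neighbourhood radius~$r$, every cell $i\notin\Moore^{r\tau(n)}(S_n)$ satisfies $z_i=\tilde x_i=\bar x_i=q_i$, so $z$ agrees with~$q$ on the required boundary band $\partial\Moore\big(\Moore^{\alpha(n)}(S_n)\big)$. Its restriction to $\Moore^{\alpha(n)+1}(S_n)$ is then a valid output~$\tilde q$.

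Next, I will argue that the Turing machine can compute this restriction using only the input~$q$ plus the arbitrary filling of~$S_n$ by~$\theta_0$. By the same locality, $F^{\tau(n)}(\tilde x)_i$ depends on~$\tilde x$ only through its values on $\Moore^{r\tau(n)}(\{i\})$; for $i\in\Moore^{\alpha(n)+1}(S_n)$ this ball lies inside $\Moore^{\alpha(n)+r\tau(n)+1}(S_n)=\Moore^{\alpha(n)+\beta(n)+1}(S_n)$, which is exactly the union of the input region and~$S_n$. The algorithm is therefore: read~$q$ and fill~$S_n$ with~$\theta_0$ into a 2D tape buffer of this size; apply the local rule of~$F$ synchronously for $\tau(n)$ sweeps, shrinking the active square by~$r$ cells in each direction at every sweep; output the final pattern.

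For the running time, the working window has side length $n+2(\alpha(n)+\beta(n)+1)=O\big(n+r\tau(n)\big)$, hence $O\big((n+4r\tau(n))^2\big)$ cells; with $r$ and $|\Theta|$ fixed, each sweep costs constant time per cell, yielding the stated bound $O\big([n+4r\tau(n)]^2\,\tau(n)\big)$. I do not foresee any real obstacle; the only slightly subtle bookkeeping is the double accounting of radii---forward expansion of perturbations and backward dependency of cells on initial data---which is exactly what forces both buffers $\alpha(n)$ and $\beta(n)$ to appear in the problem statement.
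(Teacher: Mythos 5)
Your proposal is correct and follows essentially the same route as the paper: simulate the CA serially on a two-dimensional buffer for $\tau(n)$ sweeps, with the window shrinking by $r$ per sweep so that after $\tau(n)$ steps exactly the region $\Moore^{\alpha(n)+1}(S_n)$ survives. The only difference is that you spell out the correctness of the output (via the virtual configuration $\tilde{x}$, the speed-of-light bound, and the fixed-point property of $\bar{x}\in X$), which the paper's proof leaves implicit in the discussion preceding the proposition.
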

\begin{proof}
	The Turing machine simulates the CA on its two-dimensional tape.  Once $\tau(n)$ steps of the CA simulation have been carried out, the Turing machine stops and returns the current configuration of the CA.
	
	A serial simulation of the CA can be done in a standard fashion.  The tape has two layers (numbered~$0$ and~$1$) for storing the configurations at even and odd time steps.  The input is initially provided on layer~$0$.  In an initialization stage, the input pattern is extended by writing an arbitrary symbol from~$\Theta$ at every position in $S_n$.
	When simulating time step~$t$ of the CA, the Turing machine reads the configuration at time $t-1$ from layer~$(t-1)\bmod 2$ and writes the result on layer~$t\bmod 2$.  At positions in which not enough information is available
	(i.e., outside $\Moore^{2r\tau(n)+1-rt}(S_n)\setminus\Moore^{rt}(S_n)$)
	the blank symbol is written instead.
	
	In order to keep of track of the number of simulated steps, the cells in
	$\Moore^{2r\tau(n)+1}(S_n)\setminus S_n$
	are, in the initialization stage, marked with $\star$.  At every stage of the simulation, the marks are also updated: a mark is kept if all its $(2r+1)^2$ neighbours are marked and is erased otherwise.  After $\tau(n)$ stages of the simulation, only the cells in the band
	$\Moore^{r\tau(n)+1}(S_n)\setminus\Moore^{r\tau(n)}(S_n)$
	are marked and thus the Turing machine easily recognizes that it has to enter its final stage of computation.  In the final stage, the marks and the distinction between the two layers are erased so that only the configuration at time $\tau(n)$ is left on the tape.
	
	The simulation of each time step of the CA takes $\bigo\big([n+4r\tau(n)]^2\big)$ time steps of the Turing machine.
	Likewise, the initialization stage and the final stage take only $\bigo\big([n+4r\tau(n)]^2\big)$ time steps each.
	Thus, in overall, the Turing machine accepts or rejects its input in $\bigo\big([n+4r\tau(n)]^2\tau(n)\big)$ steps.	
\end{proof}

\begin{proposition}[$\classNP$-hardness]
\label{prop:global-tiling-patching:NP-hard}
	There exists a finite set of Wang tiles $\Theta$ such that for every two functions $\alpha,\beta\colon\NN\to\NN$ with polynomial growth, the global tiling patching problem associated to $\Theta$, $\alpha$ and~$\beta$ is $\classNP$-hard.
\end{proposition}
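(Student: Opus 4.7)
The plan is to reduce the $n \times n$ square tiling problem with prescribed boundary to global tiling patching. By a classical result (going back to Berger and Levin), there exists a finite tile set $\Theta_0$ for which the following is $\classNP$-complete: given $n$ and colours prescribed on the four sides of the square $S_n$, decide whether $S_n$ admits a valid $\Theta_0$-tiling matching that boundary. Fix such $\Theta_0$. We will construct an extension $\Theta \supseteq \Theta_0$ such that, for every polynomially growing $\alpha, \beta$, any square tiling instance of size $n$ can be transformed in polynomial time into a globally admissible annulus $q$ on $\Moore^{\alpha(n)+\beta(n)+1}(S_n)\setminus S_n$ whose valid patchings correspond exactly to valid $\Theta_0$-tilings of $S_n$ with the prescribed boundary.

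Besides $\Theta_0$, the tile set $\Theta$ will contain three families of auxiliary tiles. (a) For every edge colour $c$ used by $\Theta_0$, four \emph{transport} tiles, one per cardinal direction, each carrying $c$ on the two edges along its axis and a distinguished background colour $\mathsf{bg}$ on the two transverse edges; the transverse-edge conventions are chosen so that a row or column of like-oriented transport tiles forms a straight beam and every perpendicular crossing is blocked. (b) A background tile $B$ carrying $\mathsf{bg}$ on all four edges, used to fill the corner regions of the annulus. (c) A \emph{filler sub-system} on a secondary colour palette $\mathsf{sc}$ disjoint from the primary palette, whose sole purpose is to certify global admissibility. Given a square tiling instance of size $n$ with boundary $(\vec a, \vec b, \vec c, \vec d)$, the reduction constructs $q$ as follows: from each of the $n$ cells on the east side of $S_n$ we emit an eastward beam of transport tiles carrying the corresponding colour $c_j$ all the way to the outer boundary of the annulus, and symmetrically for the other three sides; the four L-shaped corner regions of the annulus are filled with copies of $B$. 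Since $\alpha(n)+\beta(n)+1$ is polynomial in $n$, this construction runs in polynomial time.

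Two claims remain. First, (ii)~$q$ admits a valid patching of $\Moore^{\alpha(n)+1}(S_n)$ if and only if the original square tiling instance is solvable. Any valid patching must agree with $q$ on the outer band $\partial\Moore(\Moore^{\alpha(n)}(S_n))$ and therefore carry the prescribed transport tiles there. By the straight-line propagation property, the entire annular part of the patching is forced: each beam extends inward to the boundary of $S_n$, pinning the colours of the boundary of $S_n$ to $(\vec a, \vec b, \vec c, \vec d)$. Consequently the restriction of the patching to $S_n$ is a $\Theta_0$-tiling with the prescribed boundary; conversely, any solution of the square tiling instance combines with the transport beams to give a valid patching. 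Second, (i)~$q$ is globally admissible: we exhibit an explicit extension to $\ZZ^2$ using $B$ tiles outside the annulus, and by placing interface tiles of the filler sub-system along the boundary of $S_n$ (matching the inward-pointing arrows on their primary-coloured edge and carrying $\mathsf{sc}$ on the other three edges) and filler tiles in the interior of $S_n$. This extension is valid regardless of the solvability of the instance, since the filler and interface tiles are mutually compatible on the $\mathsf{sc}$-palette.

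The main obstacle is designing the filler sub-system so that it simultaneously plays these two opposing roles: enabling the global extension of every input $q$ (for global admissibility) while being structurally barred from appearing in any valid patching (so that the patching cannot short-circuit the $\Theta_0$-tiling requirement via a filler ``island''). The separation of primary and secondary colour palettes is a first ingredient: since $\mathsf{sc}$ does not appear on the outer band of the patching region, any connected $\mathsf{sc}$-carrying region in a patching must be bounded, surrounded by interface tiles whose outward-facing primary-coloured edges match $\Theta_0$ or transport tiles. Ruling out such interior islands is the technically most delicate step; it requires the corner interface tiles to bear a ``corner signature'' that can only be generated by the specific geometry of the $S_n$ boundary in the annulus (e.g., by distinguishing the transport tiles terminating a beam from those in its middle), so that the corner interface tiles can legally be placed only at the four actual corners of $S_n$. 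Once this geometric guard is in place, the reduction goes through.
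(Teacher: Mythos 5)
Your overall strategy---reduce the $\classNP$-complete square tiling problem for a tile set $\Theta_0$ to global tiling patching by surrounding $S_n$ with an annulus of ``transport'' tiles that carry the prescribed boundary colours from the outer band $\partial\Moore(\Moore^{\alpha(n)}(S_n))$ inward to $\partial\Moore(S_n)$---is the same as the paper's. The gap is in your treatment of global admissibility of the input $q$. You insist that $q$ be globally admissible for \emph{every} instance, solvable or not, and to that end introduce a filler sub-system that can tile $S_n$ on a secondary palette while somehow being ``structurally barred from appearing in any valid patching.'' These two demands are not merely delicate to reconcile; they are contradictory. If $q$ extends to a valid tiling $z$ of $\ZZ^2$ (your filler-based global extension, which necessarily fills $S_n$ with fillers, since $S_n$ is the only hole left by $q$ inside the annulus), then the restriction of $z$ to $\Moore^{\alpha(n)+1}(S_n)$ is itself a globally admissible pattern agreeing with $q$ on the required band, i.e.\ it \emph{is} a valid answer to the patching problem that does not encode a $\Theta_0$-tiling of $S_n$. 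So either the fillers break the correctness of the reduction, or---if every valid patching were forced to restrict to a $\Theta_0$-tiling of $S_n$ with the prescribed boundary---every instance would be solvable. No ``corner signature'' can separate the two roles: the certifying island and the offending patch occupy exactly the same cells and are subject to exactly the same local constraints.

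The paper avoids this by \emph{not} making $q$ globally admissible in the unsolvable case. It constructs a locally admissible annulus $q$ (arrows propagating inward, blank corners) that is globally admissible precisely when the square tiling instance is solvable, feeds it to the patching algorithm regardless, and then verifies the output in polynomial time. If the instance is solvable, the algorithm is guaranteed to return a correct patch, whose restriction to $S_n$ solves the instance; if it is not, then whatever the algorithm returns on this (illegitimate) input cannot restrict to a locally admissible tiling of $S_n$ matching the prescribed boundary, since such a tiling would itself witness solvability. Either way the answer to the square tiling instance is extracted correctly. Replace your filler sub-system with this verification step, and the rest of your construction (transport beams, background corner regions, forced inward propagation from the outer band) lines up with the paper's proof.
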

\begin{proof}
	Let $\Theta_0$ be a finite set of Wang tiles for which the square tiling problem is $\classNP$-complete, and let $C_0$ denote the set of colours that appear in the tiles of $\Theta_0$.
	Let $\Theta_1$ be the set of Wang tiles depicted in Figure~\ref{fig:global-tiling-patching:NP-hard:extra-tiles} and their four-fold rotations.  Without loss of generality, we assume
	\begin{align}
		C_1 &\isdef
			\{\blank,
				\colrarrow, \collarrow, \coluarrow, \coldarrow,
				\colrtick, \colltick, \colutick, \coldtick
			\} \cup\{(c,\bullet): c\in C_0\}
	\end{align}
	is disjoint from $C_0$.
	Let $\Theta=\Theta_0\cup\Theta_1$.
	
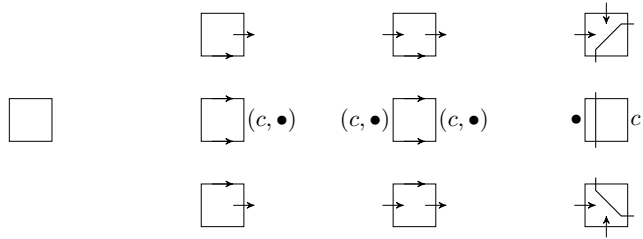
\begin{figure}[!h]
	\begin{center}
		{%
		\begin{tikzpicture}[x=90pt,y=40pt,scale=0.8,every node/.style={scale=0.8}]
								\wpatchLlu[0]{(-1,1)}\wpatchLcu[0]{(0,1)}\wpatchLru[0]{(1,1)}
			\wpatchblank{(-2,0)}\wpatchLlc[0]{(-1,0)}{c}\wpatchLcc[0]{(0,0)}{c}\wpatchLrc[0]{(1,0)}{c}
								\wpatchLld[0]{(-1,-1)}\wpatchLcd[0]{(0,-1)}\wpatchLrd[0]{(1,-1)}
		\end{tikzpicture}
		}
	\end{center}\vspace*{-5mm}
\caption{The extra tiles used in the proof of Proposition~\ref{prop:global-tiling-patching:NP-hard}.
	The tile set $\Theta_1$ consists of all these tiles (for each $c\in C$) and their rotations.
	The unlabeled edges are coloured with the blank symbol~$\blank$.
}\label{fig:global-tiling-patching:NP-hard:extra-tiles}
\end{figure}

\medskip
	Let $\alpha,\beta\colon\NN\to\NN$ be any two functions with polynomial growth.
	We show that the square tiling problem for $\Theta_0$ can be reduced in polynomial time to the global tiling patching problem for $\Theta$, $\alpha$ and~$\beta$.  It will then follow that the latter problem is $\classNP$-hard.
	
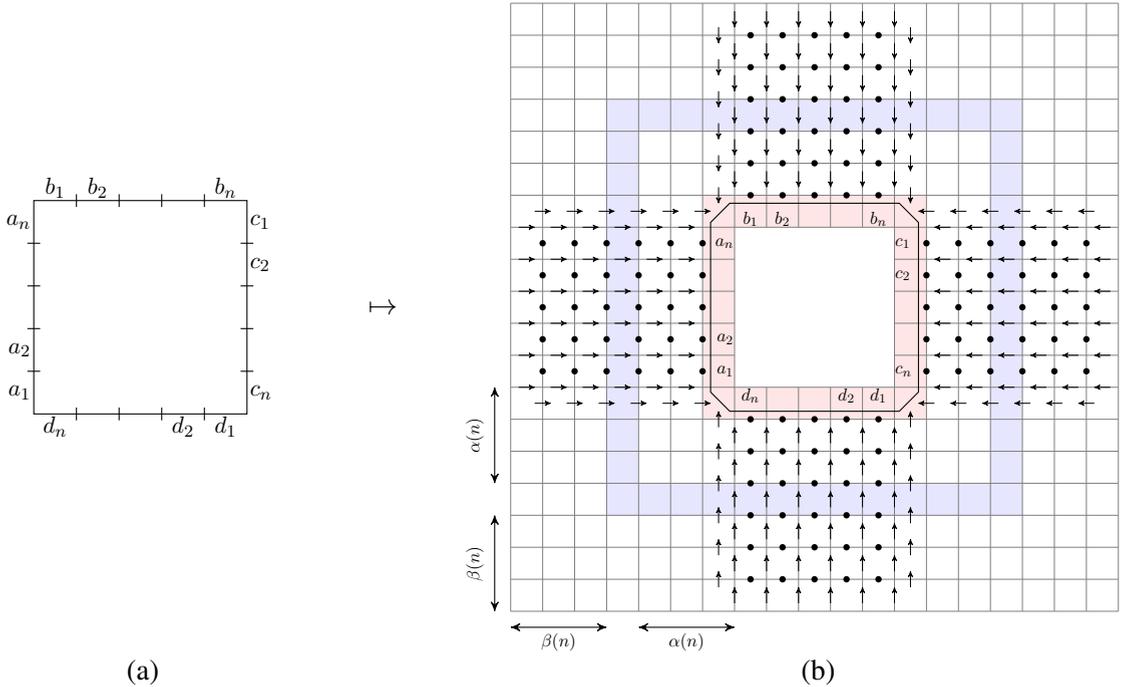
\begin{figure}[!h]
\vspace*{2mm}
	\begin{center}
		\begin{tabular}{ccc}
			\begin{minipage}{0.3\textwidth}
				\centering
				{%
				\begin{tikzpicture}[x=20pt,y=20pt,scale=0.8,every node/.style={scale=0.8}]
					\squaretilinginstance
				\end{tikzpicture}
				}
			\end{minipage}
			& $\mapsto$ &
			\begin{minipage}{0.63\textwidth}
				\centering
				{%
				\begin{tikzpicture}[x=20pt,y=20pt,scale=0.6,every node/.style={scale=0.6}]
					\squaretilinginstancetransformed
				\end{tikzpicture}
				}
			\end{minipage}
			\\ [11em]
			(a) & & (b)
		\end{tabular}
	\end{center}\vspace*{-5mm}
\caption{The reduction in the proof of Proposition~\ref{prop:global-tiling-patching:NP-hard}.
	(a) An instance of the square tiling problem, where $a_i,b_i,c_i,d_i$ are colours from $C$.
	(b) An instance of the global tiling patching problem.  The colour of the edges marked with $\bullet$ (not depicted) match accordingly.
}\label{fig:global-tiling-patching:NP-hard:reduction}\vspace*{-2mm}
\end{figure}

	The reduction is illustrated in Figure~\ref{fig:global-tiling-patching:NP-hard:reduction}.
	An instance of the square tiling problem for $\Theta_0$ is given by prescribing colours from $C$ for the boundary of the square $S_n$ for some $n$ as in Figure~\ref{fig:global-tiling-patching:NP-hard:reduction}a.
	This is transformed into a locally admissible tiling of
	the region $\Moore^{\alpha(n)+\beta(n)+1}(S_n)\setminus S_n$ with tiles from~$\Theta$ as depicted in Figure~\ref{fig:global-tiling-patching:NP-hard:reduction}b.  Let us call the latter tiling $q$.  Observe that the restriction of $q$ to the band
	$\partial\Moore(\Moore^{\alpha(n)}(S_n))$ (the region shaded in blue) enforces the tiling of the band
	$\partial\Moore(S_n)$ (the region shaded in red),
	which in turn fixes the colouring of the boundary of the square $S_n$.  Also observe that if $\tilde{q}$ is a locally admissible extension of $q$ to
	$\Moore^{\alpha(n)+\beta(n)+1}(S_n)$,
	then $\tilde{q}$ is in fact globally admissible (can be extended to the entire plane).
	
	Let $G(q)$ be an answer to the global tiling patching problem for $\Theta$ with the tiling of Figure~\ref{fig:global-tiling-patching:NP-hard:reduction}b as an input.
	
	First, suppose that the there is a locally admissible tiling of $S_n$ using the tiles from~$\Theta_0$ that achieves the prescribed colouring in Figure~\ref{fig:global-tiling-patching:NP-hard:reduction}a.
	It is easy to see that in this case, the partial tiling given in Figure~\ref{fig:global-tiling-patching:NP-hard:reduction}b is globally admissible.  Thus, $G(q)$ will be globally admissible and will agree with $q$ on $\partial\Moore(\Moore^{\alpha(n)}(S_n))$.  In particular, by the above remark, $G(q)_{S_n}$ will be a locally admissible tiling of $S_n$ with $\Theta_0$ compatible with the prescribed colouring of Figure~\ref{fig:global-tiling-patching:NP-hard:reduction}a.
	Conversely, assume that the prescribed colouring of Figure~\ref{fig:global-tiling-patching:NP-hard:reduction}a cannot be achieved by a locally admissible tiling of $S_n$ using~$\Theta_0$.  Then, either $G(q)$ is not locally admissible, or $G(q)$ and $q$ disagree on the band $\partial\Moore(S_n)$.
	
\medskip
	We see that, in either case, the answer to the square tiling problem for $\Theta_0$ on Figure~\ref{fig:global-tiling-patching:NP-hard:reduction}a can easily (in polynomial time) be extracted from $G(q)$.
\end{proof}

The proof of Theorem~\ref{thm:hard-stabilisation} is obtained by putting together Propositions~\ref{prop:serial-simulation} and~\ref{prop:global-tiling-patching:NP-hard}.

\subsubsection*{Proof of Theorem~\ref{thm:hard-stabilisation}.}\vspace*{-1mm}
	Let $X\neq\varnothing$ be the SFT of the valid tilings of the tile set $\Theta$ in Proposition~\ref{prop:global-tiling-patching:NP-hard}.  Suppose there exists a CA $F$ that stabilises $X$ in time $\tau(n)=\bigo(n^k)$ for some $k\in\NN$.  By Proposition~\ref{prop:serial-simulation}, this implies that there exists a Turing machine solving the global tiling patching problem associated to $\Theta$ and $\alpha(n)\isdef\beta(n)\isdef r\tau(n)$ (where $r$ is the neighbourhood radius of $F$) in polynomial time. Unless $\classP=\classNP$, this is a contradiction.\QED
\eject

\begin{remark}
	In principle, Proposition~\ref{prop:serial-simulation} may also lead to other hardness results.  For instance, if we have a tile set for which solving the global tiling patching problem with a Turing machine (with a two-dimensional tape) requires more than $\bigo(n^3)$ time steps, then it follows from Proposition~\ref{prop:serial-simulation} that there is no CA that stabilises the valid tilings of that tile set in linear time.
	\hfill\remarkqed
\end{remark}

\section{Stability against random noise}
\label{sec:random-noise}

A true fault-tolerant system should involve efficient error-correction mechanisms so as to maintain its structure and functionality in presence of random noise.  In this scenario, noise is present everywhere and throughout the evolution of the system.  Such level of stability is achieved by Toom's {$\toom$} CA and its variants~\cite{Too80} (Examples~\ref{exp:toom} and~\ref{exp:toom:noise}) and G\'acs's (very sophisticated) reliable CA~\cite{Gac86,Gac01}, but appears very challenging in general.

As a modest intermediate step, in this section we consider the problem of self-stabilisation starting from tilings that are perturbed by random noise. (Hence, noise is present only in the initial configuration but not at every time step.)
We show that if a CA stabilises from finite perturbations in linear time, then it also stabilises from Bernoulli random perturbations with a sufficiently low density of errors.
The argument is based on the idea of sparseness due to G\'acs~\cite{GaRe88,Gac86,Gac01} and Durand, Romashchenko and Shen~\cite{DuRoSh12}.
In this section, we restrict ourselves to the case of stabilisation by deterministic CA and leave the corresponding problem for probabilistic CA open.

\subsection{Formulation and examples}

Let us start by extending the notion of self-stabilisation to the case where the tiling is perturbed with random noise.  This is not as straightforward as one might hope for, and there are indeed several variants depending on the requirements and the type of noise.  Here we use one such possible notion.

\medskip
Consider a configuration $x\colon\ZZ^d\to\Sigma$.
Let $\varepsilon\geq 0$.
An \emph{$\varepsilon$-perturbation} of $x$ in $\Sigma^{\ZZ^d}$ is a random configuration $\tilde{\RV{x}}$ in $\Sigma^{\ZZ^d}$ with the property that for each finite set $I\subseteq\ZZ^d$, we have
\begin{align}
	\label{eq:eps-perturb}
	\xPr\big(\Delta(x,\tilde{\RV{x}})\supseteq I\big) =
	\xPr(\text{$\tilde{\RV{x}}_i\neq x_i$ for each $i\in I$}) &\leq \varepsilon^{\abs{I}} \;.
\end{align}
We think of $\tilde{\RV{x}}$ as a ``noisy version'' of $x$ where random errors have occurred leading to a change in the state of some cells.
A special type of an $\varepsilon$-perturbation is a \emph{Bernoulli $\varepsilon$-perturbation} for which the set $\Delta(x,\tilde{\RV{x}})$ is a Bernoulli random set with parameter~$\varepsilon$, that is, a random subset of $\ZZ^d$ in which each cell $k\in\ZZ^d$ is included with probability~$\varepsilon$ independently of the other cells.
The notion of $\varepsilon$-perturbation is however much more general.
It turns out that the usual arguments regarding noise-resilience in computational models often work equally well with this more general notion of noise.
This was first observed by Toom~\cite{Too80}.

\medskip
The \emph{(Besicovitch) density} of a set $A\subseteq\ZZ^d$ is defined as
\begin{align}
	\updensity(A) &\isdef \limsup_{n\to\infty} \frac{A\cap\{-n,-n+1,\ldots,n\}^d}{(2n+1)^d} \;.
\end{align}

\paragraph{Self-stabilisation from random perturbations.}
Let $\varepsilon,\delta\geq 0$.
We say that a CA $F\colon\Sigma^{\ZZ^d}\to\Sigma^{\ZZ^d}$ \emph{$\delta$-stabilises} an SFT $X\subseteq\Sigma^{\ZZ^d}$ \emph{from $\varepsilon$-perturbations}~if
\begin{enumerate}[label={\textrm{(\roman*)}},ref=\roman*]
	\item (\emph{consistency}) the configurations of $X$ are fixed points, that is, $F(x)=x$ for every $x\in X$,
	\item for every $x\in X$ and every $\varepsilon$-perturbation $\tilde{\RV{x}}$ of $x$,
		\begin{enumerate}[label={(\theenumi.\alph*)},leftmargin=3.5em]
			\item (\emph{attraction})
				$F^t(\tilde{\RV{x}})$ converges almost surely in the product topology to a (random) configuration $\RV{y}$ from $X$,
			\item (\emph{stability})
				The density of the disagreements between $\RV{y}$ and $x$
				is almost surely less than~$\delta$, that is,
				$\updensity\big(\Delta(x,\RV{y})\big)<\delta$.
		\end{enumerate}
\end{enumerate}
We say that $F$ stabilises $X$ \emph{from random perturbations} if for every $\delta>0$, there exists an $\varepsilon>0$ such that $F$ $\delta$-stabilises $X$ from $\varepsilon$-perturbations.
The stability condition may sound unnecessary at first sight.  The following example illustrates why it is a desired property.

\begin{example}[Unstable attraction]
	Let $F\colon\{\symb{0},\symb{1}\}^\ZZ\to\{\symb{0},\symb{1}\}^\ZZ$ be the one-dimensional CA with neighbourhood $\Neighb=\{-1,0,1\}$ defined by
	\begin{align}
		F(x)_k &\isdef
			\begin{cases}
				\symb{0}	& \text{if $x_{k-1}=x_k=x_{k+1}=\symb{0}$,} \\
				\symb{1}	& \text{otherwise.}
			\end{cases}
	\end{align}
	As usual, let $\Hom_2\isdef\{\unifO,\unifI\}$ be the finite tiling space consisting only of the two homogeneous configurations in~$\{\symb{0},\symb{1}\}^\ZZ$.
	Then, $F$ satisfies the conditions of consistency and attraction in the definition of stabilisation from random perturbations.
	However, note that if $\RV{x}$ is a non-trivial Bernoulli perturbation of~$\unifO$, then $F^t(\RV{x})$ converges almost surely to~$\unifI$ rather than to~$\unifO$.
	In other words, the system recovers from the initial noise but the distinction between the elements of $X$ is completely lost.
	\hfill\exampleqed
\end{example}

Before stating our result, let us mention some examples.

\begin{example}[Toom's CA; Continuation of Example~\ref{exp:toom}]
\label{exp:toom:noise}
	Toom's CA has strong forms of stability against noise.
	
\medskip
	Bu\v{s}i\'{c}, Fat\`es, Mairesse and Marcovici have shown that Toom's CA \emph{classifies} the Bernoulli random configurations according to their \emph{density}~\cite{BuFaMaMa13}.  Namely, if we start from a Bernoulli random configuration $\RV{x}$ with parameter~$p$ (i.e., the state of different cells are chosen independently at random with probability~$p$ of choosing~$\symb{1}$ and probability~$1-p$ of choosing~$\symb{0}$), then
	\begin{align}
		\toom^t(\RV{x}) &\to \unifO		\qquad\text{if $p<\nicefrac{1}{2}$,} \\
		\toom^t(\RV{x}) &\to \unifI		\qquad\text{if $p>\nicefrac{1}{2}$.}
	\end{align}
	In particular, for $\delta=0$ and every $\varepsilon<\nicefrac{1}{2}$, the CA $\delta$-stabilises the homogeneous tiling space $\Hom_2\isdef\{\unifO,\unifI\}$ from Bernoulli $\varepsilon$-perturbations.
	
\medskip
	Moreover, the two homogeneous configurations $\unifO$ and $\unifI$ remain stable under $\toom$ even if there is a small independent noise at every time step.
	In his original paper~\cite{Too80}, Toom showed that for every $\delta>0$, there exists an~$\varepsilon>0$ such that if $\big(\RV{x}^{(t)}_k\big)_{k\in\ZZ^2,t\in\NN}$ is an $\varepsilon$-perturbed trajectory of $\toom$ starting from $\unifO$ in the sense that for every finite set $I\subseteq\ZZ^2\times\NN$ of space-time positions, we have
	\begin{align}
		\xPr\big\{\text{$\RV{x}^{(s)}_i$ deviates from the local rule of $\toom$ at every $(i,s)\in I$}\big\}
			&\leq \varepsilon^{\abs{I}} \;,
	\end{align}
	then this trajectory remains $\delta$-close to its initial configuration~$\unifO$, that is,
	\begin{align}
		\xPr\big(\RV{x}^{(t)}_k\neq \symb{0}\big) &\leq\delta \qquad\text{for each $(k,t)\in\ZZ^2\times\NN$.}
	\end{align}
	In fact, Toom proved the same kind of stability for random perturbations of all monotonic eroders.
	Note that for $\toom$, a similar stability property holds by symmetry for the trajectories starting from~$\unifI$.~\hfill\exampleqed
\end{example}

\begin{example}[GKL and modified~traffic; Continuation of Example~\ref{exp:gkl-mtraffic}]
	Using the sparseness result of Durand, Romashchenko and Shen~\cite{DuRoSh12}, Taati showed that $\gkl$ and modified traffic stabilise the homogeneous tiling space $\Hom_2\isdef\{\unifO,\unifI\}$ from random perturbations~\cite{Taa15}.  Below, we use the same sparseness result to obtain a more general stabilisation result.
	\hfill\exampleqed
\end{example}

\begin{example}[Continuous-time~$\MRIE$; Continuation of Example~\ref{exp:MRIE:continuous-time:finite}]
\label{exp:MRIE:continuous-times}
	The notion of stabilisation from random perturbations extends naturally to probabilistic CA and to continuous-time models.
	Fontes, Schonmann and Sidoravicius~\cite{FoScSi02} proved that, in dimension two and higher, the continuous-time~$\MRIE$ of Example~\ref{exp:MRIE:continuous-time:finite} stabilises the homogeneous tiling space $\Hom_2\isdef\{\unifO,\unifI\}$ from random perturbations.
	We conjecture that a similar result holds for the discrete-time version discussed in Section~\ref{sec:pca:finite}.
	\hfill\exampleqed
\end{example}

\subsection{Sparseness and the general result}

Our objective is to prove the following result.

\begin{theorem}[Self-stabilisation from random perturbations]
\label{thm:stabilisation:random-perturbation:linear}
	Let $F\colon\Sigma^{\ZZ^d}\to\Sigma^{\ZZ^d}$ be a CA and $X\subseteq\Sigma^{\ZZ^d}$ an SFT.
	If $F$ stabilises $X$ from finite perturbations in linear time, then $F$ also stabilises~$X$ from random perturbations.
\end{theorem}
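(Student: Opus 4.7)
The plan is to exploit the \emph{sparseness decomposition} of random noise, following the method of G\'acs and Durand--Romashchenko--Shen~\cite{DuRoSh12}. The intuition is that an $\varepsilon$-perturbation, for $\varepsilon$ small, consists almost surely of an error set $E \isdef \Delta(x,\tilde{\RV{x}})$ that can be organised into a hierarchy of well-separated ``islands'' at geometrically growing scales, and that the linear-time correction property ensures that islands at each scale are resolved inside their own protected zone before they can interact with anything else.

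More precisely, I would first invoke the DRS sparseness theorem: there exist geometrically increasing sequences $\ell_0<\ell_1<\cdots$ (sizes) and $L_0<L_1<\cdots$ (separations) with $L_k/\ell_k\to\infty$, together with constants $c_0,c_1>0$ such that for all sufficiently small $\varepsilon$, the error set $E$ of any $\varepsilon$-perturbation admits, almost surely, a decomposition $E=\bigsqcup_{k\ge 0} E_k$ where $E_k$ is a disjoint union of clusters $C_k^{(j)}$ of diameter at most $\ell_k$, every two such clusters at levels $\ge k$ are at distance at least $L_k$, and moreover $\xPr(i\in\bigcup_{k'\ge k}E_{k'})\le c_0(c_1\varepsilon)^{2^k}$ for every fixed cell $i\in\ZZ^d$. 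The parameters $\ell_k,L_k$ are chosen in advance of the CA and depend only on the lattice $\ZZ^d$.

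Next I would couple this decomposition with the hypothesis. Let $r$ be the neighbourhood radius of $F$ and let $\tau(n)\le Cn$ be the stabilisation time on finite perturbations of diameter $n$. Because $F$ is deterministic with radius $r$, the value of $F^t$ at a cell depends only on cells within distance $rt$. Choose the scales so that $L_k\ge (2rC+3)\ell_k$ for every~$k$. I would then prove by induction on $k$ the following statement: if $\tilde{\RV{x}}'$ denotes the configuration obtained by ``zeroing out'' (replacing by the underlying $x$) every error that belongs to levels $\ge k+1$, then after $T_k\isdef \tau(\ell_k)$ steps the CA has corrected every level-$\le k$ cluster of $\tilde{\RV{x}}'$ and the modifications it has made lie inside the $rT_k$-neighbourhoods of those clusters only. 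The separation condition $L_k\ge (2rC+3)\ell_k$ guarantees that these neighbourhoods are pairwise disjoint and disjoint from clusters at levels $\ge k+1$, so the corrections behave exactly as in the isolated finite-perturbation setting handled by the hypothesis. In particular, after $T_k$ steps every cell outside a bounded neighbourhood of $\bigcup_{k'\ge k+1} E_{k'}$ has already stabilised to its value in $x$.

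Taking $k\to\infty$ and using that the number of cells affected by levels $\ge k+1$ tends to $\varnothing$ almost surely (the scales grow geometrically while each cell has probability $\le c_0(c_1\varepsilon)^{2^{k+1}}$ of sitting in that high-level stratum), every cell fixates almost surely, yielding a limit random configuration $\RV{y}=\lim_{t\to\infty}F^t(\tilde{\RV{x}})$; the fixation forces $\RV{y}$ to satisfy the local constraints of $X$ (since any violated constraint in $\RV{y}$ would force some cell in its support to keep changing, by the consistency hypothesis), so $\RV{y}\in X$. For the stability bound, a fixed cell $i$ can differ from $x_i$ only if $i$ lies within distance $rC\ell_k$ of some level-$k$ cluster for some $k$; a union bound over levels using the DRS tail estimate gives $\xPr(\RV{y}_i\ne x_i)\le\sum_{k\ge 0} c_0'(c_1\varepsilon)^{2^k}$, which tends to $0$ as $\varepsilon\to 0$, so for any prescribed $\delta$ one can pick $\varepsilon$ small enough to achieve $\delta$-stability.

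The main obstacle is handling the interaction between the CA dynamics and the sparseness hierarchy correctly: one must verify that replacing high-level clusters by the ``clean'' value $x$ yields exactly the same trajectory inside the protected zones as if those clusters were not there at all. This is where the linearity of $\tau$ is essential, since a super-linear~$\tau$ would require choosing $L_k/\ell_k$ to grow at least like $\tau(\ell_k)/\ell_k$, which would clash with the fixed geometric schedule provided by DRS sparseness. Once the protected-zone argument is in place, the convergence and stability estimates follow by routine Borel--Cantelli and union-bound arguments.
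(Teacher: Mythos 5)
Your overall strategy coincides with the paper's: invoke the Durand--Romashchenko--Shen sparseness theorem with a separation function proportional to $r\tau(\ell)$ (this is exactly where linearity of $\tau$ enters, as you note), correct the islands recursively level by level inside protected zones using the speed-of-light bound, obtain the limit $\RV{y}$ by fixation, and get the stability estimate from a union bound over the territories. Your hierarchy of geometric scales $\ell_k, L_k$ and the device of ``zeroing out'' the higher levels are just a reparametrisation of the paper's $\rho$-islands (indexed by diameter) and of its ``forgetting/patching/correction'' induction, so the two arguments are essentially the same.

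There is, however, one step whose stated justification is wrong. You claim that $\RV{y}\in X$ because ``any violated constraint in $\RV{y}$ would force some cell in its support to keep changing, by the consistency hypothesis.'' Consistency only says that elements of $X$ are fixed points; it gives no information whatsoever about configurations outside $X$, which may well also be fixed points of $F$ (a CA stabilising $X$ from \emph{finite} perturbations can have spurious fixed points that are not finite perturbations of anything in $X$). Fixation of every cell therefore does not by itself force membership in $X$. The correct route --- and the one the paper takes --- is to prove that each partially patched configuration $\RV{y}^{(\ell)}$ actually lies in $X$, by checking that every window of the size of the interaction range $m$ of $X$ is either entirely contained in one corrected patch (where the configuration agrees with an element of $X$) or entirely outside all patches of that level (where it agrees with $\RV{y}^{(\ell-1)}\in X$); this is why the paper builds the additive margin $m$ into the separation function $\rho(\ell)=3r\tau(\ell)+m$. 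One then concludes $\RV{y}\in X$ because $\RV{y}=\lim_\ell\RV{y}^{(\ell)}$ and $X$ is closed in the product topology. Your framework supports exactly this fix (your glued corrections $F^{\tau(\ell_k)}$ of the isolated clusters are in $X$, and your separation is large enough to absorb $m$ at all but finitely many scales), but the argument as written does not establish $\RV{y}\in X$. A smaller slip of the same kind: in your induction you assert that cells outside the neighbourhoods of levels $\ge k+1$ have ``stabilised to their value in $x$,'' whereas they stabilise to their value in the patched configuration, which may differ from $x$ inside the territories of the lower-level islands --- your own stability estimate in the final paragraph implicitly corrects this.
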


A proof of Theorem~\ref{thm:stabilisation:random-perturbation:linear} will be provided in the next subsection.  Here we discuss the idea of the proof and its main ingredient, namely the notion of sparseness.
Let us remark that a recent result of G\'acs~\cite{Gac20} would allow us to improve the above result by replacing the ``linear time'' condition with the ``sub-quadratic time'' condition.
See Remark~\ref{rem:sparseness:sub-quadratic} below for more details.

A basic idea in the proof of Theorem~\ref{thm:stabilisation:random-perturbation:linear} is that the speed of the propagation of information in a cellular automaton is bounded.

\begin{observation}[Speed of light]
\label{obs:speed-of-light}
	Let $F\colon\Sigma^{\ZZ^d}\to\Sigma^{\ZZ^d}$ be a CA with neighbourhood $\Neighb\isdef\{-r,\ldots,r\}^d$.
	Then, for every $A\subseteq\ZZ^d$ and $t\geq 0$, and every two configurations $x,y\in\Sigma^{\ZZ^d}$, we have
	\begin{enumerate}[label={\textup{(\alph*)}}]
		\item If $x_{\Neighb^t(A)}=y_{\Neighb^t(A)}$, then $F^t(x)_A=F^t(y)_A$.
		\item If $\Delta(x,y)\subseteq A$, then $\Delta\big(F^t(x),F^t(y)\big)\subseteq \Neighb^t(A)$.
	\end{enumerate}
\end{observation}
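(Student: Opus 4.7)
The plan is a straightforward induction on $t$, exploiting the fact that one step of $F$ depends only on the $\Neighb$-neighbourhood of each cell.

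First, I will prove part (a) by induction on $t\geq 0$. The base case $t=0$ is trivial since $\Neighb^0(A)=A$ and $F^0$ is the identity. For the inductive step, suppose (a) holds for~$t$, and assume $x_{\Neighb^{t+1}(A)}=y_{\Neighb^{t+1}(A)}$. For each $i\in A$, the local rule gives $F^{t+1}(x)_i=f\big(F^t(x)_{i+\Neighb}\big)$ and similarly for~$y$, so it suffices to verify that $F^t(x)_{A+\Neighb}=F^t(y)_{A+\Neighb}$. Applying the inductive hypothesis with $A'\isdef A+\Neighb$, this follows from $x_{\Neighb^t(A+\Neighb)}=y_{\Neighb^t(A+\Neighb)}$, and the identity $\Neighb^t(A+\Neighb)=\Neighb^{t+1}(A)$ closes the induction.

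For part (b), I will derive it directly from (a) by contrapositive, using the symmetry $\Neighb=-\Neighb$ (which holds because $\Neighb=\{-r,\ldots,r\}^d$). Suppose $\Delta(x,y)\subseteq A$ and let $i\notin \Neighb^t(A)=A+\Neighb^t$. Then $(i+\Neighb^t)\cap A=\varnothing$ (otherwise $i\in A-\Neighb^t=A+\Neighb^t$), which means $\Neighb^t(\{i\})\cap\Delta(x,y)=\varnothing$, so $x_{\Neighb^t(\{i\})}=y_{\Neighb^t(\{i\})}$. Applying part~(a) with $A'\isdef\{i\}$ gives $F^t(x)_i=F^t(y)_i$, that is, $i\notin\Delta(F^t(x),F^t(y))$.

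There is no real obstacle here; the only thing to be careful about is making sure the set arithmetic with $\Neighb^t$ is done correctly (in particular the use of symmetry of $\Neighb$ in part (b)), and keeping track that $\Neighb^t$ is defined as the $t$-fold Minkowski sum so that $\Neighb^t(A+\Neighb)=\Neighb^{t+1}(A)$.
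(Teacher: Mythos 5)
Your proof is correct: the induction for (a) and the contrapositive derivation of (b) from (a) via the symmetry $\Neighb=-\Neighb$ are exactly the standard argument, and the set identity $\Neighb^t(A+\Neighb)=\Neighb^{t+1}(A)$ is handled properly. The paper states this as an observation without proof, and what you have written is precisely the argument it implicitly relies on.
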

\eject

Suppose that a CA $F\colon\Sigma^{\ZZ^d}\to\Sigma^{\ZZ^d}$ stabilises an SFT $X\subseteq\Sigma^{\ZZ^d}$ from finite perturbations.
Since information propagates at bounded speed, this immediately implies that $F$ stabilises starting from any perturbation in which the errors are ``sufficiently sparse''.  Indeed, if the set of errors in a perturbation~$\tilde{x}$ can be decomposed into well-separated finite islands, then under the iterations of $F$, each island will disappear before sensing or affecting the other islands (see Figure~\ref{fig:sparse-error-islands}).

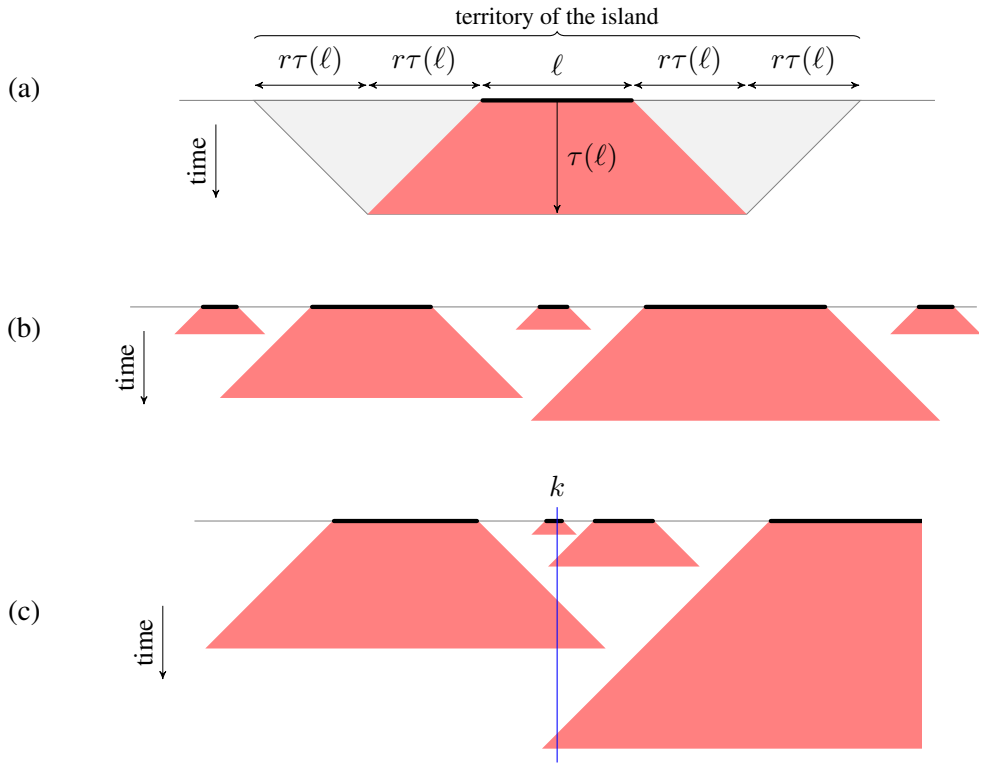
\begin{figure}[!h]
\vspace*{-2mm}
	\begin{center}
		\begin{tabular}{c@{\quad\qquad}c}
		(a) &
				{
		\begin{tikzpicture}[scale=1,>=stealth',shorten >=0.5pt,shorten <=0.5pt,baseline=(current bounding box.center)]
			\useasboundingbox (-5,-1.5) rectangle (5,1.6);
			\fill[fill=gray!10] (-4,0) -- (-2.5,-1.5) -- (2.5,-1.5) -- (4,0) -- cycle;
			\draw[help lines] (-4,0) -- (-2.5,-1.5) -- (2.5,-1.5) -- (4,0) -- cycle;
			\fill[fill=red!50] (-1,0) -- (1,0) -- (2.5,-1.5) -- (-2.5,-1.5) -- cycle;
			\draw[help lines] (-5,0) -- (5,0);
			\draw[ultra thick, line cap=round] (-1,0) -- (1,0);
			\draw[->,very thin,overlay] (0,0) -- (0,-1.5) node[midway,right] {$\tau(\ell)$};
			\draw[<->,very thin] (-1,0.2) -- (1,0.2) node[midway,above] {$\ell$};
			\draw[<->,very thin] (-4,0.2) -- (-2.5,0.2) node[midway,above] {$r\tau(\ell)$};
			\draw[<->,very thin] (-2.5,0.2) -- (-1,0.2) node[midway,above] {$r\tau(\ell)$};
			\draw[<->,very thin] (1,0.2) -- (2.5,0.2) node[midway,above] {$r\tau(\ell)$};
			\draw[<->,very thin] (2.5,0.2) -- (4,0.2) node[midway,above] {$r\tau(\ell)$};
			
			\draw[->,overlay,very thin] (-4.5,-0.3) -- node[above,rotate=90] {\small time} (-4.5,-1.3);
			
			\draw[decorate, decoration={brace,amplitude=3pt},thin]
				(-4,0.8) -- node[above] {\footnotesize territory of the island} (4,0.8);
		\end{tikzpicture}
		}
		\bigskip \\
		
		(b) &
		
		{%
		\begin{tikzpicture}[scale=0.8,>=stealth',shorten >=0.5pt,shorten <=0.5pt,baseline=(current bounding box.center)]
			\begin{scope}
			\clip (-7,-2) rectangle (7,1);
			\useasboundingbox (-7,-2) rectangle (7,0.5);
			
			\draw[help lines] (-7,0) -- (7,0);
			\fill[fill=red!50] (-4,0) -- (-5.5,-1.5) -- (-0.5,-1.5) -- (-2,0) -- cycle;
			\draw[ultra thick, line cap=round] (-4,0) -- (-2,0);

			\fill[fill=red!50] (1.5,0) -- (-0.375,-1.875) -- (6.375,-1.875) -- (4.5,0) -- cycle;
			\draw[ultra thick, line cap=round] (1.5,0) -- (4.5,0);
			
			\fill[fill=red!50] (-0.25,0) -- (-0.625,-0.375) -- (0.625,-0.375) -- (0.25,0) -- cycle;
			\draw[ultra thick, line cap=round] (-0.25,0) -- (0.25,0);
			
			\fill[fill=red!50] (-5.8,0) -- (-6.25,-0.45) -- (-4.75,-0.45) -- (-5.2,0) -- cycle;
			\draw[ultra thick, line cap=round] (-5.8,0) -- (-5.2,0);
			
			\fill[fill=red!50] (6,0) -- (5.55,-0.45) -- (7.05,-0.45) -- (6.6,0) -- cycle;
			\draw[ultra thick, line cap=round] (6,0) -- (6.6,0);
			\end{scope}
			
			\draw[->,overlay,very thin] (-6.75,-0.375) --
				node[above,rotate=90] {\small time} (-6.75,-1.625);
			
		\end{tikzpicture}
		}
		\bigskip \\
		
		(c) &
		
		{%
		\begin{tikzpicture}[scale=0.8,>=stealth',shorten >=0.5pt,shorten <=0.5pt,baseline=(current bounding box.center)]
			\begin{scope}
			\clip (-6,-4.25) rectangle (6,1);
			
			\draw[help lines] (-6,0) -- (6,0);
			
			\fill[fill=red!50] (-0.2,0) -- (-0.425,-0.225) -- (0.325,-0.225) -- (0.1,0) -- cycle;
			\draw[ultra thick, line cap=round] (-0.2,0) -- (0.1,0);
			
			\fill[fill=red!50] (0.6,0) -- (-0.15,-0.75) -- (2.35,-0.75) -- (1.6,0) -- cycle;
			\draw[ultra thick, line cap=round] (0.6,0) -- (1.6,0);
			
			\fill[fill=red!50] (-3.7,0) -- (-5.8,-2.1) -- (0.8,-2.1) -- (-1.3,0) -- cycle;
			\draw[ultra thick, line cap=round] (-3.7,0) -- (-1.3,0);
			
			\fill[fill=red!50] (3.5,0) -- (-0.25,-3.75) -- (12.25,-3.75) -- (8.5,0) -- cycle;
			\draw[ultra thick, line cap=round] (3.5,0) -- (8.5,0);
			
			\draw[thin,blue] (0,0.25) node[above,black] {$k$} -- (0,-4);
			\end{scope}
			
			\draw[->,overlay,very thin] (-6.5,-1.375) -- node[above,rotate=90] {\small time} (-6.5,-2.625);
		\end{tikzpicture}
		}
		\end{tabular}
	\end{center}\vspace*{-3mm}
\caption{Stabilisation from a sparse set of errors.
	(a)~An isolated island of errors and its territory.
	(b)~Well-separated islands of errors disappear before sensing or affecting one another.
	(c)~The disappearance of all the islands of error is not sufficient for stabilisation.
}\label{fig:sparse-error-islands}\vspace*{-1mm}
\end{figure}

More specifically, suppose that $F$ stabilises $X$ in time $\tau(n)$.
Suppose that $F$ has a neighbourhood radius~$r$ and $X$ has an interaction range~$m$.
Let $\tilde{x}\in\Sigma^{\ZZ^d}$ be a (possibly infinite) perturbation of a configuration $x\in X$, and let $S\isdef\Delta(x,\tilde{x})$ denote the set of positions at which an error has occurred.
Let us call a subset $A\subseteq S$ with diameter~$\ell$ an \emph{isolated island} if $A$ is at distance more than $2r\tau(\ell)+m$ from $S\setminus A$, where $r$ denotes the neighbourhood radius of~$F$ (Figure~\ref{fig:sparse-error-islands}a).  Note that under the iterations of~$F$ on~$\tilde{x}$, every isolated island of $S$ is corrected before interacting with the rest of~$S$.  Let $S'\subseteq S$ be the subset of $S$ obtained by removing all its isolated islands.  Then, again under the iterations of~$F$ on $\tilde{x}$, every isolated island of $S'$ is corrected before interacting with the rest of~$S'$.  This reasoning can be repeated to identify an infinite hierarchy of islands in~$S$, each of which is corrected without interacting with the rest of~$S$.  If every element of $S$ is included in one of these islands, then this means that every error on~$\tilde{x}$ is eventually patched (Figure~\ref{fig:sparse-error-islands}b).  However, note that this alone does not guarantee attraction towards an element of $X$, for it is possible that a cell~$k\in\ZZ^d$ is within the interaction range of infinitely many islands of~$S$ (Figure~\ref{fig:sparse-error-islands}c) and hence never stabilises.

To make this idea precise, we need to introduce the notion of sparseness.

\medskip
Let $\rho\colon\NN\to\NN$ be an arbitrary function, and let $\Moore\isdef\{-1,0,1\}^d$ be the Moore neighbourhood.  We define the \emph{$\rho$-territory} of a finite set $A\subseteq\ZZ^d$ as the set $\Moore_\rho(A)\isdef\Moore^{\rho(\diam(A))}$ of all cells $k$ that are within $\rho(\diam(A))$ from $A$.
(We use the Moore neighbourhood and the $\ell^\infty$ distance on~$\ZZ^d$, but this is an arbitrary choice.)
We say that a set $S\subseteq\ZZ^d$ is \emph{$\rho$-sparse} if there is a partitioning $\family{C}(S)$ of $S$ into finite sets, called the \emph{$\rho$-islands} of $S$, such that
\begin{enumerate}[label={(\roman*)}]
	\item (separation) every two distinct islands $A,B\in\family{C}(S)$ are \emph{well separated} from each other, that is, either $A\cap \Moore_\rho(B)=\varnothing$ or $\Moore_\rho(A)\cap B=\varnothing$.
	\item (thinness) every cell $a\in\ZZ^d$ is in the territory of no more than finitely many islands, that is, $\{C\in\family{C}(S): \Moore_\rho(C)\ni a\}$ is finite.
\end{enumerate}

The significance of the concept of sparseness in the context of fault-tolerant computation was noticed by G\'acs~\cite{GaRe88,Gac86,Gac01}, who used more sophisticated variants of it
in the scenario in which errors due to noise occur at every time step.
The above definition of sparseness is equivalent to the one introduced by Durand, Romashchenko and Shen~\cite{DuRoSh12}, who used it in a context very similar to (but different from) ours.

\medskip
Let $\varepsilon>0$.  By an \emph{$\varepsilon$-random} subset of $\ZZ^d$, we shall mean a random set $\RV{S}\subseteq\ZZ^d$ with the property that for each finite $I\subseteq\ZZ^d$,
\begin{align}
	\xPr(\RV{S}\supseteq I) &\leq \varepsilon^{\abs{I}} \;.
\end{align}
Durand, Romashchenko and Shen~\cite[Section~9.2]{DuRoSh12} proved the following.

\begin{theorem}[Linear sparseness of $\varepsilon$-random sets]
\label{thm:sparseness:linear}
	Let $\rho\colon\NN\to\NN$ be such that $\rho(\ell)=\bigo(\ell)$ as $\ell\to\infty$.
	For every $\delta>0$, there exists an $\varepsilon>0$ such that
	\begin{enumerate}[label={\textup{(\roman*)}}]
		\item \label{item:thm:sparseness:linear:sparseness} every $\varepsilon$-random set $\RV{S}\subseteq\ZZ^d$ is almost surely $\rho$-sparse,
		\item \label{item:thm:sparseness:linear:rareness} for every $\varepsilon$-random set $\RV{S}\subseteq\ZZ^d$,
			the density of the union of the $\rho$-territories of all the $\rho$-islands in~$\RV{S}$
			is almost surely less than~$\delta$.
	\end{enumerate}
\end{theorem}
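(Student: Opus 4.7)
\medskip
\noindent\textbf{Proof plan.} The plan is to construct the partition into islands by a multi-scale hierarchical clustering procedure, and then to estimate the probability of the bad events using the product-type bound $\xPr(\RV{S}\supseteq I)\leq \varepsilon^{\abs{I}}$ at every scale. Fix a constant $C>1$ large enough that $\rho(\ell)\leq \tfrac{1}{3}C\ell$ for all sufficiently large~$\ell$ (possible since $\rho$ is linearly bounded), and set $\ell_k\isdef C^k \ell_0$ for some large starting scale~$\ell_0$. At scale~$k$, call a nonempty finite set $A\subseteq\ZZ^d$ a \emph{pre-island of rank~$k$} if $\diam(A)\leq \ell_k$ and $A$ consists of at least $2^k$ points of $\RV{S}$. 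The hierarchy is built bottom-up: at rank~$0$ every point of $\RV{S}$ is a pre-island, and at rank $k\geq 1$ one merges any two rank-$(k-1)$ pre-islands whose $\rho$-territories intersect, provided the resulting set still fits in a hypercube of side~$\ell_k$. A connected component of~$\RV{S}$ that never gets further merged is declared an island of~$\family{C}(\RV{S})$.

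First I would check that the collection $\family{C}(\RV{S})$ constructed this way is a partition of $\RV{S}$ satisfying the separation axiom: if two distinct islands $A,B\in\family{C}(\RV{S})$ had $\Moore_\rho(A)\cap B\neq\varnothing$ and $A\cap\Moore_\rho(B)\neq\varnothing$, then with $\diam(A)\leq\ell_j$ and $\diam(B)\leq\ell_k$ (say $j\leq k$), the union would fit in a cube of side at most $\diam(A)+\diam(B)+\rho(\max(\diam(A),\diam(B)))\leq \ell_{k+1}$ by the choice of~$C$, so they would have been merged at stage~$k+1$. The thinness axiom will follow from the estimate in the next step, since almost surely every cell has only finitely many islands nearby.

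The core probabilistic estimate is the following: for a fixed cell~$k_0\in\ZZ^d$ and each $k\geq 0$, the event that $k_0$ lies in the $\rho$-territory of some rank-$k$ pre-island has probability at most $p_k\isdef N_k\,\varepsilon^{2^k}$, where $N_k=\bigo\big((\ell_k+\rho(\ell_k))^d\cdot\binom{(\ell_k)^d}{2^k}\big)$ counts the possible positions and internal configurations of such a pre-island within the $\rho$-territory surrounding~$k_0$. Choosing $\varepsilon$ small enough (depending on $C$, $d$, and~$\ell_0$), the sequence $p_k$ decays doubly-exponentially in~$k$, so $\sum_k p_k$ is finite and can be made arbitrarily small by taking $\varepsilon\to 0$. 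By Borel–Cantelli, almost surely $k_0$ is in the $\rho$-territory of only finitely many ranks, which (together with the fact that finitely many ranks produce finitely many candidate islands in a bounded region) gives thinness and hence $\rho$-sparseness: this proves~\ref{item:thm:sparseness:linear:sparseness}. For~\ref{item:thm:sparseness:linear:rareness}, note that the event ``$k_0$ lies in the $\rho$-territory of some island'' is contained in the union over $k\geq 0$ of the events above, so its probability is at most $\sum_k p_k$, which we can push below any prescribed $\delta>0$ by shrinking~$\varepsilon$.

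The main obstacle is carefully tuning the constants so that the estimate $p_k\leq N_k\,\varepsilon^{2^k}$ actually decays: one needs the factor $\varepsilon^{2^k}$ (coming from the bound on having $2^k$ prescribed points in~$\RV{S}$) to dominate the combinatorial growth $N_k\sim (\ell_k)^{d\cdot 2^k}=C^{dk\cdot 2^k}(\ell_0)^{d\cdot 2^k}$. This reduces to requiring $\varepsilon \cdot C^{dk}(\ell_0)^d < \lambda < 1$ for each $k$, which is false uniformly in~$k$ because of the factor $C^{dk}$; the correct way around this is to define pre-islands of rank~$k$ as containing \emph{all $2^k$ points within a cube of side $\ell_k$ but not contained in any smaller rank-$k$ cluster}, so the combinatorial cost is $(\ell_k/\ell_{k-1})^{d\cdot 2^k}=C^{d\cdot 2^k}$ rather than $C^{dk\cdot 2^k}$, restoring the double-exponential decay. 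This rank-by-rank accounting, where one must charge each merging event to a genuinely new configuration of~$\RV{S}$, is the technically delicate ingredient in the argument.
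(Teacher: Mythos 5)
First, a point of comparison: the paper does not actually prove this theorem --- it is quoted from Durand, Romashchenko and Shen \cite[Section~9.2]{DuRoSh12}, the only added remark being that their proof uses nothing beyond the upper bound $\xPr(\RV{S}\supseteq I)\leq\varepsilon^{\abs{I}}$ and hence applies to general $\varepsilon$-random sets. Your plan is essentially a reconstruction of that argument: geometric scales $\ell_k=C^k\ell_0$, a multi-scale island hierarchy, and the key observation that a cluster surviving to rank~$k$ forces $2^k$ points of $\RV{S}$ into a bounded region, so the union bound at rank~$k$ carries a factor $\varepsilon^{2^k}$. You also correctly diagnose the one quantitative trap (the naive count of configurations of $2^k$ points in a cube of side $\ell_k$ grows like $C^{dk2^k}$ and is not beaten by $\varepsilon^{2^k}$) and the correct repair, namely the recursive accounting $q_k\leq\bigo(\ell_k^d)\,q_{k-1}^2$, whose unrolling costs only $(\mathrm{const})^{2^k}$ per point because $\sum_j j2^{-j}<\infty$. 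That recursion is the engine of the cited proof, and since it only ever uses upper bounds on $\xPr(\RV{S}\supseteq I)$, the extension beyond Bernoulli sets is automatic.

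There is, however, a genuine gap in the construction of the partition: as described, it does not deliver the separation axiom. You have built the cardinality requirement (``at least $2^k$ points'') into the \emph{definition} of a rank-$k$ pre-island, and the merging step at rank $k$ only considers pairs of rank-$(k-1)$ pre-islands. Now take a cluster $A$ finalised at rank $j$ with exactly $2^j$ points, and a single error point $b$ lying within $\rho(\diam A)$ of $A$ that is eventually absorbed into a much larger island $B$ of rank $k\gg j$. At rank $j+1$ the singleton $\{b\}$ is not a rank-$j$ pre-island (too few points), so it is never offered for merging with $A$; at ranks above $j+1$, $A$ itself no longer qualifies. The final islands $A$ and $B$ then satisfy both $\Moore_\rho(A)\cap B\neq\varnothing$ and $A\cap\Moore_\rho(B)\neq\varnothing$, violating separation, and nothing in your probabilistic estimate excludes this configuration. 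A second, related loose end is what happens when two clusters whose territories intersect cannot be merged because the union exceeds $\ell_k$: in your bottom-up scheme this is another route to a separation failure. The cure, which is how Durand--Romashchenko--Shen set things up, is to keep the two roles apart: define an island of rank $k$ purely geometrically (diameter at most $\ell_k$ and no other point of $\RV{S}$ within the isolation distance of that rank), so that separation holds by definition, and invoke the ``$2^k$ points'' count only inside the probability estimate, as a consequence of the failure to be isolated at all ranks below $k$. With that restructuring your quantitative analysis goes through and yields both items \ref{item:thm:sparseness:linear:sparseness} and \ref{item:thm:sparseness:linear:rareness}.
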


Let us clarify that Durand, Romashchenko and Shen proved their result for Bernoulli $\varepsilon$-random sets (for all sufficiently small~$\varepsilon$).  However, their proof goes through without modification for arbitrary $\varepsilon$-random sets.

\begin{remark}
\label{rem:sparseness:sub-quadratic}
	The proof of Theorem~\ref{thm:stabilisation:random-perturbation:linear} relies on the linear sparseness of $\varepsilon$-random sets (Theorem~\ref{thm:sparseness:linear}).
	Recently, G\'acs has strengthened the latter sparseness result by proving the sub-quadratic sparseness of $\varepsilon$-random sets~\cite{Gac20}.  More precisely, he has shown that if $\rho\colon\NN\to\NN$ is any sub-quadratic function (i.e., $\rho(\ell)=\bigo(\ell^\beta)$ for some $\beta<2$), then for $\varepsilon>0$ sufficiently small, every $\varepsilon$-random set is almost surely $\rho$-sparse.
	Applying the result of G\'acs in place of Theorem~\ref{thm:sparseness:linear}, we immediately obtain that $F$~stabilises~$X$ from random perturbations as long as it stabilises $X$ from finite perturbations in sub-quadratic time.~%
	\hfill\remarkqed
\end{remark}

\subsection{Proof of the result}
In this section, we prove Theorem~\ref{thm:stabilisation:random-perturbation:linear}

\medskip
Suppose that $F$ stabilises $X$ from finite perturbations in time~$\tau(\ell)$.  There is no loss of generality to assume that $\tau(\ell)$ is an increasing function with $\tau(\ell)\to\infty$ as $\ell\to\infty$.  If not, we can replace $\tau(\ell)$ with $\tau'(\ell)\isdef\max\{\ell,\tau(k): k\leq\ell\}$ in the argument below.

Suppose that $F$ has neighbourhood radius~$r$ so that $\Moore^r=\{-r,\ldots,r\}^d$ is a neighbourhood for the local rule of $F$.  Suppose that $X$ has interaction range~$m$ so that it can be identified by a set $\collection{F}$ of forbidden patterns with shape $S_m\isdef\{0,1,\ldots,m-1\}^d$.
Define $\rho(\ell)\isdef 3 r\tau(\ell)+m$.  (We use this choice rather than $\rho(\ell)\isdef 2 r\tau(\ell)+m$ used in the intuitive explanation above to simplify the proof.)
By assumption, $\tau(\ell)=\bigo(\ell)$, hence $\rho(\ell)=\bigo(\ell)$.
Let $\delta>0$, and choose $\varepsilon>0$ as in Theorem~\ref{thm:sparseness:linear}.
We claim that $F$ $\delta$-stabilises $X$ from $\varepsilon$-perturbations.

\medskip
The consistency property of $F$ is satisfied by assumption.
Let $x\in X$, and let $\tilde{\RV{x}}$ be an $\varepsilon$-perturbation of $x$.
Let $\RV{Q}\isdef\Delta(x,\tilde{\RV{x}})$ denote the set of positions where an error has occurred.
Note that $\RV{Q}$ is an $\varepsilon$-random set.  Thus, by the choice of $\varepsilon$, we know that $\RV{Q}$ is almost surely $\rho$-sparse.  Let $\family{C}(\RV{Q})$ be a partition of $\RV{Q}$ into $\rho$-islands, witnessing the $\rho$-sparseness of~$\RV{Q}$.
For each $\ell\geq 0$, let us define
\begin{align}
	\family{C}_{\ell}(\RV{Q}) &\isdef \big\{C\in\family{C}(\RV{Q}): \diam(C)=\ell\big\} \;, \\
	\family{C}_{>\ell}(\RV{Q}) &\isdef \big\{C\in\family{C}(\RV{Q}): \diam(C)>\ell\big\}
		= \bigcup_{k>\ell}\family{C}_{k}(\RV{Q}) \;, \\
	\family{C}_{\geq\ell}(\RV{Q}) &\isdef \big\{C\in\family{C}(\RV{Q}): \diam(C)\geq\ell\big\}
		= \family{C}_{\ell}(\RV{Q})\cup\family{C}_{>\ell}(\RV{Q}) \;.
\end{align}
Note that by the separation property of~$\RV{Q}$,
each $C\in\family{C}_\ell(S)$ is at distance more than $\rho(\ell)=3r\tau(\ell)+m$
from every $C'\in\family{C}_{\geq\ell}(\RV{Q})$ distinct from~$C$.

\begin{lemma}[Recursive correction]
\label{lem:stabilisation:random-perturbation:main}
	For all $\ell\geq 0$, we can construct configurations $\tilde{\RV{x}}^{(\ell)}\in\Sigma^{\ZZ^d}$ and $\RV{y}^{(\ell)}\in X$ such that the following conditions are satisfied:
	\begin{enumerate}[label={\textup{(\alph*)}}]
		\item \label{item:lem:stabilisation:random-perturbation:main:forgetting} \textup{(forgetting)}
			$F^t(\tilde{\RV{x}})=F^t\big(\tilde{\RV{x}}^{(\ell)}\big)$ for all $t\geq\tau(\ell)$, that is, the distinction between $\tilde{\RV{x}}$ and $\tilde{\RV{x}}^{(\ell)}$ is forgotten in $\tau(\ell)$ time steps.
		\item \label{item:lem:stabilisation:random-perturbation:main:patching} \textup{(patching)}
			$\RV{y}^{(0)}=x$ and $\Delta\big(\RV{y}^{(\ell-1)},\RV{y}^{(\ell)}\big)\subseteq\bigcup_{C\in\family{C}_{\ell}(\RV{Q})} \Moore^{r\tau(\ell)}(C)\subseteq\bigcup_{C\in\family{C}_{\ell}(\RV{Q})} \Moore_\rho(C)$ for $\ell\geq 1$, that is, $\RV{y}^{(\ell-1)}$ and $\RV{y}^{(\ell)}$ differ only in the territory of the $\rho$-islands with diameter $\ell$ in~$\family{C}(\RV{Q})$.
		\item \label{item:lem:stabilisation:random-perturbation:main:correcting} \textup{(correction)}
			$\Delta\big(\RV{y}^{(\ell)},\tilde{\RV{x}}^{(\ell)}\big)\subseteq\bigcup_{C\in\family{C}_{>\ell}(\RV{Q})} C$, that is, the $\rho$-islands of errors on $\tilde{\RV{x}}^{(\ell)}$ are simply the $\rho$-islands of diameter larger than $\ell$ in~$\family{C}(\RV{Q})$.
	\end{enumerate}
\end{lemma}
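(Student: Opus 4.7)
The plan is to prove the lemma by induction on $\ell$, leveraging the $\rho$-separation property of the islands to reduce the global problem to independent local corrections around each $\rho$-island.

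For the base case $\ell=0$, set $\RV{y}^{(0)}\isdef x$. Since $\family{C}_0(\RV{Q})$ consists of singleton islands, define $\tilde{\RV{x}}^{(0)}$ to equal $\tilde{\RV{x}}$ on all cells lying in islands of diameter $>0$ and equal $x$ elsewhere. Property (c) is then immediate from this definition. For property (a), note that in a window $\Moore^{3r\tau(0)+m}(\{k\})$ around each singleton $\{k\}\in\family{C}_0$, the only error on $\tilde{\RV{x}}$ is at $k$ (by separation); the argument used in the inductive step below, applied with $\ell=0$, handles the forgetting property.

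For the inductive step from $\ell-1$ to $\ell$, I proceed in three stages. First, for each island $C\in\family{C}_\ell(\RV{Q})$, define a virtual local configuration $\RV{z}_C$ that equals $\tilde{\RV{x}}^{(\ell-1)}$ on $C$ and equals $\RV{y}^{(\ell-1)}\in X$ outside $C$. By the inductive correction property (c) and the $\rho$-separation (every other island of diameter $\geq\ell$ is at distance $>\rho(\ell)=3r\tau(\ell)+m$ from $C$), this $\RV{z}_C$ is a genuine finite perturbation of $\RV{y}^{(\ell-1)}$ with perturbation set $C$, of diameter $\ell$. By the stabilisation hypothesis, $F^{\tau(\ell)}(\RV{z}_C)\in X$, and by Observation~\ref{obs:speed-of-light}(b), it differs from $\RV{y}^{(\ell-1)}$ only inside $\Moore^{r\tau(\ell)}(C)$.

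Second, define $\RV{y}^{(\ell)}$ by taking $\RV{y}^{(\ell-1)}$ and, for each $C\in\family{C}_\ell(\RV{Q})$, overwriting its values on $\Moore^{r\tau(\ell)}(C)$ with those of $F^{\tau(\ell)}(\RV{z}_C)$. The $\rho$-separation guarantees that the territories $\Moore^{r\tau(\ell)}(C)$ for distinct $C\in\family{C}_\ell(\RV{Q})$ are pairwise disjoint and in fact separated by more than $r\tau(\ell)+m$, so the patches paste consistently and no forbidden pattern is created across boundaries (thanks to the margin $m$); hence $\RV{y}^{(\ell)}\in X$, and property (b) holds by construction. Then define $\tilde{\RV{x}}^{(\ell)}$ to equal $\tilde{\RV{x}}$ on every cell belonging to an island of diameter $>\ell$ and equal $\RV{y}^{(\ell)}$ elsewhere; this gives property (c) immediately.

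The main obstacle is verifying the forgetting property (a), i.e., that $F^{\tau(\ell)}(\tilde{\RV{x}}^{(\ell-1)})=F^{\tau(\ell)}(\tilde{\RV{x}}^{(\ell)})$ (combined with the inductive forgetting, this yields (a)). Comparing the two configurations cell by cell, they can only disagree inside $\bigcup_{C\in\family{C}_\ell(\RV{Q})}\Moore^{r\tau(\ell)}(C)$ together with the islands $C$ themselves. The key observation is that for each cell $k$, the value $F^{\tau(\ell)}(\tilde{\RV{x}}^{(\ell-1)})_k$ depends only on $\tilde{\RV{x}}^{(\ell-1)}$ restricted to $\Moore^{r\tau(\ell)}(k)$ by Observation~\ref{obs:speed-of-light}(a), and similarly for $\tilde{\RV{x}}^{(\ell)}$. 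Using the separation bound $\rho(\ell)=3r\tau(\ell)+m$, one shows that on $\Moore^{3r\tau(\ell)+m}(C)$ the configuration $\tilde{\RV{x}}^{(\ell-1)}$ agrees with $\RV{z}_C$, so that $F^{\tau(\ell)}(\tilde{\RV{x}}^{(\ell-1)})$ agrees with $F^{\tau(\ell)}(\RV{z}_C)$ on $\Moore^{2r\tau(\ell)+m}(C)$, and a parallel computation (using that $\tilde{\RV{x}}^{(\ell)}$ agrees with $\RV{y}^{(\ell)}$ on this enlarged window, and $\RV{y}^{(\ell)}$ was defined to equal $F^{\tau(\ell)}(\RV{z}_C)$ on $\Moore^{r\tau(\ell)}(C)$) shows $F^{\tau(\ell)}(\tilde{\RV{x}}^{(\ell)})$ produces the same values on $\Moore^{r\tau(\ell)}(C)$. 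Away from these neighbourhoods the two configurations already agree, so the forgetting property follows. Combining with the inductive hypothesis $F^t(\tilde{\RV{x}})=F^t(\tilde{\RV{x}}^{(\ell-1)})$ for $t\geq\tau(\ell-1)$ and the monotonicity of $\tau$ completes the induction.
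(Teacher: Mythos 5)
Your proof is correct and takes essentially the same route as the paper: the same local configurations $\RV{z}_C$, the same patching of $F^{\tau(\ell)}(\RV{z}_C)$ onto $\Moore^{r\tau(\ell)}(C)$, and the same use of the speed-of-light bound together with the $\rho(\ell)=3r\tau(\ell)+m$ separation margin (your direct global definition of $\tilde{\RV{x}}^{(\ell)}$ coincides with the paper's locally patched one). The only slip is at the end of the forgetting argument, where the two images must be shown to agree on $\Moore^{2r\tau(\ell)}(C)$ rather than $\Moore^{r\tau(\ell)}(C)$ — since $\Delta\big(\tilde{\RV{x}}^{(\ell-1)},\tilde{\RV{x}}^{(\ell)}\big)\subseteq\bigcup_C\Moore^{r\tau(\ell)}(C)$ only bounds the disagreement of the images by $\bigcup_C\Moore^{2r\tau(\ell)}(C)$ — but your own computation already yields agreement with $F^{\tau(\ell)}(\RV{z}_C)$ on the larger window $\Moore^{2r\tau(\ell)+m}(C)$ for both configurations, so the fix is immediate.
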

\begin{proof}
	We start with $\tilde{\RV{x}}^{(0)}\isdef\tilde{\RV{x}}$ and $\RV{y}^{(0)}\isdef x$.
	
	\medskip
	In step $\ell\geq 1$, assuming we have constructed $\tilde{\RV{x}}^{(\ell-1)}$ and $\RV{y}^{(\ell-1)}$,
	we construct $\tilde{\RV{x}}^{(\ell)}$ as follows.
	Let $C_1,C_2,\ldots$ be an enumeration of the elements of~$\family{C}_\ell(\RV{Q})$.  Let $\RV{z}^{(n)}$ be the configuration that agrees with $\tilde{\RV{x}}^{(\ell-1)}$ on $C_n$ and with $\RV{y}^{(\ell-1)}$ outside $C_n$.  Note that $\RV{z}^{(n)}$ is a finite perturbation of $\RV{y}^{(\ell-1)}$ with $\Delta\big(\RV{y}^{(\ell-1)},\RV{z}^{(n)}\big)\subseteq C_n$.  Since $F$ stabilises $X$ from finite perturbations in time~$\tau(\cdot)$, we know that $F^{\tau(\ell)}(\RV{z}^{(n)})\in X$. By Observation~\ref{obs:speed-of-light}, the two configurations $\RV{y}^{(\ell-1)}$ and $F^{\tau(\ell)}(\RV{z}^{(n)})$ agree everywhere except possibly in $\Moore^{r\tau(\ell)}(C_n)$, that is, the set of positions within distance $r\tau(\ell)$ from~$C_n$.  Define
	\begin{align}
		\tilde{\RV{x}}^{(\ell)}_k &\isdef
			\begin{cases}
				F^{\tau(\ell)}(\RV{z}^{(n)})_k
					& \text{if $k\in\Moore^{r\tau(\ell)}(C_n)$ for some $n\in\NN$,} \\
				\tilde{\RV{x}}^{(\ell-1)}_k		& \text{otherwise.}
			\end{cases}
		\\
		\RV{y}^{(\ell)}_k &\isdef
			\begin{cases}
				F^{\tau(\ell)}(\RV{z}^{(n)})_k
					& \text{if $k\in\Moore^{r\tau(\ell)}(C_n)$ for some $n\in\NN$,} \\
				\RV{y}^{(\ell-1)}_k		& \text{otherwise.}
			\end{cases}
	\end{align}
	
	We use induction to show that $\RV{y}^{(\ell)}\in X$ and conditions~\ref{item:lem:stabilisation:random-perturbation:main:forgetting}--\ref{item:lem:stabilisation:random-perturbation:main:correcting} are satisfied.
	The case $\ell=0$ is trivial.  Assuming that these conditions are satisfied in the first $\ell-1$ steps of the construction, we show that they remain satisfied in step~$\ell$.
	\begin{enumerate}[label={\textup{(\alph*)}}]
		\item[(--)] Let us first verify that $\RV{y}^{(\ell)}\in X$.
			To see this, note that for each $n\in\NN$, the configuration $\RV{y}^{(\ell)}$ agrees with $F^{\tau(\ell)}(\RV{z}^{(n)})$ not only on $\Moore^{r\tau(\ell)}(C_n)$ but on $\Moore^{r\tau(\ell)+m}(C_n)$.
			Namely, by Observation~\ref{obs:speed-of-light}, $F^{\tau(\ell)}(\RV{z}^{(n)})$ agrees with $\RV{y}^{(\ell-1)}$ on $\Moore^{r\tau(\ell)+m}(C_n)\setminus\Moore^{r\tau(\ell)}(C_n)$, and on the same set, $\RV{y}^{(\ell-1)}$ agrees with $\RV{y}^{(\ell)}$ by construction.
			Now, for every $i\in\ZZ^d$, the set $i+S_m$ is either entirely in $\Moore^{r\tau(\ell)+m}(C_n)$ for some $n$ or entirely in $\ZZ^d\setminus\bigcup_{n=1}^\infty\Moore^{r\tau(\ell)}(C_n)$.  In the first case, $\RV{y}^{(\ell)}_{i+S_m} = F^{\tau(\ell)}(\RV{z}^{(n)})_{i+S_m}\notin\collection{F}$ and in the second case $\RV{y}^{(\ell)}_{i+S_m} = \RV{y}^{(\ell-1)}_{i+S_m}\notin\collection{F}$.  Thus, $\RV{y}^{(\ell)}\in X$ as claimed.
		
		\item By the induction hypothesis, $F^t(\tilde{\RV{x}})=F^t\big(\tilde{\RV{x}}^{(\ell-1)}\big)$ for $t\geq\tau(\ell-1)$.  Since $\tau(\ell-1)<\tau(\ell)$, it is enough to show that $F^t\big(\tilde{\RV{x}}^{(\ell)}\big)=F^t\big(\tilde{\RV{x}}^{(\ell-1)}\big)$ for $t\geq\tau(\ell)$.
			
			Note that for each $n\in\NN$, the configuration $\RV{x}^{(\ell)}$ agrees with $F^{\tau(\ell)}(\RV{z}^{(n)})$ not only on $\Moore^{r\tau(\ell)}(C_n)$ but on $\Moore^{3r\tau(\ell)}(C_n)$.
			Namely, by construction, Observation~\ref{obs:speed-of-light}, and the fact that the elements of~$X$ are fixed points, $F^{\tau(\ell)}(\RV{z}^{(n)})$ agrees with $\RV{y}^{(\ell-1)}$ outside $\Moore^{r\tau(\ell)}(C_n)$.  Furthermore, by the induction hypothesis, $\Delta\big(\RV{y}^{(\ell-1)},\tilde{\RV{x}}^{(\ell-1)}\big)\subseteq\bigcup_{C\in\family{C}_{\geq\ell}(\RV{Q})} C$.  Since $C_n$ has distance more than $\rho(\ell)=3r\tau(\ell)+m$ from the rest of $\family{C}_{\geq\ell}(\RV{Q})$, we find that $\RV{y}^{(\ell-1)}$ agrees with $\tilde{\RV{x}}^{(\ell-1)}$ on $\Moore^{3r\tau(\ell)}(C_n)\setminus \Moore^{r\tau(\ell)}(C_n)$.  Lastly, again by construction and the fact that $C_n$ has distance more than $\rho(\ell)=3r\tau(\ell)+m$ from the rest of $\family{C}_{\ell}(\RV{Q})$, the two configurations $\tilde{\RV{x}}^{(\ell-1)}$ and $\tilde{\RV{x}}^{(\ell)}$ also agree on $\Moore^{3r\tau(\ell)}(C_n)\setminus \Moore^{r\tau(\ell)}(C_n)$.
			
			Now, Observation~\ref{obs:speed-of-light} and the fact that $F^{\tau(\ell)}(\RV{z}^{(n)})\in X$ is a fixed point imply that for each $n\in\NN$, the configurations $F^{\tau(\ell)}(\tilde{\RV{x}}^{(\ell-1)})$, $F^{\tau(\ell)}(\RV{z}^{(n)})$ and $F^{\tau(\ell)}(\tilde{\RV{x}}^{(\ell)})$ agree on $\Moore^{2r\tau(\ell)}(C_n)$.  On the other hand, by construction, $\Delta\big(\tilde{\RV{x}}^{(\ell-1)},\tilde{\RV{x}}^{(\ell)}\big)\subseteq\bigcup_{n=1}^\infty\Moore^{r\tau(\ell)}(C_n)$.  Therefore, according to Observation~\ref{obs:speed-of-light}, the two configurations $F^{\tau(\ell)}(\tilde{\RV{x}}^{(\ell-1)})$ and $F^{\tau(\ell)}(\tilde{\RV{x}}^{(\ell)})$ agree outside $\bigcup_{n=1}^\infty\Moore^{2r\tau(\ell)}(C_n)$.  We conclude that $F^{\tau(\ell)}(\tilde{\RV{x}}^{(\ell-1)})$ and $F^{\tau(\ell)}(\tilde{\RV{x}}^{(\ell)})$ agree everywhere.  That $F^t(\tilde{\RV{x}}^{(\ell-1)})=F^t(\tilde{\RV{x}}^{(\ell)})$ for all $t\geq\tau(\ell)$ follows immediately.
			
		\item That $\Delta\big(\RV{y}^{(\ell-1)},\RV{y}^{(\ell)}\big)\subseteq\bigcup_{C\in\family{C}_{\ell}(\RV{Q})} \Moore^{r\tau(\ell)}(C)\subseteq\bigcup_{C\in\family{C}_{\ell}(\RV{Q})} \Moore_\rho(C)$ is immediate from the construction.		
		
		\item By the induction hypothesis,
			\begin{align}
				\Delta\big(\RV{y}^{(\ell-1)},\tilde{\RV{x}}^{(\ell-1)}\big) &\subseteq \bigcup_{C\in\family{C}_{\geq\ell}(\RV{Q})} C \;,
				&
				\Delta\big(\RV{y}^{(\ell-1)},\RV{y}^{(\ell)}\big) &\subseteq\bigcup_{C\in\family{C}_{\ell}(\RV{Q})}\Moore^{r\tau(\ell)}(C) \;.
			\end{align}
			On the other hand, by construction, $\tilde{\RV{x}}^{(\ell)}$ and $\RV{y}^{(\ell)}$ agree on $\bigcup_{C\in\family{C}_{\ell}(\RV{Q})}\Moore^{r\tau(\ell)}(C)$.
			It follows that $\Delta\big(\RV{y}^{(\ell)},\tilde{\RV{x}}^{(\ell)}\big)\subseteq\bigcup_{C\in\family{C}_{>\ell}(\RV{Q})} C$.			
	\end{enumerate}
	This completes the proof of the lemma.
\end{proof}

Item~\ref{item:lem:stabilisation:random-perturbation:main:patching} in Lemma~\ref{lem:stabilisation:random-perturbation:main} ensures that $\RV{y}^{(\ell)}\to\RV{y}$ for some $\RV{y}\in X$.  Indeed, let $k\in\ZZ^d$.  By the thinness condition in the definition of $\rho$-sparseness, we have $k\in \Moore_\rho(C)$ for at most finitely many $C\in\family{C}(\RV{Q})$.  It follows that $\RV{y}^{(\ell-1)}_k\neq\RV{y}^{(\ell)}_k$ for no more than finitely many values of~$\ell$.  Define $\RV{y}_k$ as the eventual value of~$\RV{y}^{(\ell)}_k$.  Then, $\RV{y}^{(\ell)}\to\RV{y}$ in the product topology.  Furthermore, we have $\RV{y}\in X$ because $X$ is closed.

\begin{lemma}[Attraction]
\label{lem:stabilisation:random-perturbation:attraction}
	$F^t(\tilde{\RV{x}})\to\RV{y}$ as $t\to\infty$.
\end{lemma}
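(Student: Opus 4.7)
My plan is to establish the (product-topology, hence pointwise) convergence $F^t(\tilde{\RV{x}})_k\to\RV{y}_k$ at each fixed cell $k\in\ZZ^d$ almost surely. Fix such a $k$; by the thinness clause of $\rho$-sparseness, the collection of $\rho$-islands whose $\rho$-territory contains $k$ is almost surely finite, so I would introduce $L=L(k,\omega)$ as the largest diameter among those islands (or $0$ if none). Item~\ref{item:lem:stabilisation:random-perturbation:main:patching} of Lemma~\ref{lem:stabilisation:random-perturbation:main} then immediately gives $\RV{y}^{(\ell)}_k=\RV{y}_k$ for every $\ell>L$, because only islands of diameter $\ell$ whose $\rho$-territory contains $k$ can possibly alter the value of $\RV{y}^{(\ell-1)}$ at $k$.

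Next I would transport this identity to the orbit of $\tilde{\RV{x}}$ itself. For $\ell>L$ and any $t\geq\tau(\ell)$, item~\ref{item:lem:stabilisation:random-perturbation:main:forgetting} yields $F^t(\tilde{\RV{x}})_k=F^t(\tilde{\RV{x}}^{(\ell)})_k$. Since $\RV{y}^{(\ell)}\in X$ is fixed by $F$, the speed-of-light bound of Observation~\ref{obs:speed-of-light}, combined with item~\ref{item:lem:stabilisation:random-perturbation:main:correcting}, shows
\begin{align}
\Delta\big(\RV{y}^{(\ell)},\,F^t(\tilde{\RV{x}}^{(\ell)})\big)\subseteq\bigcup_{C\in\family{C}_{>\ell}(\RV{Q})}\Moore^{rt}(C).
\end{align}
The key point is that every $C\in\family{C}_{>\ell}(\RV{Q})$ has diameter strictly greater than $L$, so $k\notin\Moore_\rho(C)$, and hence $\operatorname{dist}(k,C)>\rho(\diam(C))\geq\rho(\ell)=3r\tau(\ell)+m$; in particular $k$ avoids $\Moore^{rt}(C)$ whenever $rt\leq 3r\tau(\ell)+m$. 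I conclude that $F^t(\tilde{\RV{x}})_k=\RV{y}_k$ throughout the entire window $\big[\tau(\ell),\,3\tau(\ell)+m/r\big]$, for every $\ell>L$.

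The last step is to verify that these windows chain to cover a cofinal subset of $\NN$. This is exactly where the linear-time hypothesis enters: with $\tau$ increasing, $\tau(\ell)\geq\ell$, and $\tau(\ell)=O(\ell)$, one easily checks $\tau(\ell+1)\leq 3\tau(\ell)+m/r$ for all $\ell$ beyond some threshold, so the windows $\big[\tau(\ell),3\tau(\ell)+m/r\big]$ overlap and their union contains $\big[\tau(L{+}1),\infty\big)$. Pointwise convergence at each $k$ almost surely then upgrades to the required convergence in the product topology.

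The main obstacle I anticipate is precisely this chaining step, because the factor $3$ baked into the definition of $\rho$ was chosen by hand and does not by itself survive a much faster stabilisation time: if the ratio $\tau(\ell+1)/\tau(\ell)$ exceeded the margin supplied by $\rho$, the windows would leave gaps and the pointwise argument would break. Extending the theorem beyond linear $\tau$ therefore requires enlarging~$\rho$, which is exactly what the stronger sub-quadratic sparseness of Remark~\ref{rem:sparseness:sub-quadratic} accommodates; everything else is routine bookkeeping that simply chains items~\ref{item:lem:stabilisation:random-perturbation:main:forgetting}--\ref{item:lem:stabilisation:random-perturbation:main:correcting} of Lemma~\ref{lem:stabilisation:random-perturbation:main}.
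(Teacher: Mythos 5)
Your argument is essentially the paper's: you chain items~\ref{item:lem:stabilisation:random-perturbation:main:forgetting}--\ref{item:lem:stabilisation:random-perturbation:main:correcting} of Lemma~\ref{lem:stabilisation:random-perturbation:main} with Observation~\ref{obs:speed-of-light} and the thinness clause, the only cosmetic difference being that you argue pointwise at a fixed $k$ via an explicit $L(k)$ while the paper phrases the same content as a containment $\Delta\big(\RV{y},F^t(\tilde{\RV{x}})\big)\subseteq\bigcup_{C\in\family{C}_{>\ell}(\RV{Q})}\Moore_\rho(C)$. The one step that is not airtight as written is the chaining of your windows. You bound $\rho(\diam(C))\geq\rho(\ell)$ for $C\in\family{C}_{>\ell}(\RV{Q})$ and thus get the window $\big[\tau(\ell),3\tau(\ell)+m/r\big]$, which forces you to verify $\tau(\ell+1)\leq 3\tau(\ell)+m/r$; but an increasing function with $\ell\leq\tau(\ell)\leq C\ell$ need \emph{not} satisfy this for all large $\ell$ when $C>3$ (the ratio $\tau(\ell+1)/\tau(\ell)$ can exceed $3$ at a geometrically spaced sequence of $\ell$), so ``one easily checks'' does not quite go through for the normalisation $\tau'(\ell)=\max\{\ell,\tau(k):k\leq\ell\}$ used in the setup. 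The fix is already inside your own argument: every $C\in\family{C}_{>\ell}(\RV{Q})$ has $\diam(C)\geq\ell+1$, so $\rho(\diam(C))\geq\rho(\ell+1)=3r\tau(\ell+1)+m$, and the window becomes $\big[\tau(\ell),3\tau(\ell+1)+m/r\big]\supseteq\big[\tau(\ell),\tau(\ell+1)\big]$; these chain unconditionally. (This is exactly what the paper does by choosing $\ell$ with $\tau(\ell)\leq t\leq\tau(\ell+1)$ and noting $2rt\leq\rho(\ell+1)\leq\rho(\diam(C))$.) Alternatively, one may replace $\tau$ by an exactly linear majorant. Finally, your closing diagnosis is slightly off: the factor $3$ in $\rho$ is dictated by the separation needed in the construction of Lemma~\ref{lem:stabilisation:random-perturbation:main}, not by window chaining, and the linear-time hypothesis enters through Theorem~\ref{thm:sparseness:linear}, which requires $\rho(\ell)=\bigo(\ell)$ and hence $\tau(\ell)=\bigo(\ell)$ --- this is what Remark~\ref{rem:sparseness:sub-quadratic} relaxes.
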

\begin{proof}
	Let $t\geq 0$, and choose $\ell$ such that $\tau(\ell)\leq t\leq\tau(\ell+1)$.
	By item~\ref{item:lem:stabilisation:random-perturbation:main:forgetting} in Lemma~\ref{lem:stabilisation:random-perturbation:main}, $F^t(\tilde{\RV{x}})=F^t\big(\tilde{\RV{x}}^{(\ell)}\big)$.
	By item~\ref{item:lem:stabilisation:random-perturbation:main:correcting} in Lemma~\ref{lem:stabilisation:random-perturbation:main}, $\Delta\big(\RV{y}^{(\ell)},\tilde{\RV{x}}^{(\ell)}\big)\subseteq\bigcup_{C\in\family{C}_{>\ell}(\RV{Q})} C$.  This, together with Observation~\ref{obs:speed-of-light}, implies that $\Delta\Big(F^t\big(\RV{y}^{(\ell)}\big),F^t\big(\tilde{\RV{x}}^{(\ell)}\big)\Big)\subseteq\bigcup_{C\in\family{C}_{>\ell}(\RV{Q})}\Moore^{rt}(C)$.  Recall that $\RV{y}^{(\ell)}$ is an element of $X$ and thus $F^t\big(\RV{y}^{(\ell)}\big)=\RV{y}^{(\ell)}$.  Note further that for $C\in\family{C}_{>\ell}(\RV{Q})$, we have $2rt\leq\rho(\ell+1)\leq\rho\big({\diam(C)}\big)$, and thus $\Moore^{rt}(C)\subseteq \Moore_\rho(C)$.  It follows that
	\begin{align}
		\Delta\Big(\RV{y}^{(\ell)},F^t\big(\tilde{\RV{x}}\big)\Big) &=
			\Delta\Big(F^t\big(\RV{y}^{(\ell)}\big),F^t\big(\tilde{\RV{x}}\big)\Big) \\
		&=
			\Delta\Big(F^t\big(\RV{y}^{(\ell)}\big),F^t\big(\tilde{\RV{x}}^{(\ell)}\big)\Big) \\
		&\subseteq \bigcup_{C\in\family{C}_{>\ell}(\RV{Q})}\Moore^{rt}(C)
			\subseteq\bigcup_{C\in\family{C}_{>\ell}(\RV{Q})}\Moore_\rho(C) \;.
	\end{align}
	From item~\ref{item:lem:stabilisation:random-perturbation:main:patching} in Lemma~\ref{lem:stabilisation:random-perturbation:main}, it also follows that $\Delta\big(\RV{y},\RV{y}^{(\ell)}\big)\subseteq\bigcup_{C\in\family{C}_{>\ell}(\RV{Q})} \Moore_\rho(C)$.  Thus,
	\begin{align}
		\Delta\Big(\RV{y},F^t\big(\tilde{\RV{x}}\big)\Big)
		&\subseteq\bigcup_{C\in\family{C}_{>\ell}(\RV{Q})}\Moore_\rho(C) \;.
	\end{align}
	Let $k\in\ZZ^d$.  By the thinness condition in the definition of $\rho$-sparseness, we know that $k\in \Moore_\rho(C)$ for no more than finitely many $C\in\family{C}(\RV{Q})$.  Thus, the value of $F^t\big(\tilde{\RV{x}}\big)_k$ will eventually stabilise at $\RV{y}_k$.  We conclude that $F^t\big(\tilde{\RV{x}}\big)\to\RV{y}$ as $t\to\infty$.
\end{proof}

\begin{lemma}[Stability]
\label{lem:stabilisation:random-perturbation:stability}
	$\updensity\big(\Delta(x,\RV{y})\big)<\delta$ almost surely.
\end{lemma}
\begin{proof}
	In order to have $\RV{y}_k\neq x_k$, we must have $\RV{y}^{(\ell)}_k\neq\RV{y}^{(\ell-1)}_k$ for some $\ell\in\ZZ$.
	By item~\ref{item:lem:stabilisation:random-perturbation:main:patching} in Lemma~\ref{lem:stabilisation:random-perturbation:main}, this means that $k\in\bigcup_{C\in\family{C}(\RV{Q})} \Moore_\rho(C)$.  However, by property~\ref{item:thm:sparseness:linear:rareness} in Theorem~\ref{thm:sparseness:linear},
	\begin{align}
		\updensity\left(\bigcup_{C\in\family{C}(\RV{Q})} \Moore_\rho(C)\right) &<\delta \;.
	\end{align}
	The claim follows.

\medskip
This concludes the proof of Theorem~\ref{thm:stabilisation:random-perturbation:linear}.
\end{proof}

\section{Discussion and open problems}
\label{sec:discussion}

\subsection{Stabilising $3$-colourings}
\label{sec:discussion:3col}

In Section~\ref{sec:2d}, we presented self-stabilising CA for different families of two-dimensional tiling spaces, including $k$-colourings for $k\not=3$.
When $k\neq 3$, the $k$-colourings have the property that any configuration with a finite island of defects can be corrected in a purely local manner, by modifying the configuration only in a bounded region around the defects.
In contrast, as remarked in Example~\ref{ex:3col}, this is not always possible for two-dimensional $3$-colourings.
This makes the self-stabilisation problem more challenging.

\begin{question}[Self-stabilisation of $3$-colourings]
Is there a deterministic CA that stabilises $3$-colourings from finite perturbations in polynomial time, ideally without additional symbols?
\end{question}

\noindent We conjecture that it is not possible to stabilise $3$-colourings in linear time, but that it could be possible in quadratic time, at least if we allow additional symbols.

In order to shed some light on this problem, let us describe a representation of the two-dimensional $3$-colourings based on the configurations of the so-called \emph{six-vertex} model, and in the process, explain Figure~\ref{fig:SWerror}.
The correspondence between $3$-colourings and the six-vertex model was first discovered by
A.~Lenard (see~\cite[Section~8.13]{Bax82}).\vspace*{-2mm}

\paragraph{Connection with the six-vertex model.}
Let us start with a valid $3$-colouring in two dimensions.
For each pair of neighbouring cells (horizontal or vertical), let us draw an arrow on the boundary between the two cells according to the following rule.
Let $q$ and $q'$ be the colours of the two neighbouring cells. Since $ q' \neq q$, we either have $ q'= q + 1 \pmod{3} $ or $ q'= q - 1 \pmod{3} $. Depending on this, we draw the arrow in one direction or the other:
\begin{itemize}
\itemsep=0.95pt
\item The arrow on a vertical boundary is directed upwards if the colour on its right is one more than the colour on its left (modulo~$3$), and downwards otherwise.
\item The arrow on a horizontal boundary is directed towards the right if the colour below it is one more than the colour above it (modulo~$3$), and towards the left otherwise.
\end{itemize}
These conventions are depicted in Figure~\ref{fig:sixVertex}.

\begin{figure}[!h]
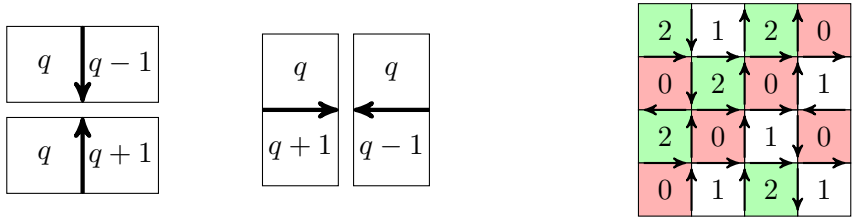

\begin{center}
	\begin{tabular}{m{7em}m{7em}@{\qquad\qquad\qquad}m{10em}}
		\verC &
		\horC &
		\figExampleSV
	\end{tabular}
\end{center}\vspace*{-5mm}
\caption{The convention used for encoding 3-colouring configurations in the six-vertex model and an example of a configuration with its associated six-vertex image.}
\label{fig:sixVertex}\vspace*{-3mm}
\end{figure}

\medskip
One can then check that starting from a $3$-colouring, the resulting arrow configuration is such that at each vertex, there are exactly two incoming arrows and two outgoing arrows. Conversely, from a six-vertex configuration (i.e., a configuration of arrows with equal number of incoming and outgoing arrows at each vertex), there are exactly three $3$-colourings giving rise to that arrow configuration. (Once we choose the colour of one cell, all the other colours can be deduced.)

\medskip
Figure~\ref{fig:sixVertex} shows an example of such an encoding of a valid $3$-colouring. By contrast, Figure~\ref{fig:SWerror} depicts a finite perturbation of a 3-colouring.
Notice that, in Figure~\ref{fig:SWerror}, the arrows pointing to the South and the ones pointing to the West are drawn in bold.
In a valid six-vertex configuration, the knowledge of these two types of arrows is sufficient to fully describe the configuration: indeed, the other horizontal or vertical arrows have to be East or North arrows, respectively.
In Figure~\ref{fig:SWerror}, we have also shaded the unperturbed portion of the configuration, thus the perturbed region consists in the inner (unshaded) square.
Therefore, in order to correct the tiling, we need to fill in the inner square with an admissible configuration.
One can verify that the only way to do so corresponds to a six-vertex configuration that has a direct downward vertical path.  Indeed, there is only one bold incoming arrow and one bold outgoing arrow at the boundary of the square, and these two arrows have to be connected.

The example in Figure~\ref{fig:SWerror} shows that we can construct finite perturbations of a $3$-colouring which contain only two defective cells (i.e., a single interface with same colour), but for which in order to obtain a valid configuration, one needs to modify an arbitrarily large domain.
In fact, taking a limit of such configurations, one obtains a configuration with only two defective cells which is not a finite perturbation of any valid  $3$-colouring.\vspace*{-2mm}

\paragraph{A sequential correction procedure.}
In order to tackle the problem of self-stabilisation for $3$-colour\-ings, it may be helpful to address the simpler question of how to correct a perturbed $3$-colouring with a conventional, sequential, non-local algorithm.

Suppose we are given a $3$-colouring with a finite number of defects and a square region~$S$ of size~$\ell$ containing all the defective cells.  Is there a simple (sequential, non-local) algorithm to decide if the tiling can be corrected by modifying only the colour of the cells inside~$S$?
The $3$-colouring problem on general graphs is $\classNP$-complete, and an exhaustive search takes an exponential amount of time in~$\ell^2$.
Nevertheless, in this case, the representation in terms of the six-vertex configurations allows us to solve the decision problem in linear time, and to find the actual correction (when it exists) in quadratic time.

More precisely, consider the pattern of incoming and outgoing arrows on the boundary of~$S$.
In order to decide if the colouring of~$S$ can be corrected, we just need to know if it is possible to pair the incoming and outgoing arrows on the boundary of~$S$ in an admissible way.
This is easy to do sequentially in linear time.
Starting from the NE-corner, let us enumerate the incoming arrows on the North and the West sides from 1 to $n_{\incoming}$ counter-clockwise, and the outgoing arrows on the East and the South sides from 1 to $n_{\outgoing}$ clockwise.
The square can be coloured if we can match each incoming arrow number~$k$ with the outgoing arrow number~$k$ by a SE-path of arrows.  (In particular, this would imply $n_{\incoming}=n_{\outgoing}$.)  In order to know if this can be done, we try to match successively the incoming and outgoing arrows from $1$ to $n_{\incoming}=n_{\outgoing}$ with disjoint paths, by moving East if the edge has not already been selected, and South otherwise. As an additional condition, we need to ensure that at each step, the path does not go beyond the corresponding outgoing arrow or come across another path. This procedure succeeds if and only if there is at least one admissible matching.  See Figure~\ref{fig:SWcontour} for an illustration.

\begin{figure}[h]
\begin{center}
\begin{tikzpicture}[scale=0.5,>=stealth']
\figSVcontour
\end{tikzpicture}
\end{center}\vspace*{-5mm}
\caption{Example of valid matching of the arrows of the contour, with the procedure that match successively the incoming and outgoing arrows, by moving East if possible, and South otherwise.}
\label{fig:SWcontour}
\end{figure}
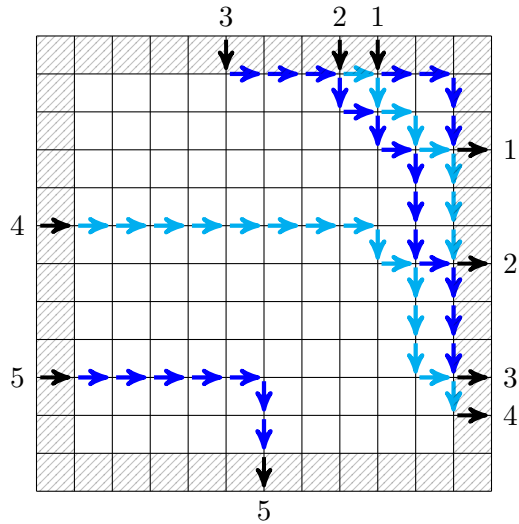

At this point, we may want to try to turn the above sequential procedure into a self-stabilising CA.
The CA would start by marking squares encompassing the defects.  On each such square, the CA would then simulate a Turing machine that performs the above sequential procedure.  If the procedure is successful, a signal is propagated throughout the square in order to erase the markings.  If the procedure is unsuccessful, another signal is sent to increase the size of the square, and to repeat the simulation of the Turing machine on this larger square.  If two squares collide, they merge and form a larger square.
Such a CA can indeed be constructed.
However, the main difficulty in turning such a construction into a self-stabilising CA is that the initial perturbation could now involve the extra states.  The construction must ensure stabilisation starting from any such perturbation, not just the perturbations involving the three colours.

\subsection{Probabilistic self-stabilisation}
\label{sub:disc_proba}

In this paper, we have barely touched the topic of self-stabilisation with probabilistic rules.  There are several questions left to be answered, and much more to be explored.

\paragraph{An isotropic candidate for stabilising $4$-colourings.}

In Section~\ref{sec:PCA}, we have presented isotropic probabilistic cellular automata achieving self-stabilisation on two-dimensional $k$-colourings, with $k=2$ and $k\geq 5$.
We now propose a rule which we believe does the job for $k=4$, but for which we have no formal proof of convergence.
The idea is to modify the method used for the case $k\geq 5$, and make an exception when there is no colour available to directly correct a defective cell.
More specifically, for $\alpha\in (0,1)$, we define a probabilistic CA with the following rule.
If a cell~$k$ is not defective, its state is kept unchanged.  Otherwise, the state of~$k$ is changed, with probability~$\alpha$, to a colour which is chosen at random from among all colours consistent with the current colours of its four neighbours.  If no consistent colour exists, a colour is chosen at random from among all possible colours.
As usual, different cells are updated independently.
Experimentally, we have observed that this rule rapidly corrects the defects.
However, unlike the case $k\geq 5$, for $k=4$, we cannot ensure with the above rule that the defects stay in some bounded area.
See Figure~\ref{diag:fourstatepca} for an example of stabilisation in this probabilistic CA.

\begin{figure}[!h]
\vspace*{1mm}
\hspace*{-2mm}\begin{tabular}{c c c c}
\includegraphics[width=0.218\linewidth]{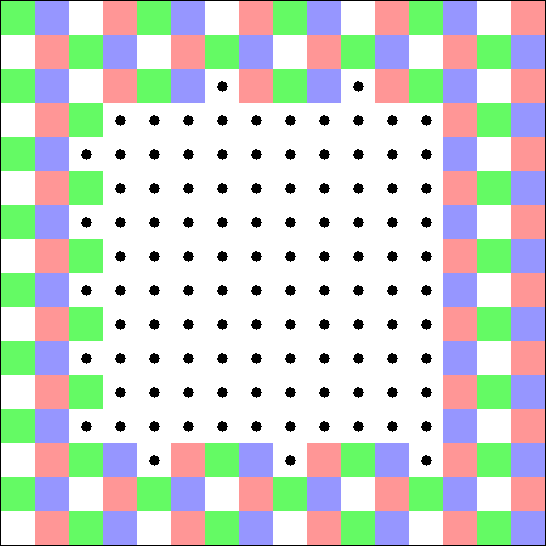} &
\includegraphics[width=0.218\linewidth]{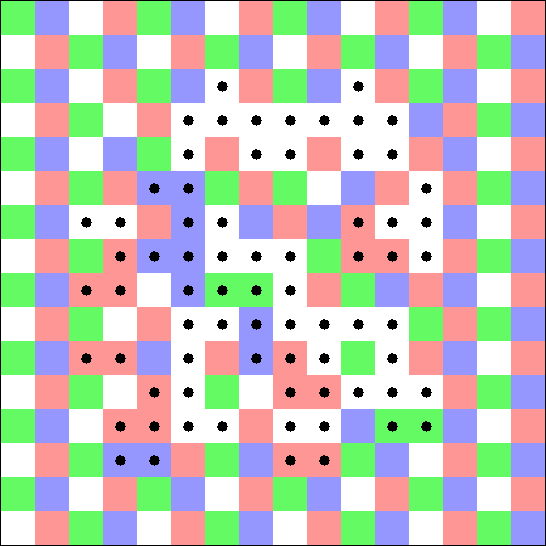} &
\includegraphics[width=0.218\linewidth]{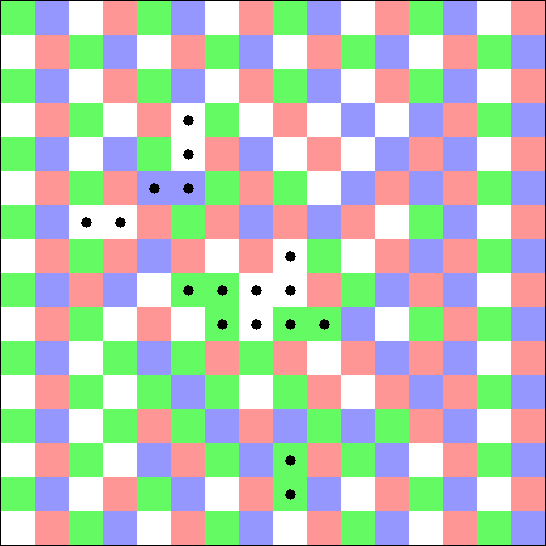} &
\includegraphics[width=0.218\linewidth]{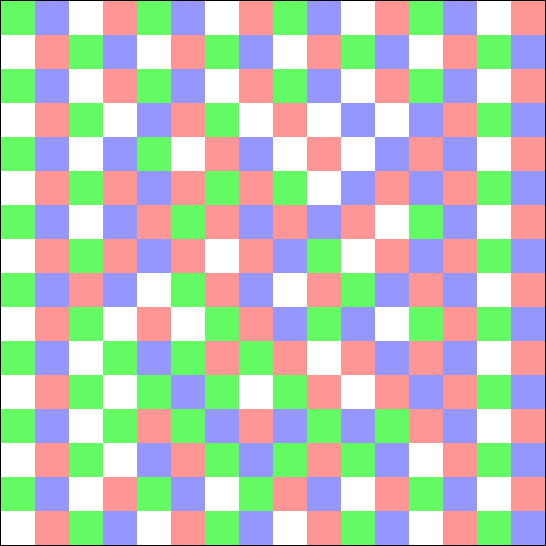} \\
$ t = 0 $ & $ t= 1$ & $ t= 3 $ & $ t = 8  $
\end{tabular}\vspace*{-2mm}
\caption{Illustration of the evolution of the probabilistic CA proposed in Section~\ref{sub:disc_proba} for stabilising $4$-colourings. The dots indicate the defective cells.}
\label{diag:fourstatepca}\vspace*{-3mm}
\end{figure}

\begin{question}[Probabilistic self-stabilisation of $4$-colourings]
Does the probabilistic CA defined above stabilise $4$-colourings from finite perturbations, for any or for some values of $\alpha \in (0,1)$?
\end{question}

We suspect that the answer is positive.
To support this claim, one can try to look for configurations for which this rule could potentially fail to stabilise.
Consider the configuration depicted in Figure~\ref{fig:diabolic}.
It has only two defective cells in the center.  Furthermore, all cells (including the defective ones), see the three other colours in their neighbourhoods.
Consequently, if one of the defective cells changes its state alone, it will remain defective.
For this specific configuration, some kind of coordination is thus necessary, which cannot here occur by a specific mechanism as in the deterministic case.
This at first might suggest that the defects may propagate arbitrary far from their origin.
However, we have experimentally observed that this is not the case: defects have a tendency to stay in the same area, and the correcting process is more rapid than the diffusion of the defects. Surprisingly enough, even when the cells are updated successively at random (i.e., the fully asynchronous case), we also noticed that the rule succeeds in correcting the defects. Indeed, when the defects propagate, they modify the configuration in such a way that the property of seeing three different colours in the neighbourhood is lost, which finally enables a correction to take place. This observation supports the idea that when we use parallel updates, as we do in our rule, we can only increase the possibilities of correction.

\begin{figure}[h]
\vspace*{3mm}
\begin{center}
\begin{tikzpicture}[scale=0.7]
\pgfmathsetmacro{\ln}{4}
\foreach \i in {3,7,11} {\sqaa{\i}{\ln}} ;
\foreach \i in {2,6,10} {\sqdd{\i}{\ln}} ;
\foreach \i in {1,5,9} {\sqcc{\i}{\ln}} ;
\foreach \i in {4,8,12} {\sqbb{\i}{\ln}} ;
\pgfmathsetmacro{\ln}{3}
\foreach \i in {3,7,11} {\sqcc{\i}{\ln}} ;
\foreach \i in {2,6,10} {\sqbb{\i}{\ln}} ;
\foreach \i in {1,5,9} {\sqaa{\i}{\ln}} ;
\foreach \i in {4,8,12} {\sqdd{\i}{\ln}} ;
\pgfmathsetmacro{\ln}{2}
\foreach \i in {1,5,10} {\sqdd{\i}{\ln}} ;
\foreach \i in {4,9} {\sqcc{\i}{\ln}} ;
\foreach \i in {2,11}{\sqaa{\i}{\ln}} ;
\foreach \i in {3,8,12}{\sqbb{\i}{\ln}} ;
\pgfmathsetmacro{\ln}{1}
\foreach \i in {3,7,11} {\sqdd{\i}{\ln}} ;
\foreach \i in {2,6,10} {\sqcc{\i}{\ln}} ;
\foreach \i in {1,5,9} {\sqbb{\i}{\ln}} ;
\foreach \i in {4,8,12} {\sqaa{\i}{\ln}} ;
\pgfmathsetmacro{\ln}{0}
\foreach \i in {3,7,11} {\sqbb{\i}{\ln}} ;
\foreach \i in {2,6,10} {\sqaa{\i}{\ln}} ;
\foreach \i in {1,5,9} {\sqdd{\i}{\ln}} ;
\foreach \i in {4,8,12} {\sqcc{\i}{\ln}} ;
\sqee 6 2
\sqee 7 2
\end{tikzpicture}
\end{center}\vspace*{-3mm}
\caption{A $3$-colouring with a single defect (i.e., two defective cells).  The configuration has the property that every cell sees the other three colours in its neighbourhood.}
\label{fig:diabolic}\vspace*{-4mm}
\end{figure}

\paragraph{No isotropic candidate for stabilising $3$-colourings.}
In contrast with the previous cases, when we have three colours, we cannot use the method of taking an available colour or a random colour when no colour is available.  We noticed experimentally that the errors diffuse and we could not find any rule that keeps them confined, even in statistical terms.

\begin{question}[Probabilistic self-stabilisation of $3$-colourings]
Is there a probabilistic CA that stabilises $3$-colourings from finite perturbations, ideally rapidly and having the same symmetries (isotropy, symmetry of colours) as the tiling space?
\end{question}

\paragraph{Self-stabilisation of the {$\MRIE$} rule.}
In Section~\ref{sec:pca:finite}, we showed that the {$\MRIE$} probabilistic CA stabilises the homogeneous space~$\Hom_2$ in no more than cubic time.

\begin{question}[Speed of stabilisation in the {$\MRIE$} CA]
What is the precise speed of stabilisation in the \textup{$\MRIE$} probabilistic CA?
\end{question}

\noindent We conjecture that the stabilisation occurs in quadratic time.
See the discussion after the proof of Proposition~\ref{prop:hom_prob}.

\paragraph{Self-stabilisation from random perturbations.}

Does the result of Theorem~\ref{thm:stabilisation:random-perturbation:linear} extend in some form to probabilistic CA?
\begin{question}[Stabilisation from random perturbations]
Suppose that a probabilistic CA stabilises an SFT $X$ from finite perturbations in linear (or sub-quadratic) time.  Does the CA also stabilise $X$ from random perturbations?
\end{question}

\noindent As a special case, it would be interesting to know if the {$\MRIE$} rule stabilises from random perturbations, as in the case of its continuous-time counterpart (see Example~\ref{exp:MRIE:continuous-times}).

\subsection{Other open questions}
\label{sec:discussion:other-open}

\paragraph{Stabilising any tiling space.}

In the current paper, we have searched for \emph{efficient} solutions to the self-stabilisation problem for specific classes of SFTs.  However, even if we drop the efficiency ambition, it is not clear if one can always find a CA stabilising any given SFT.

\begin{question}[General solution]
\label{q:open:general-solution}
Is it true that for every SFT~$X$, there exists a deterministic CA that stabilises~$X$ from finite perturbations, possibly in exponential time in the number of errors?  What if we require the solution to have no extra symbols compared to the alphabet of the SFT?
\end{question}

\noindent We conjecture that an approach similar to the one suggested for $3$-colourings in the last paragraph of Section~\ref{sec:discussion:3col} should be possible for a general SFT.  In particular, one should be able to come up with a general construction for translating a Turing machine that solves the functional version of the square tiling problem for a finite set of Wang tiles (see Section~\ref{sec:complexity:hardness}) into a self-stabilising CA for the valid tilings.
This approach, if successful, stabilises the SFT in at most exponential time by testing all the possible ways to fill in a given region.
As before, the main difficulty would be to handle the appearance of the extra symbols in the initial perturbation.

\paragraph{Stabilisation in presence of temporal noise.}

Recall from Example~\ref{exp:toom:noise} that Toom's {$\toom$} CA maintains a form of stability on~$\Hom_2$ even in presence of temporal noise.
A comparison argument with Toom's CA can be used to show that the two CA discussed in Sections~\ref{sub:finite} and~\ref{sub:single_cell} have the same stability property.

\begin{question}[Self-stabilisation in presence of temporal noise]
Under which conditions does a CA stabilise an SFT in presence of temporal noise?
\end{question}

\noindent In particular, do the CA discussed in Sections~\ref{sec:1d} and~\ref{sec:2d:deterministic} self-stabilise in presence of temporal noise?

\section*{Acknowledgments}
\addcontentsline{toc}{section}{Acknowledgments}
We would like to thank Peter G\'acs and Ilkka T\"orm\"a for helpful discussions and for communicating their results with us.
We are grateful to Daniel Martins Fiebich for pointing out some inaccuracies in an earlier version of the paper, in particular in the construction of Section~\ref{sec:1d}.
We would also like to thank the two anonymous referees for their careful reading of the text and for providing numerous helpful suggestions.

\newcommand{\noopsort}[1]{}


\begin{thebibliography}{10}
\providecommand{\url}[1]{\texttt{#1}}
\providecommand{\urlprefix}{URL }
\expandafter\ifx\csname urlstyle\endcsname\relax
  \providecommand{\doi}[1]{doi:\discretionary{}{}{}#1}\else
  \providecommand{\doi}{doi:\discretionary{}{}{}\begingroup
  \urlstyle{rm}\Url}\fi
\providecommand{\eprint}[2][]{\url{#2}}

\bibitem{Dij74}
Dijkstra EW.
\newblock Self-stabilizing systems in spite of distributed control.
\newblock \emph{Communications of the {ACM}}, 1974.
\newblock \textbf{17}(11):643--644. doi:10.1145/361179.361202.

\bibitem{Dol00}
Dolev S.
\newblock Self-Stabilization.
\newblock MIT Press, 2000.
\newblock ISBN:9780262041782.

\bibitem{AlDeDuPe19}
Altisen K, Devismes S, Dubois S, Petit F.
\newblock Introduction to Distributed Self-Stabilizing Algorithms.
\newblock Morgan \& Claypool Publishers, 2019.
doi:10.2200/S00908ED1V01Y201903DCT015.

\bibitem{Too74}
Toom AL.
\newblock Nonergodic Multidimensional System of Automata.
\newblock \emph{Problems of Information Transmission}, 1974.
\newblock \textbf{10}(3):239--246.

\bibitem{Too80}
Toom AL.
\newblock Stable and attractive trajectories in multicomponent systems.
\newblock In: Dobrushin RL, Sinai YG (eds.), Multicomponent Random Systems, pp.
  549--575. Marcel Dekker.
\newblock ISBN:9780824768317, 1980.

\bibitem{GaRe88}
G{\'a}cs P, Reif J.
\newblock A simple three-dimensional real-time reliable cellular array.
\newblock \emph{Journal of Computer and System Sciences}, 1988.
\newblock \textbf{36}(2):125--147. doi:10.1016/0022-0000(88)90024-4.

\bibitem{Gac86}
G{\'a}cs P.
\newblock Reliable computation with cellular automata.
\newblock \emph{Journal of Computer and System Sciences}, 1986.
\newblock \textbf{32}(1):15--78. doi:10.1016/0022-0000(86)90002-4.

\bibitem{Gac01}
G{\'a}cs P.
\newblock Reliable cellular automata with self-organization.
\newblock \emph{Journal of Statistical Physics}, 2001.
\newblock \textbf{103}(1--2):45--267.  doi:10.1023/A:1004823720305.

\bibitem{BuFaMaMa13}
Bu\v{s}i\'{c} A, Fat\`es N, Mairesse J, Marcovici I.
\newblock Density classification on infinite lattices and trees.
\newblock \emph{Electronic Journal of Probability}, 2013.
\newblock \textbf{18}:51. doi:10.1214/EJP.v18-2325.

\bibitem{Oli14}
{\noopsort{Oliveira}}{de~Oliveira} PPB.
\newblock On Density Determination With Cellular Automata: Results,
  Constructions and Directions.
\newblock \emph{Journal of Cellular Automata}, 2014.
\newblock \textbf{9}(5--6):357--385.

\bibitem{Ric17}
Richard G.
\newblock On the Synchronisation Problem over Cellular Automata.
\newblock In: Vollmer H, BVall{\'{e}}e (eds.), Proceedings of {STACS} 2017,
  volume~66 of \emph{LIPIcs}. Schloss Dagstuhl--Leibniz-Zentrum fuer
  Informatik, 2017 pp. 54:1--54:13.   doi:10.4230/LIPIcs.STACS.2017.54.

\bibitem{Fat19}
Fat{\`e}s N.
\newblock Remarks on the cellular automaton global synchronisation problem:
  deterministic versus stochastic models.
\newblock \emph{Natural Computing}, 2019.
\newblock \textbf{18}(3):429--444. URL \url{https://hal.inria.fr/hal-01653631}.

\bibitem{FaMaTa19}
Fat\`es N, Marcovici I, Taati S.
\newblock Cellular automata for the self-stabilisation of colourings and
  tilings.
\newblock In: Reachability Problems, volume 11674 of \emph{Lecture Notes in
  Computer Science}, pp. 121--136. Springer, 2019.  doi:10.1007/978-3-030-30806-3\_10.

\bibitem{GaKuLe78}
G\'acs P, Kurdyumov GL, Levin LA.
\newblock One-dimensional uniform arrays that wash out finite islands.
\newblock \emph{Problems of Information Transmission}, 1978.
\newblock \textbf{14}(3):223--226.

\bibitem{GoMa92}
G{onzaga~de~S\'a} P, Maes C.
\newblock The {G}acs--{K}urdyumov--{L}evin Automaton Revisited.
\newblock \emph{Journal of Statistical Physics}, 1992.
\newblock \textbf{67}(3/4):507--522.  doi:10.1007/BF01049718.

\bibitem{KaLG12}
Kari J, Le Gloannec B.
\newblock Modified traffic cellular automaton for the density classification
  task.
\newblock \emph{Fundamenta Informaticae}, 2012.
\newblock \textbf{116}(1--4):141--156. doi:10.3233/FI-2012-675.

\bibitem{Ma95}
Maass A.
\newblock On the sofic limit sets of cellular automata.
\newblock \emph{Ergodic Theory and Dynamical Systems}, 1995.
\newblock \textbf{15}(4):663--684.
doi:10.1017/S0143385700008609.

\bibitem{Tor15}
T\"orm\"a I.
\newblock Personal communication.

\bibitem{ToVaStMiKuPi90}
Toom AL, Vasilyev NB, Stavskaya ON, Mityushin LG, Kuryumov GL, Pirogov SA.
\newblock Discrete local {M}arkov systems.
\newblock In: Dobrushin RL, Kryukov VI, Toom AL (eds.), Stochastic cellular
  systems: ergodicity, memory, morphogenesis. Manchester University Press. 1990.
\newblock ISBN:9780719022067.

\bibitem{MaPa15}
Marcus B, Pavlov R.
\newblock An integral representation for topological pressure in terms of
  conditional probabilities.
\newblock \emph{Israel Journal of Mathematics}, 2015.
\newblock \textbf{207}(1):395--433.

\bibitem{BrPa17}
Brice{\~n}o R, Pavlov R.
\newblock Strong Spatial Mixing in Homomorphism Spaces.
\newblock \emph{{SIAM} Journal on Discrete Mathematics}, 2017.
\newblock \textbf{31}(3):2110--2137.  doi:10.1137/16M1066178.

\bibitem{ChMa18}
Chandgotia N, Marcus B.
\newblock Mixing properties for hom-shifts and the distance between walks on
  associated graphs.
\newblock \emph{Pacific Journal of Mathematics}, 2018.
\newblock \textbf{294}(1):41--69.  doi:10.2140/pjm.2018.294.41.

\bibitem{AlBrChMaSp19}
Alon N, Brice{\~n}o R, Chandgotia N, Magazinov A, Spinka Y.
 Mixing properties of colorings of the $\mathbb{Z}^d$ lattice.
 \emph{Combinatorics, Probability and Computing}, 2021.
 \textbf{30}(3):360--373.  doi:10.1017/  S0963548320000395.

\bibitem{Ber66}
Berger R.
\newblock The undecidability of the domino problem.
\newblock \emph{Memoirs of the American Mathematical Society}, 1966.
\newblock \textbf{66}.

\bibitem{Rob71}
Robinson RM.
\newblock Undecidability and nonperiodicity for tilings of the plane.
\newblock \emph{Inventiones Mathematicae}, 1971.
\newblock \textbf{12}:177--209.  doi:10.1007/BF01418780.

\bibitem{GrSh87}
Gr\"unbaum B, Shephard GC.
 Tilings and Patterns. W. H. Freeman and Co., 1987. ISBN:978-0-7167-\linebreak 1193-3.

\bibitem{Kar92}
Kari J.
\newblock The Nilpotency Problem of One-Dimensional Cellular Automata.
\newblock \emph{{SIAM} Journal on Computing}, 1992.
\newblock \textbf{21}(3):571--586.  doi:10.1137/0221036.

\bibitem{Luk09}
Lukkarila V.
\newblock The $4$-way deterministic tiling problem is undecidable.
\newblock \emph{Theoretical Computer Science}, 2009.
\newblock \textbf{410}(16):1516--1533. doi:10.1016/j.tcs.2008.12.006.

\bibitem{Pip94}
Pippenger N.
\newblock Symmetry in self-correcting cellular automata.
\newblock \emph{Journal of Computer and System Sciences}, 1994.
\newblock \textbf{49}(1):83--95.

\bibitem{FoScSi02}
Fontes LR, Schonmann RH, Sidoravicius V.
\newblock Stretched exponential fixation in stochastic {I}sing models at zero
  temperature.
\newblock \emph{Communications in Mathematical Physics}, 2002.
\newblock \textbf{228}:495--518.  doi:10.1007/ s002200200658.

\bibitem{Eis08}
Eisenberg B.
\newblock On the expectation of the maximum of {IID} geometric random
  variables.
\newblock \emph{Statistics \& Probability Letters}, 2008.
\newblock \textbf{78}(2):135--143.  doi:10.1016/j.spl.2007.05.011.

\bibitem{Fat20}
Fat{\`e}s N.
\newblock A tutorial on elementary cellular automata with fully asynchronous
  updating.
\newblock \emph{Natural Computing}, 2020.
\newblock \textbf{19}(1):179--197.  doi:10.1007/s11047-020-09782-7.

\bibitem{LiMa95}
Lind D, Marcus B.
\newblock An Introduction to Symbolic Dynamics and Coding.
\newblock Cambridge University Press, 1995.
\newblock ISBN:0-521-55124-2.

\bibitem{DuRoSh12}
Durand B, Romashchenko A, Shen A.
\newblock Fixed-point tile sets and their applications.
\newblock \emph{Journal of Computer and System Sciences}, 2012.
\newblock \textbf{78}(3):731--764.  doi:10.1016/j.jcss.2011.11.001.

\bibitem{Taa15}
Taati S.
\newblock Restricted density classification in one dimension.
\newblock In: Kari J (ed.), Proceedings of {AUTOMATA}~2015, volume 9099 of
  \emph{LNCS}. Springer, 2015 pp. 238--250. doi:10.1007/978-3-662-47221-7\_18.

\bibitem{Gac20}
{\noopsort{Capuni}}{\c{C}apuni} I, G{\'a}cs P.
\newblock A reliable {T}uring machine, August 9, 2021.
\newblock \urlprefix\url{https://www.cs.bu.edu/fac/gacs/}.

\bibitem{Bax82}
Baxter R.
\newblock Exactly Solved Models in Statistical Mechanics.
\newblock Academic Press, 1982.
\newblock ISBN:0-12-083180-5.
\end{thebibliography}


\end{document}